\definecolor{DarkGray}{rgb}{0.1,0.1,0.5}
\newcommand{\bra}[1]{{\langle#1|}}
\newcommand{\ket}[1]{{|#1\rangle}}
\newcommand{\braket}[2]{{\langle#1|#2\rangle}}
\newcommand{\ketbra}[2]{{\ket{#1}\!\bra{#2}}}
\newcommand{\abs}[1]{{\lvert #1\rvert}}	
\newcommand{\bigabs}[1]{{\big\lvert #1\big\rvert}}
\newcommand{\Bigabs}[1]{{\Big\lvert #1\Big\rvert}}
\newcommand{\norm}[1]{{\| #1 \|}}
\newcommand{\bignorm}[1]{{\big\| #1 \big\|}}
\newcommand{\Bignorm}[1]{{\Big\| #1 \Big\|}}
\newcommand{\Biggnorm}[1]{{\Bigg\| #1 \Bigg\|}}
\newcommand{\trnorm}[1]{{\| #1 \|_{\mathrm{tr}}}}
\newcommand{\bigtrnorm}[1]{{\bigl\| #1 \bigr\|_{\mathrm{tr}}}}	
\newcommand{\Bigtrnorm}[1]{{\Bigl\| #1 \Bigr\|_{\mathrm{tr}}}}
\newcommand{\binomial}[2]{\ensuremath{\left(\begin{smallmatrix}#1 \\ #2 \end{smallmatrix}\right)}}
\newcommand{\smatrx}[1]{\ensuremath{\left(\begin{smallmatrix}#1\end{smallmatrix}\right)}}
\DeclareMathOperator{\Ex}{\operatorname{E}}
\DeclareMathOperator{\Tr}{\operatorname{Tr}}
\def\tensor {\otimes}
\def\A {{\mathcal A}}
\def\B {{\mathcal B}}
\def\C {{\bf C}}
\def\E {{\mathcal E}}
\def\F {{\mathcal F}}
\def\G {{\mathcal G}}
\def\H {{\mathcal H}}
\let\Lstroke\L	\def\L {{\mathcal L}}		
\def\cP {{\mathcal P}}
\def\R {{\bf R}}
\def\S {{\mathcal S}}
\def\U {{\mathcal U}}
\def\V {{\mathcal V}}
\renewcommand{\P}{\ensuremath{\mathsf{P}}}
\newcommand{\NP}{\ensuremath{\mathsf{NP}}}
\newcommand{\IP}{\ensuremath{\mathsf{IP}}}
\newcommand{\PSPACE}{\ensuremath{\mathsf{PSPACE}}}
\newcommand{\BQP}{\ensuremath{\mathsf{BQP}}}
\newcommand{\NEXP}{\ensuremath{\mathsf{NEXP}}}
\newcommand{\QIP}{\ensuremath{\mathsf{QIP}}}
\newcommand{\QMIP}{\ensuremath{\mathsf{QMIP}}}
\newcommand{\MIP}{\ensuremath{\mathsf{MIP}}}
\DeclareMathOperator{\Range}{\operatorname{Range}}
\DeclareMathOperator{\poly}{\operatorname{poly}}
\DeclareMathOperator{\qpoly}{\operatorname{qpoly}}
\newcommand{\identity}{\ensuremath{\boldsymbol{1}}} 
\def\phasegate {G}		
\def\xzdeterminedset {{\mathcal Q}}
\def\stabilizerset {{\mathcal R}}
\newcommand{\kappaEPR}{\kappa_*}
\def\Aad {\A_{\text{ad}}}				
\def\Aadhat {\hat{\A}_{\text{ad}}}		
\def\Bad {\B_{\text{ad}}}				
\def\Badhat {\hat{\B}_{\text{ad}}}		
\def\device{D}	
\def\h #1{h_{#1}}
\def\hA #1{h_{#1}^{\smash{A}}}
\def\hB #1{h_{#1}^{\smash{B}}}
\def\hX #1{h_{#1}^{\smash{\device}}}
\def\hdec #1#2{#1{h}_{#2}}
\def\hAdec #1#2{#1{h}_{#2}^{\smash{A}}}
\def\hBdec #1#2{#1{h}_{#2}^{\smash{B}}}
\def\hXdec #1#2{#1{h}_{#2}^{\smash{\device}}}
\def\RAjah #1#2#3{R^A_{#2}({#3})}
\def\RAdecjah #1#2#3#4{#1{R}^A_{#3}({#4})}
\def\RBjah #1#2#3{R^B_{#2}({#3})}
\def\RAj #1{R^A_{#1}}	
\def\RBj #1{R^B_{#1}}
\def\RAdecj #1#2{#1{R}^A_{#2}}	
\def\RBdecj #1#2{#1{R}^B_{#2}}
\def\RAa #1{R^A_{#1}}
\def\RBa #1{R^B_{#1}}
\def\RXa #1{R^{\device}_{#1}}
\def\RAdeca #1#2{#1{R}^A_{#2}}
\def\RBdeca #1#2{#1{R}^B_{#2}}
\def\RXdeca #1#2{#1{R}^{\device}_{#2}}
\def\RXjah #1#2#3{R^{\device}_{#2}({#3})}
\def\RXdecjah #1#2#3#4{#1{R}^{\device}_{#3}({#4})}
\def\PAjh #1#2{P^A_{#1}({#2})}		
\def\PBjh #1#2{P^B_{#1}({#2})}
\def\PXjh #1#2{P^{\device}_{#1}({#2})}
\def\PABjh #1#2{P^{AB}_{#1}({#2})}
\def\PABj #1{P^{AB}_{#1}}
\def\PABdecj #1#2{#1{P}^{AB}_{#2}}
\def\EAj #1{\E^A_{#1}}
\def\EAdecj #1#2{#1{\E}^A_{#2}}
\def\EAjh #1#2{\E^{A \vert \smash{#2}}_{#1}}
\def\EBj #1{\E^B_{#1}}
\def\EBdecj #1#2{#1{\E}^B_{#2}}
\def\EXj #1{\E^{\device}_{#1}}
\def\EXdecj #1#2{#1{\E}^{\device}_{#2}}
\def\EBjh #1#2{\E^{B \vert \smash{#2}}_{#1}}
\def\EXjh #1#2{\E^{{\device} \vert \smash{#2}}_{#1}}
\def\EABj #1{\E^{AB}_{#1}}
\def\EABdecj #1#2{#1{\E}^{AB}_{#2}}
\def\EABjh #1#2{\E^{AB \vert \smash{#2}}_{#1}}
\def\Bguessj #1{\G^{B}_{#1}}
\def\Bguessdecj #1#2{#1{\G}^{B}_{#2}}
\def\ABguessj #1{\G^{AB}_{#1}}
\def\ABguessdecj #1#2{#1{\G}^{AB}_{#2}}
\def\PAmeasBjh #1#2{F^A_{#1}({#2})}
\def\PAmeasBj #1{F^A_{#1}}
\def\AmeasBj #1{\F^A_{#1}}
\def\AmeasBjh #1#2{\F^{A \vert \smash{#2}}_{#1}}
\def\AmeasBBj #1{\F^{AB}_{#1}}
\def\AmeasBBjh #1#2{\F^{AB \vert \smash{#2}}_{#1}}	
\def\UXj #1{U^{\device}_{#1}}
\def\UAjh #1#2{U^A_{#1}({#2})}
\def\UBjh #1#2{U^B_{#1}({#2})}
\def\UXjh #1#2{U^{\device}_{#1}({#2})}
\def\UsingleAj #1{U^A_{#1}}
\def\UsingleAjh #1#2{U^A_{#1}({#2})}
\def\UsingleXjh #1#2{U^{\device}_{#1}({#2})}
\def\UmultiAj #1{M^A_{#1}}
\def\UmultiA {M^A}
\def\UmultiB {M^B}
\def\UmultiX {M^{\device}}
\def\UmultiXj #1{M^{\device}_{#1}}
\def\UmultiAjh #1#2{M^A_{#1}({#2})}
\def\UmultiXjh #1#2{M^{\device}_{#1}({#2})}
\def\UidealA {\mathcal{I}^A}
\def\UidealB {\mathcal{I}^B}
\def\UidealX {\mathcal{I}^{\device}}
\def\UidealXdec #1{#1{\mathcal{I}}^{\device}}
\def\ABunitary {\Lambda}	
\def\ABunitaryBj #1{\ABunitary^B_{#1}}
\def\AAunitary {V}	
\def\AAunitaryAj #1{\AAunitary^A_{#1}}
\def\AAunitaryBj #1{\AAunitary^B_{#1}}
\def\AAunitaryABj #1{\AAunitary^{AB}_{#1}}
\def\AAunitarysupABj #1{{\cal V}^{AB}_{#1}}
\def\AAunitaryXj #1{\AAunitary^{\device}_{#1}}
\def\BBunitary {W}	
\def\BBunitaryAj #1{\BBunitary^A_{#1}}
\def\BBunitaryBj #1{\BBunitary^B_{#1}}
\def\BBunitarysupABj #1{{\cal W}^{AB}_{#1}}
\def\BBunitaryABj #1{\BBunitary^{AB}_{#1}}
\def\BBunitaryXj #1{\BBunitary^{\device}_{#1}}
\def\psione {\psi}
\def\psidecone #1{#1{\psi}}
\def\psij #1{\psi_{#1}}
\def\psiAh #1{\psi({#1})}
\def\psiAdech #1#2{#1{\psi}({#2})}
\def\psijh #1#2{\psi({#2})}		
\def\psidecjh #1#2#3{#1{\psi}({#3})}		
\def\rhoone{\rho_1}
\def\rhodecone #1{#1{\rho}_1}
\def\rhoj #1{\rho_{#1}}
\def\rhodecj #1#2{#1{\rho}_{#2}}
\def\rhojh #1#2{\rho_{#1}({#2})}		
\def\rhoh #1{\rho({#1})}
\def\rhoAh #1{\rho({#1})}
\def\rhoBh #1{\rho({#1})}
\def\rhoAdech #1#2{#1{\rho}({#2})}
\def\density {\varrho}
\def\XA {\mathcal{X}^A}
\def\XB {\mathcal{X}^B}
\def\XBj #1{\mathcal{X}^{B_{#1}}}
\def\XX {\mathcal{X}^{\device}}
\def\XAB {\mathcal{X}^{AB}}	
\def\XmultiA {\mathcal{Y}^A}
\def\XmultiX {\mathcal{Y}^{\device}}
\newcounter{sprows}
\newlength{\spheight}
\newlength{\spraise}
\newcommand{\comment}[1]{\emph{\color{blue}Comment:\color{black} #1}} 
\newlength{\commentslength}
\newcommand{\comments}[1]{
\hspace{-2\parindent}
\addtolength{\commentslength}{-\commentslength}
\addtolength{\commentslength}{\linewidth}
\addtolength{\commentslength}{-\parindent}
\fcolorbox{blue}{white}{\smallskip\begin{minipage}[c]{\commentslength}
\emph{Comments:}\begin{itemize}#1\end{itemize}\end{minipage}}\bigskip
}
\renewcommand{\comment}[1]{}\renewcommand{\comments}[1]{}
\newcommand{\rem}[1]{}
\numberwithin{equation}{section} 
\newtheorem{theorem}{Theorem}[section]
\newtheorem{lemma}[theorem]{Lemma}
\newtheorem{corollary}[theorem]{Corollary}
\newtheorem{claim}[theorem]{Claim}
\newtheorem{fact}[theorem]{Fact}
\newtheorem{proposition}[theorem]{Proposition}
\newtheorem{definition}[theorem]{Definition}
\newfont{\subsubsecfnt}{ptmri8t at 11pt}
\renewcommand{\subparagraph}[1]{\smallskip{\subsubsecfnt #1.}}
\newcommand{\eqnref}[1]{\hyperref[#1]{{(\ref*{#1})}}}
\newcommand{\thmref}[1]{\hyperref[#1]{{Theorem~\ref*{#1}}}}
\newcommand{\lemref}[1]{\hyperref[#1]{{Lemma~\ref*{#1}}}}
\newcommand{\corref}[1]{\hyperref[#1]{{Corollary~\ref*{#1}}}}
\newcommand{\defref}[1]{\hyperref[#1]{{Definition~\ref*{#1}}}}
\newcommand{\secref}[1]{\hyperref[#1]{{Section~\ref*{#1}}}}
\newcommand{\figref}[1]{\hyperref[#1]{{Figure~\ref*{#1}}}}
\newcommand{\tabref}[1]{\hyperref[#1]{{Table~\ref*{#1}}}}
\newcommand{\remref}[1]{\hyperref[#1]{{Remark~\ref*{#1}}}}
\newcommand{\appref}[1]{\hyperref[#1]{{Appendix~\ref*{#1}}}}
\newcommand{\claimref}[1]{\hyperref[#1]{{Claim~\ref*{#1}}}}
\newcommand{\factref}[1]{\hyperref[#1]{{Fact~\ref*{#1}}}}
\newcommand{\propref}[1]{\hyperref[#1]{{Proposition~\ref*{#1}}}}
\newcommand{\exampleref}[1]{\hyperref[#1]{{Example~\ref*{#1}}}}
\newcommand{\conjref}[1]{\hyperref[#1]{{Conjecture~\ref*{#1}}}}
\begin{document}
\def\compilefullpaper{}
\renewcommand{\comment}[1]{}\renewcommand{\comments}[1]{}

\ifx\compilefullpaper\undefined  
\documentclass[11pt]{article}

\begin{document}
\fi

\title{A classical leash for a quantum system: \\ Command of quantum systems via rigidity of CHSH games}

\author{Ben W.~Reichardt \\ {\small University of Southern California} \and Falk Unger \\ {\small Knight Capital Group} \and Umesh Vazirani \\ {\small UC Berkeley}}
\date{}

\maketitle

\begin{abstract}\normalsize 
Can a classical system command a general adversarial quantum system to realize arbitrary quantum dynamics?  If so, then we could realize the dream of device-independent quantum cryptography: using untrusted quantum devices to establish a shared random key, with security based on the correctness of quantum mechanics.  It would also allow for testing whether a claimed quantum computer is truly quantum.  Here we report a technique by which a classical system can certify the joint, entangled state of a bipartite quantum system, as well as command the application of specific operators on each subsystem.  This is accomplished by showing a strong converse to Tsirelson's optimality result for the Clauser-Horne-Shimony-Holt (CHSH) game: the only way to win many games is if the bipartite state is close to the tensor product of EPR states, and the measurements are the optimal CHSH measurements on successive qubits.  This leads directly to a scheme for device-independent quantum key distribution.  Control over the state and operators can also be leveraged to create more elaborate protocols for realizing general quantum circuits, and to establish that $\QMIP = \MIP^*$.  
\end{abstract}

\ifx\compilefullpaper\undefined  
\bibliographystyle{alpha-eprint}
\bibliography{q}

\end{document}
\fi

\clearpage
\tableofcontents
\clearpage

\ifx\compilefullpaper\undefined  
\documentclass[11pt]{article}

\begin{document}
\fi

\vspace*{-.8cm} 
\section{Introduction} \label{s:introduction}

Do the laws of quantum mechanics place any limits on how well a classical experimentalist can characterize the state and dynamics of a large quantum system?  As a thought experiment, consider that we are presented with a quantum system, together with instructions on how to control its evolution from a claimed initial state.  We make no assumptions about its inner structure, aside from its conforming to quantum mechanics.  Can we, as classical beings, possibly convince ourselves that the quantum system was indeed initialized as claimed, and that its state evolves as we instruct? 

More formally, model the quantum system as contained in a black box, and model our classical interactions with it as questions and answers across a digital interface, perhaps of buttons and light bulbs (\figref{f:blackbox}).  Using this limited interface, we wish to characterize the initial state of the system.  We also wish to certify that on command---by pressing a suitable sequence of buttons---the system applies a chosen local Hamiltonian, or equivalently a sequence of one- and two-qubit quantum gates, and outputs desired measurement results.  

\begin{figure}[b]
\centering
\raisebox{-.25cm}{\includegraphics[scale=.611]{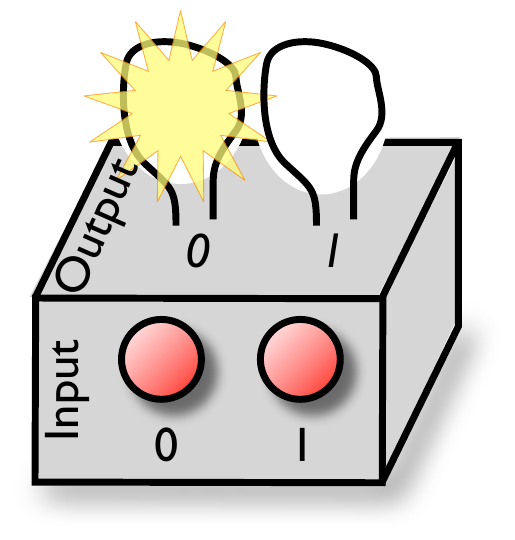}}
\caption{{\bf Classical interaction with a quantum system.} A~general system can be abstracted as a black box, with two buttons for accepting binary input and two light bulbs for output.  Using this interface, we wish to control fully the system's quantum dynamics.} \label{f:blackbox}
\end{figure}

Although partly a philosophical question, a positive resolution would have important consequences.  It is particularly relevant in quantum cryptography, where it is natural to model the quantum system as adversarial since the goal is to protect honest users from malicious adversaries.  Public-key distribution schemes have security based on the assumed difficulty of solving certain problems~\cite{DiffieHellman76, RivestShamirAdleman78rsa}, but quantum algorithms can violate these assumptions~\cite{Shor95factor}.  The \emph{raison d'\^etre} of quantum cryptography is to create a cryptographic system with security premised solely on basic laws of physics, and with quantum key distribution (QKD) and its security proofs~\cite{BennettBrassard84qkd, LoChau98qkdsecurity, ShorPreskill00qkdsecurity} it appeared to have achieved exactly this. However, attackers have repeatedly breached the security of QKD experiments, by exploiting imperfect implementations of the quantum devices~\cite{ZhaoFungQiChenLo07qkdattack, LydersenWiechersWittmannElserSkaarMakarov10qkdattack, GerhardtLiuLamasLinaresSkaarKurtsieferMakarov10qkdbroken}.  Rather than relying on ad hoc countermeasures, Mayers and Yao's $1998$ vision of {device-independent} (DI) QKD~\cite{MayersYao98chsh}, hinted at earlier by Ekert~\cite{Ekert91qkd}, relaxes all modeling assumptions on the devices, and even allows for them to have been constructed by an adversary.  It instead imagines giving the devices tests that cannot be passed unless they carry out the QKD protocol securely.  The challenge at the heart of this vision is for a classical experimentalist to force untrusted quantum devices to act according to certain specifications.  DIQKD has not been known to be possible; security proofs to date require the unrealistic assumption that the devices have no memory between trials, or that each party has many, strictly isolated devices~\cite{BarrettHardyKent04diqkd, MasanesRennerChristandlWinterBarrett06DIQKDnosignalingcomposablesecurity, AcinMassarPironio06DIQKDnosignaling, Masanes09diqkdnosignalingcomposablesecurity, HanggiRennerWolf10diqkd, AcinBrunnerGisinMassarPironioScarani07diqkdcollectiveattacks, PironioAcinBrunnerGisinMassarScarani09qkd, McKague09deviceindependent, HanggiRenner10deviceindependent, MasanesPironioAcin10deviceindependent}.\footnote{Refs.~\cite{HanggiRenner10deviceindependent, MasanesPironioAcin10deviceindependent} assume only that measurements for different games commute.  This is mathematically weaker than requiring measurements to lie in tensor product, but places the same constraints on an implementation.}  A scheme for characterizing and commanding a black-box quantum device would provide a novel approach to achieving DIQKD.  

Further, as the power of quantum mechanics is harnessed at larger scales, for example with the advent of quantum computers, it will be useful to evaluate whether a quantum device in fact carries out the claimed dynamics~\cite{AharonovBenOrEban08authenticated, BroadbentFitzsimonsKashefi08authenticated}.  Finally, we might wish to test the applicability of quantum mechanics for large systems, a situation in which Nature itself plays the role of the adversary~\cite{AharonovVazirani12quantummechanics}.

The existence of a general scheme for commanding an unmodeled quantum device appears singularly implausible.  For example, in an adversarial setting, experiments cannot be repeated exactly to gather statistics, since a system with memory could deliberately deceive the experimentalist.  More fundamentally, as macroscopic, classical entities, our access to a quantum system is extremely limited and indirect, and the measurements we apply collapse the quantum state.  We have never experienced quantum superposition---and likely nor have our cats.  Furthermore, whereas the dimension of the underlying Hilbert space scales exponentially in the number of particles or can be infinite, the information accessible via measurement only grows linearly. Indeed, as formulated it is impossible to command a single black-box system.  Quite simply, one cannot distinguish between a quantum system that evolves as desired and a device that merely simulates the desired evolution using a classical~computer.  

In this paper, we consider a closely related scenario.  Suppose we are instead given two devices, each modeled as a black box as above, and prevented from communicating with each other.  In this setting, with no further assumptions, we show how to classically command the devices.  That is, there is a strategy for pushing the buttons such that the answering light bulb flashes will satisfy a prescribed test only if the two devices started in a particular initial quantum state, to which they applied a desired sequence of quantum gates.  Moreover, though impractical, the scheme is theoretically efficient---in the sense that the total effort, measured by the number of button pushes, scales as a polynomial function of the size of the desired quantum~circuit.  Among other consequences, this result is still sufficiently powerful to imply a DIQKD scheme.  The necessary security assumptions are minimal: that the parties have isolated laboratories (as cryptography requires secrecy), they have local sources of random bits and share an authenticated classical communications channel (to prevent man-in-the-middle attacks), and quantum theory is correct.  

\begin{figure}
\centering
\raisebox{-.25cm}{\includegraphics[scale=.366]{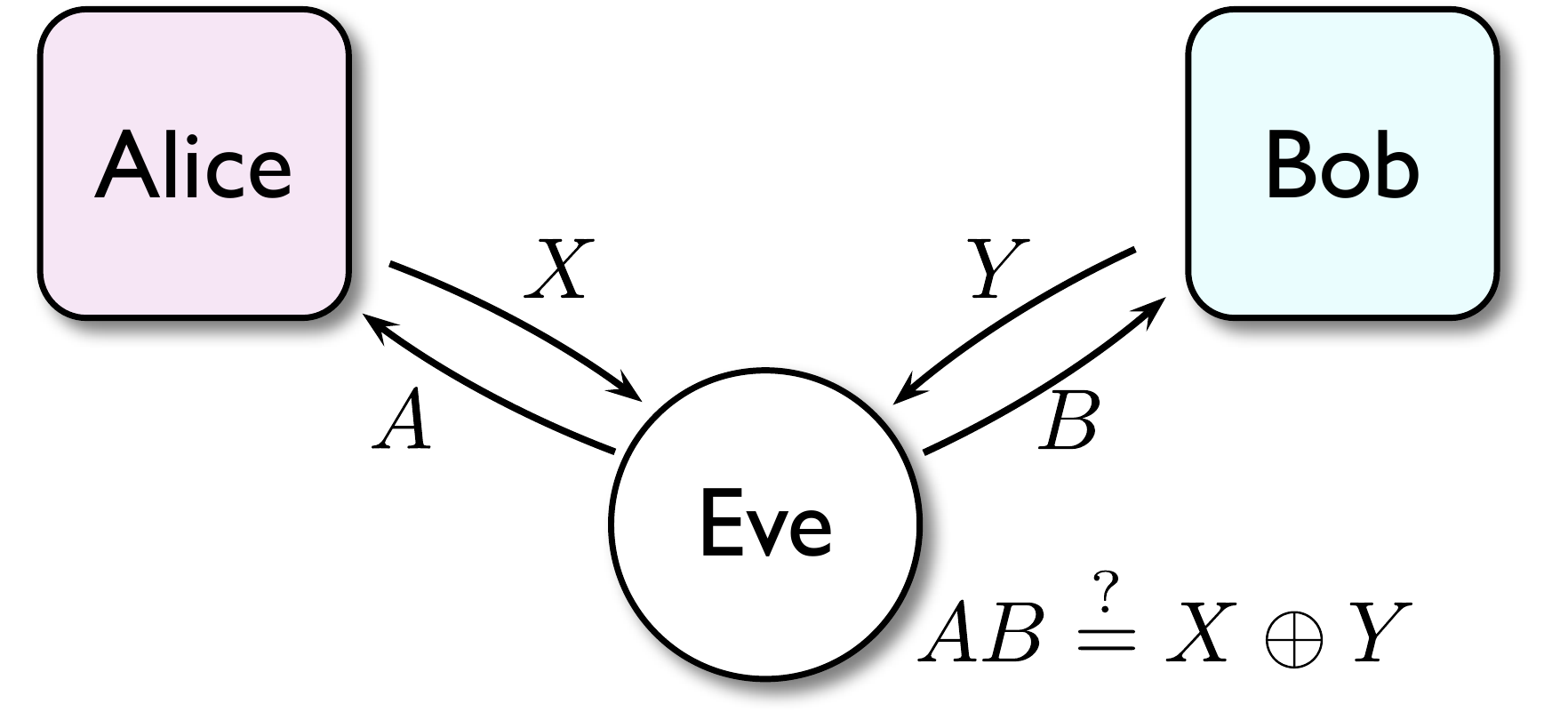}}
\caption{{\bf Test for quantumness.}  In a CHSH experiment, or ``game," the experimentalist Eve sends independent, uniformly random bits $A$ and $B$ to the devices Alice and Bob, respectively, who respond with bits $X$ and~$Y$.  The devices ``win" the game if $A B = X \oplus Y$.  By a Bell inequality, classical devices can win with probability at most $3/4$.  The probability of two classical devices winning $(3/4 + \epsilon) n$ out of $n$ games is therefore exponentially small.  Quantum devices can win the CHSH game with probability up to $\omega^* = \cos^2(\frac\pi8) \approx 85.4\%$, by Tsirelson's inequality~\cite{Tsirelson80inequality}, if they follow an ideal CHSH strategy: on a shared Einstein-Podolsky-Rosen (EPR) state $\ket \varphi = \tfrac{1}{\sqrt 2}(\ket{00} + \ket{11})$, Alice measures the Pauli operator $\sigma_z$ if $A = 0$ or $\sigma_x$ if $A = 1$, and Bob measures $\tfrac{1}{\sqrt 2}(\sigma_z + (-1)^B \sigma_x)$.} \label{f:chsh}
\end{figure}

The starting point for our protocol is the famous Bell experiment~\cite{Bell64epr}, and its subsequent distillation by Clauser, Horne, Shimony and Holt (CHSH)~\cite{ClauserHorneShimonyHolt69chshgame}.  Conceptually modeled as a game (\figref{f:chsh}), it provides a ``test for quantumness," a way for a classical experimentalist, whom we shall call Eve, to demonstrate the entanglement of two space-like separated devices, Alice and Bob.  Eve bases her decision, ``quantum" or ``not quantum," according to whether her interactions with the two devices satisfy non-local correlations, which are provably impossible to achieve in any local hidden variable theory.  Quantum devices can achieve such correlations, without any communication, by measuring two entangled qubits.  

Consider a protocol in which Eve plays a long sequence of CHSH games with Alice and Bob, and tests that they win close to the optimal fraction $\omega^*$ of the games.  This paper's main technical result establishes that if the devices pass Eve's test with high probability, then at the beginning of a randomly chosen long subsequence of games, Alice and Bob must share many EPR states in tensor product, that they measure one at a time using the single-game ideal CHSH operators of \figref{f:chsh}.  This is a step towards the general vision outlined above because it characterizes the initial state of many qubits, and allows Eve to command the devices to perform certain single-qubit operations.  
Of course, we cannot hope to characterize the devices' strategies exactly, but only for a suitable notion of approximation.  


In order to make a more precise statement, first consider a single CHSH game.  We show that if Alice and Bob win with probability $\omega^* - \epsilon$, then they must share a state that is $O(\sqrt \epsilon)$-close to an EPR state, possibly in tensor product with an additional state.  Moreover their joint measurement strategy is necessarily $O(\sqrt \epsilon)$-close to the ideal strategy.  (That is, applying Alice's measurement operator to the shared state gets within distance $O(\sqrt \epsilon)$ of her ideal measurement operator applied to the EPR state tensored with the ancilla; and similarly for Bob.)  Since each device can store its share of the EPR state in an arbitrary way, e.g., as a logical qubit spread over several physical qubits, these statements hold only up to local isometries.  This may be seen as a robust converse to Tsirelson's inequality, and as a rigidity property of the CHSH game: a nearly maximal Bell inequality violation rigidly locks into place the devices' shared state and measurement directions.  

A converse to Tsirelson's inequality for the CHSH game has been shown previously in the exact case~\cite{BraunsteinRevzen92tsirelsonconverse, PopescuRohrlich92tsirelsonconverse}.  Robustness is important for applications, however, because the success probability of a system can never be known exactly.  Robust, $\epsilon > 0$, converse statements have been shown based on a conjecture~\cite{BardynLiewMassarMcKagueScarani09deviceindependent} or under restrictive symmetry assumptions~\cite{AcinBrunnerGisinMassarPironioScarani07diqkdcollectiveattacks, PironioAcinBrunnerGisinMassarScarani09qkd}.\footnote{Similar $\epsilon = 0$ statements have been shown for other games~\cite{MayersYao03chsh, MayersYao98chsh, ColbeckKent10randomnessexpansion, Colbeck09thesis}, and Magniez et al.~\cite{MagniezMayersMoscaOllivier05selftest} have shown that the game in~\cite{MayersYao98chsh} is $O(\epsilon^{1/4})$-robust to error~$\epsilon > 0$.}  Recently, robustness has independently been shown for the CHSH game~\cite{McKagueYangScarani12chshrigidity, MillerShi12chshrigidity}.  

Scaling up to a sequence of $n$ CHSH games, suppose Alice and Bob use a strategy such that they win at least $(1 - \epsilon) \omega^* n$ of the games with high probability. By basic statistics, their strategy at the beginning of most games will win with probability at least $(1 - \epsilon^{\Omega(1)}) \omega^*$.  Rigidity for the one-shot game therefore applies.  However, their strategy for playing the $j$th game could depend on the previous games.  The states close to EPR states used in different games could overlap significantly, and their locations could depend on the history.  The multi-game rigidity theorem rules out such wayward behavior.  It says that for most random blocks of $m = n^{\Omega(1)}$ consecutive games, at the start of the block Alice and Bob must share a state that is close to a tensor product of $m$ EPR states, tensored with an additional state, and must play each $j$th game by making measurements that are close to the ideal CHSH strategy on the $j$th EPR state---different games being entirely independent.  

One way to view this theorem is that it scales up the CHSH test for quantumness and allows for identifying many qubits' worth of entanglement.  Much more than that, however, the multi-game rigidity theorem gives strong control over the devices' measurement operators for different games.  Combined with protocols for state and process tomography, and for computation by teleportation, this gives a method for realizing arbitrary dynamics in quantum systems without making assumptions about the internal structure or operations.  The dynamics are realized as the joint evolution of two isolated quantum systems, Alice and Bob, mediated by a classical experimentalist, Eve.  In order to realize the desired dynamics, Eve starts by testing the systems (devices) by playing with them many sequential CHSH games.  She gathers statistics and rejects if they lose too many games; by rigidity, this forces them to play nearly honestly.  At the beginning of a random block of~$m$ games, Eve stops playing with Alice but continues on with Bob.  Bob cannot tell that anything has changed, so continues playing the same way, measuring his halves of the EPR states.  Eve directs Alice to apply more complicated, multi-qubit operations, and she uses Bob's measurement results to tomographically certify Alice's compliance.  In a symmetrical manner, Eve can force Bob to follow her directions.  Finally, with a certain probability, Eve stops both Alice and Bob before the same block of~$m$ games, and she directs them both to apply multi-qubit operations on the next~$m$ EPR states.  The desired dynamics are implemented a step at a time, with the working qubits teleported back and forth between the two parties.  This zig-zagging evolution is natural because it allows complicated evolutions to be built out of simple, few-qubit operations; direct tomography on a many-qubit operation would be extremely inefficient.  Ultimately, should she wish, Eve can direct a full-scale quantum computation (\figref{f:computationprotocols}).  

\begin{figure*}
\centering
\subfigure[\label{f:circuitC} Circuit $\mathcal C$]{\raisebox{.5cm}{\includegraphics[scale=.7]{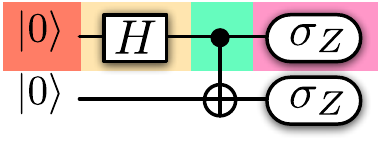}}} 
$\qquad\qquad\qquad$
\subfigure[\label{f:teleportH} Teleporting into $H$]{\raisebox{0cm}{\includegraphics[scale=.7]{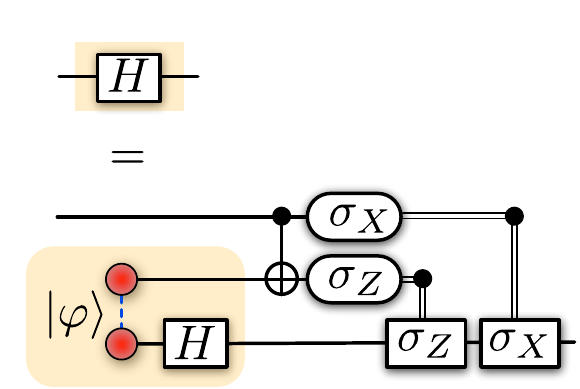}}}
\\[1cm]
\subfigure[\label{f:chshgames} CHSH games]{\includegraphics[scale=.5]{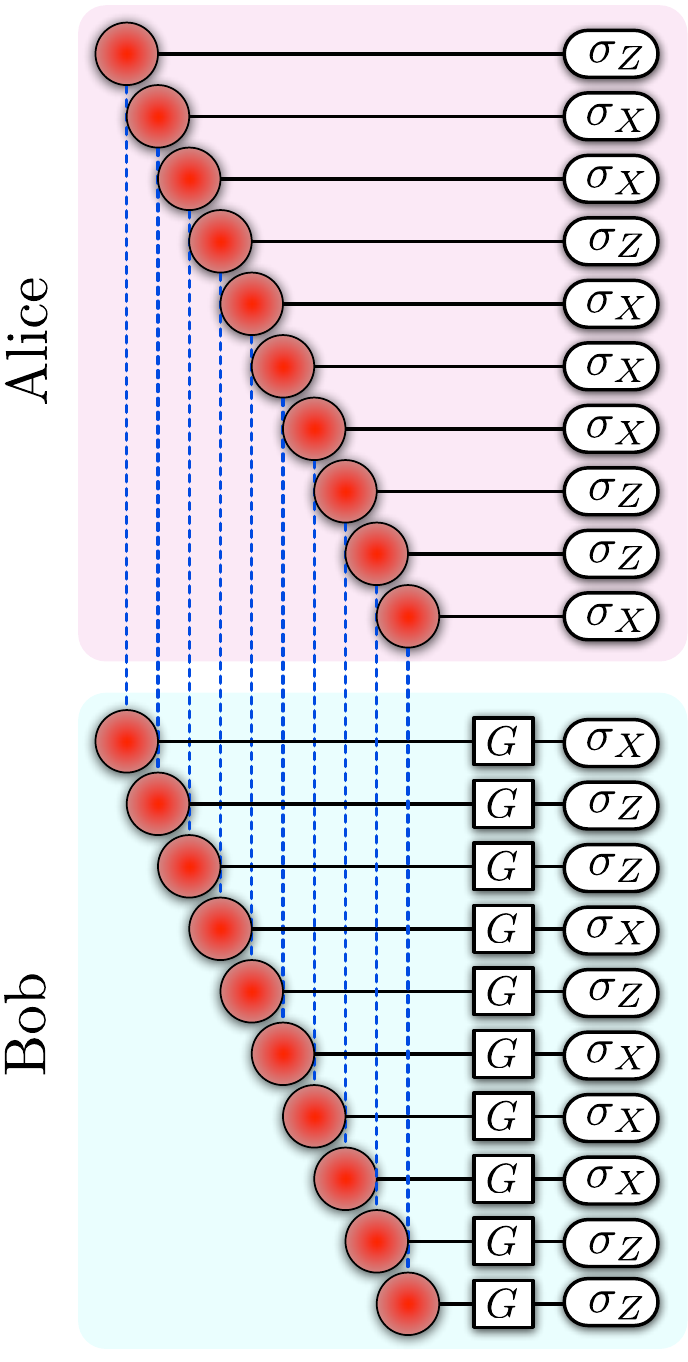}} 
$\quad\!$
\subfigure[\label{f:statetomography} State tomography]{\includegraphics[scale=.5]{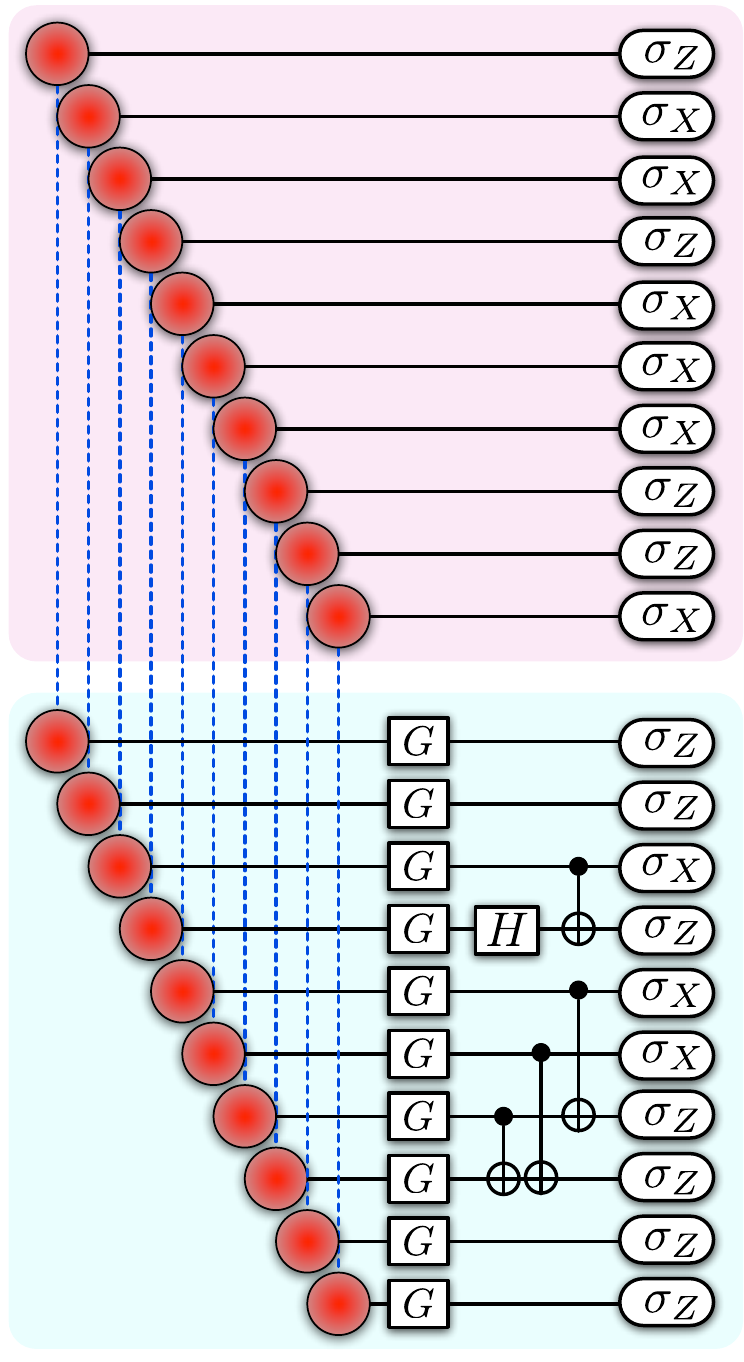}}
$\quad\!$
\subfigure[\label{f:processtomography} Process tomography]{\includegraphics[scale=.5]{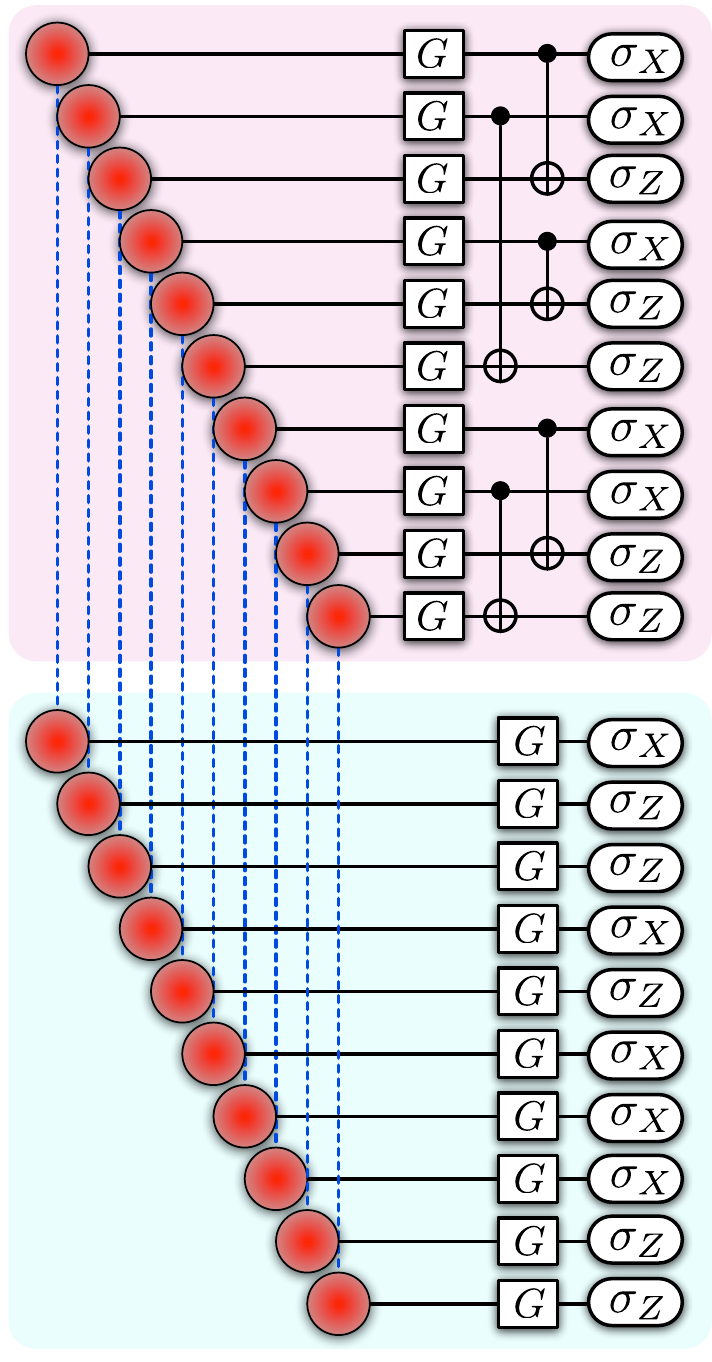}}
$\quad\!$
\subfigure[\label{f:computation} Computation]{\includegraphics[scale=.5]{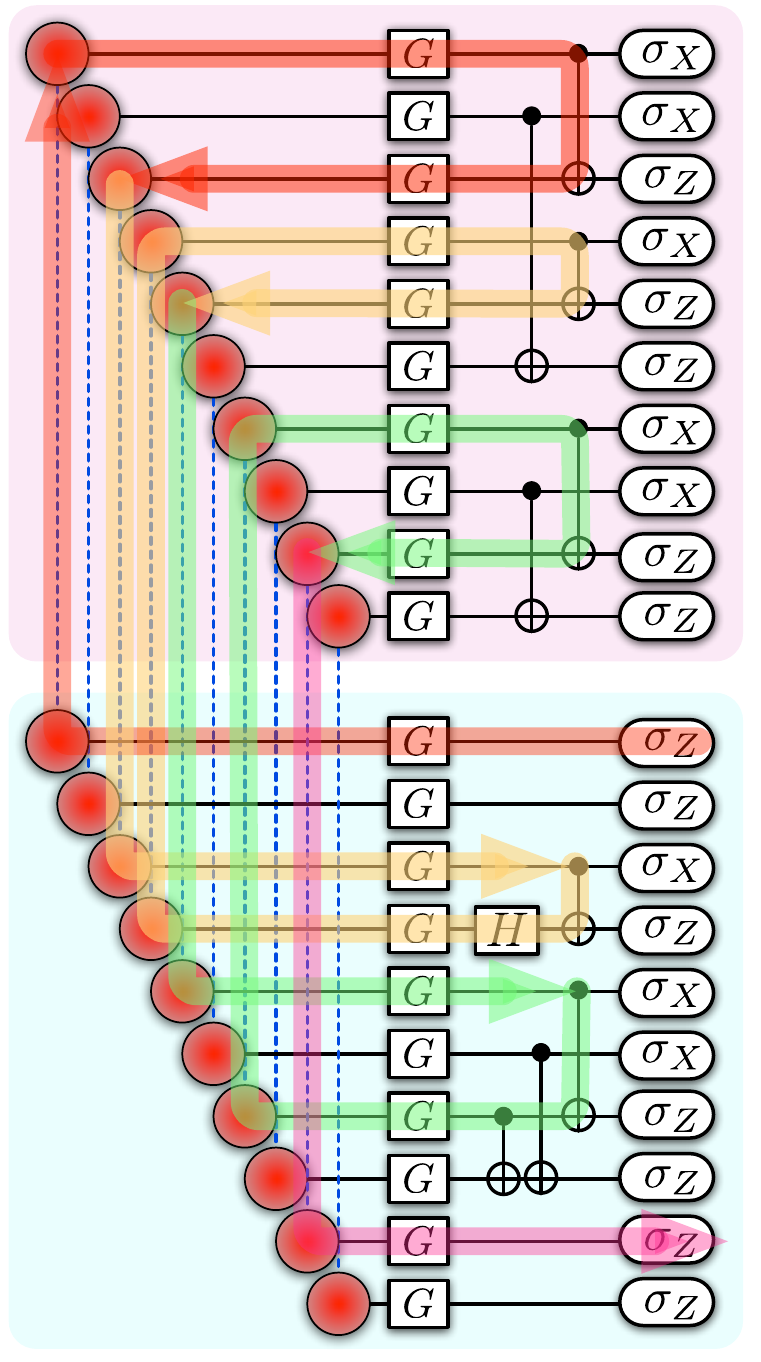}}
\caption{
{\bf Sub-protocols for verified quantum dynamics.}  
{\bf a}, Say that Eve wants to delegate to Alice and Bob a quantum circuit~$\mathcal C$, over the gate set $\{ H, G, \mathrm{CNOT} \}$, where $H$ is the Hadamard gate and $G$ a $\pi/4$ rotation about the $y$ axis.  {\bf b}, The idea is to use computation by teleportation~\cite{GottesmanChuang99teleportation}, which allows a gate, here $H$, to be implemented by a two-qubit Bell measurement on the input and half of a resource state, $(I \otimes H) \ket \varphi$.  Eve runs a random one of four sub-protocols with Alice and Bob.  {\bf c}, Playing many CHSH games ensures that the devices play honestly using shared EPR states.  {\bf d-e}, This lets Eve apply state or process tomography to characterize more complicated multi-qubit operations.  {\bf f}, By adaptively combining these operations, Eve directs the circuit~$\mathcal C$.  The zig-zagging logical path of the first qubit of~$\mathcal C$ is highlighted.  
} \label{f:computationprotocols}
\end{figure*}

The problem of controlling computationally powerful but untrusted resources lies at the foundation of computer science.  In the complexity class~$\NP$, for example, a polynomial-time routine---the ``verifier"---is allowed one round of interaction with an arbitrarily powerful, but malicious, ``prover."  We show that the same verifier can exploit the power of quantum-mechanical provers.  In particular: 

\begin{enumerate}
\item
A classical verifier can efficiently simulate a quantum computer by interacting with two untrusted, polynomial-time quantum provers.  This delegated computation scheme is also \emph{blind}, meaning that each prover learns no more than the length of the computation.  

\item
The verifier in any quantum multi-prover interactive proof (QMIP) system can be assumed to be classical.  Formally, the complexity classes $\QMIP$ and $\MIP^*$ are equal, where $\MIP^*$ is the class of languages decidable by a classical interactive protocol in which the provers share entanglement.  
\end{enumerate}

Previous work has considered a verifier who can store and control a constant number of qubits while interacting with a single prover~\cite{AharonovBenOrEban08authenticated, BroadbentFitzsimonsKashefi08authenticated, FitzsimonsKashefi12blind, BarzKashefiBroadbentFitzsimonsZeilingerWalther12blindexperiment}.  This makes controlling the system easier; for example, in the simplest scheme, the prover acts as an authenticated quantum memory and all computation is done by the verifier.  Our work is also inspired by a proposal~\cite{BroadbentFitzsimonsKashefi10qmip} that $\QMIP$ should equal $\MIP^*$.  The protocol introduced there can be attacked, however.  Our protocol has a very different form, based on the multi-game rigidity theorem.  

Thus a classical experimentalist can control quantum devices even under the weakest possible assumptions, in which the devices are not just imprecise or noisy, but are maliciously adversarial, and arbitrarily crafty.

\ifx\compilefullpaper\undefined  
\bibliographystyle{alpha-eprint}
\bibliography{q}

\end{document}
\fi

\ifx\compilefullpaper\undefined  
\documentclass[11pt]{article}

\begin{document}
\fi

\section{Proof sketches}

In this section, we sketch the main proofs, especially the characterization of strategies for sequential CHSH games.  The notation is presented intuitively, but of course precise definitions are given later.

\subsection{Rigidity of the CHSH game} \label{s:CHSHgamerigiditysketch}

The proof of the single-game rigidity theorem (\lemref{t:eprlemma}) is a good~place to start.  We show that nearly saturating Tsirelson's inequality nearly determines the devices' joint strategy.  To win the CHSH game with probability $\omega^* - \epsilon$, the devices' strategy must, up to local basis changes, be $O(\sqrt \epsilon)$-close to the ideal strategy of \figref{f:chsh}, involving measurements on two halves of an EPR state.  

A general strategy for Alice and Bob consists of some shared mixed state in $\H_A \otimes \H_B$, and two-outcome projective measurements for each of Eve's possible questions.  Truncate the devices' Hilbert spaces to finitely many dimensions, then decompose each space by Jordan's Lemma (\lemref{t:jordanslemma}) into the direct sum of two-dimensional spaces invariant under the projections.  The probability of winning is a convex combination of the success probabilities of the strategies that restrict the shared state to a two-dimensional space on each device's side, $\C^2 \otimes \C^2$.  Therefore it suffices to analyze the two-dimensional case, which we do by adjusting the angles between the projections to match the ideal strategy.  The resulting operators define the underlying qubits.

\subsection{Tensor-product structure for repeated CHSH games} \label{s:sequentialstructuredCHSHgameshavetensorproductstructuresketch}

A strategy $\S$ for playing $n$ sequential CHSH games specifies Alice and Bob's initial joint state as well as their measurement operators for every possible situation.  That is, for $X \in \{A, B\}$ and each $j = 1, \ldots, n$, $\S$ specifies the measurement operators used by device~$X$ in game $(j, \hX{j-1})$, where $\hX{j-1}$ is any transcript of the device's input and output bits for the first $j-1$ games.  For two strategies to be ``close" means that the distributions of game transcripts they induce should be close in total variation distance; and that for most transcripts (drawn from either distribution), the resulting quantum states should be close in a suitable norm.  We combine these conditions into one by defining for any strategy a block-diagonal density matrix that stores both the classical transcript and the resulting quantum~state: 
\begin{equation} \label{e:examplecombinedtranscriptstatedensitymatrix}
\rhoj{j} = \bigoplus_{h_{j-1}} \Pr[h_{j-1}] \, \rhoj{j}(\h{j-1})
 \enspace .
\end{equation}
Here $\h{j-1} = (\hA{j-1}, \hB{j-1})$ is the full transcript for the first $j-1$ games and $\rhoj{j}(\h{j-1})$ is the state at the beginning of game~$j$ conditioned on $\h{j-1}$.  Two strategies~$\S$ and~$\tilde \S$ are close if the~associated $\rhoj{j}$ and $\rhodecj{\tilde}{j}$ are close in trace distance, for every~$j$.  

Assume that for every $j$ and most $\h{j-1}$, the devices' conditional joint strategy at the beginning of game~$j$ is ``$\epsilon$-structured," meaning that it wins with probability at least $\omega^* - \epsilon$.  Our key theorem establishes that up to local basis changes, the devices' initial state must be close to $n$ EPR states, possibly in tensor product with an irrelevant extra state, and that their total strategy $\S$ must be close to an ideal strategy $\hat{\S}$ that plays game~$j$ using the $j$th EPR state.  Since the structure assumption can be established by martingale arguments on $\poly(n)$ sequential CHSH games, this implies the multi-game rigidity theorem.  See Theorems~\ref{t:sequentialCHSHgames} and~\ref{t:sequentialstructureandobservedcorrelationsapplied} for precise statements.

\subsubsection{Construction of the ideal strategy \texorpdfstring{$\hat S$}{S-hat}}

The main challenge is to ``locate" the ideal strategy $\hat \S$ within Alice and Bob's Hilbert space, i.e., to find an isometry on each of their spaces under which their states and measurement operators are close to ideal.  However, a priori, we do not know whether $\S$ calls for the devices to measure actual qubits in each step, or even if so whether the qubits form EPR states, qubits for different games overlap each other, or the locations of the qubits depend on the outcomes of previous games.  

The given strategy~$\S$ can be transformed into a nearby ideal strategy $\hat \S$ by a three-step sequence:

\smallskip

1. First, replace each device's measurement operators by the ideal operators promised by the single-game rigidity theorem.  In the resulting strategy $\tilde \S$, each device $X$ plays every game~$(j, \hX{j-1})$ using the ideal CHSH game operators on some qubit, up to a local change in basis.  However, the basis change can depend arbitrarily on $\hX{j-1}$, and the qubits for different~$j$ need not be in tensor product.  

\smallskip

2. In a ``multi-qubit ideal strategy" $\bar \S$, the qubits used in each game can still depend on the local transcripts but must at least lie in tensor product with the qubits from previous games. This imposes a tensor-product subsystem structure that previous DIQKD proofs have assumed.  The tensor-product structure is constructed beginning with a trivial transformation on~$\tilde \S$: to each device, add~$n$ ancilla qubits each in state~$\ket 0$.  Next, after a qubit has been measured, say as~$\ket{\alpha_j}$ in game~$j$, swap it with the $j$th ancilla qubit, then rotate this fresh qubit from $\ket 0$ to $\ket{\alpha_j}$ and continue playing games $j+1, \ldots, n$.  This defines a unitary change of basis that places the outcomes for games $1$ to~$j$ in the first $j$ ancilla qubits, and leaves the state in the original Hilbert space unchanged.  Since qubits are set aside after being measured, the qubits for later games are automatically in tensor product with those for earlier games; the resulting strategy $\bar \S$ is multi-qubit ideal. At the end of the $n$ games, swap back the ancilla qubits and undo their rotations, using the transcript.  

\smallskip

3. In the last step, we replace $\bar \S$ with an ideal strategy $\hat{\S}$, in which Alice and~Bob each play using a fixed set of $n$ qubits.  Fix a transcript $\hdec{\hat}{n}$, chosen at random from the distribution of transcripts for $\bar \S$.  For the first time, change the devices' initial state: replace $\rhoone$ with $\rhodecone{\hat}$, a state having $n$ EPR states in the locations determined by $\hdec{\hat}{n}$ in $\bar \S$.  In~$\hat \S$, the devices play using these EPR states, regardless of the actual transcript.  This $\hat \S$ is the desired ideal strategy.

\subsubsection{Ideal strategy \texorpdfstring{$\hat S$}{S-hat} is close to $\S$} \label{s:close}

It remains to show that the transformation's three steps incur a small error: $\hat \S$ is close to $\S$.  A major theme in the analysis is to leverage the known tensor-product structure between $\H_A$ and $\H_B$ to extract a tensor-product structure within $\H_A$ and~$\H_B$.  The steps are illustrated schematically in \figref{f:proofsketch}.  

\begin{figure*}
\centering
\def\qubitsballscale{.26}
\subfigure[General strategy]{\includegraphics[scale=\qubitsballscale]{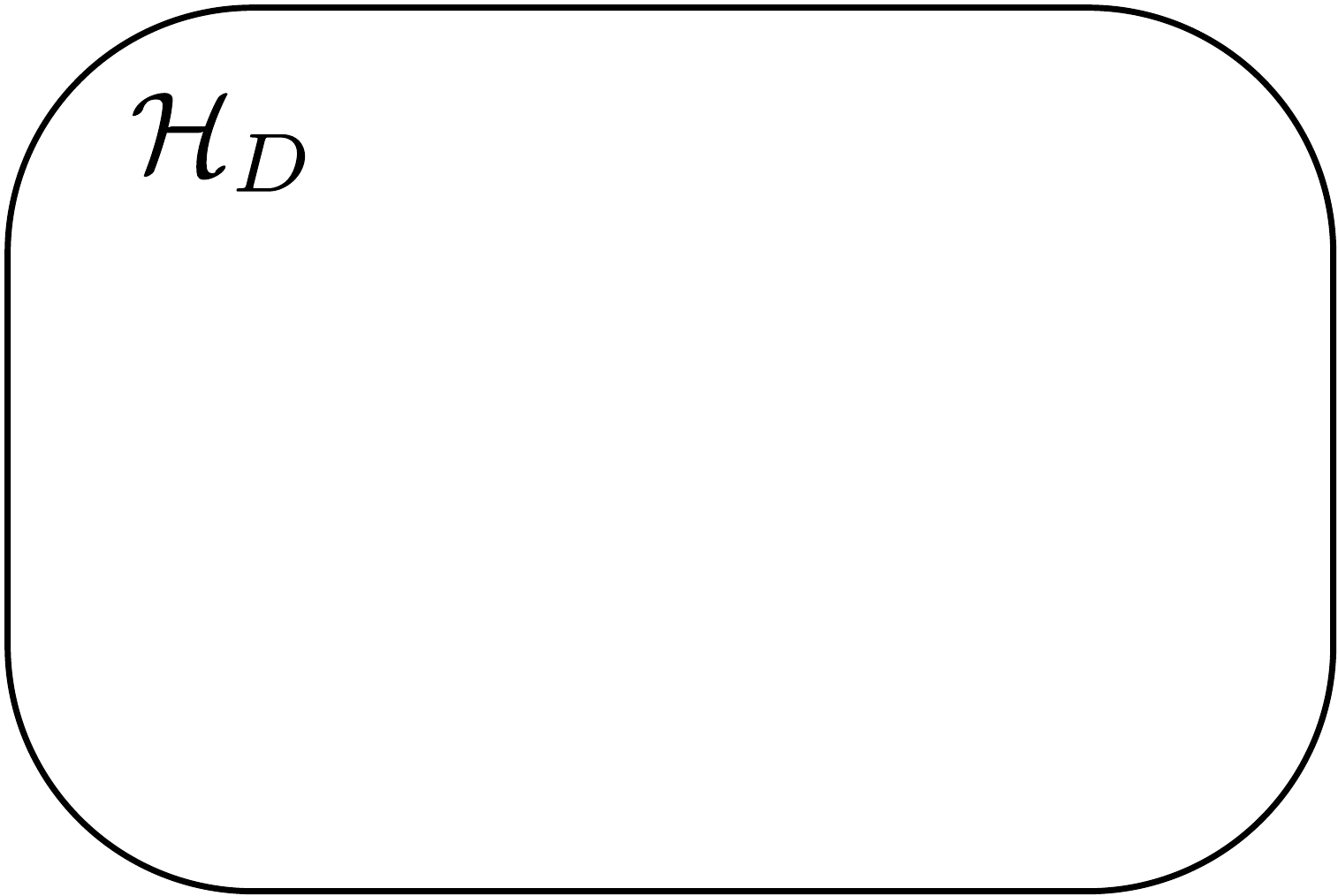}}
\subfigure[$\!\!$Single-qubit ideal strategy]{\includegraphics[scale=\qubitsballscale]{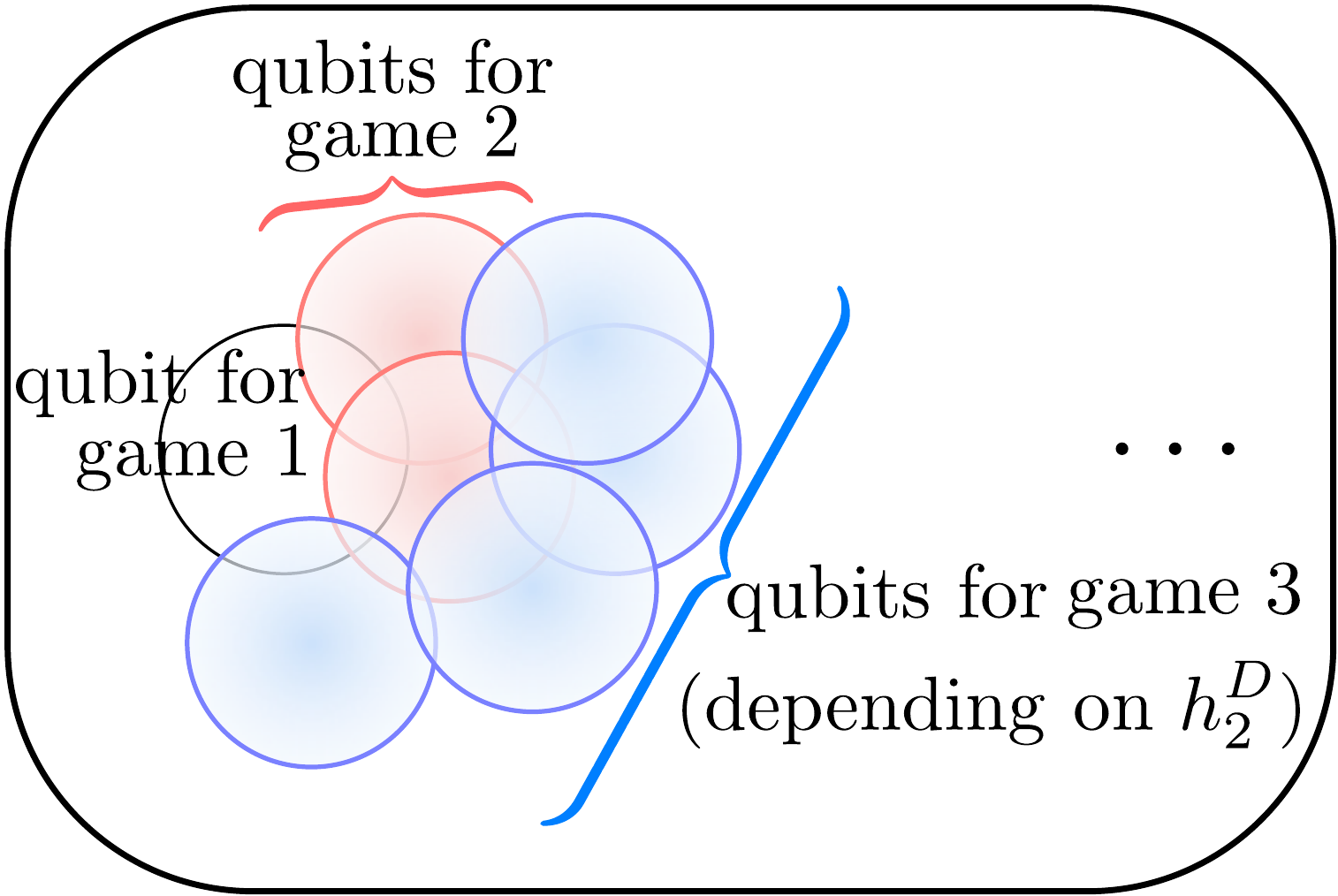}}
\subfigure[$\!\!$Multi-qubit ideal strategy]{\includegraphics[scale=\qubitsballscale]{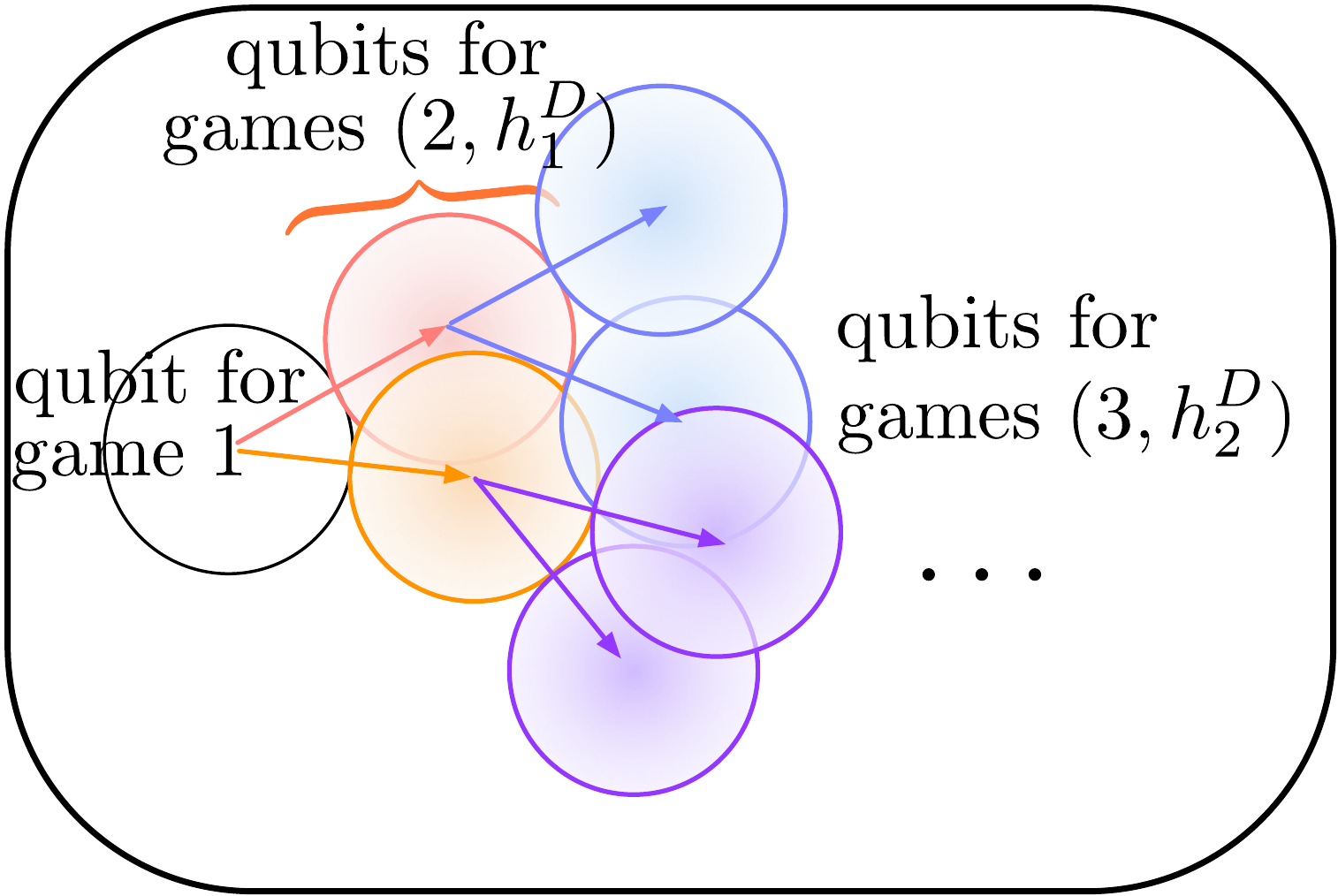}}
\subfigure[Ideal strategy]{\includegraphics[scale=\qubitsballscale]{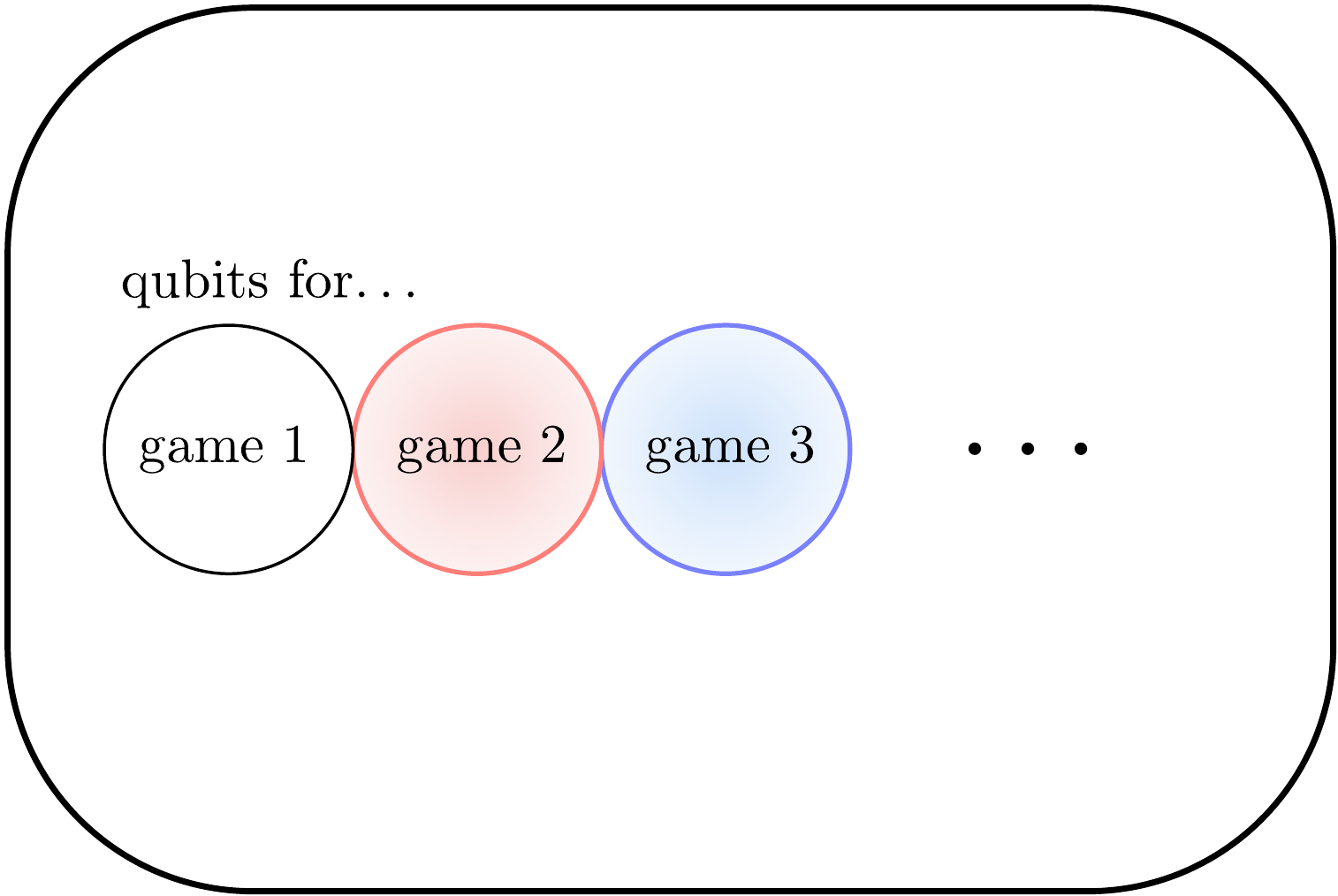}}
\caption{Proof outline for the multi-game rigidity theorem.  
{\bf a}, Initially, each device $D \in \{A, B\}$ can play arbitrarily, measuring in game~$j$ one of two reflections that depend on the local transcript~$\hX{j-1}$ for the previous games.  No structure is given for the Hilbert space $\H_D$.  
{\bf b},~We first show that $D$'s strategy is close to a ``single-qubit ideal strategy," in which for every game it measures some qubit using the ideal CHSH game strategy, but the qubit locations can be arbitrary.  Here, the qubits are illustrated as balls, and the overlaps indicate that they need not be in tensor product.  
{\bf c},~We then construct a nearby ``multi-qubit ideal strategy," in which the qubits used in each game must lie in tensor product with the qubits from previous games, but can overlap qubits used along other transcripts.  
{\bf d}, Finally, we argue that the qubit locations cannot depend significantly on the transcript, and therefore that the original strategy is well-approximated by an ideal strategy that measures a fixed set of $n$ qubits in sequence.  
(Note that these visualizations, representing qubits as balls, are inherently imprecise.  A qubit's location in a Hilbert space is given not by a ball, but by the two anti-commuting reflection operators $\sigma_x$ and $\sigma_z$.)  
} \label{f:proofsketch}
\end{figure*}

\smallskip

1. $\S \approx \tilde \S$: 
Although elementary, explaining this step is useful for establishing some notation.  Let $\rhoone$ be the devices' initial shared state, possibly entangled with the environment.  Let~$\EAj{j}$ and $\EBj{j}$ be the super-operators that implement Alice and Bob's respective strategies for game~$j$, $\EABj{j} = \EAj{j} \otimes \EBj{j}$ and $\EABj{j,k} = \EABj{k} \cdots \EABj{j}$ for $j \leq k$; thus the state $\rhoj{j}$ of Eq.~\eqnref{e:examplecombinedtranscriptstatedensitymatrix} equals $\EABj{1,j-1}(\rhoone)$.  For $D \in \{A, B\}$, let $\EXdecj{\tilde}{j}$ be the super-operator that replaces the actual measurement operators with the ideal operators promised by the CHSH rigidity theorem.  $\tilde \S$ is given by $\rhoone$, $\{ \EAdecj{\tilde}{j} \}$ and $\{ \EBdecj{\tilde}{j} \}$.  If $\Pr[\text{game~$j$ is $\epsilon$-structured}] \geq 1-\delta$, then $\trnorm{\EABj{j}(\rhoj{j}) - \EABdecj{\tilde}{j}(\rhoj{j})} \leq 2 \delta + O(\sqrt \epsilon)$.  (This expression uses Eq.~\eqnref{e:examplecombinedtranscriptstatedensitymatrix} to combine bounds on the probability of the bad event and the $O(\sqrt \epsilon)$ error from the good event.)  To show our goal, that $\EABj{1,n}(\rhoone) \approx \EABdecj{\tilde}{1,n}(\rhoone)$ in trace distance, use a hybrid argument that works backwards from game~$n$ to game~$1$ fixing each game's measurement operators one at a time.  The error introduced from fixing a game~$j$, by moving from $\EABj{j}(\rhoj{j})$ to $\EABdecj{\tilde}{j}(\rhoj{j})$, does not increase in later games because applying a super-operator cannot increase the trace distance.  Mathematically, this hybrid argument is simply a triangle inequality using the expansion 
\begin{equation*}
\EABj{1,n}(\rhoone) - \EABdecj{\tilde}{1,n}(\rhoone) = 
\sum_{j \in [n]} \EABdecj{\tilde}{j+1,n} \big(\EABj{j}(\rhoj{j}) - \EABdecj{\tilde}{j}(\rhoj{j})\big)
 \enspace .
\end{equation*}

\smallskip

2. $\tilde \S \approx \bar \S$: 
The key to showing that $\bar \S$ is close to $\tilde \S$ is the fact that operations on one half of an EPR state can equivalently be performed on the other half, since for any $2 \times 2$ matrix~$M$, $(M \otimes I)(\ket{00} + \ket{11}) = (I \otimes M^T)(\ket{00} + \ket{11})$.  This means that the outcome of an $\epsilon$-structured CHSH game would be nearly unchanged if Bob were hypothetically to perform Alice's measurement before his own.  By moving Alice's measurement operators for games $j+1$ to $n$ over to Bob's side, we see that they cannot significantly affect the qubit $\ket{\alpha_j}$ from game~$j$ on her side.  Therefore, undoing the original change of basis restores the ancilla qubits nearly to their initial state $\ket{0^n}$, and $\tilde \S \approx \bar \S$.  

\def\AmeasBBdecj #1#2{#1{\F}^{AB}_{#2}}

Formally, define a unitary super-operator ${\cal V}_j$ that rotates the $j$th ancilla qubit to $\ket{\alpha_j}$, depending on Alice's local transcript~$\hA{j}$.  Define a unitary super-operator ${\cal T}_j$ to apply ${\cal V}_j$ and swap the $j$th ancilla qubit with the qubit Alice uses in game~$j$ (depending on~$\hA{j-1}$).  Alice's multi-qubit ideal strategy is given by 
\begin{equation}
\EAdecj{\bar}{j} = {\cal T}_{1,j-1}^{-1} (\identity_{\C^{2^n}} \otimes \EAdecj{\tilde}{j}) {\cal T}_{1,j-1}
 \enspace .
\end{equation}
We aim to show that the strategy given by $\rhoone$, $\{ \EAdecj{\bar}{j} \}$ and $\{ \EBdecj{\tilde}{j} \}$ is close to~$\tilde \S$ up to the fixed isometry that prepends $\ketbra{0^n}{0^n}$ to the state.  Define a super-operator $\AmeasBBdecj{\tilde}{j}$, in which Alice's measurements are made on \emph{Bob's} Hilbert space~$\H_B$, on the qubit determined by Bob's local transcript $\hB{j-1}$.  Since most games are $\epsilon$-structured, by the CHSH rigidity theorem, $\AmeasBBdecj{\tilde}{j+1,k}(\rhodecj{\tilde}{j+1}) \approx \EABdecj{\tilde}{j+1,k}(\rhodecj{\tilde}{j+1}) = \rhodecj{\tilde}{k+1}$ for any $j \leq k$.  Since $\AmeasBBdecj{\tilde}{j+1,k}$ acts on $\H_B$, it does not affect Alice's qubit $\ket{\alpha_j}$ from game~$j$ at all, and so this qubit must stay near $\ket{\alpha_j}$ in $\rhodecj{\tilde}{k+1}$ as well, i.e., the trace of the reduced density matrix against the projection $\ketbra{\alpha_j}{\alpha_j}$ stays close to one.  As this holds for every~$j$, ${\cal T}_{1,n}^{-1}$ indeed returns the ancillas almost to their initial state~$\ket{0^n}$.  

In more detail, let~$X_j$ be the operator that projects onto Alice's $j$th ancilla qubit and the qubit she uses in the $j$th game being $\ket{0} \otimes \ket{\alpha_j}$.  By definition, $\Tr (X_j \, \rhodecj{\tilde}{j+1}) = 1$.  By the Gentle Measurement Lemma (\lemref{t:gentlemeasurement}), it suffices to show that $\Tr (X_j \, \rhodecj{\tilde}{k+1}) = \Tr X_j \EABdecj{\tilde}{j+1,k}(\rhodecj{\tilde}{j+1}) \approx 1$.  This is not obvious; since the operators for games~$j+1$ to $k$ do not act in tensor product, they can disturb the qubit measured in game~$j$.  However, since a super-operator on~$\H_B$ cannot affect the expectation of an operator supported on~$\H_A$, we find 
\begin{equation*}
\Tr (X_j \, \rhodecj{\tilde}{k+1}) = \Tr X_j \EABdecj{\tilde}{j+1,k}(\rhodecj{\tilde}{j+1}) \approx \Tr X_j \AmeasBBdecj{\tilde}{j+1,k}(\rhodecj{\tilde}{j+1}) = \Tr(X_j  \, \rhodecj{\tilde}{j+1}) = 1
 \enspace .
\end{equation*}

The $\{ \EBdecj{\tilde}{j} \}$ are symmetrically adjusted to $\{ \EBdecj{\bar}{j} \}$.  

\smallskip

3. $\bar \S \approx \hat \S$: 
Intuitively, if the location of Alice's $j$th qubit depended on $\hA{j-1}$, then without any communication Bob could not know which of his qubits to measure.  However, Alice and Bob's transcripts are significantly correlated, and we must show that they cannot use these correlations to coordinate the locations of their qubits.  

We argue that $\hat \S$ closely approximates $\bar \S$, provided that~$\hdec{\hat}{n}$ satisfies: for every~$j$, conditioned on the partial transcript~$\hdec{\hat}{j-1}$, 
(a) game~$j$ is $\epsilon$-structured, and 
(b)~there is a high probability that every subsequent game is $\epsilon$-structured.  By Markov inequalities, most transcripts satisfy these conditions.  

\def\rhodecjh #1#2#3{#1{\rho}_{#2}({#3})}	
\def\AmeasBBdecjh #1#2#3{#1{\F}^{AB \vert \smash{#3}}_{#2}}	
\def\EABdecjh #1#2#3{#1{\E}^{AB \vert \smash{#3}}_{#2}}

We connect $\bar \S$ to $\hat \S$ by an argument that one game at a time switches play to locate qubits according to $\hdec{\hat}{n}$.  The intermediate steps relate strategies in which the devices locate their qubits using a hybrid $(\hdec{\hat}{j}, \h{j+1,n})$ of $\hdec{\hat}{n}$ and the actual transcript $\h{n}$.  

Consider a partial transcript $\h{j}$ that differs from $\hdec{\hat}{j}$ only in the $j$th game, say on Alice's side.  By~(a) and the CHSH rigidity theorem, Alice's $j$th qubit is collapsed and nearly in tensor product with the rest of the state.  Therefore, there exists a unitary~$\AAunitaryAj{j}$ acting on this qubit such that 
\begin{equation} \label{e:proofidealocalgluingassumption}
\rhodecjh{\bar}{j+1}{\h{j}} \approx \AAunitaryAj{j} \rhodecjh{\bar}{j+1}{\hdec{\hat}{j}} \AAunitaryAj{j}{}^\dagger
 \enspace ,
\end{equation}
up to error $O(\sqrt \epsilon)$.  Since applying a super-operator cannot increase trace distance and on Bob's side $\hB{j} = \hBdec{\hat}{j}$, therefore 
\begin{equation*}
\AmeasBBdecjh{\bar}{j+1,n}{\hB{j}}\big(\rhodecjh{\bar}{j+1}{\h{j}}\big) \approx \AAunitaryAj{j} \AmeasBBdecjh{\bar}{j+1,n}{\hBdec{\hat}{j}}\big(\rhodecjh{\bar}{j+1}{\hdec{\hat}{j}}\big) \AAunitaryAj{j}{}^\dagger
 \enspace .
\end{equation*}
Here, $\AmeasBBdecjh{\bar}{j+1,n}{\hB{j}}$ is the same super-operator used in the multi-qubit ideal strategy simulation step---that plays Alice's games on Bob's qubits---except conditioned on the local transcript~$\hB{j}$.  By condition (b), these super-operators can be pulled back to Alice's side, to give 
\begin{equation*}
\EABdecjh{\bar}{j+1,n}{\h{j}}\big(\rhodecjh{\bar}{j+1}{\h{j}}\big) \approx \AAunitaryAj{j} \EABdecjh{\bar}{j+1,n}{\hdec{\hat}{j}}\big(\rhodecjh{\bar}{j+1}{\hdec{\hat}{j}}\big) \AAunitaryAj{j}{}^\dagger
 \enspace .
\end{equation*}
Note that this approximation does not follow immediately from Eq.~\eqnref{e:proofidealocalgluingassumption}, because Alice's super-operators conditioned on $\hA{j}$ can be very different from her super-operators conditioned on~$\hAdec{\hat}{j}$.  

By fixing the coordinates one at a time in this way, we find that for a typical transcript~$\h{n}$, $\rhodecjh{\bar}{n+1}{\h{n}} \approx V^{AB}_{1,n} \rhodecjh{\bar}{n+1}{\hdec{\hat}{n}} V^{AB}_{1,n}{}^\dagger$, and we conclude that $\EABdecj{\bar}{1,n}(\rhoone) \approx \EABdecj{\hat}{1,n}(\rhoone)$.  

Since $\EABdecj{\hat}{1,n}$ measures qubits in tensor product with each other, by using the CHSH rigidity theorem one last time, it is not difficult to show that $\EABdecj{\hat}{1,n}(\rhoone) \approx \EABdecj{\hat}{1,n}(\rhodecone{\hat})$, where $\rhodecone{\hat}$ has~$n$ EPR states in the qubit positions determined by~$\hdec{\hat}{n}$.  Thus the devices' actual strategy $\S = (\rhoone, \{ \EAj{j} \}, \{ \EBj{j} \})$ is close to the ideal strategy $\hat \S = (\rhodecone{\hat}, \{ \EAdecj{\hat}{j} \}, \{ \EBdecj{\hat}{j} \})$, as desired.  

\smallskip

The conclusion that the devices' joint strategy is close to ideal is not strong enough for our applications, in which sometimes Eve plays CHSH games with only one of the two devices.  We need to show that the devices' strategies are \emph{separately} close to ideal, i.e., 
\begin{equation}
\EAj{1,n}(\rhoone) \approx \EAdecj{\hat}{1,n}(\rhodecone{\hat})
\quad\qquad \text{and} \quad\qquad
\EBj{1,n}(\rhoone) \approx \EBdecj{\hat}{1,n}(\rhodecone{\hat})
 \enspace .
\end{equation}
These estimates cannot be obtained directly because our main assumption, that every game~$j$ is usually $\epsilon$-structured, is only of use if both devices have played games~$1$ through~$j-1$---it gives information about $\EXj{j}$ applied to $\EABj{1,j-1}(\rhoone)$, not about~$\EXj{j}$ applied to $\EXj{1,j-1}(\rhoone)$.  The key idea to obtain separate estimates is that applying both devices' super-operators is almost equivalent to applying Alice's super-operator, \emph{guessing} Bob's measurement outcome from the ideal conditional distribution, and based on the guess applying a controlled unitary correction to his qubit.  Since Alice's super-operator collapses both qubits of the EPR state, it is not actually necessary to measure Bob's qubit.  Defining $\Bguessj{j}$ to be this guess-and-correct super-operator, two hybrid arguments give $\EABj{1,n}(\rhoone) \approx \Bguessj{1,n} \EAj{1,n}(\rhoone)$ and $\EAdecj{\tilde}{1,n} \EBj{1,n}(\rhoone) \approx \Bguessj{1,n} \EAdecj{\tilde}{1,n}(\rhoone)$.  Thus, 
\begin{equation*}
\Bguessj{1,n} \EAj{1,n}(\rhoone) \approx \Bguessj{1,n} \EAdecj{\tilde}{1,n}(\rhoone)
 \enspace .
\end{equation*}
The same super-operator $\Bguessj{1,n}$ appears on both the left- and right-hand sides above.  In general, applying a super-operator can reduce the trace distance.  In this case, however, it does not; the correction part of $\Bguessj{1,n}$ is unitary, and the guessing part is a stochastic map acting on a {copy} of Alice's classical transcript register.  Therefore, indeed $\EAj{1,n}(\rhoone) \approx \EAdecj{\tilde}{1,n}(\rhoone)$.  The third step of the proof uses a similar, but more involved, argument.

\subsection{Verified quantum dynamics} \label{s:tomographysketch}

Our scheme for verified quantum dynamics is based on the idea of computation by teleportation~\cite{GottesmanChuang99teleportation}.  Say that Eve wants to simulate a quantum circuit~$\mathcal C$, over the gate set $\{ H, G, \mathrm{CNOT} \}$, where $H$ is the Hadamard gate and $G = \exp(- i \frac\pi8 \sigma_y)$ is a $\pi/4$ rotation about the $y$ axis of the Bloch sphere.  Eve asks Bob to prepare many copies of the resource state $\ket 0 \otimes (I \otimes H) \ket{\varphi} \otimes (I \otimes \phasegate) \ket{\varphi} \otimes \mathrm{CNOT}_{2,4} (\ket{\varphi} \otimes \ket{\varphi})$.  He can do so by applying one-, two- and four-qubit measurements to his halves of the shared EPR states and reporting the results to Eve.  If he plays honestly, Alice's shares of the EPR states collapse into the desired resource states, up to simple corrections.  Each resource state corresponds to a basic operation in~$\mathcal C$.  Eve wires these up by repeatedly directing Alice to make a Bell measurement connecting the output of one operation to the input of the next operation in~$\mathcal C$.  After each $G$ gate, an $H$ correction might be required.  

Of course, Alice and Bob might not follow directions.  To enforce honest play, Eve runs this protocol only a small fraction of the time, and otherwise chooses uniformly between three alternative protocols sketched in \figref{f:computationprotocols}.  Let $m = {\abs C}^{O(1)}$ and $n = m^{O(1)}$.  
\begin{enumerate}
\item
In the ``state tomography" protocol, Eve chooses $K$ uniformly from $\{1, \ldots, n/m\}$.  She referees $(K-1) m$ CHSH games with both devices.  Then in the $K$th block of~$m$, Eve asks Bob to prepare the resource states, in a random order, while continuing to play CHSH games with Alice.  Eve rejects if the tomography statistics are inconsistent.  We prove that if Alice plays honestly and Eve accepts with high probability, then on most randomly chosen small subsets of the resource state positions, Alice's reduced state is close to the correct tensor product of resource states.  
\item
In the ``process tomography" protocol, Eve again chooses $K$ uniformly from $\{1, \ldots, n/m\}$ and referees $(K-1) m$ CHSH games.  In the $K$th block of~$m$, Eve asks Alice to make Bell measurements on random pairs of qubits, while continuing to play CHSH games with Bob.  If Alice's reported result for \emph{any} pair of qubits is inconsistent with Bob's outcomes, Eve rejects.  Then if Bob plays honestly and Eve accepts with high probability, Alice must also have applied the Bell measurements honestly.  
\item
In the third protocol, Eve simply referees $n$ sequential CHSH games with both devices and rejects if they do not win at least $(1 - \epsilon) \omega^* n$ games.   
\end{enumerate}

From Bob's perspective the process tomography and computation protocols are indistinguishable, as are the state tomography and CHSH game protocols.  From Alice's perspective, the state tomography and computation protocols are indistinguishable, as are the process tomography and CHSH game protocols.  The devices must behave identically in indistinguishable protocols.  The multi-game rigidity theorem therefore provides the base for a chain of implications that implies that if Eve accepts with high probability, then the devices must implement $\mathcal C$ honestly.  

Four main technical problems obstruct these claims.  

First, in the state tomography protocol, if Bob is dishonest, then Alice gets an arbitrary $m$-qubit state, and there is no reason why it should split into a tensor product of constant-qubit states.  Standard state tomography and certification arguments require many copies of a state and so do not apply.  Nonetheless, we argue using martingales that if the counts of Alice's different measurement outcomes roughly match  their expectations with high probability, then for most reported measurement outcomes from Bob and for most subsystems~$j$, Alice's conditional state reduced to her $j$th subsystem is close to what it should be.  

Furthermore, saturating Tsirelson's inequality for the CHSH game only implies that Alice is honestly making Pauli $\sigma_x$ and~$\sigma_z$ measurements on her half of an EPR state.  Tomography also requires~$\sigma_y$ measurements.  To sidestep this issue, we generalize a theory introduced by McKague~\cite{McKague10thesis} and prove that there is a large class of states, including the necessary resource states, that are all robustly determined by only $\sigma_x$ and~$\sigma_z$ measurements.  

A bigger problem, though, is that we want to characterize the operations that the devices apply to their shared EPR states, and not just the states that these operations create on the other side.  The distinction is the same as that between process and state tomography.  Essentially, the problem is that the correct states could be generated by incorrect processes.  Moreover, as for sequential CHSH games, Bob's strategy in early tomography rounds might be sufficiently dishonest as to allow him in later rounds to apply completely dishonest operators.  For example, Bob could cheat in the first requested round by cyclically shifting all of his EPR state halves.  A statistical test will not suffice to detect one round of cheating.  However, if after this first round he plays using the shifted ideal operators, his operations will all be completely dishonest even though they have the correct effect on Alice's side.  

A key observation to avoid this problem is that it is enough to certify the states prepared by one device and the processes applied by the other.  Then since a broad class of states can be certified, for applications it suffices to certify a much smaller set of operations.  We restrict consideration to Pauli stabilizer measurements~\cite{Gottesman97thesis}.  For Pauli operators in the stabilizer of a state, the measurement outcome is deterministic.  Therefore if Alice reports the wrong stabilizer syndrome in even a single round, Eve can reject.  Our process certification analysis is similar to some of the arguments used above.  We argue that Alice's earlier measurements cannot usually overly disturb the qubits intended for use in later measurements, by pulling Alice's measurement super-operators over onto Bob's halves of the EPR states.  

Finally, the verifier's questions in the state and process tomography protocols are non-adaptive, whereas in computation by teleportation the questions must be chosen adaptively based on previous responses.  This is an attack vector in some related protocols.  However, we argue that the devices can learn nothing from the adaptive questions.  

More formally, let $\rho$ be the initial state, and let~$\B$ be the super-operator describing Eve's interactions with Bob in state tomography.  Roughly, state tomography implies that the states Bob prepares on Alice's side are correct up to a small error in trace distance, or 
\begin{equation}
\Tr_B \B(\rho) \approx \Tr_B \hat \B(\hat \rho)
 \enspace ,
\end{equation}
where $\hat \B$ is the ideal super-operator and $\hat \rho$ is an ideal initial state consisting of perfect EPR states.  Similarly, let $\A$ be the super-operator describing Eve's interactions with Alice in a process tomography protocol on Alice's operations; we have 
\begin{equation}
\A(\rho) \approx \hat \A(\rho)
 \enspace .
\end{equation}
Computation by teleportation can be implemented either by choosing Bob's state preparation questions non-adaptively and Alice's process questions adaptively, or vice versa.  We show that these are exactly equivalent regardless of the devices' strategies, i.e., 
\begin{equation}
\Aad \B = \Bad \A
 \enspace ,
\end{equation}
where $\Aad$ and $\Bad$ are the same as~$\A$ and~$\B$, respectively, except with Eve choosing her questions adaptively based on the previous messages.  Combining these steps, we therefore obtain 
\begin{equation*}\begin{split}
\Tr_B \Bad \A (\rho) 
&\approx \Tr_B \Bad \hat \A (\rho) \\
&= \Aadhat \Tr_B \B (\rho) \\
&\approx \Aadhat \Tr_B \Badhat (\hat \rho)
 \enspace ,
\end{split}\end{equation*}
and thus the actual computation by teleportation protocol leaves on Alice's side nearly the ideal output.  

The proof that $\QMIP = \MIP^*$ follows along similar lines.  Begin with a $k$-prover protocol.  We may assume that it has two rounds of quantum messages from the provers, before and after the verifier broadcasts a random bit~\cite{KempeKobayashiMatsumotoVidick07qmip}.  To convert to an MIP$^*$ protocol, with classical messages, add two additional provers, Alice and Bob.  Eve teleports the original $k$ provers' messages to Alice, and directs Alice and Bob together to apply the quantum verifier's acceptance predicate.

\ifx\compilefullpaper\undefined  
\bibliographystyle{alpha-eprint}
\bibliography{q}

\end{document}
\fi	

\ifx\compilefullpaper\undefined  
\documentclass[11pt]{article}

\begin{document}
\fi

\section{Background and notation}

For a natural number $n$, let $[n] = \{1, 2, \ldots, n\}$.  Let~$S_n$ be the symmetric group of degree~$n$.  Let~$\delta_{a,b}$ be the Kronecker delta function.  The {Pauli operators} are tensor products of the matrices $I = \smatrx{1&0\\0&1}$, $X = \smatrx{0&1\\1&0}$, $Y = \smatrx{0&-i\\i&0}$ and $Z = \smatrx{1&0\\0&-1}$.  The latter three matrices were earlier termed $\sigma_x, \sigma_y, \sigma_z$.  Let $H = \frac{1}{\sqrt 2} \smatrx{1&1\\1&-1}$, the Hadamard gate, and $G = \exp(-i \frac{\pi}{8} Y) = \Big(\begin{smallmatrix}\cos\frac\pi8 & -\sin\frac\pi8 \\ \sin\frac\pi8 & \cos\frac\pi8 \end{smallmatrix}\Big)$.  

The complex and real numbers are denoted by~$\C$ and~$\R$, respectively.  For a finite set~$S$, let~$\C^S$ be the complex Hilbert space $\C^{\abs S}$ with orthonormal basis $\{ \ket x : x \in S \}$.  We assume familiarity with ket notation, e.g., $\sum_{x \in S} \ketbra x x = \identity$, the identity on~$\C^S$.  For vector spaces~$V$ and $W$ over~$\C$, let $\L(V, W)$ denote the set of all linear transformations from $V$ into~$W$, and let~$\L(V) = \L(V, V)$.  For an operator~$A$, denote by $\norm A$ its spectral norm, and by $\trnorm{A}$ its trace norm, i.e., the sum of its singular values.  

We assume familiarity with the basics of quantum computation as found, e.g., in~\cite{NielsenChuang00}.  In particular, for a Hilbert space~$\H$, a (mixed) state is a positive semi-definite operator $\rho \in \L(\H)$ with trace one, and a pure state is a rank-one state.  The evolution of a quantum system is described by a super-operator~$\E$, a map from states on $\H$ to states on~$\H'$, which can in general be specified by a set $\{E_k\} \subset \L(\H, \H')$ of ``Kraus operators" satisfying $\sum_k E_k^\dagger E_k = \identity_\H$: $\E(\rho) = \sum_k E_k \rho E_k^\dagger$.  Applying a super-operator cannot increase the trace distance between two states: 

\begin{fact} \label{t:superoperatorreducestracedistance}
For a super-operator $\E$ and density matrices $\rho$ and $\sigma$, $\trnorm{\E(\rho)-\E(\sigma)} \leq \trnorm{\rho - \sigma}$.  
\end{fact}
\begin{proof}
Let $\delta = \rho - \sigma$ and let $\delta_\pm = \frac12(\abs \delta \pm \delta)$.  Then $\delta_\pm \succeq 0$, $\delta = \delta_+ - \delta_-$ and $\abs \delta = \delta_+ + \delta_-$, implying $\trnorm{\E(\rho) - \E(\sigma)} \leq \trnorm{\sum_k E_k \delta_+ E_k^\dagger} + \trnorm{\sum_k E_k \delta_- E_k^\dagger} = \Tr \sum_k E_k \abs \delta E_k^\dagger = \Tr \abs \delta = \trnorm{\delta}$.  
\end{proof}

\noindent 
An isometric super-operator not change the trace distance: $\trnorm{E A E^\dagger} = \trnorm{A}$ for an isometry~$E$.  

A measurement with finitely many outcomes can be defined as a super-operator~$\E$ in which the Kraus operators have the form $E_k = \ket k \otimes F_k \in \L(\H, \C^{[d]} \otimes \H')$, for $k \in [d]$.  Then $\E(\rho) = \sum_k \ketbra k k \otimes F_k \rho F_k^\dagger$ is a block-diagonal matrix, known as a classical-quantum state or cq-state, in which the first register labels the classical measurement outcome~$k$, and the block $F_k \rho F_k^\dagger$ is the resulting quantum state times its probability.  

The Holevo-Helstrom theorem~\cite{NielsenChuang00} states that for any states~$\rho$ and~$\sigma$, the maximum over all possible measurements~$\E$ of the total variation distance between the distributions of outcomes for~$\E(\rho)$ and~$\E(\sigma)$ is $\frac12 \trnorm{\rho - \sigma}$.  This can be most compactly phrased as 
\begin{equation} \label{e:holevohelstromheart}
\sup_{0 \preceq \Pi \preceq \identity} \Tr (\Pi A) = \frac12 \trnorm{A}
\end{equation}
for any Hermitian operator~$A$ with $\Tr A = 0$.  Since the trace distance between two states that are block-diagonal in the same basis is the sum of the trace distances between the corresponding blocks, one can also bound the expected trace distance between the resulting states $F_k \rho F_k^\dagger / \Tr (F_k^\dagger F_k \rho)$ and $F_k \sigma F_k^\dagger / \Tr (F_k^\dagger F_k \sigma)$: 

\begin{lemma} \label{t:blockdiagonaltracedistance}
Let $\rho^{(i)} = \sum_k \ketbra k k \otimes \rho^{(i)}_k$, for $i = 1, 2$.  Let $\epsilon = \bigtrnorm{\rho^{(1)} - \rho^{(2)}} = \sum_k \bigtrnorm{\rho^{(1)}_k - \rho^{(2)}_k}$.  Let~$K^{(i)}$ be a random variable distributed according to $\Pr[K^{(i)} = k] = \Tr \rho^{(i)}_k$.  Then the total variation distance between the distributions of $K^{(1)}$ and $K^{(2)}$ satisfies 
\begin{equation}
\frac12 \sum_k \bigabs{\Tr (\rho^{(1)}_k - \rho^{(2)}_k)} \leq \epsilon/2
 \enspace .
\end{equation}
Furthermore, letting $\bar \rho^{(i)}_k = \rho^{(i)}_k / \Tr \rho^{(i)}_k$, if $\rho^{(i)}_k \neq 0$, and $0$ otherwise, the expected trace distance between $\bar \rho^{(1)}_{K^{(1)}}$ and $\bar \rho^{(2)}_{K^{(1)}}$ satisfies 
\begin{equation}
\Ex\big[ \bigtrnorm{ \bar \rho^{(1)}_{K^{(1)}} - \bar \rho^{(2)}_{K^{(1)}} } \big] \leq 2 \epsilon
 \enspace .
\end{equation}
\end{lemma}

\begin{proof}
The bound on the total variation distance is a special case of the Holevo-Helstrom theorem, and follows directly from the inequality $\bigabs{\Tr(\rho^{(1)}_k - \rho^{(2)}_k)} \leq \Tr \bigabs{\rho^{(1)}_k - \rho^{(2)}_k} = \bigtrnorm{\rho^{(1)}_k - \rho^{(2)}_k}$.  

For the second part of the lemma, observe: 

\begin{claim}
For any $c \geq 0$ and any two density matrices $\sigma$ and $\tau$, $\trnorm{\sigma - \tau} \leq 2 \trnorm{\sigma - c \tau}$.  
\end{claim}

\begin{proof}
By symmetry, we may assume without loss of generality that $c \in [0,1]$.  Indeed, if $c > 1$, then $\trnorm{\sigma - c \tau} = c \trnorm{\tau - \frac{1}{c} \sigma} \geq \trnorm{\tau - \frac{1}{c} \sigma}$, and $1/c \in [0,1]$.  

For a Hermitian matrix $M$, let $M_\pm = \frac12 (\abs M \pm M) \succeq 0$.  Then 
\begin{align*}
\trnorm{\sigma - c \tau} 
&= \Tr (\sigma - c \tau)_+ + \Tr (\sigma - c \tau)_- \geq \Tr (\sigma - c \tau)_+ \geq \Tr (\sigma - \tau)_+ 
 \enspace .
\end{align*}
Here the second inequality follows since by Schur's Theorem~\cite{Bhatia07} and as $(1-c) \tau \succeq 0$, $\Tr (\sigma - c \tau)_+ = \max_{0 \preceq \Pi \preceq \identity} \Tr \Pi (\sigma - \tau + (1-c) \tau) \geq \max_{0 \preceq \Pi \preceq \identity} \Tr \Pi (\sigma - \tau) = \Tr (\sigma - \tau)_+$.  Finally, $\Tr (\sigma - \tau)_+ = \frac12 \trnorm{\sigma - \tau}$ since $\Tr \sigma = \Tr \tau$.  
\end{proof}
 
Therefore, 
\begin{align*}
\Ex\big[ \bigtrnorm{ \bar \rho^{(1)}_{K^{(1)}} - \bar \rho^{(2)}_{K^{(1)}} } \big]
&= \sum_k \Tr \rho^{(1)}_k \bigtrnorm{ \bar \rho^{(1)}_k - \bar \rho^{(2)}_k } \\
&\leq 2 \sum_{k : \, \rho^{\smash{(1)}}_k \neq 0} \Tr \rho^{(1)}_k \bigtrnorm{ \bar \rho^{(1)}_k - \tfrac{1}{\Tr \rho^{(1)}_k} \rho^{(2)}_k } \\
&\leq 2 \trnorm{ \rho^{(1)} - \rho^{(2)} }
 \enspace .  \qedhere
\end{align*}
\end{proof}

By a triangle inequality, a converse statement also holds: 
\begin{equation}
\trnorm{\rho^{(1)} - \rho^{(2)}} \leq \sum_k \abs{ \Tr (\rho^{(1)}_k - \rho^{(2)}_k) } + \sum_k \Tr \rho^{(1)}_k \trnorm{\bar \rho^{(1)}_k - \bar \rho^{(2)}_k}
 \enspace .
\end{equation}
Thus for measurement super-operators $\E$ and $\F$, $\E(\rho)$ is close to $\F(\sigma)$ in trace distance if and only if the distributions of measurement outcomes are close in total variation distance and the expected trace distance (under either measurement distribution) between the corresponding resulting states is small.  In general, both of the latter conditions are required for the implication that $\E(\rho) \approx \F(\sigma)$, but we will argue later that in certain special cases, e.g., $\rho = \sigma = \identity_\H / \dim \H$, the maximally mixed state, and~$\F$ a computational-basis measurement, it suffices that the expected trace distance between the resulting states be small.  See \lemref{t:blockscloseinexpectationimpliesmatricescloseandmore}.  

\medskip

An essential proposition in our analyses of sequential CHSH games and state and process tomography is the so-called Gentle Measurement Lemma.  It states that if a particular measurement outcome occurs with high probability on a given state~$\rho$, then that measurement does not much disturb~$\rho$: 

\begin{lemma}[{Gentle measurement~\cite{Winter99coding, OgawaNagaoka07channelcoding}}] \label{t:gentlemeasurement}
Let $\rho$ be a state, and $\Pi$ an operator with $0 \preceq \Pi \preceq \identity$.  Then 
\begin{equation}
\trnorm{\rho - \sqrt{\Pi} \rho \sqrt{\Pi}} \leq 2 \sqrt{1 - \Tr(\Pi \rho)}
 \enspace .
\end{equation}
\end{lemma}

A useful special case is when $\Pi$ can be written as $\pi \otimes \identity$ for a rank-one projection~$\pi$.  Then the Gentle Measurement Lemma implies that $\rho$ is close to a product state: 

\begin{corollary} \label{t:gentlemeasurementpurestate}
Let $\rho$ be a state on $\H_1 \otimes \H_2$, and let~$\pi$ be a pure state on~$\H_1$.  If for some $\delta \geq 0$, $\Tr (\pi \Tr_2 \rho) \geq 1 - \delta$, then 
\begin{equation}
\trnorm{\rho - \pi \otimes \Tr_1 \rho} \leq 2 \sqrt \delta + \delta
 \enspace .
\end{equation}
\end{corollary}

\begin{proof}
Substitute into \lemref{t:gentlemeasurement} $\Pi = \pi \otimes \identity$.  Since $\Tr (\Pi \rho) = \Tr(\pi \Tr_2 \rho) \geq 1 - \delta$, we obtain 
\begin{equation*}
\bigtrnorm{\rho - \pi \otimes \Tr_1 \!\big( (\pi \otimes \identity) \rho \big)} \leq 2 \sqrt \delta
 \enspace .
\end{equation*}
To finish, use $\bigtrnorm{\Tr_1 \rho - \Tr_1 \!\big( (\pi \otimes \identity) \rho \big)} = \bigtrnorm{\Tr_1 \! \big((\identity - \pi) \otimes \identity \big) \rho} = \Tr \! \big((\identity - \pi) \otimes \identity \big) \rho \leq \delta$.  
\end{proof}

This corollary can be generalized to say that if $\rho$ is a multi-partite state whose partial traces are close to pure states~$\pi_j$, then $\rho$ must be close to the tensor product $\bigotimes_j \pi_j$: 

\begin{lemma} \label{t:purepartsdeterminethewhole}
Let $\rho \in \L(\H_1 \otimes \cdots \otimes \H_m)$ be a quantum state.  For $j \in [m]$, let $\rho_j = \Tr_{1 \ldots \hat j \ldots m} \rho \in \L(\H_j)$ be its reduced density matrix on~$\H_j$.  Assume that for some $\delta \geq 0$ and for each $j \in [m]$ there exists a pure state $\pi_j \in \L(\H_j)$ such that $\Tr (\pi_j \rho_j) \geq 1 - \delta$.  Then 
\begin{equation}
\bigtrnorm{\rho - \pi_1 \otimes \cdots \otimes \pi_m} \leq m (2 \sqrt \delta + \delta)
 \enspace .
\end{equation}
\end{lemma}

\begin{proof}
By \corref{t:gentlemeasurementpurestate}, $\trnorm{\rho - \pi_j \otimes \Tr_j \rho} \leq 2 \sqrt \delta + \delta$ for all~$j$.  
Putting these bounds together, 
\begin{align*}
\bigtrnorm{\rho - \pi_1 \otimes \cdots \otimes \pi_m}
&\leq \sum_{j \in [m]} \bigtrnorm{ \pi_1 \otimes \cdots \pi_{j-1} \otimes (\Tr_{1\ldots j-1} \rho - \pi_j \otimes \Tr_{1\ldots j} \rho) } \\
&\leq \sum_{j \in [m]} \bigtrnorm{\rho - \pi_j \otimes \Tr_j \rho} \\
&\leq m (2 \sqrt \delta + \delta)
 \enspace .
\end{align*}
The second inequality holds because a partial trace cannot increase the trace distance.  
\end{proof}

Note that the lemma would also hold, with the same basic proof, if one of the states $\pi_j$, say $\pi_m$, were mixed and satisfied the assumption $\trnorm{\rho_m - \pi_m} \leq 2 \sqrt \delta + \delta$ instead of $\Tr (\pi_m \rho_m) \geq 1 - \delta$.  

\medskip

Let us conclude this section with two more straightforward technical claims about the trace~norm.  

\begin{lemma} \label{t:traceandtracenorm}
For a Hilbert space~$\H$ and linear operators~$A$ and $\Delta$ in $\L(\H)$, $\abs{\Tr (A \Delta)} \leq \norm{A} \trnorm{\Delta}$.  
\end{lemma}

\begin{proof}
Let $\Delta = \sum_j \lambda_j \ketbra{j}{j'}$ be the singular-value decomposition for~$\Delta$, for singular values $\lambda_j > 0$ and orthonormal sets $\{\ket j\}$ and $\{\ket{j'}\}$.  Then, since $\trnorm{\Delta} = \sum_j \lambda_j$, 
\begin{align*}
\abs{\Tr (A \Delta)}
&= \abs{ {\textstyle \sum}_j \lambda_j \bra{j'} A \ket j }
\leq {\textstyle \sum}_j \lambda_j \abs{\bra{j'} A \ket j}
\leq \norm{A} \trnorm{\Delta}
 \enspace . \qedhere
\end{align*}
\end{proof}

\begin{claim} \label{t:vectortraceversusl2distance}
For any two unit vectors $\ket a$ and $\ket b$, 
\begin{equation}
\sqrt 2 \min_{\phi \in [0, 2\pi)} \norm{\ket a - e^{i \phi} \ket b} \leq \trnorm{\ketbra a a - \ketbra b b} \leq 2 \norm{\ket a - \ket b}
 \enspace .
\end{equation}
For arbitrary vectors $\ket a$ and $\ket b$ with $\norm{\ket a - \ket b} \leq \delta$, $\trnorm{\ketbra a a - \ketbra b b} \leq \sqrt{4 \norm{\ket a}{}^2 \delta^2 + 4 \norm{\ket a} \delta^3 + \delta^4}$.  
\end{claim}

\begin{proof}
Calculate $\trnorm{\ketbra a a - \ketbra b b} = \sqrt{(\norm{\ket a}{}^2 + \norm{\ket b}{}^2)^2 - 4 \abs{\braket a b}{}^2}$.  For unit vectors, therefore, with $\theta = \arccos \abs{\braket a b}$, $\min_\phi \norm{\ket a - e^{i \phi} \ket b} = \sqrt{2 - 2 \cos \theta}$ and $\trnorm{\ketbra a a - \ketbra b b} = 2 \sin \theta$.  
The assertions follow.  
\end{proof}

Thus the trace distance between two pure states is closely related to their Euclidean vector distance up to a choice of phase.

\ifx\compilefullpaper\undefined  
\bibliographystyle{alpha-eprint}
\bibliography{q}

\end{document}
\fi

\ifx\compilefullpaper\undefined  
\documentclass[11pt]{article}

\begin{document}
\fi

\section{The CHSH game is rigid: A robust converse to Tsirelson's \mbox{inequality}} \label{s:CHSHgamerigidity}

In this section, we will study the CHSH game of \figref{f:chsh}.  We will argue that nearly optimal quantum strategies must, up to local changes of basis, be close to the ideal strategy that uses a shared EPR state, possibly in tensor product with an ancillary state.  

To avoid conflicting with the Pauli matrices $X$ and~$Y$, we will use lower-case letters $a, b, x, y$ for the random transcript in this section.  Recall that Alice and Bob win the game if the exor of their responses equals the product of Eve's questions, $x \oplus y = a b$.  In computer science terminology, the devices Alice and Bob are referred to as ``provers," and the experimentalist Eve is a ``verifier."  

In a general quantum strategy, Alice and Bob have Hilbert spaces $\H_A$ and $\H_B$, respectively, and a shared pure quantum state $\ket \psi \in \H_A \otimes \H_B \otimes \H_C$.  Here $\H_C$ is an inaccessible third Hilbert space used to purify the shared state.  Alice and Bob determine their outputs by applying POVMs, depending on~$a$ and~$b$, respectively, to their portions of $\ket \psi$.  By possibly appending ancilla states, we may without loss of generality assume that they apply two-outcome projective measurements.  (See also~\cite[Prop.~2]{CleveHoyerTonerWatrous04nonlocal}.)  For $\device \in \{A, B\}$ and $\alpha, \chi \in \{0,1\}$, let $P^\device(\alpha, \chi)$ be the projection applied by prover $\device$ for question $\alpha$ and answer~$\chi$.  Let $\RXa{\alpha} = P^\device(\alpha, 0) - P^\device(\alpha, 1)$.  Since $P^\device(\alpha, 1) = \identity_{\H_\device} - P^\device(\alpha, 0)$, $\RXa{\alpha}$ is a reflection.  Define the strategy's \emph{correlation value} to be 
\begin{equation} \label{e:chshcorrelationvaluedef}
4 \big(2 \Pr[a b = x \oplus y] - 1\big)
= \bra \psi \Big( \sum_{a, b \in \{0,1\}} (-1)^{a b} \RAa{a} \otimes \RBa{b} \Big) \otimes \identity_{\H_C} \ket \psi
 \enspace .
\end{equation}

An example of a strategy that uses a shared EPR state~$\ket \psi = \frac{1}{\sqrt 2}(\ket{00} + \ket{11})$ is given in \tabref{f:optimalCHSHstrategy}.  This strategy satisfies that conditioned on any fixed values for $a$, $b$ and $x$, the probability over $y$ that the provers win is $\cos^2 \frac\pi8$.  Tsirelson's inequality~\cite{Tsirelson80inequality} states that this strategy is optimal: for any quantum strategy, $\Pr[ab = x \oplus y] \leq \cos^2 \frac\pi8 = \frac12(1 + \frac{1}{\sqrt 2}) \approx 85.4\%$.  Therefore, the correlation value is at most $2 \sqrt 2$.  In contrast, for any classical strategy, based on a shared random string instead of a shared quantum state, the maximum probability of winning is~$3/4$.  

\begin{table}
\centering
\begin{tabular}{c c@{$\qquad\qquad$}c c}
\multicolumn{2}{c}{\bf Alice's strategy$\qquad\qquad$} & \multicolumn{2}{c}{\bf Bob's strategy} \\
\underline{$a = 0$} & \underline{$a = 1$} & \underline{$b = 0$} & \underline{$b = 1$} \\
$\ketbra 0 0 \mapsto x = 0$ & $\ketbra + + \rightarrow x = 0$ & $G^\dagger \ketbra + + G \rightarrow y = 0$ & $G^\dagger \ketbra 0 0 G \rightarrow y = 0$ \\
$\ketbra 1 1 \mapsto x = 1$ & $\ketbra - - \rightarrow x = 1$ & $G^\dagger \ketbra - - G \rightarrow y = 1$ & $G^\dagger \ketbra 1 1 G \rightarrow y = 1$ \\
\includegraphics[scale=1]{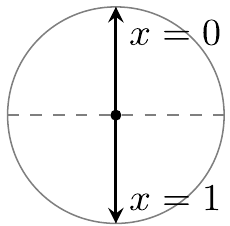} & \includegraphics[scale=1]{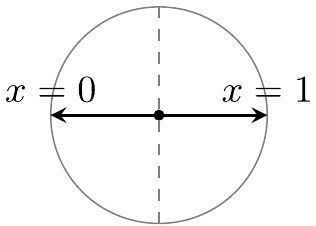} & \includegraphics[scale=1]{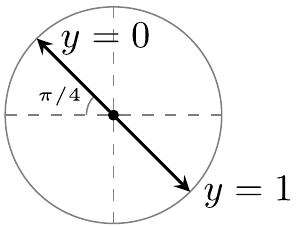}$\!\!\!\!\!\!\!\!\!$ & \includegraphics[scale=1]{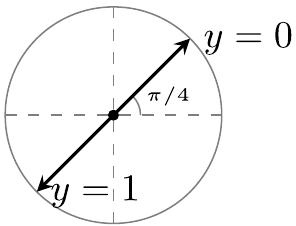}$\!\!\!\!\!\!\!\!\!$ 
\end{tabular}
\caption{An optimal quantum strategy for the CHSH game.  Alice and Bob each have one qubit of a shared EPR state $\frac{1}{\sqrt 2} (\ket{00} + \ket{11})$.  On each input $a$ or $b$, they make the two-outcome projective measurements listed above.  Here, $\ket{\pm} = \frac{1}{\sqrt 2}(\ket 0 \pm \ket 1)$ and $G = \exp(- i \frac\pi8 Y)$.  Thus $\RAa{0} = Z$, $\RAa{1} = X$, $\RBa{0} = G^\dagger X G$ and $\RBa{1} = G^\dagger Z G$.  The measurements are also illustrated on a cross-section through the $xz$-plane of the Bloch sphere.} \label{f:optimalCHSHstrategy}
\end{table}

Our CHSH rigidity lemma, a robust converse to Tsirelson's inequality, states that any strategy that achieves correlation value at least $2 \sqrt 2 - \epsilon$ must be $O(\sqrt \epsilon)$ close to the ideal strategy of \tabref{f:optimalCHSHstrategy}.  
In \appref{s:generalizedeprlemma}, we prove a similar statement for an extended CHSH game in which the ideal strategy also includes measurements in the $y$ direction of the Bloch sphere.  

\begin{definition} \label{t:chshgamestructuredef}
For $\epsilon \geq 0$, a quantum strategy for the CHSH game is \emph{$\epsilon$-structured} if the correlation value is at least $2 \sqrt 2 - \epsilon$.  
\end{definition}

\begin{lemma}[CHSH game rigidity] \label{t:eprlemma}
There exists a constant $c > 0$ such that the following statements hold.  
Consider a quantum strategy for the CHSH game, specified by Hilbert spaces $\H_A$, $\H_B$ and $\H_C$, a state $\ket \psi \in \H_A \otimes \H_B \otimes \H_C$, and reflections $\RXa{\alpha} \in \L(\H_\device)$ for $\device \in \{A, B\}$ and $\alpha \in \{0,1\}$.  Let $\epsilon > 0$ and assume that the strategy is $\epsilon$-structured.  

Then there are extensions of the Hilbert spaces $\H_A, \H_B$, and extensions of the reflections $\RXa{\alpha}$ by a direct sum with other reflections, so that the following properties hold: 
\begin{itemize}
\item
There is an isomorphism between Alice's extended space and $\C^2 \otimes \hat \H_A$, under which $\RAa{0} = Z \otimes \identity$ and $\bignorm{(\RAa{1} - X \otimes \identity)_A \otimes \identity_{BC} \ket \psi} < c \sqrt \epsilon$.  
\item 
Bob's space is isomorphic to $\C^2 \otimes \hat \H_B$, with $\RBa{0} = Z \otimes \identity$ and $\bignorm{(\RBa{1} - X \otimes \identity)_B \ket \psi} < c \sqrt \epsilon$.  
\item 
Finally, letting 
\begin{equation} \label{e:perfectBellpairalmost}
\ket{\psi^*} = \big(I \otimes (H G)\big) \frac{1}{\sqrt 2}\big(\ket{00} + \ket{11}\big)
 \enspace ,
\end{equation}
there exists a unit vector $\ket{\psi^\times} \in \hat \H_A \otimes \hat \H_B \otimes \H_C$ with $\norm{\ket \psi - \ket{\psi^*} \otimes \ket{\psi^\times}} < c \sqrt \epsilon$.  
\end{itemize}
Furthermore, if $\H_A$ and $\H_B$ are finite-dimensional, then the isomorphisms into $\C^2 \otimes \hat \H_A$ and into $\C^2 \otimes \hat \H_B$ depend only on $\RAa{0}, \RAa{1}$ and on $\RBa{0}, \RBa{1}$, respectively.  
\end{lemma}

Up to the constant factor, the $O(\sqrt \epsilon)$ dependence of the error terms is tight.  Indeed, if one starts with the ideal strategy of \tabref{f:optimalCHSHstrategy} and perturbs either the shared state or the measurements by~$\delta$, the correlation value will generically decrease by $\Theta(\delta^2)$; first-order corrections must cancel.  

In our main applications of \lemref{t:eprlemma}, the spaces $\H_A$ and $\H_B$ will be finite-dimensional.  The final statement in the lemma is important because we would like the isomorphisms into $\C^2 \otimes \hat \H_A$ and $\C^2 \otimes \hat \H_B$ to be computable locally even without knowing the underlying state~$\ket \psi$.  Indeed, after playing multiple CHSH games in sequence, neither prover knows~$\ket \psi$.  However, the dimension-truncation argument given below depends on~$\ket \psi$.  

For the proof of \lemref{t:eprlemma} we will use the following characterization of the eigen-decomposition of the product of reflections due to Jordan~\cite{Jordan75projections}.  Its use is common in quantum computation, including in algorithms~\cite{Szegedy04walkfocs, Reichardt10advtight, LeeMittalReichardtSpalekSzegedy11stateconversion}, in amplification of QMA in complexity theory~\cite{MarriottWatrous05qma, NagajWocjanZhang09qma}, and in the study of Bell inequalities in entanglement theory and device-independent QKD~\cite{Masanes06bellinequalities, PironioAcinBrunnerGisinMassarScarani09qkd, McKague09deviceindependent}.  

\begin{lemma}[Jordan's Lemma] \label{t:jordanslemma}
Let $\Pi$ and $\Delta$ be projections acting on a {finite-dimensional} Hilbert space~$\H$.  Then $\H$ can be decomposed into orthogonal one- and two-dimensional subspaces invariant under $\Pi$ and $\Delta$.  
\end{lemma}

Before beginning the proof of \lemref{t:eprlemma}, let us sketch the argument for the case that $\H_A = \H_B = \C^2$, $\H_C = \C$ and $\epsilon = 0$.  The rest of the proof essentially works by applying Jordan's Lemma to $\RAa{0}$ and~$\RAa{1}$, and again to $\RBa{0}$ and $\RBa{1}$, to locate Alice and Bob's qubits for the game and therefore reduce to this two-dimensional case.  However, achieving the optimal $O(\sqrt \epsilon)$ error dependence requires more work.  

If the $\RXa{\alpha}$ reflections act on $\C^2$ and are not equal to $\pm I$, then we can choose a basis such that $\RXa{0} = Z$, $\RAa{1} = \smatrx{\cos 2 \theta & \sin 2 \theta \\ \sin 2 \theta & -\cos 2 \theta}$ and $\RBa{1} = \smatrx{\cos 2 \theta' & \sin 2 \theta' \\ \sin 2 \theta' & -\cos 2 \theta'}$ for certain angles $\theta, \theta' \in [0, \frac\pi2]$.  Letting $M_0 = \frac12 (\RAa{0} + \RAa{1}) \otimes I - \frac{1}{\sqrt 2} I \otimes \RBa{0}$ and $M_1 = \frac12 (\RAa{0} - \RAa{1}) \otimes I - \frac{1}{\sqrt 2} I \otimes \RBa{1}$, the correlation value satisfies 
\begin{equation*}
2 \sqrt 2 - \epsilon 
\leq \bra \psi \Big( \sum_{a, b \in \{0,1\}} (-1)^{a b} \RAa{a} \otimes \RBa{b} \Big) \ket \psi
= 2 \sqrt 2 - \sqrt 2 \bra \psi (M_0^2 + M_1^2) \ket \psi
 \enspace .
\end{equation*}
For $\epsilon = 0$, this means that $\ket \psi$ must lie in the intersection of the kernels of $M_0$ and~$M_1$.  The four eigenvalues of $M_0$ are $\pm \cos \theta \pm \frac{1}{\sqrt 2}$.  For the kernel to be nonempty, it must be that $\theta = \frac\pi4$.  A symmetrical argument implies that $\theta' = \frac\pi4$.  For small $\epsilon > 0$, $\ket \psi$ must lie close to small-eigenvalue subspaces of both $M_0$ and $M_1$, implying that $\theta$ and~$\theta'$ are close to $\frac\pi4$.  Thus the measurement operators are rigidly determined.  

For $\theta = \theta' = \frac\pi4$, the kernel of $\sqrt 2 ((H G) \otimes I) M_0 ((G^\dagger H) \otimes I) = Z \otimes I - I \otimes Z$ is spanned by the vectors $\ket{00}$ and $\ket{11}$.  The kernel of $\sqrt 2 ((H G) \otimes I) M_1 ((G^\dagger H) \otimes I) = X \otimes I - I \otimes X$ is spanned by the vectors $\ket{+} \otimes \ket{+} = \frac12 (\ket{00} + \ket{01} + \ket{10} + \ket{11})$ and $\ket{-} \otimes \ket{-} = \frac12 (\ket{00} - \ket{01} - \ket{10} + \ket{11})$.  For the $\ket{01}$ and $\ket{10}$ terms to cancel out, a linear combination of these vectors must have equal coefficients.  The intersection between the two kernels is therefore spanned by $\ket{00} + \ket{11}$.  Thus the state $\ket \psi$ is rigidly determined.  

The above argument, together with Jordan's Lemma, conveys much of the intuition for the CHSH rigidity lemma.  However, we have not explained the derivation of the operators $M_0$ and~$M_1$, chosen to satisfy $\sum_{a, b \in \{0,1\}} (-1)^{a b} \RAa{a} \otimes \RBa{b} = 2 \sqrt 2 I \otimes I - \sqrt 2 (M_0^2 + M_1^2)$.  In general, for a game in which Eve draws her questions from the distribution $p(a,b)$ and accepts if $x \oplus y = V(a,b)$, let $\Theta = \sum_{a,b} p(a,b) (-1)^{V(a,b)} \ketbra a b$ and $\hat \Theta = \smatrx{0&\Theta\\\Theta^\dagger&0}$.  Let $\omega^*$ be the optimal success probability.  By the Tsirelson semi-definite program~\cite{CleveSlofstraUngerUpadhyay06parallelXOR}, the optimal bias is $2 \omega^* - 1 = \frac12 \max_{\Gamma \succeq 0, \Gamma \circ I = I} \langle \hat \Theta, \Gamma \rangle = \frac12 \min_{\Delta = \Delta \circ I \succeq \hat \Theta} \Tr \Delta$.  $\Gamma$ is the Gram matrix of the vectors $\RAa{a} \ket \psi$ and $\RBa{b} \ket \psi$.  Letting $\Delta^*$ achieve the second optimum, we have $\frac12 \langle \hat \Theta, \Gamma \rangle = (2\omega^*-1) - \frac12 \langle \Delta^* - \hat \Theta, \Gamma \rangle$.  For the CHSH game, $\Delta^* = \frac{1}{2 \sqrt 2} \identity$, and the matrices $M_0, M_1$ correspond to eigenvectors of $\Delta^* - \hat \Theta$.  

\begin{proof}[Proof of \lemref{t:eprlemma}]
We begin the proof by truncating the Hilbert spaces $\H_A$ and $\H_B$ to finite dimensions, in order to apply Jordan's Lemma.  Jordan's Lemma is false for infinite-dimensional Hilbert spaces.  

\begin{claim} \label{t:truncatedimensions}
For any $\delta > 0$, there are finite-dimensional subspaces $\bar \H_A \subseteq \H_A$, $\bar \H_B \subseteq \H_B$ such that: 
\begin{itemize}
\item 
For $\device \in \{A, B\}$, $\bar \H_\device$ is closed under $\RXa{0}$.  
\item 
For $\device \in \{A, B\}$, there exists a reflection $\RXdeca{\bar}{1} \in \L(\H_\device)$ with $\norm{(\RXdeca{\bar}{1} - \RXa{1}) \otimes \identity \ket \psi} < \delta$ and under which $\bar \H_\device$ is closed.  
\item
Letting $\ket{\bar \psi}$ be $\ket \psi$ projected to $\bar \H_A \otimes \bar \H_B \otimes \H_C$ and renormalized, $\norm{\ket{\bar \psi} - \ket \psi} < \delta$.  
\item
The joint strategy specified by Alice's reflections $\RAa{0}$, $\RAdeca{\bar}{1}$, Bob's reflections $\RBa{0}$, $\RBdeca{\bar}{1}$, and the joint state $\ket{\bar \psi}$ has correlation value at least $2 \sqrt 2 - \epsilon - \delta$.  
\end{itemize}
\end{claim}

\begin{proof}
First truncate the spaces $\H_A$ and $\H_B$ to finite dimensional spaces $\tilde \H_A$ and $\tilde \H_B$ that are closed under $\RAa{1}$ and $\RBa{1}$, respectively, and such that $\ket \psi$ is almost entirely supported on $\tilde \H_A \otimes \tilde \H_B \otimes \H_C$.  For $\device \in \{A, B\}$, let $\bar \H_\device$ be the closure of $\tilde \H_\device$ under $\RXa{0}$.  Let $\RXdeca{\bar}{1}$ be $\RXa{1}$ on $\tilde \H_\device$ extended by the identity on $\tilde \H_\device^\perp$.  In this way, $\bar \H_\device$ is closed under both $\RXa{0}$ and $\RXdeca{\bar}{1}$.  
\end{proof}

Using the assumption $\epsilon > 0$, apply \claimref{t:truncatedimensions} with $\delta = \epsilon$.  By Jordan's Lemma, $\bar \H_A$ can be decomposed into the direct product of a set of one- and two-dimensional subspaces invariant under both $\RAa{0}$ and $\RAdeca{\bar}{1}$.  For notational convenience, add dimensions and extend the reflections if necessary, so each subspace is two dimensional and includes both $+1$-eigenvalue and $-1$-eigenvalue eigenvectors for both reflections.  Index these subspaces by~$i$.  Similarly decompose $\bar \H_B$ according to $\RBa{0}$ and $\RBdeca{\bar}{1}$, indexing the invariant two-dimensional subspaces by~$i'$.  

Let $\theta_i \in [0, \frac{\pi}{2}]$ be the angle between the $+1$ eigenvectors of $\RAa{0}$ and $\RAdeca{\bar}{1}$ on the $i$th subspace, and let $C_i = \cos 2 \theta_i$ and $S_i = \sin 2 \theta_i$.  Define the angles $\theta_{i'}$ similarly, and let $C_{i'} = \cos 2 \theta_{i'}$, $S_{i'} = \sin 2 \theta_{i'}$.  
Choose orthonormal basis vectors $\ket 0 = \smatrx{1\\0}$, $\ket 1 = \smatrx{0\\1}$ for each subspace, so 
\begin{align*}
\RAa{0} \vert_{\bar \H_A} &= \sum_i \ketbra i i \otimes Z 
&
\RAdeca{\bar}{1} \vert_{\bar \H_A} &= 
\sum_i \ketbra i i \otimes \left(\begin{smallmatrix} C_i & S_i \\ S_i & -C_i \end{smallmatrix}\right)
\\
\RBa{0} \vert_{\bar \H_B} &= \sum_{i'} \ketbra{i'}{i'} \otimes Z 
&
\RBdeca{\bar}{1} \vert_{\bar \H_B} &= 
\sum_{i'} \ketbra {i'}{i'} \otimes \left(\begin{smallmatrix} C_{i'} & S_{i'} \\ S_{i'} & -C_{i'} \end{smallmatrix}\right)
 \enspace .
\end{align*}
Since each $\bar \H_\device^\perp \subset \H_\device$ is closed under $\RXa{0}$, we can choose a basis for $\bar \H_\device^\perp \subset \H_\device$ so that $\H_\device \cong \hat \H_\device \otimes \C^2$ and $\RXa{0} = \identity \otimes Z$ everywhere, by if necessary extending the Hilbert space $\H_\device$.  This gives two of the claims of \lemref{t:eprlemma}.  

With this decomposition, and letting $\{\ket c\}$ be an orthonormal basis for $\H_C$, the shared state~$\ket{\bar \psi}$ can be written 
\begin{equation*}\begin{split}
\ket{\bar \psi}_{ABC} &= \sum_{c, i, i'} \ket{c}_C \otimes \ket{i}_A \otimes \ket{i'}_B \otimes \ket{\psi_{c i i'}}_{AB} \\
\ket{\psi_{c i i'}}_{AB} &= \sum_{b, b' \in \{0,1\}} \alpha_{c i i' b b'} \ket{b}_A \otimes \ket{b'}_B
 \enspace .
\end{split}\end{equation*}
Let $\ket{\tilde \psi_{c i i'}} = \ket{\psi_{c i i'}} / \norm{\ket{\psi_{c i i'}}}$.  

\smallskip

For $j \in \{0,1\}$, let $M_j = \frac12 (\RAa{0} + (-1)^j \RAdeca{\bar}{1}) \otimes \identity_{BC} - \frac{1}{\sqrt 2} \RBa{j} \otimes \identity_{AC}$.  Then 
\begin{equation}\begin{split} \label{e:eprlemmaproofsdpdual}
2 \sqrt 2 - 2\epsilon 
&\leq \bra{\bar \psi} \big( \RAa{0} \otimes \RBa{0} + \RAa{0} \otimes \RBdeca{\bar}{1} + \RAdeca{\bar}{1} \otimes \RBa{0} - \RAdeca{\bar}{1} \otimes \RBdeca{\bar}{1} \big) \ket{\bar \psi} \\
&= 2 \sqrt 2 - \sqrt 2 \langle M_0^2 + M_1^2 \rangle_{\ket{\bar \psi}}
 \enspace .
\end{split}\end{equation}
In particular, letting $\beta_{cii'} = \langle M_0^2 + M_1^2 \rangle_{\ket{ii'} \otimes \ket{\tilde \psi_{c i i'}}}$, 
\begin{equation} \label{e:averageCHSHviolation}
0 \leq \big\langle M_0^2 + M_1^2 \big\rangle_{\ket{\bar \psi}} = \sum_{c, i, i'} \norm{\ket{\psi_{cii'}}}^2 \beta_{c i i'} \leq \sqrt 2 \epsilon
 \enspace .
\end{equation}

\begin{proposition} \label{t:goodsubspacestate}
For any $c, i, i'$, $\sin 2 \theta_i \geq 1 - O(\beta_{cii'})$, $\sin 2 \theta_{i'} \geq 1 - O(\beta_{cii'})$ and for some phase~$\phi_{cii'}$, 
\begin{equation}
\norm{\ket{\tilde \psi_{cii'}} - e^{i \phi_{cii'}} \ket{\psi^*}}{}^2 \leq O(\beta_{cii'})
 \enspace .
\end{equation}
\end{proposition}

\begin{proof}
To simplify notation, we will suppress the $c, i, i'$ dependence, and restrict the operators~$M_j$ to the invariant $i, i'$ subspace.  Then $\beta = \beta_{cii'} = \norm{M_0 \ket{\tilde \psi}}{}^2 + \norm{M_1 \ket{\tilde \psi}}{}^2$.  Assume that $\beta \leq 2 \cdot 10^{-6}$; by fixing the hidden constants in our desired inequalities to be sufficiently large, the claims are trivial for larger~$\beta$.  

Expanding $\ket{\tilde \psi}$ as $\ket{\tilde \psi} = \alpha_{00} \ket{00} + \alpha_{01} \ket{01} + \alpha_{10} \ket{10} + \alpha_{11} \ket{11}$, and letting $C = C_i$, $S = S_i$, 
\begin{equation*}\begin{split}
2 M_0 \ket{\tilde \psi}
&= \alpha_{00} (1 - \sqrt 2) \ket{00} + \alpha_{01} (1 + \sqrt 2) \ket{01} - \alpha_{10} (1 + \sqrt 2) \ket{10} - \alpha_{11} (1 - \sqrt 2) \ket{11} \\
&\quad + \alpha_{00} \binomial{C \ket 0}{+S \ket 1} \ket 0 + \alpha_{01} \binomial{C \ket 0}{+S \ket 1} \ket 1 + \alpha_{10} \binomial{S \ket 0}{-C \ket 1} \ket 0 + \alpha_{11} \binomial{S \ket 0}{-C \ket 1} \ket 1 \\
&= \ket{00} \big[\alpha_{00} (1 + C - \sqrt 2) + \alpha_{10} S\big] + \ket{01} \big[\alpha_{01} (1 + C + \sqrt 2) + \alpha_{11} S\big] \\
&\quad + \ket{10} \big[\alpha_{00} S - \alpha_{10} (1 + C + \sqrt 2)\big] + \ket{11} \big[\alpha_{01} S - \alpha_{11} (1 + C - \sqrt 2)\big]
 \enspace .
\end{split}\end{equation*}
For $b, b' \in \{0,1\}$, let $\delta_{bb'} = 2 \bra{bb'} M_0 \ket{\tilde \psi}$.  Then $\sum_{bb'} \abs{\delta_{bb'}}{}^2 = 4 \norm{M_0 \ket{\tilde \psi}}{}^2 \leq 4 \beta$.  
We find 
\begin{align} \label{e:eprlemmaproofcoefficients}
\alpha_{01} &= \frac{-\alpha_{11} S + \delta_{01}}{1 + C + \sqrt 2} &
\alpha_{10} &= \frac{\alpha_{00} S - \delta_{10}}{1 + C + \sqrt 2}
 \enspace ,
\end{align}
implying 
\begin{align} \label{e:eprlemmaproofcoefficientsbound}
\alpha_{00} \Big[ (1+C-\sqrt 2) + \frac{S^2}{1 + C + \sqrt 2} \Big] &= \delta_{00} + \frac{S \delta_{10}}{1 + C + \sqrt 2} \leq 3 \sqrt \beta
 \enspace .
\end{align}
The inequality uses $\abs C, \abs S \leq 1$ and a Cauchy-Schwarz inequality.  

\begin{claim}
Either $\abs{\alpha_{00}} \geq 1/2$ or $\abs{\alpha_{11}} \geq 1/2$.  
\end{claim}

\begin{proof}
Assume $\abs{\alpha_{00}} < 1/2$.  By Eq.~\eqnref{e:eprlemmaproofcoefficients} and since $\beta < 1/100$, 
\begin{equation*}
\abs{\alpha_{10}} \leq \frac{\abs{\alpha_{00}} + \abs{\delta_{10}}}{\sqrt 2} < \frac{\frac12 + 2 \sqrt \beta}{\sqrt 2} < \frac12 
 \enspace .
\end{equation*}
Symmetry under switching Alice and Bob implies $\abs{\alpha_{01}} < \frac12$.  Since $\sum_{bb'} \abs{\alpha_{bb'}}^2 = 1$, $\abs{\alpha_{11}} > 1/2$.  
\end{proof}

Thus from Eq.~\eqnref{e:eprlemmaproofcoefficientsbound} we determine 
\begin{equation*}
\Bigabs{1 + C - \sqrt 2 + \frac{S^2}{1 + C + \sqrt 2}} \leq 6 \sqrt \beta 
 \enspace .
\end{equation*}
Multiplying both sides by $1 + C + \sqrt 2 \leq 2 + \sqrt 2$ gives $\abs{(1 + C)^2 - 2 + S^2} = 2 \abs C \leq 21 \sqrt \beta$.  As $\beta \leq 2 \cdot 10^{-6}$, $S = \sqrt{1 - C^2} \geq 1 - 61 \beta$.  By symmetry, $\sin 2 \theta_{i'} \geq 1 - 61 \beta$, too.  

Now let us read off bounds for the coefficients $\alpha_{bb'}$ from Eq.~\eqnref{e:eprlemmaproofcoefficients}.  First,  
\begin{equation*}\begin{split}
\Bigabs{\alpha_{10} - \frac{\alpha_{00}}{1 + \sqrt 2}} 
= \Bigabs{\frac{\alpha_{00} S - \delta_{10}}{1 + C + \sqrt 2} - \frac{\alpha_{00}}{1 + \sqrt 2}} 
\leq \abs{\alpha_{00}} \cdot \Bigabs{\frac{S}{1 + C + \sqrt 2} - \frac{1}{1 + \sqrt 2}} + \sqrt \beta 
\leq 48 \sqrt \beta
 \enspace .
\end{split}\end{equation*}
By symmetry, $\abs{\alpha_{01} - \frac{\alpha_{00}}{1 + \sqrt 2}} \leq 48 \sqrt \beta$.  By the same steps, $\abs{\alpha_{10} + \frac{\alpha_{11}}{1 + \sqrt 2}} \leq 48 \sqrt \beta$.  Hence, 
\begin{equation*}\begin{split}
\abs{\alpha_{11} + \alpha_{00}}
\leq \bigabs{\alpha_{11} + (1 + \sqrt 2) \alpha_{10}} + \bigabs{\alpha_{00} - (1 + \sqrt 2) \alpha_{10}} 
\leq 2(1 + \sqrt 2) 48 \sqrt \beta \leq 232 \sqrt \beta
 \enspace .
\end{split}\end{equation*}

Putting these coefficient bounds together, we have 
\begin{equation*}\begin{split}
\Bigabs{1 - \frac{\abs{\alpha_{00}}^2}{\abs{\braket{00}{\psi^*}}^2}} 
&= 
\Bigabs{\sum_{b,b'} \abs{\alpha_{bb'}}^2 - \abs{\alpha_{00}}^2 \Big( 2 + \frac{2}{(1 + \sqrt 2)^2} \Big) } \\
&\leq \bigabs{\abs{\alpha_{11}}^2 - \abs{\alpha_{00}}^2} + \Bigabs{\abs{\alpha_{01}}^2 - \frac{\abs{\alpha_{00}}^2}{(1 + \sqrt 2)^2}} + \Bigabs{\abs{\alpha_{10}}^2 - \frac{\abs{\alpha_{00}}^2}{(1 + \sqrt 2)^2}} \\
&\leq 2 \cdot 232 \sqrt \beta + 2 \cdot 48 \sqrt \beta + 2 \cdot 48 \sqrt \beta = 656 \sqrt \beta
 \enspace ,
\end{split}\end{equation*}
where we have used $\abs{x^2 - y^2} = \abs{x-y} \cdot \abs{x+y}$.  In particular, it follows that if we let $\phi = \phi_{c i i'}$ be the argument of $\alpha_{00}$, so that $e^{-i \phi} \alpha_{00} = \abs{\alpha_{00}}$, 
\begin{equation*}\begin{split}
\abs{\alpha_{00} - e^{i \phi} \braket{00}{\psi^*}} 
\leq \frac{\abs{\braket{00}{\psi^*}}^2 \cdot 656 \sqrt \beta}{\abs{\alpha_{00}} + \braket{00}{\psi^*}} 
\leq 429 \sqrt \beta
 \enspace ,
\end{split}\end{equation*}
and   
\begin{equation*}\begin{split}
\norm{\ket{\tilde \psi} - e^{i \phi} \ket{\psi^*}}{}^2
&= \sum_{b,b'} \abs{\alpha_{bb'} - e^{i \phi} \braket{bb'}{\psi^*}}{}^2 \\
&\leq \Big(429^2 \! + 2 \Big(48 + \frac{429}{1 + \sqrt 2}\Big)^2 \!\! + \! (232 + 429)^2 \Big) \beta 
\leq 10^6 \beta
 \enspace . \qedhere
\end{split}\end{equation*}
\end{proof}

We now collect together our calculations to prove \lemref{t:eprlemma}.  Let 
\begin{equation*}	
\ket{\psi^\times} 
= \sum_{c,i,i'} e^{i \phi_{c i i'}} \norm{\ket{\psi_{cii'}}} \, \ket{cii'}_{CAB}
 \enspace ,
\end{equation*}
a unit vector.  We have, by \propref{t:goodsubspacestate} and Eq.~\eqnref{e:averageCHSHviolation}, 
\begin{align*}
\norm{\ket{\bar \psi} - \ket{\psi^*} \otimes \ket{\psi^\times}}{}^2
&= \sum_{c,i,i'} \bignorm{\ket{\psi_{cii'}} - e^{i \phi_{cii'}} \norm{\ket{\psi_{cii'}}} \cdot \ket{\psi^*}}^2 
= \sum_{c, i, i'} \norm{\ket{\psi_{cii'}}}^2 \cdot O(\beta_{cii'}) = O(\epsilon)
 \enspace .  
\end{align*}

This establishes the last claim in \lemref{t:eprlemma}.  It remains to argue that $\bignorm{(\RAa{1} - X \otimes \identity)_A \ket \psi}$ and $\bignorm{(\RBa{1} - X \otimes \identity)_B \ket \psi}$ are each of order $\sqrt \epsilon$.  
From \claimref{t:truncatedimensions} and a triangle inequality, we bound $\bignorm{(\RAa{1} - X \otimes \identity)_A \ket \psi} \leq 2 \delta + 2 \norm{\ket{\bar \psi} - \ket{\psi^*} \ket{\psi^\times}} + \bignorm{(\RAdeca{\bar}{1} - X \otimes \identity)_A \ket{\psi^*} \ket{\psi^\times}}$.  To bound the last term, recall that on each subspace~$i$, $\RAdeca{\bar}{1}$ acts as $\smatrx{C_i & S_i \\ S_i & -C_i}$, and so 
\begin{align*}
\bignorm{(\RAdeca{\bar}{1} - X \otimes \identity)_A \ket{\psi^*} \ket{\psi^\times}}^2
&= \sum_{c,i,i'} \norm{\ket{\psi_{cii'}}}^2 \Bignorm{\Big[ \smatrx{C_i & S_i \\ S_i & -C_i} - X \Big] \otimes \identity \ket{\psi^*}}^2 
= 2 \sum_{c,i,i'} \norm{\ket{\psi_{cii'}}}^2 (1 - S_i) 
 \enspace ,
\end{align*}
which is again of order~$\epsilon$ by \propref{t:goodsubspacestate} and Eq.~\eqnref{e:averageCHSHviolation}.  A symmetrical argument bounds $\bignorm{(\RBa{1} - X \otimes \identity)_B \ket \psi}$.  
\end{proof}

For later convenience in \secref{s:sequentialstructuredCHSHgameshavetensorproductstructure}, let us state several simple technical corollaries of \lemref{t:eprlemma}.  

\begin{corollary} \label{t:structuredgameprobabilitylowerbound}
There exists a constant $c > 0$ such that under the conditions of \lemref{t:eprlemma}, it further holds that for $\epsilon < c$, each of the sixteen possible outcomes of the game, $(a, x, b, y) \in \{0,1\}^4$, occurs with probability at least $1/60$.  
\end{corollary}

\begin{proof}
In an ideal CHSH game the probability of an outcome $(a, x, b, y)$ is $\frac{1}{16} (1 + \frac{1}{\sqrt 2})$ if $a b = x \oplus y$ and is $\frac{1}{16} (1 - \frac{1}{\sqrt 2}) > \frac{1}{60}$ otherwise.  As $\epsilon$ tends to $0$, \lemref{t:eprlemma} implies that the probabilities of the different outcomes converge to these values, so for sufficiently small $\epsilon$ all probabilities will be at least~$1/60$.  
\end{proof}

\begin{corollary} \label{t:pullmeasurementstotheotherside}
Under the conditions of \lemref{t:eprlemma}, it further holds that for $(\device, \device') \in \{(A,B), (B,A)\}$ and $\alpha \in \{0,1\}$, 
\begin{equation}
\bignorm{ \big[ \RXa{\alpha} \otimes \identity_{\device' C} - \tfrac{1}{\sqrt 2} ((Z+ (-1)^\alpha X) \otimes \identity)_{\device'} \otimes \identity_{\device C} \big] \ket \psi } = O(\sqrt \epsilon) 
 \enspace .
\end{equation}
\end{corollary}

\begin{proof}
This corollary says that Bob's measurements can be pulled over to Alice's side, or Alice's measurements pulled over to Bob's side.  For an ideal CHSH game, this fact is a consequence of the identity $\big(\begin{smallmatrix}a&b\\c&d\end{smallmatrix}\big) \otimes I (\ket{00} + \ket{11}) = I \otimes \big(\begin{smallmatrix}a&c\\b&d\end{smallmatrix}\big) (\ket{00} + \ket{11})$.  The claimed bounds come from combining this with triangle inequalities from \lemref{t:eprlemma}.  For example, for $\alpha = 1$, letting $R = \frac{1}{\sqrt 2} (Z - X)$, 
\begin{align*}
\bignorm{ [ \RXa{1} - (R \otimes \identity)_{\device'} ] \ket \psi }
&\leq \bignorm{ (\RXa{1} - X \otimes \identity)_\device \ket \psi } + \bignorm{ [ (X \otimes \identity)_\device - (R \otimes \identity)_{\device'} ] (\ket \psi - \ket{\psi^*} \ket{\psi^\times}) } \\ &\quad + \bignorm{ [ (X \otimes \identity)_\device - (R \otimes \identity)_{\device'} ] \ket{\psi^*} \ket{\psi^\times} }
 \enspace .
\end{align*}
The first two terms on the right are each of order $\sqrt \epsilon$, and the third term is zero.  
\end{proof}

\corref{t:pullmeasurementstotheotherside} in turn implies that the expectation value of an observable localized to $\H_A$ cannot change much, on average, over an $\epsilon$-structured CHSH game, since Alice's measurement can be pulled over to Bob's side.  Recall that $P^\device(\alpha, \chi) = \frac12 (\identity + (-1)^\chi \RXa{\alpha})$.  Let $\ket{\phi(a,x,b,y)} = P^A(a, x) \otimes P^B(b, y) \ket \psi$ and $\ket{\psi(a,x,b,y)} = \ket{\phi(a,x,b,y)} / \norm{\ket{\phi(a,x,b,y)}}$.   

\begin{corollary} \label{t:localobservablesunderstructuredCHSH}
Under the conditions of \lemref{t:eprlemma}, let $\ket \phi = \frac12 \sum_{a, x, b, y \in \{0,1\}} \ket{\phi(a,x,b,y)} \otimes \ket{a,x,b,y}$.  Then for any operator $M$ supported on $\H_A$, 
\begin{equation}
\abs{ \Tr (M \otimes \identity \ketbra \psi \psi) - \Tr (M \otimes \identity \ketbra \phi \phi) } = O(\sqrt \epsilon) \norm{M}
 \enspace .
\end{equation}
\end{corollary}

\begin{proof}
By \corref{t:pullmeasurementstotheotherside}, there exist reflections $R_a$ supported on $\H_B$, namely $R_0 = \frac{1}{\sqrt 2}(X + Z) \otimes \identity$ and $R_1 = \frac{1}{\sqrt 2}(-X+Z) \otimes \identity$, such that $\norm{ P^A(a, x) \ket \psi - \tfrac12 ( \identity + (-1)^x R_a)_B \ket \psi } = \frac12 \norm{ ((\RAa{a})_A - (R_a)_B) \ket \psi } = O(\sqrt \epsilon)$.  Let $\ket{\phi'(a,x,b,y)} = P^B(b, y) \frac12 (\identity + (-1)^x R_a)_B \ket \psi$ and $\ket{\phi'} = \frac12 \sum_{a,x,b,y} \ket{\phi'(a,x,b,y)} \ket{a,x,b,y}$.  Then $\norm{ \ket{\phi(a,x,b,y)} - \ket{\phi'(a,x,b,y)} } = O(\sqrt \epsilon)$ and so $\norm{\ket\phi - \ket{\phi'}} = O(\sqrt \epsilon)$.  Now $\Tr M \ketbra \psi \psi = \Tr M \ketbra{\phi'}{\phi'}$, since $M$ is supported on $\H_A$ and the projections $P^B(b, y)$ and $\frac12(\identity + (-1)^x R_a)$ are supported on $\H_B$.  Therefore, by a triangle inequality, 
\begin{equation*}
\bigabs{ \Tr M \ketbra \psi \psi - \Tr M \ketbra \phi \phi }
\leq \abs{ \Tr M (\ketbra \phi \phi - \ketbra{\phi'}{\phi'}) }
\leq 2 \norm{M} \norm{\ket \phi - \ket{\phi'}}
= O(\sqrt \epsilon) \norm{M}
 \enspace .  \qedhere
\end{equation*}
\end{proof}

Similarly, in any structured CHSH game, the states resulting from different game outcomes are approximately related by single-qubit unitaries: 

\begin{corollary} \label{t:CHSHgameoutcomesrelatedbysinglequbitunitary}
For $a, a', \Delta \in \{0,1\}$, there exist single-qubit unitaries $U(a, a', \Delta)$ such that for sufficiently small $\epsilon > 0$ and any $\epsilon$-structured CHSH game where Alice plays using the ideal measurements from \tabref{f:optimalCHSHstrategy} on her first qubit, it holds that for all $a, x, a', x', b, y \in \{0,1\}$, 
\begin{equation} \label{e:CHSHgameoutcomesrelatedbysinglequbitunitary}
\Bignorm{ 
\ket{\psi(a,x,b,y)} 
- (U(a, a', x \oplus x') \otimes \identity)_A \ket{\psi(a',x',b,y)}
} = O(\sqrt \epsilon)
 \enspace .
\end{equation}
\end{corollary}

\begin{proof}
Consider an ideal CHSH game, with $\epsilon = 0$.  Then the unitaries $U(a, a', x \oplus x')$ can be read off \tabref{f:optimalCHSHstrategy}.  For example, if $a = a'$, the Pauli~$X$ operator switches the two outcomes of a Pauli~$Z$ measurement and vice versa.  Thus $U(a, a, 0) = \identity$, $U(0, 0, 1) = X$ and $U(1, 1, 1) = Z$.  The specific forms of the unitaries are not important.  

Now for an $\epsilon$-structured CHSH game with $\epsilon$ smaller than a certain positive constant, both denominators in Eq.~\eqnref{e:CHSHgameoutcomesrelatedbysinglequbitunitary} are nonzero, so the left-hand side is at least well-defined.  The claimed bound follows by applying the CHSH rigidity lemma to relate $\ket \psi$ to $\ket{\psi^*} \otimes \ket{\psi'}$ for $\ket{\psi^*}$ an EPR state and some~$\ket{\psi'}$, and several triangle inequalities.  
\end{proof}

Finally, the effect of the game can be duplicated by having only Alice make her measurements and then applying a unitary correction to Bob's qubit: 

\begin{corollary} \label{t:CHSHBobgameoutcomesrelatedbysinglequbitunitary}
For $a, b, \Delta \in \{0,1\}$, there exist single-qubit unitaries $V(a, b, \Delta)$ such that for sufficiently small $\epsilon > 0$ and any $\epsilon$-structured CHSH game where Bob plays using the ideal measurements from \tabref{f:optimalCHSHstrategy} on his first qubit, it holds that for all $a, x, a', x', b, y \in \{0,1\}$, 
\begin{equation} \label{e:CHSHBobgameoutcomesrelatedbysinglequbitunitary}
\Bignorm{ 
\ket{\psi(a,x,b,y)} 
- (V(a, b, x \oplus y) \otimes \identity)_B \frac{P^A(a, x) \ket \psi}{\norm{P^A(a, x) \ket \psi}} 
} = O(\sqrt \epsilon)
 \enspace .
\end{equation}
\end{corollary}

\begin{proof}
In an ideal CHSH game, there certainly exist unitaries $V(a,b,x \oplus y)$ relating, up to normalization, $P^A(a, x) \otimes P^B(b, y) \ket \psi$ to $P^A(a, x) \ket \psi$, since Alice's measurement collapses the shared EPR state leaving Bob's qubit in a tensor-product state.  The argument for an $\epsilon$-structured CHSH game is now the same as in the proof of \corref{t:CHSHgameoutcomesrelatedbysinglequbitunitary}.  
\end{proof}

\ifx\compilefullpaper\undefined  
\bibliographystyle{alpha-eprint}
\bibliography{q}

\end{document}
\fi

\ifx\compilefullpaper\undefined  
\documentclass[11pt]{article}

\begin{document}
\tableofcontents
\fi

\section[Sequential structured CHSH games have a tensor-product structure]{Sequential structured CHSH games have a tensor-product \\ structure} \label{s:sequentialstructuredCHSHgameshavetensorproductstructure}

In this section, we will argue that if two provers play $n$ sequential CHSH games in such a way that for every~$j$, game~$j$ is $\epsilon$-structured most of the time, then the provers must share a state close to~$n$ EPR states and most of the time their strategy for game~$j$ must be nearly equivalent to the ideal CHSH game strategy acting on the $j$th EPR state.  

The proof uses \lemref{t:eprlemma} repeatedly to simplify the provers' strategies and extract EPR states.  It is not enough to correct the games one at a time, in sequence.  Although the first game is $\epsilon$-structured, and therefore $O(\sqrt \epsilon)$-close to the ideal strategy on an EPR state, correcting this first game will introduce an $O(\sqrt \epsilon)$ error into all subsequent games.  This leaves the second game only $O(\sqrt \epsilon)$-structured, so correcting it introduces an $O(\epsilon^{1/4})$ error into subsequent games.  Thus with this na{\"i}ve argument, the error snowballs, both from the $\epsilon \rightarrow \sqrt \epsilon$ dependence of \lemref{t:eprlemma} and from the exponentially accumulating renormalization factors.  Other natural arguments face similar problems.  To obtain only a polynomial blowup in the error parameter~$\epsilon$, the argument is surprisingly involved, following the proof sketch in \secref{s:sequentialstructuredCHSHgameshavetensorproductstructuresketch}.

\subsection{Notation}

To make our claims precise, we begin with some notation for CHSH games played in sequence, one following the next, with no communication between games.  

\begin{definition}[Notation for sequentially repeated CHSH games] \label{t:CHSHserialnotation}
A strategy~$\S$ for two provers, Alice and Bob, to play $n$ sequential CHSH games consists of the provers' Hilbert spaces, their initial state and the reflections they use to play each game.  Fix the following notation: 

\begin{description}
\item[Transcripts:] 
Denote questions asked to Alice by $a_1, \ldots, a_n$, questions asked to Bob by $b_1, \ldots, b_n$, and possible answers by $x_1, \ldots, x_n$ and $y_1, \ldots, y_n$, respectively.  Write $\hA{j} = (a_1, \ldots, a_j, x_1, \ldots, x_j)$, $\hB{j} = (b_1, \ldots, b_j, y_1, \ldots, y_j)$ and $\h{j} = (\hA{j}, \hB{j})$, a full transcript for games $1$ through~$j$.  Similarly write $\h{j,k}$ and $\hX{j,k}$ for the full or partial transcripts for games~$j$ through~$k$, inclusive.  

\item[Hilbert spaces:]
Let $\H_A$ and $\H_B$ be the two provers' Hilbert spaces, and $\H_C$ any external space.  

\item[Reflection and projection operators:]
In game $j$, for questions $a_j$ and $b_j$, let $\RAjah{j}{a_j}{\hA{j-1}}$ and $\RBjah{j}{b_j}{\hB{j-1}}$ be the reflections specifying Alice and Bob's respective strategies.\footnote{Although the provers' reflections for game~$j$ may, without loss of generality, be taken to be independent of previous measurement outcomes, allowing such dependence will be convenient for specifying alternative strategies.}  Let $\PAjh{j}{\hA{j}} = \tfrac12 (\identity + (-1)^{x_j} \RAjah{j}{a_j}{\hA{j-1}})$ and $\PBjh{j}{\hB{j}} = \tfrac12 (\identity + (-1)^{y_j} \RBjah{j}{b_j}{\hB{j-1}})$.  For $\device \in \{A, B\}$ and $j \leq k$, let $\PXjh{j,k}{\hX{k}} = \PXjh{k}{\hX{k}} \cdots \PXjh{j+1}{\hX{j+1}} \PXjh{j}{\hX{j}}$.  Let $\PABjh{j, k}{\h{k}} = \PAjh{j, k}{\hA{k}} \otimes \PBjh{j, k}{\hB{k}}$.  

\item[Super-operators:]
For $j < k$ and partial transcript~$\h{j}$, define super-operators~$\EAjh{k}{\hA{j}}$ and~$\EBjh{k}{\hB{j}}$ by 
\begin{equation}\begin{split}
\EAjh{k}{\hA{j}} (\ketbra{\hA{j+1,k-1}}{\hA{j+1,k-1}} \otimes \rho) &= \frac12 \sum_{a_k, x_k} \ketbra{\hA{j+1,k}}{\hA{j+1,k}} \otimes \PAjh{k}{\hA{k}} \rho \PAjh{k}{\hA{k}} \\
\EBjh{k}{\hB{j}}(\ketbra{\hB{j+1,k-1}}{\hB{j+1,k-1}} \otimes \rho) &= \frac12 \sum_{b_k, y_k} \ketbra{\hB{j+1,k}}{\hB{j+1,k}} \otimes \PBjh{k}{\hB{k}} \rho \PBjh{k}{\hB{k}}
 \enspace .  
\end{split}\end{equation}
These super-operators capture the effects of Alice and Bob playing game~$k$, where games~$j+1$ to~$k-1$ of the transcript are stored in a separate register.  For $\ell \geq k$ and $\device \in \{A, B\}$, let $\EXjh{k, \ell}{\hX{j}} = \EXjh{\ell}{\hX{j}} \cdots \EXjh{k+1}{\hX{j}} \EXjh{k}{\hX{j}}$.  Let $\EABjh{k, \ell}{\h{j}} = \EAjh{k, \ell}{\hA{j}} \otimes \EBjh{k, \ell}{\hB{j}}$.  

\item[States:]
Let $\ket{\psione} \in \H_A \otimes \H_B \otimes \H_C$ be the provers' initial shared state, and let $\ket{\psijh{j}{\h{j-1}}}$ be the shared state at beginning of game $j$ conditioned on the transcript $\h{j-1}$; it is given by 
\begin{equation} \label{e:psij}
\ket{\psijh{j}{\h{j-1}}} 
= \frac{ \PABjh{1,j-1}{\h{j-1}} \ket{\psione} }{ \norm{ \PABjh{1,j-1}{\h{j-1}} \ket{\psione} } }
 \enspace .
\end{equation}
We adopt the convention that if the numerator above is $0$, then $\ket{\psijh{j}{\h{j-1}}} = 0$.  For notational brevity, we will commonly suppress the dependence on the transcript and write simply~$\psij{j}$.  

Let $\rhoone = \ketbra{\psione}{\psione}$, $\rhoj{j} = \EABj{1,j-1}(\rhoone) = \frac{1}{4^{j-1}} \sum_{\h{j-1}} \ketbra{\h{j-1}}{\h{j-1}} \otimes \PABjh{1,j-1}{\h{j-1}} \rhoone \PABjh{1,j-1}{\h{j-1}}^\dagger$, and $\rhoh{\h{j-1}} = \ketbra{\psijh{j}{\h{j-1}}}{\psijh{j}{\h{j-1}}}$.  

\item[Random variables:]
We use $A_j, B_j, X_j, Y_j$ to denote the random variables for the questions and answers in game~$j$, and $H_j$ for the transcript up through game~$j$.  $A_j$ and $B_j$ are distributed independently and uniformly at random.  Conditioned on the transcript~$\h{j-1}$ for the first~$j-1$ games and the questions $a_j$ and~$b_j$, $X_j$ and $Y_j$ are distributed according to $\Pr[X_j = x_j, Y_j = y_j \vert H_{j-1} = \h{j-1}, A_j = a_j, B_j = b_j] = \norm{ \PAjh{j}{\hA{j}} \otimes \PBjh{j}{\hB{j}} \ket{\psijh{j}{\h{j-1}}} }{}^2$.  Then $\rhoj{j} = \sum_{\h{j-1}} \Pr[H_{j-1} = \h{j-1}] \ketbra{\h{j-1}}{\h{j-1}} \otimes \ketbra{\psijh{j}{\h{j-1}}}{\psijh{j}{\h{j-1}}}$.  

\item[Other strategies:]
When considering multiple strategies, say $\S$ and $\tilde \S$, we will decorate the above notation to indicate the corresponding strategy.  For example, $\ket{\tilde \psi(\h{j-1})}$ denotes the shared state at the beginning of game~$j$ conditioned on the transcript~$\h{j-1}$, with play according to~$\tilde \S$.  
\end{description}
\end{definition}

Recall from \defref{t:chshgamestructuredef} that an $\epsilon$-structured CHSH game is one with a correlation value at least $2 \sqrt 2 - \epsilon$.  In our theorem, we will assume that most games the provers play are $\epsilon$-structured, in the following sense: 

\begin{definition}[Structured strategy] \label{t:structuredstrategydef}
A strategy~$\S$ for $n$ sequential CHSH games is \emph{$(\delta, \epsilon)$-structured} if for every~$j$, $\Pr[$game~$(j, H_{j-1})$ is $\epsilon$-structured$] \geq 1-\delta$.  $\S$ is \emph{$\epsilon$-structured} if it is $(\epsilon, \epsilon)$-structured.  
\end{definition}

Our goal is to show that the provers play close to an ideal strategy, defined as in \tabref{f:optimalCHSHstrategy} by: 

\begin{definition}[Notation for an ideal CHSH game] \label{t:idealgame}
Let $\ket{\psi^*} = \frac{1}{\sqrt 2}(\ket{00} + \ket{11}) \in \C^2 \otimes \C^2$.  
For $x \in \{0,1\}$, let $\ket{(0,x)_A} = \ket x$ and $\ket{(1,x)_A} = \frac{1}{\sqrt 2}(\ket 0 + (-1)^x \ket 1)$.  For $b \in \{0,1\}$, let $\ket{(b,0)_B} = \cos\frac\pi8 \ket 0 +(-1)^b \sin\frac\pi8 \ket 1$ and $\ket{(b,1)_B} = \sin\frac\pi8 \ket 0 -(-1)^b \cos\frac\pi8 \ket 1$.  For $\device \in \{A, B\}$ and $\alpha \in \{0,1\}$, let $\RXa{\alpha} = \ketbra{(\alpha,0)_\device}{(\alpha,0)_\device} - \ketbra{(\alpha,1)_\device}{(\alpha,1)_\device}$.  

Let $\hat A, \hat B, \hat X, \hat Y$ be random variables distributed according to the outcomes of the ideal CHSH game, $\Pr[(\hat A, \hat B, \hat X, \hat Y) = (a, b, x, y)] = \frac14 \bignorm{ \frac12(\identity + (-1)^x \RAa{a}) \otimes \frac12(\identity + (-1)^y \RBa{b}) \ket{\psi^*} }{}^2$, which equals $\frac14 \cos^2 \frac\pi8$ if $a b = x \oplus y$ and $\frac14 \sin^2 \frac\pi8$ otherwise.  
\end{definition}

\begin{definition}[Ideal strategy] \label{t:idealstrategy}
A strategy~$\S$ for $n$ sequential CHSH games is an \emph{ideal strategy} if there exist isometries $\UidealA: \H_A \hookrightarrow (\C^2)^{\otimes n} \otimes \H_A'$ and $\UidealB : \H_B \hookrightarrow (\C^2)^{\otimes n} \otimes \H_B'$ and a state $\ket{\psione'} \in \H_A' \otimes \H_B' \otimes \H_C$ such that for every $j$ and $\h{j-1}$, 
\begin{align}
\UidealA \otimes \UidealB \ket{\psione} &= \ket{\psi^*}^{\otimes n} \otimes \ket{\psione'} 
& \RXjah{j}{\alpha}{\hX{j-1}} &= \UidealX{}^\dagger (\RXa{\alpha})_j \UidealX 
 \enspace ,
\end{align}
where $(\RXa{\alpha})_j$ denotes the ideal operator~$\RXa{\alpha}$ from \defref{t:idealgame} acting on the $j$th qubit.  
\end{definition}

We will want to compare strategies in order to argue that the provers' actual strategy is ``nearby" a better-behaved strategy.  For this purpose, we introduce the following notion of strategy simulation: 

\begin{definition}[Strategy simulation] \label{t:simulationdef}
Let $\S$ and $\tilde \S$ be two strategies for playing $n$ sequential CHSH games.  For $\epsilon \geq 0$, we say that strategy $\tilde \S$ \emph{$\epsilon$-simulates} strategy~$\S$ if they both use the same Hilbert spaces and for all~$j$, 
\begin{equation}
\max_{\device \in \{A, B\}} \trnorm{\EXj{1,j}(\rhoone) -  \EXdecj{\tilde}{1,j}(\rhodecone{\tilde})} \leq \epsilon
 \enspace .
\end{equation}
Say that $\tilde \S$ \emph{weakly $\epsilon$-simulates}~$\S$ if only the weaker inequality $\trnorm{ \EABj{1,j}(\rhoone) - \EABdecj{\tilde}{1,j}(\rhodecone{\tilde}) } \leq 2 \epsilon$ holds.  
\end{definition}

It is also convenient to allow a basis change by local unitaries or local isometries: 

\begin{definition} \label{t:isometricextensiondef}
A strategy $\tilde \S$ is an \emph{isometric extension} of $\S$ if there exist isometries $\XX: \H_\device \hookrightarrow \tilde \H_\device$, for $\device \in \{A, B\}$, such that $\ket{\tilde \psione} = \XA \otimes \XB \ket{\psione}$ and $\XX \RXjah{j}{\alpha}{\hX{j-1}} = \RXdecjah{\tilde}{j}{\alpha}{\hX{j-1}} \XX$ always.  (Thus $\XX \EXj{j} = \EXdecj{\tilde}{j} \XX$ and $\XA \otimes \XB \ket{\psijh{j}{\h{j-1}}} = \ket{\psidecjh{\tilde}{j}{\h{j-1}}}$.)  
\end{definition}

\subsection{Main rigidity theorem and proof outline}

Our main theorem states that a structured strategy can be closely simulated by an ideal strategy: 

\begin{theorem}[Main rigidity theorem for sequential CHSH games] \label{t:sequentialCHSHgames}
There exists a constant~$\kappaEPR$ such that for any $\epsilon$-structured strategy~$\S$ for $n$ sequential CHSH games, 
letting $\zeta = \kappaEPR n^{\kappaEPR} \epsilon^{1/\kappaEPR}$, there exists an ideal strategy $\hat \S$ that $\zeta$-simulates an isometric extension of~$\S$.  
\end{theorem}

The proof of \thmref{t:sequentialCHSHgames} is sufficiently involved that an outline should be useful.  See \figref{f:proofsketch}.  The first step of the proof is to replace the structured strategy $\S$ with one in which the provers play every game using the ideal CHSH game operators on some qubit, up to a local change in basis.  

\begin{definition}[Single-qubit ideal strategy] \label{t:singlequbitidealstrategy}
A strategy~$\S$ is a \emph{single-qubit ideal strategy} if there exist unitaries $\UsingleXjh{j}{\hX{j-1}} : \H_\device \overset{\cong}{\rightarrow} \C^2 \otimes \H_\device'$ such that always 
\begin{equation} \label{e:singlequbitidealstrategy}
\RXjah{j}{\alpha}{\hX{j-1}} = \UsingleXjh{j}{\hX{j-1}}^\dagger (\RXa{\alpha} \otimes \identity) \UsingleXjh{j}{\hX{j-1}} 
 \enspace .
\end{equation}
That is, each prover's reflections for game $(j, \hX{j-1})$ are equivalent up to local unitaries to the ideal CHSH game reflections of \defref{t:idealgame}, but the qubits used need not be in tensor product.  
\end{definition}

\begin{theorem} \label{t:singlequbitidealstrategysimulation}
There exists a constant~$\kappa$ such that if $\S$ is an $\epsilon$-structured strategy for~$n$ sequential CHSH games, then there is a single-qubit ideal strategy~$\tilde \S$ that $\kappa n^\kappa \epsilon^{1/\kappa}$-simulates an isometric extension of~$\S$.  
\end{theorem}

Next, we find a nearby strategy in which the qubits for successive games are in tensor product.  

\begin{definition}[Multi-qubit ideal strategy] \label{t:multiqubitidealstrategy}
A strategy~$\S$ is a \emph{multi-qubit ideal strategy} if there is a unitary isomorphism $\XmultiX : \H_\device \overset{\cong}{\rightarrow} (\C^2)^{\otimes n} \otimes \H_\device'$ under which for unitaries $\UmultiXjh{j}{\hX{j-1}} \in \L((\C^2)^{\otimes (n-j+1)} \otimes \H_\device')$ such that 
\begin{equation} \label{e:multiqubitidealstrategy}
\RXjah{j}{\alpha}{\hX{j-1}} = 
\XmultiX{}^\dagger
\UmultiXj{1}{}^\dagger \ldots \big(\identity_{(\C^2)^{\otimes (j-1)}} \otimes \UmultiXjh{j}{\hX{j-1}}^\dagger\big) (\RXa{\alpha})_j \big(\identity_{(\C^2)^{\otimes (j-1)}} \otimes \UmultiXjh{j}{\hX{j-1}}\big) \ldots \UmultiXj{1}
\XmultiX
 \enspace .
\end{equation}
That is, $\S$ is a single-qubit ideal strategy in which the qubits used in each game must lie in tensor product with the qubits from previous games.  
\end{definition}

\begin{theorem} \label{t:multiqubitidealstrategysimulation}
There exists a constant~$\kappa$ such that if $\S$ is an $\epsilon$-structured single-qubit ideal strategy for~$n$ sequential CHSH games, then there is a multi-qubit ideal strategy~$\tilde \S$ that $\kappa n^\kappa \epsilon^{1/\kappa}$-simulates an isometric extension of~$\S$.  
\end{theorem}

The last major step in the proof of \thmref{t:sequentialCHSHgames} is to argue that the qubit locations cannot depend significantly on the local transcripts, and therefore simulate a multi-qubit ideal strategy with an ideal strategy.  

\begin{theorem} \label{t:globalgluing}
There exists a constant~$\kappa$ such that if~$\S$ is a $(\delta, \epsilon)$-structured multi-qubit ideal strategy for~$n$ sequential CHSH games, then there exists a transcript~$\hdec{\hat}{n}$ such that~$\S$ is $\kappa n^\kappa (\delta + \epsilon)^{1/\kappa}$-simulated by the ideal strategy~$\hat \S$ that uses the qubits defined by $\hdec{\hat}{n}$ in~$\S$.  That is, using the notation of Definitions~\ref{t:idealstrategy} and~\ref{t:multiqubitidealstrategy}, $\hat \S$ is defined by 
\begin{equation}
\UidealXdec{\hat} = \big(\identity_{(\C^2)^{\otimes (n-1)}} \otimes \UmultiXjh{n}{\hX{n-1}}\big) \ldots \UmultiXj{1} \XmultiX
 \enspace .
\end{equation}
\end{theorem}

The proofs of Theorems~\ref{t:singlequbitidealstrategysimulation}, \ref{t:multiqubitidealstrategysimulation} and~\ref{t:globalgluing} are given, respectively, in Sections~\ref{s:singlequbitidealstrategysimulation}, \ref{s:historydependenttensorproductstrategy} and~\ref{s:gluing} below.  To chain these theorems together, we will use: 

\begin{lemma} \label{t:simulationpreservesstructure}
Let $\S$ be a $(\delta, \epsilon)$-structured strategy for $n$ sequential CHSH games.  If~$\tilde \S$ is a strategy that weakly $\eta$-simulates~$\S$, then $\tilde \S$ is $(\delta + 2 \sqrt{\eta}, \epsilon + 16 \sqrt{\eta})$-structured.  
\end{lemma}

\begin{proof}
By definition of weak $\eta$-simulation and \lemref{t:blockdiagonaltracedistance}, $d_{TV}(H_{j-1}, \tilde H_{j-1}) \leq \trnorm{\rhoj{j} - \rhodecj{\tilde}{j}} / 2 \leq \eta$ for all~$j$.  In particular, therefore $\Pr[\text{game $(j, \tilde H_{j-1})$ is $\epsilon$-structured in $\S$}] \geq 1 - \delta - \eta$.  

For random variables $(A, B)$ and $(A', B')$ in the same space, $\sum_a \Pr[A = a] d_{TV}(B \vert A=a, B' \vert A'=a) \leq 2 d_{TV}\big((A,B), (A',B')\big)$.  Applying this and a Markov inequality to $(\tilde H_{j-1}, \tilde H_j)$ and $(H_{j-1}, H_j)$, we get that with at most a $\sqrt{\eta}$ probability over $\tilde H_{j-1}$ can the total variation distance between the outcomes of playing strategy~$\S$ and of playing strategy~$\tilde \S$ in game $(j, \tilde H_{j-1})$ be greater than~$2 \sqrt{\eta}$.  By \defref{t:chshgamestructuredef} for structure and a union bound, therefore $\Pr[$game $(j, \tilde H_{j-1})$ is $\epsilon$-structured in $\S$ and $(\epsilon + 16 \sqrt{\eta})$-structured in $\tilde \S] \geq 1 - \delta - \eta - \sqrt{\eta}$.  
\end{proof}

\thmref{t:sequentialCHSHgames} therefore follows from Theorems~\ref{t:singlequbitidealstrategysimulation}, \ref{t:multiqubitidealstrategysimulation} and~\ref{t:globalgluing}.  

\smallskip

We begin the proofs of the latter theorems by reducing to the case where the Hilbert spaces $\H_A$ and $\H_B$ are finite dimensional.  This is necessary to ensure that the strategies given by the CHSH rigidity lemma applied to games $2$ through~$n$ depend only on the local transcript~$\hX{}$ and not on the full transcript~$\h{}$.  Although the CHSH rigidity lemma itself holds even for infinite-dimensional Hilbert spaces, the dimension-truncation argument it uses depends on the underlying state and therefore potentially on the full transcript.  

\begin{lemma} \label{t:reductiontofinitedimensions}
Assuming that Theorem~\ref{t:sequentialCHSHgames}, \ref{t:singlequbitidealstrategysimulation}, \ref{t:multiqubitidealstrategysimulation} and~\ref{t:globalgluing} hold whenever the provers' Hilbert spaces $\H_A$ and $\H_B$ are finite dimensional, the theorems also hold in general.  
\end{lemma}

\begin{proof}
We claim that for any strategy $\S$ on possibly infinite-dimensional Hilbert spaces $\H_A$ and~$\H_B$, and for any parameter $\delta > 0$, there exists a strategy $\tilde \S$ that $\delta$-simulates $\S$, such that there exist finite-dimensional subspaces $\H_A' \subseteq \H_A$ and $\H_B' \subseteq \H_B$ that are closed under all of the operators $\tilde R_{j,a}$ and $\tilde R_{j,b}$ and such that $\ket{\tilde \psi}$ is entirely supported on $\H_A' \otimes \H_B' \otimes \H_C$.  

Provided this claim holds, \thmref{t:sequentialCHSHgames} can be applied to $\tilde \S$ restricted to $\H_A'$ and $\H_B'$, yielding an ideal strategy $\hat \S$ that $\epsilon$-simulates an isometric extension of $\tilde \S$.  \thmref{t:sequentialCHSHgames} follows by extending the isometries to all of $\H_A$ and~$\H_B$.  The other theorems follow similarly.  

To establish the claim, assume that in fact one or both of $\H_A$ and $\H_B$ are infinite dimensional.  Let $\H_A^{(0)} \subseteq \H_A$ and~$\H_B^{(0)} \subseteq \H_B$ be finite-dimensional subspaces such that $\rho^{(0)}$, the renormalized projection of $\ketbra \psi \psi$ to $\H_A^{(0)} \otimes \H_B^{(0)} \otimes \H_C$, is $\delta$-close to $\ketbra \psi \psi$.  Then Alice and Bob's actual strategy is $\delta$-simulated by the same set of measurements applied to~$\rho^{(0)} \in \L(\H_A \otimes \H_B \otimes \H_C)$.  

The proof is not yet complete, since $\H_A^{(0)}$ and $\H_B^{(0)}$ will generally not be closed under the provers' operators $R_{j,a}$ and $R_{j,b}'$.  We fix this one operator at a time.  Order Alice's reflection operators as $S_1, S_2, \ldots, S_m$, such that reflections for earlier games come before reflections for later games.  For~$k$ from $1$ to~$m$, let $\H_A^{(k)}$ be the closure of $\H_A^{(k-1)}$ under $S_k$ and extend each $S_j$ for $j < k$ by the identity on $(\H_A^{(k-1)})^\perp$.  By this construction, for any vector $\ket v \in \H_A^{(0)}$ and any $a_1, \ldots, a_m \in \{0,1\}$, the state $S_m^{a_m} \ldots S_2^{a_2} S_1^{a_1} \ket v$ lies in~$\H_A^{(m)}$.  Therefore there is no loss in truncating $\H_A$ to $\H_A^{(m)}$.  Finally apply the analogous procedure for Bob to obtain space $\H_B^{(m)}$ that is finite dimensional and closed under each of Bob's operators.  
\end{proof}

Henceforth we will always assume that $\H_A$ and $\H_B$ are finite dimensional.

\subsection{How to play for Bob without measuring \texorpdfstring{$\H_B$}{H\_B}} \label{s:howtoplayforBob}

Before continuing the proof of \thmref{t:sequentialCHSHgames}, it will be useful to argue that a structured strategy can be closely simulated by alternative protocols in which only one of the two provers makes measurements.  

Since for any matrix $M \in \L(\C^2)$, $(M \otimes \identity) (\ket{00} + \ket{11}) = (\identity \otimes M^T) (\ket{00} + \ket{11})$, operations on one half of an EPR state can equivalently be performed on the other half.  This allows us to show that a structured protocol for playing sequential CHSH games can be simulated by either of two hypothetical protocols in which Alice receives Bob's questions and answers for him.  Studying these simulations has a key conceptual advantage over studying the actual protocol: if when given~$b_j$ only, Alice can play for Bob in game~$j$, then Bob's strategy for game~$j$ intuitively cannot depend on the outcomes $\hB{j-1}$ of the prior games.  There are also technical advantages.  For example, one of our main concerns is that the qubits Bob uses in two successive CHSH games might overlap.  However, if we switch Bob's measurements for the second game over to Alice's side, then, since $\H_A$ is in tensor product with $\H_B$, they necessarily act on qubits in tensor product with Bob's qubits for the first game.

\subsubsection{First hypothetical protocol: Alice guesses Bob's measurement outcomes}

In the first hypothetical protocol, Alice plays her games as usual, but also receives Bob's questions~$b_j$.  Alice guesses Bob's answers and gives them to Bob, who merely applies certain unitary corrections.  We will argue that this protocol generates states nearly indistinguishable from the results of a structured strategy for $n$ sequential CHSH games.  

This alternative protocol is only hypothetical, since it requires information and communication not allowed in sequential CHSH games.  However, it is technically simpler to analyze since only one of the two provers makes any measurements.  

Before defining the alternative protocol, it will be useful to define the qubit used in game~$(j, \hX{j-1})$ for each prover~$\device$: 

\begin{definition}[Game qubits] \label{t:gamequbitsdef}
Let~$\S$ be a strategy for $n$ sequential CHSH games.  For $\device \in \{A, B\}$ and for each partial transcript~$\hX{j-1}$, let~$\beta$ index a complete, irreducible set of orthogonal one- or two-dimensional subspaces of $\H_\device$ that are invariant under $\RXjah{j}{\alpha}{\hX{j-1}}$ for $\alpha \in \{0,1\}$.  Let $\UXjh{j}{\hX{j-1}} : \H_\device \hookrightarrow \C^2 \otimes \H_\device'$ be an isometry such that $\UXjh{j}{\hX{j-1}}^\dagger (\identity \otimes \ketbra \beta \beta) \UXjh{j}{\hX{j-1}}$ is a projection onto subspace~$\beta$, chosen so that for dihedral angles $\theta_\beta(\hX{j-1}) \in [0, \pi/2]$, 
\begin{equation}\begin{split} \label{e:gamequbits}
\RXjah{j}{0}{\hX{j-1}} &= \UXjh{j}{\hX{j-1}}^\dagger (\RXa{0} \otimes \identity) \UXjh{j}{\hX{j-1}} \\
\RAjah{j}{1}{\hA{j-1}} &= \UAjh{j}{\hA{j-1}}^\dagger \Big(\sum_\beta \smatrx{\cos 2 \theta_\beta & \sin 2 \theta_\beta \\ \sin 2 \theta_\beta & -\cos 2 \theta_\beta} \otimes \ketbra{\beta}{\beta}\Big) \UAjh{j}{\hA{j-1}} \\
\RBjah{j}{1}{\hB{j-1}} &= \UBjh{j}{\hB{j-1}}^\dagger \Big(\sum_\beta G^\dagger H \smatrx{\cos 2 \theta_\beta & \sin 2 \theta_\beta \\ \sin 2 \theta_\beta & -\cos 2 \theta_\beta} H G \otimes \ketbra{\beta}{\beta}\Big) \UBjh{j}{\hB{j-1}}
 \enspace .
\end{split}\end{equation}
We refer to the first register in the codomain of $\UXjh{j}{\hX{j-1}}$ as the ``qubit used in game~$(j, \hX{j-1})$."  
\end{definition}

Such isometries exist, provided $\H_A$ and $\H_B$ are finite dimensional, by Jordan's Lemma (\lemref{t:jordanslemma}) but they are generally not unique.  Eq.~\eqnref{e:gamequbits} simply specifies a convenient basis for the two-dimensional subspace $\Range(\identity \otimes \ketbra \beta \beta)$.  However, up to this freedom in choosing the subspaces, and up to the choice of basis within each subspace, the isometries $\UAjh{j}{\hA{j-1}}$ and~$\UBjh{j}{\hB{j-1}}$ are the same isometries as promised by the CHSH rigidity lemma, \lemref{t:eprlemma}, for game~$(j, \h{j-1})$.  

The maps $\UXjh{j}{\hX{j-1}}$ are generally isometries and not unitaries because there may be some one-dimensional invariant subspaces~$\beta$ and it is notationally inconvenient to have a separate term for this case in Eq.~\eqnref{e:gamequbits}.  It is not difficult to argue, though: 

\begin{proposition} \label{t:wlogunitaryqubits}
Provided $\H_A$ and $\H_B$ are finite dimensional, there exists an isometric extension of the provers' strategy~$\S$ into finite-dimensional spaces such that the operators $\UXjh{j}{\hX{j-1}}$ are all simultaneously unitary.  
\end{proposition}

\begin{proof}
Choose an arbitrary order for all the partial transcripts~$\hX{j-1}$, $j \in [n]$.  One transcript at a time, add dimensions to $\H_\device$ so that $\UXjh{j}{\hX{j-1}}$ is unitary.  The concern is that this could break previously considered $\UXj{j}$ operators.  After the first extension $\dim(\H_\device)$ is even, however, so there are always an even number of one-dimensional invariant subspaces~$\beta$, so always an even number of dimensions are added to~$\H_\device$.  These dimensions can be paired up arbitrarily in the previous~$\UXj{j}$ operators, so that they still each unitarily expose a qubit.  
\end{proof}

If~$\S$ is a single-qubit ideal strategy, then the $\UXjh{j}{\hX{j-1}}$ satisfy Eq.~\eqnref{e:singlequbitidealstrategy} in \defref{t:singlequbitidealstrategy}, and in this case the prover's reflections for game~$(j, \hX{j-1})$ are only supported on that one qubit.  In general, though, the angles $\theta_\beta$ will depend on~$\beta$ and there does not exist a basis change under which the prover's reflections for a game are supported on just one qubit.  

Now we are ready to define the super-operators for the alternative protocol mentioned above.  

\begin{definition} \label{t:ABunitarydef}
Let $\S$ be a strategy such that the operators $\UBjh{j}{\hB{j-1}}$ are unitary.  For $a, b, \Delta \in \{0,1\}$ and a partial transcript~$\hB{j-1}$, define unitaries $\ABunitary(a, b, \Delta) \in \L(\C^2)$ and $\ABunitaryBj{j}(\hB{j-1}, a, b, \Delta)$ by  
\begin{equation}\begin{split}
\ABunitary(a, b, \Delta) &= \sum_{x \in \{0,1\}} \ketbra{(b, x \oplus \Delta)_B}{(a, x)_A} \\
\ABunitaryBj{j}(\hB{j-1}, a, b, \Delta) &= \UBjh{j}{\hB{j-1}}^\dagger \big(\ABunitary(a, b, \Delta) \otimes \identity\big) \UBjh{j}{\hB{j-1}}
 \enspace ,
\end{split}\end{equation}
i.e., $\ABunitaryBj{j}(\hB{j-1}, a, b, \Delta)$ is $\ABunitary(a, b, \Delta)$ acting on Bob's qubit for game~$(j, \hB{j-1})$.  

Recall the distribution of the ideal CHSH game outcomes~$(\hat A, \hat B, \hat X, \hat Y)$ from \defref{t:idealgame}.  Define a super-operator~$\Bguessj{j}$ by 
\begin{multline}
\Bguessj{j}(\ketbra{\h{j-1}, a_j, x_j}{\h{j-1}, a_j, x_j} \otimes \rho) \\
= \sum_{b_j, y_j} \bigl(\begin{aligned}
\Pr[(\hat B, \hat Y) = (b_j, y_j) \vert (\hat A, \hat X) = (a_j, x_j)] \,
\ketbra{\h{j}}{\h{j}} \otimes (\identity \otimes \ABunitaryBj{j}) \rho (\identity \otimes \ABunitaryBj{j}{}^\dagger)
\end{aligned}\bigr)
 \enspace ,
\end{multline}
where $\ABunitaryBj{j} = \ABunitaryBj{j}(\hB{j-1}, a_j, b_j, x_j \oplus y_j)$.  Let $\ABguessj{j} = \Bguessj{j} \circ \EAj{j}$, and for $\ell \geq k$, let $\Bguessj{k,\ell} = \Bguessj{\ell} \cdots \Bguessj{k}$ and $\ABguessj{k,\ell} = \ABguessj{\ell} \cdots \ABguessj{k}$.  Thus, 
\begin{equation}
\ABguessj{1,j}(\rho)
= \frac{1}{2^j} \sum_{\h{j}} \biggl(\begin{aligned}
{\begingroup\textstyle \prod_{k=1}^{j} \endgroup} \Pr[(\hat B, \hat Y) = (b_j, y_j) \vert (\hat A, \hat X) = (a_j, x_j)] \\ \ketbra{\h{j}}{\h{j}} 
\otimes \big(\PAjh{1,j}{\hA{j}} \otimes \ABunitaryBj{1,j}(\h{j})\big) \,\rho\, \big(\PAjh{1,j}{\hA{j}} \otimes \ABunitaryBj{1,j}(\h{j})^\dagger\big)
\end{aligned}\biggr)
 \enspace ,
\end{equation}
where $\ABunitaryBj{1,j}(\h{j}) = \ABunitaryBj{j}(\hB{j-1}, a_j, b_j, x_j \oplus y_j) \cdots \ABunitaryBj{2}(\hB{1}, a_2, b_2, x_2 \oplus y_2) \ABunitaryBj{1}(a_1, b_1, x_1 \oplus y_1)$.  This non-local super-operator has the effect of measuring Alice's qubits, guessing Bob's answers according to the appropriate ideal conditional distribution, and then applying a unitary to correct Bob's qubits.  
\end{definition}

In the above definition, it is worth remarking that the super-operator $\ABguessj{j}$ guesses Bob's measurement result $y_j$ according to its distribution in the ideal CHSH game, and not according to its distribution in~$\S$.  This type of approximation is inevitable because a super-operator that only measures Alice's qubits cannot precisely capture the transcript distribution's dependence on Bob's measurement outcomes for previous games.  

The super-operators $\Bguessj{j}$ are useful because they do not affect the trace distance between matrices that are block-diagonal in the computational basis for transcripts: 

\begin{claim} \label{t:Bguesstracedistance}
For any~$\sigma = \sum_h \ketbra h h \otimes \sigma_h$, $\trnorm{ \Bguessj{j}(\sigma) } = \trnorm{ \sigma }$.  
\end{claim}

\begin{proof}
$\Bguessj{j}$ can be split into two super-operators: the first adds $\ketbra{b_j, y_j}{b_j, y_j}$ to the transcript register, weighted by a certain probability; and the second applies a controlled isometry to the state register.  Neither operation changes the trace norm.  
\end{proof}

Observe that if $\S$ is an ideal strategy, projecting Alice's half of an EPR state onto $\ket{(a,x)_A}$ also collapses Bob's half to the same state~$\ket{(a,x)_A}$.  $\ABunitary(a, b, x \oplus y)$ corrects this to $\ket{(b,y)_B}$.  Hence $\rhoj{j+1} = \EABj{1,j}(\rhoone) = \ABguessj{1,j}(\rhoone)$.  We next show that if most games are structured, then $\rhoj{j+1}$ is close to $\ABguessj{1,j}(\rhoone)$ in trace distance: 

\begin{lemma} \label{t:allCHSHBobgameoutcomesrelatedbysinglequbitunitary}
Let $\S$ be a $(\delta, \epsilon)$-structured strategy.  Then for all~$j$, 
\begin{equation} \label{e:allCHSHBobgameoutcomesrelatedbysinglequbitunitary}
\bigtrnorm{
\EABj{1,j}(\rhoone) - \ABguessj{1,j}(\rhoone)
} \leq j \big( 2 \delta + O(\sqrt \epsilon) \big)
 \enspace .
\end{equation}
In particular, letting $\ket{\psijh{j}{\hA{j}}} =  \PAjh{1,j}{\hA{j}} \ket{\psione} / \norm{ \PAjh{1,j}{\hA{j}} \ket{\psione} }$ and $\density(\ket a) = \ketbra a a$, 
\begin{equation} \label{e:allCHSHBobgameoutcomesrelatedbysinglequbitunitaryExpectation}
\Ex\!\big[ \bigtrnorm{ \density\big( \ket{\psijh{j+1}{H_j}} \big) - \density\big( \ABunitaryBj{1,j}(H_j) \ket{\psiAh{H^A_j}} \big) } \big] \leq 2 n (2\delta + O(\sqrt \epsilon))
 \enspace .
\end{equation}
\end{lemma}

\begin{proof}
Using a hybrid argument, expand the difference $\rhoj{j+1} - \ABguessj{1,j}(\rhoone)$ as 
\begin{align*}
\rhoj{j+1} - \ABguessj{1,j}(\rhoone) 
&= \big( \rhoj{j+1} - \ABguessj{j}(\rhoj{j}) \big) + \ABguessj{j} \big( \rhoj{j} - \ABguessj{j-1}(\rhoj{j-1}) \big) + \cdots + \ABguessj{2,j} \big( \rhoj{2} - \ABguessj{1}(\rhoone) \big)
 \enspace .
\end{align*}
By a triangle inequality, and since applying a super-operator cannot increase the trace distance, $\bigtrnorm{ \rhoj{j+1} - \ABguessj{1,j}(\rhoone) } \leq j \max_{k \in [j]} \bigtrnorm{ \rhoj{k+1} - \ABguessj{k}(\rhoj{k}) }$.  

Next, expand $\bigtrnorm{ \rhoj{k+1} - \ABguessj{k}(\rhoj{k}) }$ as 
\begin{align*}
\bigtrnorm{ \rhoj{k+1} - \ABguessj{k}(\rhoj{k}) }
&= \sum_{\h{k-1}} \Pr[H_{k-1} = \h{k-1}] \bigtrnorm{ \EAj{k} \EBj{k}(\rhoh{\h{k-1}}) - \ABguessj{k}(\rhoh{\h{k-1}}) }
 \enspace .
\end{align*}
If game~$(k, \h{k-1})$ is $\epsilon$-structured, then the total variation distance between the distribution of outcomes $(a_k, b_k, x_k, y_k)$ generated by $\EAj{k} \EBj{k}$ and the distribution generated by $\ABguessj{k}$ is at most~$O(\epsilon)$.  Moreover, by \corref{t:CHSHBobgameoutcomesrelatedbysinglequbitunitary}, the resulting states are within $O(\sqrt \epsilon)$ in trace distance of each other.  Therefore $\bigtrnorm{ \EAj{k} \EBj{k}(\rhoh{\h{k-1}}) - \ABguessj{k}(\rhoh{\h{k-1}}) } = O(\sqrt \epsilon)$.  On the other hand, the total contribution from terms for games~$(k, \h{k-1})$ that are not $\epsilon$-structured is at most $2 \delta$.  This implies $\bigtrnorm{ \rhoj{k+1} - \ABguessj{k}(\rhoj{k}) } \leq 2 \delta + O(\sqrt \epsilon)$, and yields Eq.~\eqnref{e:allCHSHBobgameoutcomesrelatedbysinglequbitunitary}.  

Applying \lemref{t:blockdiagonaltracedistance} to Eq.~\eqnref{e:allCHSHBobgameoutcomesrelatedbysinglequbitunitary} gives Eq.~\eqnref{e:allCHSHBobgameoutcomesrelatedbysinglequbitunitaryExpectation}.  
\end{proof}

We will use \lemref{t:allCHSHBobgameoutcomesrelatedbysinglequbitunitary} four times below, in the proofs of single- and multi-qubit ideal strategy simulation, and local and global gluing (Theorems~\ref{t:singlequbitidealstrategysimulation}, \ref{t:multiqubitidealstrategysimulation},  \ref{t:localgluing} and~\ref{t:globalgluing}).  It allows~for turning weak simulation statements, i.e., bounds on $\trnorm{\EABj{1,j}(\rhoone) - \EABdecj{\tilde}{1,j}(\rhodecone{\tilde})}$, into simulation statements, i.e., bounds on $\trnorm{\EXj{1,j}(\rhoone) - \EXdecj{\tilde}{1,j}(\rhodecone{\tilde})}$: 

\begin{corollary} \label{t:weaktostrongsimulation}
There exists a contant~$\kappa$ such that if $\S = (\ket{\psione}, \{\EAj{j}\}, \{\EBj{j}\})$ is an $\epsilon$-structured strategy that is weakly $\delta$-simulated by $\tilde \S = (\ket{\psione}, \{\EAdecj{\tilde}{j}\}, \{\EBj{j}\})$, a strategy differing only in Alice's reflection operators, then $\tilde \S$ also $\kappa n^\kappa (\delta + \epsilon)^{1/\kappa}$-simulates~$\S$.  
\end{corollary}

\begin{proof}
The idea is that \lemref{t:allCHSHBobgameoutcomesrelatedbysinglequbitunitary} allows for replacing Bob's measurement super-operators with an isometry.  Since the isometry is the same for $\S$ as for~$\tilde \S$, it can be removed without affecting the trace distance (\claimref{t:Bguesstracedistance}), and so $\tilde \S$ simulates~$\S$.  Formally, we have 
\begin{align*}
\trnorm{\EAj{1,j}(\rhoone) - \EAdecj{\tilde}{1,j}(\rhoone)}
&= \trnorm{ \Bguessj{1,j} \EAj{1,j}(\rhoone) - \Bguessj{1,j} \EAdecj{\tilde}{1,j} } && \text{by \claimref{t:Bguesstracedistance}} \\
&= \trnorm{ \ABguessj{1,j}(\rhoone) - \ABguessdecj{\tilde}{1,j}(\rhoone) } && \text{since $\Bguessj{1,j} = \Bguessdecj{\tilde}{1,j}$} \\ 
&\leq \trnorm{ \ABguessj{1,j}(\rhoone) - \EABj{1,j}(\rhoone) } + \trnorm{ \ABguessdecj{\tilde}{1,j}(\rhoone) - \EABdecj{\tilde}{1,j}(\rhoone) } \\ &\quad+ \trnorm{ \EABj{1,j}(\rhoone) - \EABdecj{\tilde}{1,j}(\rhoone) }
 \enspace .
\end{align*}
By \lemref{t:simulationpreservesstructure}, $\tilde \S$ is $(\epsilon + 16 \sqrt \delta)$-structured, so \lemref{t:allCHSHBobgameoutcomesrelatedbysinglequbitunitary} gives bounds for $\trnorm{ \ABguessj{1,j}(\rhoone) - \EABj{1,j}(\rhoone) }$ and $\trnorm{ \ABguessdecj{\tilde}{1,j}(\rhoone) - \EABdecj{\tilde}{1,j}(\rhoone) }$.  Thus $\trnorm{\EAj{1,j}(\rhoone) - \EAdecj{\tilde}{1,j}(\rhoone)} \leq \kappa n^\kappa (\delta + \epsilon)^{1/\kappa}$ for a certain fixed constant~$\kappa$.  Of course, $\trnorm{\EBj{1,j}(\rhoone) - \EBdecj{\tilde}{1,j}(\rhoone)} = 0$.  
\end{proof}

\subsubsection{Second hypothetical protocol: Alice measures for Bob}

The super-operators $\ABguessj{j}$ correspond to a hypothetical protocol in which Alice applies her measurements and then guesses Bob's measurement outcomes, after which Bob applies a unitary correction.  A similar idea is that Alice could herself first apply Bob's measurement operators to her own qubits, collapsing both provers' qubits, and then she could either apply her own measurement operators or, equally well, simply guess a unitary correction.  We next show that this second hypothetical protocol also accurately simulates the actual protocol.  The applications of this claim (in Theorems~\ref{t:multiqubitidealstrategysimulation} and~\ref{t:localgluing}) are to show that Bob's measurement super-operators do not depend much on his local transcript---since Alice can apply them herself without even knowing his transcript.  This is not a purpose that \lemref{t:allCHSHBobgameoutcomesrelatedbysinglequbitunitary} can serve, since there the prover who measures is allowed arbitrary dependence on her local transcript.  Nor does the claim replace \lemref{t:allCHSHBobgameoutcomesrelatedbysinglequbitunitary}.  For our applications, it will be convenient to state the claim with the two provers switched from the above description, i.e., with Bob measuring for Alice.  

\begin{definition} \label{t:canpullovereverythingdef}
Let $\S$ be a strategy such that the operators $\UBjh{k}{\hB{k-1}}$ are unitary.  Let $\PAmeasBjh{k}{\hB{k-1}, a_k, x_k}$ be the projection of \emph{Bob's} qubit $(k, \hB{k-1})$ according to \emph{Alice's} ideal reflection $\RAa{a_k}$, and let $\AmeasBjh{k}{\hB{j}}$ be the corresponding measurement super-operator.  That is, letting $\density(\ket a) = \ketbra a a$, 
\begin{equation}\begin{split}
\PAmeasBjh{k}{\hB{k-1}, a_k, x_k} 
&= \UBjh{k}{\hB{k-1}}^\dagger \big( \tfrac12 (\identity + (-1)^{x_k} \RAa{a_k}) \otimes \identity  \big) \UBjh{k}{\hB{k-1}} \\
\AmeasBjh{k}{\hB{j}}(\density(\ket{\hB{j+1,k-1}}) \otimes \rho) 
&= \frac12 \sum_{a_k, x_k} \big(\begin{aligned}
\density(\ket{\hB{j+1,k-1}, a_k, x_k}) \otimes \PAmeasBj{k} \rho \, \PAmeasBj{k}
\end{aligned}\big) 
 \enspace .
\end{split}\end{equation}
Let $\AmeasBBjh{k}{\hB{j}} = \EBjh{k}{\hB{j}} \AmeasBjh{k}{\hB{j}}$ and $\AmeasBBjh{k, \ell}{\hB{j}} = \AmeasBBjh{\ell}{\hB{j}} \cdots \AmeasBBjh{k}{\hB{j}}$.  These super-operators capture the effects of playing Alice's ideal reflections on Bob's qubits before making Bob's own measurements.  
\end{definition}

Observe that if $\S$ is an ideal strategy, then since a measurement on one half of an EPR state can be made equivalently on the other half, $\rhoj{j+1} = \EABj{1,j}(\rhoone) = \AmeasBBj{1,j}(\rhoone)$.  If most games are $\epsilon$-structured, then $\rhoj{j+1}$ is close to $\AmeasBBj{1,j}(\rhoone)$ in trace distance: 

\begin{lemma} \label{t:canpullovereverything}
Let $\S$ be a strategy and $\h{\ell}$ a partial transcript such that for every $j > \ell$, $\Pr[$game $(j, H_{j-1})$ is $\epsilon$-structured $\vert H_\ell = \h{\ell}] \geq 1-\delta$.  Then for all~$k > j > \ell$, letting $\rhojh{j}{\h{\ell}} = \EABjh{\ell+1,j-1}{\h{\ell}}(\rhoh{\h{\ell}})$, $\bigtrnorm{ \EAjh{j}{\hA{\ell}}(\rhojh{j}{\h{\ell}}) - \AmeasBjh{j}{\hB{\ell}}(\rhojh{j}{\h{\ell}}) } \leq O(\sqrt \epsilon) + 4 \delta$, and 
\begin{equation}
\bigtrnorm{ \EABjh{\ell+1,k-1}{\h{\ell}}(\rhoh{\h{\ell}}) - \AmeasBBjh{j,k-1}{\hB{\ell}} \EABjh{\ell+1,j-1}{\h{\ell}} (\rhoh{\h{\ell}}) } 
\leq (k-j) (O(\sqrt \epsilon) + 4 \delta)
 \enspace .  
\end{equation}
\end{lemma}

\begin{proof}
Let $\density(\ket a) = \ketbra a a$.  Then begin by placing an upper bound on 
\begin{align*}
\bigtrnorm{\rhojh{j+1}{\h{\ell}} - \AmeasBBjh{j}{\hB{\ell}}(\rhojh{j}{\h{\ell}})} 
&= \bigtrnorm{ \EABjh{j}{\h{\ell}}(\rhojh{j}{\h{\ell}}) - \AmeasBBjh{j}{\hB{\ell}}(\rhojh{j}{\h{\ell}}) } \\
&\leq \bigtrnorm{ \EAjh{j}{\hA{\ell}}(\rhojh{j}{\h{\ell}}) - \AmeasBjh{j}{\hB{\ell}}(\rhojh{j}{\h{\ell}}) }
 \enspace .
\end{align*}
Since $\rhojh{j}{\h{\ell}} = \sum_{\h{j-1}} \Pr[H_{j-1} = \h{j-1} \vert H_\ell = \h{\ell}] \density(\ket{\h{\ell+1,j-1}}) \otimes \rhoh{\h{j-1}}$, we can expand the right-hand side of this bound as 
\begin{align*}
\big\lVert &\EAjh{j}{\hA{\ell}}(\rhojh{j}{\h{\ell}}) - \AmeasBjh{j}{\hB{\ell}}(\rhojh{j}{\h{\ell}}) \big\rVert_{\mathrm{tr}} \\
&= \sum_{\h{j-1}} \Pr[H_{j-1} = \h{j-1} \vert H_\ell = \h{\ell}] \Bigtrnorm{ \EAjh{j}{\hA{j-1}}(\rhoh{\h{j-1}}) - \AmeasBjh{j}{\hB{j-1}}(\rhoh{\h{j-1}}) } \\
&= \frac{1}{2} \sum_{\h{j-1}, a_j, x_j} \Pr[H_{j-1} = \h{j-1} \vert H_\ell = \h{\ell}] \Bigtrnorm{ \density(\PAjh{j}{\hA{j}} \ket{\psijh{j}{\h{j-1}}}) - \density(\PAmeasBjh{j}{\hB{j-1}, a_j, x_j} \ket{\psijh{j}{\h{j-1}}}) } 
.
\end{align*}
Now if game $(j, \h{j-1})$ is $\epsilon$-structured, then by \corref{t:pullmeasurementstotheotherside}, $\norm{ \PAmeasBjh{j}{\hB{j-1}, a_j, x_j} \ket{\psij{j}} - \PAjh{j}{\hA{j}} \ket{\psij{j}} } = O(\sqrt \epsilon)$, so $\trnorm{ \density(\PAmeasBjh{j}{\hB{j-1}, a_j, x_j} \ket{\psij{j}}) - \density(\PAjh{j}{\hA{j}} \ket{\psij{j}}) } = O(\sqrt \epsilon)$ (\claimref{t:vectortraceversusl2distance}).  As $a_j$ and $x_j$ are each summed over $\{0,1\}$, it follows that $\bigtrnorm{\rhojh{j+1}{\h{\ell}} - \AmeasBBjh{j}{\hB{\ell}}(\rhojh{j}{\h{\ell}})} \leq O(\sqrt \epsilon) + 4 \delta$.  

Our claim now follows by a sequence of triangle inequalities in each step of which one of Alice's measurements is pulled over to Bob's side.  Write 
\begin{equation*}\begin{split}
\rhojh{k}{\h{\ell}} - \AmeasBBjh{j,k-1}{\hB{\ell}}(\rhojh{j}{\h{\ell}})
&= \Big( \rhojh{k}{\h{\ell}} - \AmeasBBjh{k-1}{\hB{\ell}}(\rhojh{k-1}{\h{\ell}}) \Big) + \AmeasBBjh{k-1}{\hB{\ell}} \Big( \rhojh{k-1}{\h{\ell}} - \AmeasBBjh{k-2}{\hB{\ell}}(\rhojh{k-2}{\h{\ell}}) \Big) \\
&\quad + \cdots + \AmeasBBjh{j+1,k-1}{\hB{\ell}} \Big( \rhojh{j+1}{\h{\ell}} - \AmeasBBjh{j}{\hB{\ell}}(\rhojh{j}{\h{\ell}}) \Big)
\end{split}\end{equation*}
By our above calculation, the trace norm of each term is at most $O(\sqrt \epsilon) + 4 \delta$.  
\end{proof}

\begin{corollary} \label{t:pullfromBobtoAlice}
Let $\S$ be a strategy and $\h{\ell}$ a partial transcript such that for every $j > \ell$, $\Pr[$game $(j, H_{j-1})$ is $\epsilon$-structured $\vert H_\ell = \h{\ell}] \geq 1-\delta$.  Then for all~$j > \ell$, 
\begin{equation} \label{e:pullfromBobtoAlice}
\Bignorm{ \EAjh{j}{\hA{\ell}} \AmeasBBjh{\ell+1,j-1}{\hB{\ell}}(\rhoh{\h{\ell}}) - \AmeasBjh{j}{\hB{\ell}} \AmeasBBjh{\ell+1,j-1}{\hB{\ell}}(\rhoh{\h{\ell}}) } \leq \big( 2(j-\ell) - 1 \big) (O(\sqrt \epsilon) + 4 \delta)
 \enspace .  
\end{equation}
\end{corollary}

\begin{proof}
Let $\delta' = O(\sqrt \epsilon) + 4 \delta$.  By \lemref{t:canpullovereverything}, $\bigtrnorm{ \EAjh{j}{\hA{\ell}}(\rhojh{j}{\h{\ell}}) - \AmeasBjh{j}{\hB{\ell}}(\rhojh{j}{\h{\ell}}) } \leq \delta'$.  Also $\bigtrnorm{ \EAjh{j}{\hA{\ell}} \AmeasBBjh{\ell+1,j-1}{\hB{\ell}}(\rhoh{\h{\ell}}) - \EAjh{j}{\hA{\ell}}(\rhojh{j}{\h{\ell}}) }$ and $\bigtrnorm{ \AmeasBjh{j}{\hB{\ell}} \AmeasBBjh{\ell+1,j-1}{\hB{\ell}}(\rhoh{\h{\ell}}) - \AmeasBjh{j}{\hB{\ell}}(\rhojh{j}{\h{\ell}}) }$ are both at most $\bigtrnorm{ \AmeasBBjh{\ell+1,j-1}{\hB{\ell}}(\rhoh{\h{\ell}}) - \rhojh{j}{\h{\ell}} }$, which by \lemref{t:canpullovereverything} is at most $(j - \ell - 1) \delta'$.  Combining these bounds gives our claim.  
\end{proof}

Measuring a qubit a second time does not change the trace distance.  Therefore, as in \claimref{t:Bguesstracedistance}, for a single-qubit ideal strategy~$\S$, we can replace $\AmeasBBj{j}$ with $\AmeasBj{j}$ without affecting the trace distance: 

\begin{claim} \label{t:measuringaqubittwice}
Let $\S$ be a single-qubit ideal strategy.  Then for any density matrices $\sigma$ and~$\tau$, 
\begin{equation}
\bigtrnorm{ \AmeasBBjh{j}{\hB{j-1}}(\sigma - \tau) } = \bigtrnorm{ \AmeasBjh{j}{\hB{j-1}}(\sigma - \tau) }
 \enspace .
\end{equation}
\end{claim}

\begin{proof}
Since $\AmeasBBjh{j}{\hB{j-1}} = \EBjh{j}{\hB{j-1}} \circ \AmeasBjh{j}{\hB{j-1}}$, and application of a super-operator cannot increase trace distance, $\bigtrnorm{ \AmeasBBjh{j}{\hB{j-1}}(\sigma - \tau) } \leq \bigtrnorm{ \AmeasBjh{j}{\hB{j-1}}(\sigma - \tau) }$.  The reason that this is an equality is that $\EBjh{j}{\hB{j-1}}$ measures the same qubit that $\AmeasBjh{j}{\hB{j-1}}$ already measured.  Since $\AmeasBjh{j}{\hB{j-1}}$ stores its measurement result in a transcript register, no information is lost by measuring the qubit a second time or even discarding it.  Slightly more formally, observe that $\AmeasBjh{j}{\hB{j-1}}(\sigma - \tau) = \frac{1}{2} \sum_{a_j, x_j} \ketbra{a_j, x_j}{a_j, x_j} \otimes \PAmeasBjh{j}{\hB{j-1}, a_j, x_j} (\sigma - \tau) \PAmeasBjh{j}{\hB{j-1}, a_j, x_j}$, so 
\begin{equation*}
\bigtrnorm{ \AmeasBjh{j}{\hB{j-1}}(\sigma - \tau) } = \frac{1}{2} \sum_{a_j, x_j} \bigtrnorm{\PAmeasBjh{j}{\hB{j-1}, a_j, x_j} (\sigma - \tau) \PAmeasBjh{j}{\hB{j-1}, a_j, x_j}}
 \enspace .
\end{equation*}
The expression $\PAmeasBjh{j}{\hB{j-1}, a_j, x_j} (\sigma - \tau) \PAmeasBjh{j}{\hB{j-1}, a_j, x_j}$ factors as the tensor product between a single-qubit state $\ketbra{(a_j, x_j)_A}{(a_j, x_j)_A}$ and another matrix.  The single-qubit state does not affect the trace distance, even after it is measured again by $\EBjh{j}{\hB{j-1}}$.  
\end{proof}

\subsection{Proof of \texorpdfstring{\thmref{t:singlequbitidealstrategysimulation}}{Theorem~\ref{t:singlequbitidealstrategysimulation}}: Simulation by single-qubit ideal strategies} \label{s:singlequbitidealstrategysimulation}

\begin{proof}[Proof of \thmref{t:singlequbitidealstrategysimulation}]
By \propref{t:wlogunitaryqubits}, we may assume without loss of generality that the isometries~$\UXjh{j}{\hX{j-1}}$ from \defref{t:gamequbitsdef} are actually unitary.  Let~$\tilde \S$ be the strategy with the same initial state~$\ket{\psione}$ as~$\S$, but that uses the reflections $\RAdecjah{\tilde}{j}{\alpha}{\hA{j-1}} = \UAjh{j}{\hA{j-1}}^\dagger (\RAa{\alpha} \otimes \identity) \UAjh{j}{\hX{j-1}}$ for Alice.  Then $\tilde \S$ is a single-qubit ideal strategy on Alice's side.  

Our proof that $\tilde \S$ closely simulates $\S$ is based on \lemref{t:allCHSHBobgameoutcomesrelatedbysinglequbitunitary} and the following claim: 

\begin{claim}
For every~$j$, $\trnorm{\EABj{j}(\rhoj{j}) - \EABdecj{\tilde}{j}(\rhoj{j})} = O(\sqrt \epsilon)$.  
\end{claim}

\begin{proof}
Expand $\trnorm{\EABj{j}(\rhoj{j}) - \EABdecj{\tilde}{j}(\rhoj{j})} = \sum_{\h{j-1}} \Pr[H_{j-1} = \h{j-1}] \trnorm{\EABj{j}(\rhoh{\h{j-1}}) - \EABdecj{\tilde}{j}(\rhoh{\h{j-1}})}$.  Split the sum according to whether game~$j$ is played with $\epsilon$-structure on transcript~$\h{j-1}$.  The total contribution from unstructured games is at most $2 \Pr[\text{game $(j,H_{j-1})$ is not $\epsilon$-structured}] \leq 2 \epsilon$.  On the other hand, by the CHSH rigidity lemma, \lemref{t:eprlemma}, for any $\epsilon$-structured game, we have, using \claimref{t:vectortraceversusl2distance} and letting $\density(\ket a) = \ketbra a a$, 
\begin{align*}
\trnorm{\density(\PABj{j} \ket{\psij{j}}) - \density(\PABdecj{\tilde}{j} \ket{\psij{j}})} 
&\leq 2 \norm{\PABj{j} \ket{\psij{j}} - \PABdecj{\tilde}{j} \ket{\psij{j}}} \\
&= \norm{ ( \RAj{j} \otimes \RBj{j} - \RAdecj{\tilde}{j} \otimes \RBdecj{\tilde}{j} ) \ket{\psij{j}} } \\
&= O(\sqrt \epsilon) 
 \enspace .
\end{align*}
Thus $\trnorm{\EABj{j}(\rhoj{j}) - \EABdecj{\tilde}{j}(\rhoj{j})} \leq 2 \epsilon + O(\sqrt \epsilon)$.  
\end{proof}

From the expansion of $\EABj{1,j}(\rhoone) - \EABdecj{\tilde}{1,j}(\rhoone)$ as 
\begin{equation*}
\big( \EABj{j}(\rhoj{j}) - \EABdecj{\tilde}{j}(\rhoj{j}) \big)
+ \EABdecj{\tilde}{j} \big( \EABj{j-1}(\rhoj{j-1}) - \EABdecj{\tilde}{j-1}(\rhoj{j-1}) \big)
+ \cdots 
+ \EABdecj{\tilde}{2,j} \big( \EABj{j}(\rhoone) - \EABdecj{\tilde}{j}(\rhoone) \big)
 \enspace ,
\end{equation*}
it follows that $\trnorm{\EABj{1,j}(\rhoone) - \EABdecj{\tilde}{1,j}(\rhoone)} \leq j O(\sqrt \epsilon)$.  Thus $\tilde \S$ weakly $(n O(\sqrt \epsilon))$-simulates~$\S$.  By \corref{t:weaktostrongsimulation}, there is a constant~$\varkappa$ such that $\tilde \S$ $\varkappa n^\varkappa \epsilon^{1/\varkappa}$-simulates~$\S$.  Since $\tilde \S$ is structured (\lemref{t:simulationpreservesstructure}), we can repeat the argument, but this time changing Bob's reflections, to get simulation by a single-qubit ideal strategy for both provers.  
\end{proof}

This completes the first part of the proof of \thmref{t:sequentialCHSHgames}.  In the remainder of the proof, we will restrict consideration to single-qubit ideal strategies.  This is okay since the strategy~$\tilde \S$ is structured by \lemref{t:simulationpreservesstructure}.  Furthermore, simulation is transitive; if we find a strategy~$\hat \S$ that $\eta$-simulates~$\tilde \S$, then $\hat \S$ $(\kappa n^\kappa \epsilon^{1/\kappa} + \eta)$-simulates~$\S$.

\subsection{Proof of \texorpdfstring{\thmref{t:multiqubitidealstrategysimulation}}{Theorem~\ref{t:multiqubitidealstrategysimulation}}: Simulation by multi-qubit ideal strategies} \label{s:historydependenttensorproductstrategy}

\begin{proof}[Proof of \thmref{t:multiqubitidealstrategysimulation}]
As in the proof of \thmref{t:singlequbitidealstrategysimulation}, it suffices to show that an isometric extension of $\S$ can be weakly simulated by a strategy~$\tilde \S$ in which Alice plays according to a multi-qubit ideal strategy and Bob plays the same as in~$\S$.  Indeed, \corref{t:weaktostrongsimulation} then turns weak simulation into a simulation statement.  By \lemref{t:simulationpreservesstructure}, $\tilde \S$ is structured, so repeating the argument implies that Bob can also play according to a multi-qubit ideal strategy.  

Let us begin by defining Alice's strategy in~$\tilde \S$.  Alice uses the Hilbert space $(\C^2)^{\otimes n} \otimes \H_A$, with the extra $n$ qubits providing convenient workspace.  The isometry $\XA: \H_A \hookrightarrow (\C^2)^{\otimes n} \otimes \H_A$ from \defref{t:isometricextensiondef} simply prepends $\ket{0}^{\otimes n}$.  Thus the initial state for~$\tilde \S$ is $\ket{0}^{\otimes n} \otimes \ket{\psione}$.  Since~$\S$ is a single-qubit ideal strategy (\defref{t:singlequbitidealstrategy}), there exist unitaries~$\UsingleAjh{j}{\hA{j-1}} : \H_A \overset{\cong}{\rightarrow} \C^2 \otimes \H_A'$ such that  $\RAjah{j}{\alpha}{\hA{j-1}} = \UsingleAjh{j}{\hA{j-1}}^\dagger (\RAa{\alpha} \otimes \identity) \UsingleAjh{j}{\hA{j-1}}$.  In particular, we can fix a basis so $\H_A = \C^2 \otimes \H_A'$.  Number this qubit~$0$, and the other qubits from~$1$ to~$n$.  Then, in \defref{t:multiqubitidealstrategy}, let $\XmultiA$ be the identity, and define the operators~$\UmultiAjh{j}{\hA{j-1}}$ by 
\begin{equation}\begin{split} \label{e:Alicesmultiqubitidealstrategy}
\UmultiAj{1} &= S_1 \UsingleAj{1} \\
\UmultiAjh{j}{\hA{j-1}} &= S_j \UsingleAjh{j}{\hA{j-1}} \UsingleAjh{j-1}{\hA{j-1}}^\dagger V(a_{j-1}, x_{j-1})_0
 \enspace .
\end{split}\end{equation}
Here, the operators $\UsingleAjh{k}{\hA{k-1}}$ are understood to act on the $\H_A$ register.  $S_k$ denotes the swap operator between qubit~$0$ and qubit~$k$.  For $a, x \in \{0,1\}$, $V(a, x)$ is a fixed one-qubit unitary that maps $\ket 0$ to $\ket{(a, x)_A}$; the subscript $0$ in the expression above indicates that it acts on qubit~$0$.  Since~$\UmultiAj{j}$ does not involve qubits~$1$ through~$j-1$ (nor qubits~$j+1$ through~$n$), Eq.~\eqnref{e:Alicesmultiqubitidealstrategy} defines a valid multi-qubit ideal strategy for Alice, using Eq.~\eqnref{e:multiqubitidealstrategy}.  

Eq.~\eqnref{e:Alicesmultiqubitidealstrategy} deserves some explanation.  
First of all, $\EAj{1}$ and $\EAdecj{\tilde}{1}$ act in exactly the same way: for any $\sigma \in \L(\H_A)$, $\ketbra{0^n}{0^n} \otimes \EAj{1}(\sigma) = \EAdecj{\tilde}{1}(\ketbra{0^n}{0^n} \otimes \sigma)$.  They both expose a qubit, with~$\UsingleAj{1}$, measure that qubit, and then put it back, with~$\UsingleAj{1}{}^\dagger$.  To understand $\UmultiAj{j}$, notice that there is a trivial way of forcing a tensor-product structure for Alice's measurements: after a qubit has been measured, say as $\ket{(a_j, x_j)_A}$, put that qubit to the side, rotate a fresh ancilla qubit $\ket 0$ into $\ket{(a_j, x_j)_A}$, and continue playing using the ancilla in place of the measured qubit.  This is how $\UmultiAj{j}$ works; $\UsingleAjh{j-1}{\hA{j-1}}^\dagger V(a_{j-1}, x_{j-1})$ rotates the ancilla qubit to $\ket{(a_{j-1}, x_{j-1})_A}$ and puts it into the place of the measured qubit for game~$j-1$, and $S_j \UsingleAjh{j}{\hA{j-1}}$ exposes the qubit for the next game.  Thus Eq.~\eqnref{e:multiqubitidealstrategy} seems to be the obvious way of defining a multi-qubit ideal strategy for Alice.  It is not obvious, however, that $\tilde \S$ simulates~$\S$.   The reason is that $\tilde \S$ does not just set measured qubits to the side---which would make simulation according to \defref{t:simulationdef} hopeless.  It also tries to restore the qubits, by applying $\UmultiAj{1}{}^\dagger \cdots \UmultiAj{j}{}^\dagger$.  Our claim that $\tilde \S$ simulates~$\S$ will boil down to showing that the qubit $\ket{(a_j, x_j)_A}$ measured in game~$j$ will stay close to that through all later games (\lemref{t:latergamesdontchangeearlierresults}), and therefore when $\UmultiAj{j}{}^\dagger$ is applied it returns qubit~$j$ to its initial state~$\ket 0$.  

\smallskip 

To prove \thmref{t:multiqubitidealstrategysimulation}, we need to bound $\bigtrnorm{ \ketbra{0^n}{0^n} \otimes \EABj{1,k}(\rhoone) - \EABdecj{\tilde}{1,k}(\ketbra{0^n}{0^n} \otimes \rhoone) }$.  By a hybrid argument, this is at most $k \max_{j \in [k]} \bigtrnorm{ \ketbra{0^n}{0^n} \otimes \EABj{j}(\rhoj{j}) - \EABdecj{\tilde}{j}(\ketbra{0^n}{0^n} \otimes \rhoj{j}) }$.  

\def\TAj #1{T^A_{#1}}
\def\TAjh #1{T^A_{#1}(\hA{#1})}

At this point, we need to define some new notation.  To save space, let us henceforth assume that the $n$ prepended qubits have been incorporated into Alice's operators~$\EAj{j}$.  Therefore we will write simply $\rhoone$ instead of $\ketbra{0^n}{0^n} \otimes \rhoone$ and $\EABj{1,j}(\rhoone)$ instead of $\ketbra{0^n}{0^n} \otimes \EABj{1,j}(\rhoone)$.  We aim to bound $\trnorm{ \EABj{j}(\rhoj{j}) - \EABdecj{\tilde}{j}(\rhoj{j})}$.  
Let 
\begin{equation}
\TAjh{j} = \UsingleAjh{j}{\hA{j-1}}^\dagger S_j V(a_j, x_j)_j \UsingleAjh{j}{\hA{j-1}}
 \enspace .
\end{equation}
Then Eq.~\eqnref{e:multiqubitidealstrategy}, $\RAjah{j}{a}{\hA{j-1}} = \UmultiAj{1}{}^\dagger \cdots \UmultiAj{j}{}^\dagger (\RAa{a})_j \UmultiAj{j} \cdots \UmultiAj{1}$ can be equivalently rewritten as 
\begin{equation*}
\RAjah{j}{a}{\hA{j-1}}
= \TAj{1}{}^\dagger \cdots \TAj{j}{}^\dagger (\RAa{a})_j \TAj{j} \cdots \TAj{1}
 \enspace ,
\end{equation*}
since $\TAj{j} \cdots \TAj{1} = \UsingleAj{j}{}^\dagger V(a_j, x_j)_0 \UmultiAj{j} \cdots \UmultiAj{1}$ and the extra $\UsingleAj{j}{}^\dagger V(a_j, x_j)_0$ factor cancels out.  These $\TAjh{j}$ operators are more convenient to work with than the $\UmultiAjh{j}{\hA{j-1}}$ operators.  (It is their dependence on $a_j$ and $x_j$ that disallows using them directly in the definition of~$\tilde \S$.)  Define super-operators ${\cal U}^A_j$, ${\cal V}_j$, ${\cal S}_j$ and ${\cal T}_j$ by, for $\sigma \in \L((\C^2)^{\otimes n} \otimes \H_A)$, 
\begin{equation}\begin{split}
{\cal U}_j(\ketbra{\hA{}}{\hA{}} \otimes \sigma) &= \ketbra{\hA{}}{\hA{}} \otimes \UsingleAjh{j}{\hA{j-1}} \sigma \UsingleAjh{j}{\hA{j-1}}^\dagger \\
{\cal V}_j(\ketbra{\hA{}}{\hA{}} \otimes \sigma) &= \ketbra{\hA{}}{\hA{}} \otimes V(a_j, x_j)_j \sigma V(a_j, x_j)_j^\dagger \\
{\cal S}_j(\ketbra{\hA{}}{\hA{}} \otimes \sigma) &= \ketbra{\hA{}}{\hA{}} \otimes S_j \sigma S_j \\
{\cal T}_j &= {\cal U}_j^{-1} {\cal S}_j {\cal V}_j {\cal U}_j
 \enspace .
\end{split}\end{equation}
Let ${\cal T}_{j,k} = {\cal T}_k \cdots {\cal T}_{j+1} {\cal T}_j$ and ${\cal V}_{j,k} = {\cal V}_k \cdots {\cal V}_{j+1} {\cal V}_j$.  Observe then that 
\begin{equation}
\EAdecj{\tilde}{j} = {\cal T}_{1,j-1}^{-1} \EAj{j} {\cal T}_{1,j-1}
 \enspace .
\end{equation}
Therefore, $\trnorm{ \EABdecj{\tilde}{j}(\rhoj{j}) - \EABj{j}(\rhoj{j}) } = \trnorm{ \EABj{j} {\cal T}_{1,j-1}(\rhoj{j}) - {\cal T}_{1,j-1} \EABj{j}(\rhoj{j}) }$, as ${\cal T}_j$, being unitary, does not affect the trace norm.  

Next we claim that ${\cal T}_{1,j-1}(\rhoj{j}) \approx {\cal V}_{1,j-1}(\rhoj{j})$ and ${\cal T}_{1,j-1}(\rhoj{j+1}) \approx {\cal V}_{1,j-1}(\rhoj{j+1})$, where by $\approx$ we mean that the difference is at most $\varkappa n^\varkappa \epsilon^{1/\varkappa}$ (in trace norm) for some constant~$\varkappa$.  In other words, super-operator ${\cal T}$, when applied to a $\rhoj{k}$, effectively only rotates the extra $\ket 0$ qubits at the beginning of $\rhoj{k}$.  If these approximations hold, then our theorem is proved: 
\begin{equation*}
\bigtrnorm{ \EABj{j} {\cal T}_{1,j-1}(\rhoj{j}) - {\cal T}_{1,j-1} \EABj{j}(\rhoj{j}) }
\approx \bigtrnorm{ \EABj{j} {\cal V}_{1,j-1}(\rhoj{j}) - {\cal V}_{1,j-1} \EABj{j}(\rhoj{j}) }
= 0
 \enspace ,
\end{equation*}
since $\EABj{j}$ commutes with ${\cal V}_{1,j-1}$.  

\smallskip

Both approximations are shown by a hybrid argument.  For $k < j$, expand 
\begin{equation*}\begin{split}
{\cal T}_{1,k}(\rhoj{j}) - {\cal V}_{1,k}(\rhoj{j})
&= {\cal T}_{2,k} \big( {\cal T}_1(\rhoj{j}) - {\cal V}_1(\rhoj{j}) \big)
+ {\cal T}_{3,k} \big( {\cal T}_2 {\cal V}_1 (\rhoj{j}) - {\cal V}_{1,2}(\rhoj{j}) \big) \\
&\quad + {\cal T}_{4,k} \big( {\cal T}_3 {\cal V}_{1,2} (\rhoj{j}) - {\cal V}_{1,3}(\rhoj{j}) \big)
+ \cdots + \big( {\cal T}_k {\cal V}_{1,k-1}(\rhoj{j}) - {\cal V}_{1,k}(\rhoj{j}) \big)
 \enspace .
\end{split}\end{equation*}
For $k \neq \ell$, ${\cal T}_k$ and ${\cal V}_\ell$ commute.  Thus, 
\begin{equation*}
\bigtrnorm{ {\cal T}_{1,k}(\rhoj{j}) - {\cal V}_{1,k}(\rhoj{j}) }
\leq k \max_\ell \bigtrnorm{ {\cal T}_\ell(\rhoj{j}) - {\cal V}_\ell(\rhoj{j}) }
 \enspace .
\end{equation*}
Our main lemma places a bound on $\bigtrnorm{ {\cal T}_\ell(\rhoj{j}) - {\cal V}_\ell(\rhoj{j}) }$ for $\ell < j$.  This means that later games do not much change qubits that have been measured earlier.  

\begin{lemma} \label{t:latergamesdontchangeearlierresults}
There exists a constant~$\varkappa$ such that for $\ell < j$, $\bigtrnorm{ {\cal T}_\ell(\rhoj{j}) - {\cal V}_\ell(\rhoj{j}) } < \varkappa n^\varkappa \epsilon^{1/\varkappa}$.  
\end{lemma}

\begin{proof}
Since $\S$ is a single-qubit ideal strategy, Alice's measurement in game~$(j, \hA{j-1})$ projects her qubit for that game into exactly $\ket{(a_j, x_j)_A}$.  In particular, therefore 
\begin{equation*}
\Tr\!\big[ \big( (\ketbra 0 0)_0 \otimes (\ketbra 0 0)_\ell \otimes \identity \big) \cdot {\cal V}_\ell^{-1} {\cal S}_\ell \, {\cal U}_\ell(\rhoj{\ell+1}) \big] = 1
 \enspace .
\end{equation*}
Recall from \defref{t:canpullovereverythingdef} the super-operators~$\AmeasBBj{j}$.  As these super-operators act on~$\H_B$, they commute with ${\cal V}_\ell$, ${\cal S}_\ell$ and ${\cal U}_\ell$, and do not change the above trace, so 
\begin{equation*}
\Tr\!\big[ \big( (\ketbra 0 0)_0 \otimes (\ketbra 0 0)_\ell \otimes \identity \big) \cdot {\cal V}_\ell^{-1} {\cal S}_\ell \, {\cal U}_\ell \AmeasBBj{\ell+1,j-1}(\rhoj{\ell+1}) \big] = 1
 \enspace .
\end{equation*}
By \lemref{t:canpullovereverything}, $\trnorm{ \rhoj{j} - \AmeasBBj{\ell+1,j-1}(\rhoj{\ell+1}) } = O(n \sqrt \epsilon)$, and so by \lemref{t:traceandtracenorm}, 
\begin{equation*}
\Tr\!\big[ \big( (\ketbra 0 0)_0 \otimes (\ketbra 0 0)_\ell \otimes \identity \big) \cdot {\cal V}_\ell^{-1} {\cal S}_\ell \, {\cal U}_\ell(\rhoj{j}) \big] \geq 1 - O(n \sqrt \epsilon)
 \enspace .
\end{equation*}
Applying \corref{t:gentlemeasurementpurestate}, there exists a state~$\sigma = \sum_{\hA{j-1}} \ketbra{\hA{j-1}}{\hA{j-1}} \otimes \sigma_h$ such that for $\tau = (\ketbra 0 0)_0 \otimes (\ketbra 0 0)_\ell \otimes \sigma$, 
\begin{equation*}
\bigtrnorm{
{\cal V}_\ell^{-1} {\cal S}_\ell \, {\cal U}_\ell(\rhoj{j}) - \tau
} \leq O(\sqrt{n} \sqrt{\epsilon})
 \enspace .
\end{equation*}
(The state~$\sigma$ is block diagonal because it is a partial trace of the block-diagonal matrix ${\cal V}_\ell^{-1} {\cal S}_\ell \, {\cal U}_\ell(\rhoj{j})$.)  
It remains only to substitute the definition ${\cal T}_\ell = {\cal U}_\ell^{-1} {\cal S}_\ell {\cal V}_\ell {\cal U}_\ell$ and apply two last triangle inequalities: 
\begin{equation*}
\bigtrnorm{ {\cal T}_\ell(\rhoj{j}) - {\cal V}_\ell(\rhoj{j}) }
= \bigtrnorm{ {\cal S}_\ell {\cal V}_\ell {\cal U}_\ell(\rhoj{j}) - {\cal V}_\ell {\cal U}_\ell(\rhoj{j}) }
\leq O(\sqrt{n} \epsilon^{1/4}) + \bigtrnorm{ {\cal S}_\ell {\cal V}_\ell {\cal S}_\ell {\cal V}_\ell(\tau) - {\cal V}_\ell {\cal S}_\ell {\cal V}_\ell(\tau) }
 \enspace .
\end{equation*}
Since ${\cal S}_\ell \tau = \tau$, the final term is zero.  
\end{proof}

This completes the proof of \thmref{t:multiqubitidealstrategysimulation}.  
\end{proof}

\subsection{Proof of \texorpdfstring{\thmref{t:globalgluing}}{Theorem~\ref{t:globalgluing}}: Gluing together multi-qubit ideal strategies} \label{s:gluing}

\thmref{t:multiqubitidealstrategysimulation} shows that Alice and Bob are close to playing according to a strategy in which every game uses a qubit in tensor product with the previous games' qubits.  However, the qubit used can depend on previous games' outcomes.  Next, in the third and last part of the proof of \thmref{t:sequentialCHSHgames}, we will argue that Alice and Bob must play using a single set of $n$ qubits, fixed in advance independent of the transcript.  The reason is essentially that the players cannot communicate with each other and their local transcripts are insufficiently correlated to coordinate a dynamic strategy.  

For a toy example of the issue, consider two provers who play the first $n-1$ games honestly and who at the beginning of the last game share two EPR states, $\ket{\psi^*}^{\otimes 2}$.  Say that for certain functions~$f$ and~$g$, Alice uses EPR state $f(\hA{n-1}) \in \{1,2\}$ in game~$n$, and Bob uses pair $g(\hB{n-1}) \in \{1,2\}$.  For game~$n$ to be structured, they need $f(\hA{n-1}) = g(\hB{n-1})$ so that they measure the same EPR state.  Now Alice and Bob's local transcripts are each uniformly random, separately, but they have a constant correlation in every game coordinate.  It is straightforward to argue based on coordinate influence that if $\Pr[f(H^A_{n-1}) \neq g(H^B_{n-1})]$ is small, then~$f$ and~$g$ must both be nearly constant.  In particular, although the majority function is the stablest balanced function~\cite{MosselODonellOleszkiewicz05majorityisstablest}, it is not stable enough.  Thus one of the two EPR states is used almost always.  

This example is of an essentially classical cheating strategy.  The actual provers we face may be significantly more sophisticated.  In particular, by cheating in small amounts in the first games, they potentially can drastically change the underlying quantum state.  For example, Alice might have knowingly managed to swap her halves of the two last EPR states along some transcripts~$\hA{n-1}$.  Then she can use completely different strategies for the last game, depending on whether or not there has been a swap, without having to coordinate any classical information with Bob.  There may also be much more sophisticated ways of cheating than this example.  We worry especially that small amounts of cheating in earlier games might enable an avalanche of more and more blatant cheating in later games.  

Our ``gluing" argument has two parts, that we term local and global gluing.  In the local gluing argument, we show that most of the time, for two typical partial transcripts $\hA{n}$ and $\hA{n}{}'$ that differ in only one game coordinate~$j$, the states created by Alice measuring along these transcripts are close to each other (up to unitary corrections on Alice and Bob's $j$th qubits).  See \thmref{t:localgluing} for a precise statement.  Essentially, this means that Alice's strategy for games~$j+1, \ldots, n$ along $\hA{n}$ does not depend much on game~$j$.  In the global gluing argument, we connect together all of the transcripts, by connecting far away transcripts with a sequence of local gluing steps.

\subsubsection{Local gluing}

Similar to \defref{t:ABunitarydef}, we define unitary operators $\AAunitary(a, a', \Delta)$ that rotate between Alice's different measurement bases (see \corref{t:CHSHgameoutcomesrelatedbysinglequbitunitary}): 

\begin{definition} \label{t:AAunitarydef}
Let $\S$ be a strategy such that the operators $\UXjh{j}{\hX{j-1}}$ are unitary, for $\device \in \{A, B\}$.  For $a, a', \Delta \in \{0,1\}$, define unitaries $\AAunitary(a, a', \Delta) \in \L(\C^2)$ and $\AAunitaryXj{j}(\hX{j-1}, a, a', \Delta)$ by  
\begin{equation}\begin{split}
\AAunitary(a, a', \Delta) &= \sum_{x \in \{0,1\}} \ketbra{(a', x \oplus \Delta)_A}{(a, x)_A} \\
\AAunitaryXj{j}(\hX{j-1}, a, a', \Delta) &= \UXjh{j}{\hX{j-1}}^\dagger \big(\AAunitary(a, a', \Delta) \otimes \identity\big) \UXjh{j}{\hX{j-1}}
 \enspace ,
\end{split}\end{equation}
i.e., $\AAunitaryXj{j}(\hX{j-1}, a, a', \Delta)$ is $\AAunitary(a, a', \Delta)$ acting on the qubit in~$\H_\device$ for game~$(j, \hX{j-1})$.  
\end{definition}

\begin{theorem} \label{t:localgluing}
For $a_j', x_j' \in \{0,1\}$ and a partial transcript $\hA{k}$, let $\hA{k}{}'$ denote the same transcript except with the question and outcome for game~$j$ replaced by $a_j'$ and~$x_j'$.  

There exists a constant~$\kappa$ such that, for $p(n, \delta, \epsilon) = \kappa n^\kappa (\delta + \epsilon)^{1/\kappa}$, if $\S$ is a $(\delta, \epsilon)$-structured multi-qubit ideal strategy for $n$ sequential CHSH games, then there is at least a $1 - p(n, \delta, \epsilon)$ probability that $H_j$ lies in the set of $\h{j}$ that satisfy, for all~$a_j'$ and~$x_j'$, 
\begin{equation} \label{e:localgluing}
\Bigtrnorm{
\EAjh{j+1,k}{\hA{j}}(\rhoAh{\hA{j}})
- \AAunitaryAj{j} \AAunitaryBj{j} \EAjh{j+1,k}{\hA{j}{}'}(\rhoAh{\hA{j}{}'}) \AAunitaryAj{j}{}^\dagger \AAunitaryBj{j}{}^\dagger
} 
\leq p(n, \delta, \epsilon)
 \enspace ,
\end{equation}
where $\AAunitaryXj{j} = \AAunitaryXj{j}(\hX{j-1}, a_j', a_j, x_j' \oplus x_j)$.  
\end{theorem}

\begin{proof}
There are three parts to the proof.  \corref{t:CHSHgameoutcomesrelatedbysinglequbitunitary} begins the gluing: if game $(j, \h{j-1})$ is $\epsilon$-structured, then $\bignorm{\ket{\psijh{j+1}{\h{j}}} - \AAunitaryAj{j} \ket{\psijh{j+1}{\h{j}{}'}}} = O(\sqrt \epsilon)$.  In trace distance, 
\begin{equation*}
\rhoh{\h{j}} \approx \AAunitaryAj{j} \rhoh{\h{j}{}'} \AAunitaryAj{j}{}^\dagger
 \enspace .
\end{equation*}
Since applying a super-operator cannot increase trace distance, therefore $\AmeasBBjh{j+1,k}{\hB{j}}(\rhoh{\h{j}})$ is close to $\AAunitaryAj{j} \AmeasBBjh{j+1,k}{\hB{j}}(\rhoh{\h{j}{}'}) \AAunitaryAj{j}{}^\dagger$.  
In the second part of the proof, we use \corref{t:pullfromBobtoAlice} of \lemref{t:canpullovereverything} to pull Alice's measurement super-operators back to her side, simultaneously eliminating Bob's measurements for games~$j+1$ and later; thus 
\begin{equation*}
\EAjh{j+1,k}{\hA{j}}(\rhoh{\h{j}}) \approx \AAunitaryAj{j} \EAjh{j+1,k}{\hA{j}{}'}(\rhoh{\h{j}{}'}) \AAunitaryAj{j}{}^\dagger
 \enspace .
\end{equation*}
This equation says that Alice's actions along the transcript~$\hA{j}$ have nearly the same effect as along the transcript~$\hA{j}{}'$.  It holds essentially because both super-operators can be pulled to Bob's side in the same way, if Bob also measures.  
In the third part of the proof, we apply \lemref{t:allCHSHBobgameoutcomesrelatedbysinglequbitunitary}.  The lemma shows that $\rhoh{\h{j}}$ is close to $\rhoAh{\hA{j}}$, up to certain unitary corrections on $\H_B$.  This allows us to eliminate Bob's measurement super-operators for games up to~$j$, thus establishing the claim that $\EAjh{j+1,k}{\hA{j}}(\rhoAh{\hA{j}})$ is close to $\AAunitaryAj{j} \EAjh{j+1,k}{\hA{j}{}'}(\rhoAh{\hA{j}{}'}) \AAunitaryAj{j}{}^\dagger$, up to certain unitary corrections on $\H_B$.  These corrections are the same for the first $j-1$ games, and since~$\S$ is a multi-qubit ideal strategy they can be canceled out, leaving only a correction $\AAunitaryBj{j}$ for game~$j$.  

It will be convenient to establish the notation that for a vector~$\ket a$, $\density(\ket a) = \ketbra a a$.  

The next proposition combines the first two steps: 

\begin{proposition} \label{t:removedmostBobmeasurements}
Under the conditions of \thmref{t:localgluing}, there is at least a $1 - \sqrt{n \delta}$ probability that $H_j$ lies in the set of~$\h{j}$ satisfying, for all $a_j'$ and~$x_j'$, 
\begin{equation} \label{e:removedmostBobmeasurements}
\Bigtrnorm{
\EAjh{j+1,k}{\hA{j}}(\rhoh{\h{j}}) 
- \AAunitaryAj{j} \EAjh{j+1,k}{\hA{j}{}'}(\rhoh{\h{j}{}'}) \AAunitaryAj{j}{}^\dagger
} 
\leq O(\sqrt \epsilon) + 4n (O(\sqrt \epsilon) + 4 \cdot 60 \sqrt{n \delta}) (k - j)
.
\end{equation}
\end{proposition}

Very roughly, this inequality means that Alice's actions in games~$j+1$ to~$k$ are almost the same starting with $\hA{j}$ as starting with the perturbed transcript~$\hA{j}{}'$.  

\begin{proof}
By a union bound, $\Pr[\text{all games are $\epsilon$-structured along $H_n$}] \geq 1 - n \delta$.  By a Markov inequality, then, there is at least a $1 - \sqrt{n \delta}$ probability that $H_{j-1}$ lies in the set $S' = \{ \h{j-1} : \Pr[$all games are $\epsilon$-structured along $H_n \vert H_{j-1} = \h{j-1}] \geq 1 - \sqrt{n \delta} \}$.  Let $S = \{ \h{j} : \forall \, a_j', x_j', \, \Pr[$all games are $\epsilon$-structured along $H_n \vert H_j = (\h{j-1}, a_j', x_j, b_j', y_j) ] \geq 1 - 60 \sqrt{n \delta} \}$.  When game~$(j, \h{j-1})$ is structured, all outcomes occur with probability at least $1/60$ (\corref{t:structuredgameprobabilitylowerbound}).  Therefore any $\h{j}$ whose prefix $\h{j-1}$ lies in $S'$ itself lies in~$S$, so $\Pr[H_j \in S] \geq 1 - \sqrt{n \delta}$.  

Now for $\h{j} \in S$, since game~$(j, \h{j-1})$ is structured, \corref{t:CHSHgameoutcomesrelatedbysinglequbitunitary} gives $\bignorm{\ket{\psijh{j+1}{\h{j}}} - \AAunitaryAj{j} \ket{\psijh{j+1}{\h{j}{}'}}} = O(\sqrt \epsilon)$, where $\AAunitaryAj{j} = \AAunitaryAj{j}(\hA{j-1}, a_j', a_j, x_j' \oplus x_j)$.  Notice that since Bob's view along the two transcripts is the same, i.e., $\hB{j} = \hB{j}{}'$, and measurements on Bob's side commute with $\AAunitaryAj{j}$, we therefore have 
\begin{equation} \label{e:statesrelatedatperturbation}
\trnorm{\AmeasBBjh{j+1,k}{\hB{j}}(\rhoh{\h{j}}) - \AAunitaryAj{j} \AmeasBBjh{j+1,k}{\hB{j}}(\rhoh{\h{j}{}'}) \AAunitaryAj{j}{}^\dagger} \leq \bigtrnorm{\density(\ket{\psijh{j+1}{\h{j}}}) - \density(\AAunitaryAj{j} \ket{\psijh{j+1}{\h{j}{}'}})} = O(\sqrt \epsilon)
,
\end{equation}
using \claimref{t:vectortraceversusl2distance}.  

We complete the proof with an inductive argument that pulls Alice's measurement super-operators back over to her side and eliminates Bob's measurements in games~$j+1$ and later.  

\begin{claim} \label{t:eliminateBobbyinduction}
For $\ell \in \{j, \ldots, k\}$, 
\begin{equation} \label{e:eliminateBobbyinduction}
\Bigtrnorm{
\EAjh{\ell+1,k}{\hA{j}} \AmeasBBjh{j+1,\ell}{\hB{j}}(\rhoh{\h{j}}) 
- \AAunitaryAj{j} \big[ \EAjh{\ell+1,k}{\hA{j}{}'} \AmeasBBjh{j+1,\ell}{\hB{j}}(\rhoh{\h{j}{}'}) \big]\AAunitaryAj{j}{}^\dagger
}
\leq O(\sqrt \epsilon) + 4n (O(\sqrt \epsilon) + 4 \cdot 60 \sqrt{n \delta}) (k - \ell)
.
\end{equation}
\end{claim}

\begin{proof}
The proof is by induction in $(k-\ell)$, starting with Eq.~\eqnref{e:statesrelatedatperturbation} for $\ell = k$.  

Assume we are given Eq.~\eqnref{e:eliminateBobbyinduction} for some $\ell > j$.  By \claimref{t:measuringaqubittwice}, since the qubit has been measured already we can eliminate Bob's final measurement super-operator without affecting the trace distance: 
\begin{gather*}
\Bigtrnorm{
\EAjh{\ell+1,k}{\hA{j}} \AmeasBBjh{j+1,\ell}{\hB{j}}(\rhoh{\h{j}}) 
- \AAunitaryAj{j} \big[ \EAjh{\ell+1,k}{\hA{j}{}'} \AmeasBBjh{j+1,\ell}{\hB{j}}(\rhoh{\h{j}{}'}) \big] \AAunitaryAj{j}{}^\dagger
} \\
= \Bigtrnorm{
\EAjh{\ell+1,k}{\hA{j}} \AmeasBjh{\ell}{\hB{j}} \AmeasBBjh{j+1,\ell-1}{\hB{j}}(\rhoh{\h{j}}) 
- \AAunitaryAj{j} \big[ \EAjh{\ell+1,k}{\hA{j}{}'} \AmeasBjh{\ell}{\hB{j}} \AmeasBBjh{j+1,\ell-1}{\hB{j}}(\rhoh{\h{j}{}'}) \big] \AAunitaryAj{j}{}^\dagger
}
 \enspace .
\end{gather*}
By \corref{t:pullfromBobtoAlice}, we can pull Alice's last measurement on Bob's side back to Alice's side: letting $\delta' = 2n (O(\sqrt \epsilon) + 4 \cdot 60 \sqrt{n \delta})$, $\Bignorm{ \AmeasBjh{\ell}{\hB{j}} \AmeasBBjh{j+1,\ell-1}{\hB{j}}(\rhoh{\h{j}}) - \EAjh{\ell}{\hA{j}} \AmeasBBjh{j+1,\ell-1}{\hB{j}}(\rhoh{\h{j}}) } \leq \delta'$.  The same bound holds for the transcript~$\h{j}{}' = (\hA{j}{}', \hB{j})$.  Since applying $\EAjh{\ell+1,k}{\hA{j}}$ or $\EAjh{\ell+1,k}{\hA{j}{}'}$ only decreases these trace distances, therefore, 
\begin{equation*}
\Bigtrnorm{
\EAjh{\ell,k}{\hA{j}} \AmeasBBjh{j+1,\ell-1}{\hB{j}}(\rhoh{\h{j}}) 
- \AAunitaryAj{j} \big[ \EAjh{\ell,k}{\hA{j}{}'} \AmeasBBjh{j+1,\ell-1}{\hB{j}}(\rhoh{\h{j}{}'}) \big] \AAunitaryAj{j}{}^\dagger
} \leq O(\sqrt \epsilon) + 2 \delta' (k - \ell - 1)
 \enspace ,
\end{equation*}
as claimed.  
\end{proof}

In particular, letting $\ell = j$ in \claimref{t:eliminateBobbyinduction}, we obtain Eq.~\eqnref{e:removedmostBobmeasurements}.  
\end{proof}

Letting $\delta' = 2n (2 \delta + O(\sqrt \epsilon))$ and $f(\h{j}) = \trnorm{ \rhoh{\h{j}} - \ABunitaryBj{1,j}(\h{j}) \rhoAh{\hA{j}} \ABunitaryBj{1,j}(\h{j})^\dagger }$, by \lemref{t:allCHSHBobgameoutcomesrelatedbysinglequbitunitary}, 
\begin{equation*}
\Ex\!\big[ f(H_j) \big\vert \text{game~$(j, H_{j-1})$ is $\epsilon$-structured} \big] \leq \frac{\delta'}{1-\delta} \leq \delta' (1 + 2 \delta)
 \enspace .
\end{equation*}
Thus, given that game~$(j, H_{j-1})$ is $\epsilon$-structured, there is at least a $1 - \sqrt{\delta' (1+2\delta)}$ probability that $H_{j-1}$ lies in the set of $\h{j-1}$ with $\Ex[f(H_j) \vert H_{j-1} = \h{j-1}] \leq \sqrt{\delta' (1+2\delta)}$.  By \corref{t:structuredgameprobabilitylowerbound}, this implies that for all $a_j, x_j, b_j, y_j$, $f(\h{j}) \leq 60 \sqrt{\delta' (1+2\delta)}$.  

Combined with \propref{t:removedmostBobmeasurements}, there is at least a $(1-\delta)(1 - \sqrt{\delta'(1+2\delta)}) - \sqrt{n \delta}$ probability that $H_j$ lies in the set of $\h{j}$ satisfying, for all $a_j'$ and~$x_j'$, 
\begin{equation*}\begin{aligned}
\Bigtrnorm{
\ABunitaryBj{1,j}(\h{j}) \EAjh{j+1,k}{\hA{j}}(\rhoAh{\hA{j}}) \ABunitaryBj{1,j}(\h{j})^\dagger
- \AAunitaryAj{j} \ABunitaryBj{1,j}(\h{j}{}') \EAjh{j+1,k}{\hA{j}{}'}(\rhoAh{\hA{j}{}'}) \AAunitaryAj{j}{}^\dagger \ABunitaryBj{1,j}(\h{j}{}')^\dagger
} \\
\leq O(\sqrt \epsilon) + 4n^2 (O(\sqrt \epsilon) + 4 \cdot 60 \sqrt{n \delta}) + 2 \cdot 60 \sqrt{\delta'(1+2\delta)} 
 \enspace .
\end{aligned}\end{equation*}
So far, we have only used that~$\S$ is a single-qubit ideal strategy.  Since $\S$ is in fact a multi-qubit ideal strategy, there is a basis in which $\ABunitaryBj{1,j}(\h{j})$ and $\ABunitaryBj{1,j}(\h{j}{}')$ are both tensor-products of $j$ one-qubit unitaries, with the same unitaries on the first $j-1$ coordinates.  Removing these unitaries does not affect the trace distance in the above inequality and thus it is equivalent to 
\begin{equation*}
\Bigtrnorm{
\EAjh{j+1,k}{\hA{j}}(\rhoAh{\hA{j}})
- \AAunitaryAj{j} {\cal V}^B_j \EAjh{j+1,k}{\hA{j}{}'}(\rhoAh{\hA{j}{}'}) \AAunitaryAj{j}{}^\dagger {\cal V}^B_j{}^\dagger
} 
\leq p(n, \delta, \epsilon)
\end{equation*}
for some polynomial $p(n, \delta, \epsilon)$ that tends to zero with $\delta$ and~$\epsilon$, and ${\cal V}^B_j = \ABunitaryBj{j}(\hB{j-1}, a_j, b_j, x_j \oplus y_j)^\dagger \ABunitaryBj{j}(\hB{j-1}, a_j', b_j, x_j' \oplus y_j)$.  Finally, observe that for all $\chi \in \{0,1\}$, ${\cal V}^B_j$ maps $\ket{(a_j', \chi)_A}$ to $\ket{(a_j, \chi \oplus x_j' \oplus x_j)_A}$, and so ${\cal V}^B_j = \AAunitaryBj{j}(\hB{j-1}, a_j', a_j, x_j' \oplus x_j)$, as claimed.  This completes the proof of \thmref{t:localgluing}.  
\end{proof}

Of course, a symmetrical statement to \thmref{t:localgluing} holds also for Bob's super-operators.

\subsubsection{Global gluing}

Our global gluing argument will fix a gluing target, a transcript~$\hdec{\hat}{n}$.  For partial transcripts~$\hA{n}$, we consider the path $\lambda^{(0)} = \hA{k}, \lambda^{(1)}, \lambda^{(2)}, \ldots, \lambda^{(k)} = \hdec{\hat}{k}$ where $\lambda^{(j-1)}$ and $\lambda^{(j)}$ differ only possibly in the outcomes for game~$j$.  We will compare $\rhoAh{\hA{k}}$ to $\rhoAh{\hAdec{\hat}{k}}$ by applying local gluing comparisons along each step of the path.  (It is important that the coordinates be changed in increasing order, so that the unitary corrections for each comparison depend only on $\hdec{\hat}{n}$.)  We will choose $\hdec{\hat}{n}$ so that all local gluing steps along the path succeed, for most transcripts~$\hA{n}$.  Therefore, we will end up showing that the provers' strategy~$\S$ is simulated by an ideal strategy in which they use the qubits defined by the fixed transcript~$\hdec{\hat}{n}$, regardless of the observed transcript.  

\begin{definition} \label{t:BBunitarydef}
Let $\S$ be a strategy such that the operators $\UXjh{j}{\hX{j-1}}$ are unitary, for $\device \in \{A, B\}$.  Similar to \defref{t:AAunitarydef}, define unitary operators that rotate between Bob's different measurement bases:  
\begin{equation}\begin{split}
\BBunitary(b, b', \Delta) &= \sum_{y \in \{0,1\}} \ketbra{(b', y \oplus \Delta)_B}{(b, y)_B} \\
\BBunitaryXj{j}(\hX{j-1}, b, b', \Delta) &= \UXjh{j}{\hX{j-1}}^\dagger \big(\BBunitary(b, b', \Delta) \otimes \identity\big) \UXjh{j}{\hX{j-1}}
 \enspace .
\end{split}\end{equation}
Furthermore, for notational brevity, define super-operators $\AAunitarysupABj{j}(\h{j-1}, a, a', \Delta)$ and $\BBunitarysupABj{j}(\h{j-1}, b, b', \Delta)$ by 
\begin{equation}\begin{aligned}
\AAunitarysupABj{j}(\h{j-1}, a, a', \Delta)(\sigma) &= \AAunitaryABj{j} \sigma \AAunitaryABj{j}{}^\dagger \\
\BBunitarysupABj{j}(\h{j-1}, b, b', \Delta)(\sigma) &= \BBunitaryABj{j} \sigma \BBunitaryABj{j}{}^\dagger
 \enspace ,
\end{aligned}\end{equation}
where $\AAunitaryABj{j} = \AAunitaryAj{j}(\hA{j-1}, a, a', \Delta) \AAunitaryBj{j}(\hB{j-1}, a, a', \Delta)$ and $\BBunitaryABj{j} = \BBunitaryAj{j}(\hA{j-1}, a, a', \Delta) \BBunitaryBj{j}(\hB{j-1}, a, a', \Delta)$.  
\end{definition}

In the global gluing argument, we will need to handle various conditional probability distributions, such as the distribution of outcomes for game~$k$, $H_{k,k}$ conditioned on $H_j = \h{j}$ for different values of~$j$.  Unfortunately, for some transcripts~$\h{n}$, these distributions can depend heavily on~$j$, preventing us from coupling them together.  This is a minor technical difficulty, not a serious obstacle.  To get around it, we will move to the distribution $\hat H_n$ of transcripts for $n$ ideal CHSH games, in which each game is independent.  This can be done at little cost if most games are structured: 

\begin{lemma} \label{t:structuretovariationdistance}
If $\Pr[\text{every game along~$H_n$ is $\epsilon$-structured} \vert H_j = \h{j}] \geq 1-\delta$, then the total variation distance between the distribution of $H_n$ conditioned on $H_j = \h{j}$ and the distribution of $\hat H_n$, from an ideal CHSH strategy, conditioned on $\hat H_j = \h{j}$, satisfies 
\begin{equation}
d_{TV}(H_n \vert H_j = \h{j}, \hat H_n \vert \hat H_j = \h{j}) \leq \delta + 2 \cdot 60 n \epsilon
 \enspace .
\end{equation}
\end{lemma}

\begin{proof}
Except for notational complications, the proof is the same whether or not we condition on a partial transcript $\h{j}$.  Therefore for simplicity assume $j = 0$.  

Let $\mu(\h{n}) = \Pr[H_n = \h{n}]$ and $\nu(\h{n}) = \Pr[\hat H_n = \h{n}]$.  Then $d_{TV}(\mu, \nu) = \sum_{\h{n} : \mu(\h{n}) > \nu(\h{n})} (\mu(\h{n}) - \nu(\h{n}))$, which is at most $\delta$ plus the same sum restricted further to transcripts~$\h{n}$ along which all games are $\epsilon$-structured.  If all games along~$\h{n}$ are $\epsilon$-structured, then by definition $\mu(\h{n}) \leq \prod_j (p_j + \epsilon)$, whereas $\nu(\h{n}) = \prod_j p_j$, where $p_j = \Pr[\hat H_{j,j} = \h{j,j}] \geq 1/60$ (\corref{t:structuredgameprobabilitylowerbound}).  Therefore, $d_{TV}(\mu, \nu) \leq \delta + \sum_{\h{n}} \mu(\h{n}) \big(1 - \prod_j p_j / (p_j + \epsilon)\big) \leq \delta + 1 - (1-2 \cdot 60 \epsilon)^n \leq \delta + 2 \cdot 60 n \epsilon$.  
\end{proof}

\begin{lemma} \label{t:goodgluingtargetsexist}
If $\S$ is a $(\delta, \epsilon)$-structured multi-qubit ideal strategy for $n$ sequential CHSH games, such that the operators~$\UXjh{j}{\hX{j-1}}$ are unitary, then for $p(n, \delta, \epsilon)$ the polynomial from \thmref{t:localgluing}, there is at least a $1 - 2 n^2 p(n, \delta, \epsilon) - 2 n \sqrt{n \delta}$ probability that $H_n$ lies in the set 
\begin{equation} \label{e:goodgluingtargetsexist}
S = \Bigg\{ \h{n} : \forall k , \, \min\!\Bigg\{ \begin{aligned}
\Pr\!\big[ \bigtrnorm{ \rhoAh{\hA{k}} - \AAunitarysupABj{1,k}( \rhoAh{\hat H^A_k} ) } \leq n \sqrt{2 p(n, \delta, \epsilon)} \big], \\
\Pr\!\big[ \bigtrnorm{ \rhoAh{\hB{k}} - \BBunitarysupABj{1,k}( \rhoBh{\hat H^B_k} ) } \leq n \sqrt{2 p(n, \delta, \epsilon)} \big] 
\end{aligned}\Bigg\} \geq 1 - n \delta' \Bigg\}
 \enspace .
\end{equation}
Here $\AAunitarysupABj{1,k} = \AAunitarysupABj{k}(\h{k-1}, \hat A_k, a_k, \hat X_k \oplus x_k) \cdots \AAunitarysupABj{1}(\hat A_1, a_1, \hat X_1 \oplus x_1)$ and $\delta' = \sqrt{2 p(n, \delta, \epsilon)} + (\sqrt{n \delta} + 2 \cdot 60 n \epsilon)$.  
\end{lemma}

\begin{proof}
By \thmref{t:localgluing} and a union bound over $j$ and $k$, there is at least a $1 - n^2 p(n, \delta, \epsilon)$ probability that~$H_n$ lies in the set 
\begin{equation*}
S_1 = \Big\{ \h{n} : \forall j, k, a_j', x_j', \, \bigtrnorm{
\EAjh{j+1,k}{\hA{j}}(\rhoAh{\hA{j}})
- \AAunitarysupABj{j} \EAjh{j+1,k}{\hA{j}{}'}(\rhoAh{\hA{j}{}'})
} 
\leq p(n, \delta, \epsilon) \Big\}
 \enspace ,
\end{equation*}
with $\AAunitarysupABj{j} = \AAunitarysupABj{j}(\h{j-1}, a_j', a_j, x_j' \oplus x_j)$.  
By \lemref{t:blockdiagonaltracedistance} and a Markov inequality, $S_1$ is a subset of 
\begin{equation*}
S_2 = \Big\{ \h{n} : \forall j, k, a_j', x_j', \, 
\Pr\!\big[ \trnorm{ \rhoAh{H^A_k} - \AAunitarysupABj{j}( \rhoAh{H^A_k{}'} ) } 
\leq \sqrt{2 p(n, \delta, \epsilon)} \,\big\vert\, H_j = \h{j} \big] \geq 1 - \sqrt{2 p(n, \delta, \epsilon)}
\Big\}
.
\end{equation*}

Furthermore, there is at least a $1 - n \sqrt{n \delta}$ probability that $H_n$ lies in the set 
\begin{equation*}\begin{split}
S_3 = \Big\{ \h{n} : \forall j, \, \Pr[\text{$\forall \, i > j$, game~$(i, H_{i-1})$ is $\epsilon$-structured} \,\vert\, H_j = \h{j}] &\geq 1 - \sqrt{n \delta} \Big\}
 \enspace .
\end{split}\end{equation*}
By \lemref{t:structuretovariationdistance}, $S_3$ is a subset of 
\begin{equation*}\begin{split}
S_4 = \Big\{ \h{n} : \forall j, \, 
d_{TV}(H^A_n \vert H_j = \h{j}, \hat H^A_n \vert \hat H_j = \h{j})
\leq \sqrt{n \delta} + 2 \cdot 60 n \epsilon \Big\}
 \enspace .
\end{split}\end{equation*}

Taking the intersection of $S_2$ and $S_4$, we obtain that there is at least a $1 - n^2 p(n, \delta, \epsilon) - n \sqrt{n \delta}$ probability that $H_n$ lies in the set 
\begin{equation*}
S_5 = \Big\{
\h{n} : \forall j, k, a_j', x_j', \, 
\Pr\!\big[ 
\bigtrnorm{ \rhoAh{\hat H^A_k} - \AAunitarysupABj{j}\big( \rhoAh{\hat H^A_k{}'} \big) } 
\leq \sqrt{2 p(n, \delta, \epsilon)} \,\big\vert\, \hat H_j = \h{j}\big]
\geq 1 - \delta'
\Big\}
 \enspace .
\end{equation*}

Let $\h{n} \in S_5$.  Since the different game coordinates are independent of each other in the ideal distribution~$\hat H_n$, we have 
\begin{equation*}
\Pr\!\Big[ 
\bigtrnorm{ \rhoAh{\hA{j}, \hat H^A_{j+1,k}} - \AAunitarysupABj{j}\big( \rhoAh{\hA{j}{}', \hat H^A_{j+1,k}} \big) } 
\leq \sqrt{2 p(n, \delta, \epsilon)} \Big] \geq 1 - \delta'
 \enspace ,
\end{equation*}
without conditioning on $\hat H_j = \h{j}$.  Since this holds for all~$a_j', x_j'$, in particular we find 
\begin{equation*}
\Pr\!\Big[ 
\bigtrnorm{ \rhoAh{\hA{j}, \hat H^A_{j+1,k}} - \AAunitarysupABj{j}\big( \rhoAh{\hA{j-1}, \hat H^A_{j,k}} \big) } 
\leq \sqrt{2 p(n, \delta, \epsilon)} \Big]
\geq 1 - \delta'
 \enspace ,
\end{equation*}
where now $\AAunitarysupABj{j} = \AAunitarysupABj{j}(\h{j-1}, \hat A_j, a_j, \hat X_j \oplus x_j)$.  
By a union bound, 
\begin{equation*}
\Pr\!\Big[ \forall j, \;
\bigtrnorm{ \rhoAh{\hA{j}, \hat H^A_{j+1,k}} - \AAunitarysupABj{j}\big( \rhoAh{\hA{j-1}, \hat H^A_{j,k}} \big) } 
\leq \sqrt{2 p(n, \delta, \epsilon)} \Big]
\geq 1 - n \delta'
 \enspace .
\end{equation*}

For $i \leq j$, let $\AAunitarysupABj{i,j} = \AAunitarysupABj{j}(\h{j-1}, \hat A_j, a_j, \hat X_j \oplus x_j) \cdots \AAunitarysupABj{i}(\h{i-1}, \hat A_i, a_i, \hat X_i \oplus x_i)$.  A triangle inequality based on the expansion $\rhoAh{\hA{k}} - \AAunitarysupABj{1,k}(\rhoAh{\hat H^A_k}) = \sum_{j \in [k]} \AAunitarysupABj{j+1,k}\big( \rhoAh{\hA{j}, \hat H^A_{j+1,k}} - \AAunitarysupABj{j} \rhoAh{\hA{j-1}, \hat H^A_{j,k}} \big)$ implies that 
\begin{equation*}
\Pr\!\Big[ 
\bigtrnorm{ \rhoAh{\hA{k}} - \AAunitarysupABj{1,k}( \rhoAh{\hat H^A_k} ) } 
\leq n \sqrt{2 p(n, \delta, \epsilon)} \Big]
\geq 1 - n \delta'
 \enspace .
\end{equation*}
This is one of the two bounds needed in the definition of~$S$, Eq.~\eqnref{e:goodgluingtargetsexist}.  Symmetrical arguments from Bob's perspective, and one final union bound, complete the proof of \lemref{t:goodgluingtargetsexist}.  
\end{proof}

Before proving \thmref{t:globalgluing}, we need one last lemma, that characterizes the states at the beginning of each game along a structured transcript in a multi-qubit ideal strategy: 

\begin{lemma} \label{t:structuredtranscripthasnEPRpairs}
Fix a transcript $\h{n}$ along which every game is $\epsilon$-structured according to the multi-qubit ideal strategy~$\S$.  For $\device \in \{A, B\}$, let $\UmultiX(\hX{n}) = \big(\identity_{(\C^2)^{\otimes (n-1)}} \otimes \UmultiXjh{n}{\hX{n-1}}\big) \ldots \UmultiXj{1} \XmultiX$.  Then there exists a state $\ket{\psione'}$ such that for all~$k$,
\begin{equation}
\Bignorm{\UmultiA \UmultiB \ket{\psijh{k}{\h{k}}} - {\textstyle \bigotimes}_{j \in [k]} (\ket{(a_j, x_j)_A} \ket{(b_j, y_j)_B}) \otimes \ket{\psi^*}{}^{\otimes (n-k)} \otimes \ket{\psione'}} \leq n \, O(\sqrt \epsilon)
 \enspace .
\end{equation}
\end{lemma}

\begin{proof}
Alice and Bob play each game~$k$ along~$\h{n}$ according to the ideal CHSH game strategy on their $k$th qubits.  The state at the beginning of game~$(k+1, \h{k})$ is $\UmultiA \UmultiB \ket{\psijh{k+1}{\h{k}}} = \bigotimes_{j \in [k]} (\ket{(a_j, x_j)_A} \ket{(b_j, y_j)_B}) \otimes \ket{\psij{k+1}'}$ for some state~$\ket{\psij{k+1}'}$.  By the CHSH rigidity lemma, \lemref{t:eprlemma}, there exists a state $\ket{\psij{k+1}''}$ such that $\norm{ \ket{\psij{k+1}'} - \ket{\psi^*} \otimes \ket{\psij{k+1}''} } = O(\sqrt \epsilon)$.  Since for sufficiently small~$\epsilon$ every outcome of the game occurs with probability at least~$1/60$ (\corref{t:structuredgameprobabilitylowerbound}), it follows too that $\norm{\ket{\psij{k+1}''} - \ket{\psij{k+2}'}} = O(\sqrt \epsilon)$.  Thus $\norm{ \ket{\psij{k+1}'} - \ket{\psi^*} \otimes \ket{\psij{k+2}'} } = O(\sqrt \epsilon)$.  Chain together these inequalities for $\norm{\ket{\psij{k+1}'} - \ket{\psi^*}^{\otimes (n-k)} \ket{\psij{n+1}'}} \leq n \, O(\sqrt \epsilon)$.  
\end{proof}

\begin{proof}[Proof of \thmref{t:globalgluing}]
If $\hdec{\hat}{n}$ belongs to the set~$S$ from Eq.~\eqnref{e:goodgluingtargetsexist}, then 
\begin{equation*}
\sum_{\hA{k}} \Pr[\hat H^A_k = \hA{k}] \bigtrnorm{ \rhoAh{\hA{k}} - \AAunitarysupABj{1,k} \rhoAh{\hAdec{\hat}{k}} } 
\leq n \sqrt{2 p(n, \delta, \epsilon)} + 2 n \delta'
 \enspace ,
\end{equation*}
where $\AAunitarysupABj{1,k} = \AAunitarysupABj{j}(\hdec{\hat}{j-1}, \hat a_j, a_j, \hat x_j \oplus x_j) \cdots \AAunitarysupABj{i}(\hat a_1, a_1, \hat x_1 \oplus x_1)$.  
Also, by \lemref{t:structuretovariationdistance}, 
\begin{equation*}
\bigtrnorm{ \EAj{1,k}(\rhoone) - \sum_{\hA{k}} \Pr[\hat H^A_k = \hA{k}] \ketbra{\hA{k}}{\hA{k}} \otimes \rhoAh{\hA{k}} } 
= 2 d_{TV}(H^A_k, \hat H^A_k)
\leq 2 (n \delta + 2 \cdot 60 n \epsilon)
 \enspace .
\end{equation*}
Therefore, 
\begin{equation} \label{e:globalgluing1}
\bigtrnorm{
\EAj{1,k}(\rhoone) - \sum_{\hA{k}} \Pr[\hat H^A_k = \hA{k}] \ketbra{\hA{k}}{\hA{k}} \otimes \AAunitarysupABj{1,k} \rhoAh{\hAdec{\hat}{k}}
} \leq n \sqrt{2 p(n, \delta, \epsilon)} + 2 n \delta' + 2 (n \delta + 2 \cdot 60 n \epsilon)
 \enspace .
\end{equation}
Of course, a symmetrical bound holds from Bob's perspective.  

Therefore, to bound $\trnorm{\EAj{1,k}(\rhoone) - \EAdecj{\hat}{1,k}(\rhodecone{\hat})}$, and by symmetry $\trnorm{\EBj{1,k}(\rhoone) - \EBdecj{\hat}{1,k}(\rhodecone{\hat})}$, in order to prove \thmref{t:globalgluing}, we need only to bound the trace distance from $\EAdecj{\hat}{1,k}(\rhodecone{\hat})$ to $\sum_{\hA{k}} \Pr[\hat H^A_k = \hA{k}] \ketbra{\hA{k}}{\hA{k}} \otimes \AAunitarysupABj{1,k} \rhoAh{\hAdec{\hat}{k}}$.  For this, we will apply Lemmas~\ref{t:allCHSHBobgameoutcomesrelatedbysinglequbitunitary} and~\ref{t:structuredtranscripthasnEPRpairs}.  

By Eq.~\eqnref{e:allCHSHBobgameoutcomesrelatedbysinglequbitunitaryExpectation} in \lemref{t:allCHSHBobgameoutcomesrelatedbysinglequbitunitary} and a Markov inequality, there exists a constant~$\varkappa$ such that, for $\delta'' = \sqrt{2 n (2\delta + \varkappa \epsilon^{1/2})}$, 
\begin{equation*}
\Pr\!\big[ \bigtrnorm{ \density\big( \ket{\psijh{k+1}{H_k}} \big) - \density\big( \ABunitaryBj{1,k}(H_k) \ket{\psiAh{H^A_k}} \big) } \leq \delta'' \big] \geq 1 - \delta''
 \enspace .
\end{equation*}
Therefore, if we let 
\begin{equation*}
T = \Bigg\{
\h{n} : \begin{aligned}
&\forall k, \, \text{game $(k, \h{k-1})$ is $\epsilon$-structured}, \,
\bigtrnorm{ \density\big( \ket{\psijh{k+1}{\h{k}}} \big) - \density\big( \ABunitaryBj{1,k}(\h{k}) \ket{\psiAh{\hA{k}}} \big) } \leq \delta'', \\
&\quad \text{and the symmetrical bound from Bob's perspective holds}
\end{aligned}
\Bigg\}
 \enspace ,
\end{equation*}
then by a union bound, $\Pr[H_n \in T] \geq 1 - n \delta - 2 n \delta''$.  

Assume that $\hdec{\hat}{n} \in T$.  For $\device \in \{A, B\}$, let $\UmultiX = \UmultiX(\hXdec{\hat}{n})$ be the operators defined by \lemref{t:structuredtranscripthasnEPRpairs}.  Let the initial state for~$\hat \S$ be $\ket{\psidecone{\hat}} = \UmultiA{}^\dagger \UmultiB{}^\dagger \ket{\psi^*}^{\otimes n} \otimes \ket{\psione'}$.  Then by \defref{t:ABunitarydef} for $\ABunitaryBj{1,k}(\hdec{\hat}{k})$, a triangle inequality, and \lemref{t:structuredtranscripthasnEPRpairs}, 
\begin{equation*}\begin{split}
&\bigtrnorm{
\density\big(\ket{\psiAh{\hAdec{\hat}{k}}}\big)
- \density\big(\ket{\psiAdech{\hat}{\hAdec{\hat}{k}}}\big)
} \\
&\qquad= \Bigtrnorm{
\density\big(\ket{\psiAh{\hAdec{\hat}{k}}}\big)
- \density\big(\UmultiA{}^\dagger \UmultiB{}^\dagger \bigotimes_{j \in [k]} (\ket{(\hat a_j, \hat x_j)_A} \ket{(\hat a_j, \hat x_j)_A}) \otimes \ket{\psi^*}{}^{\otimes (n-k)} \otimes \ket{\psione'}\big)
} \\
&\qquad= 
\Bigtrnorm{
\density\big(\ket{\psiAh{\hAdec{\hat}{k}}}\big)
- \density\big(\ABunitaryBj{1,k}(\hdec{\hat}{k})^\dagger \UmultiA{}^\dagger \UmultiB{}^\dagger \bigotimes_{j \in [k]} (\ket{(\hat a_j, \hat x_j)_A} \ket{(\hat b_j, \hat y_j)_A}) \ket{\psi^*}{}^{\otimes (n-k)} \ket{\psione'}\big)
} \\
&\qquad\leq \bigtrnorm{ \density(\ket{\psiAh{\hAdec{\hat}{k}}}) - \density(\ABunitaryBj{1,k}(\hdec{\hat}{k})^\dagger \ket{\psijh{k+1}{\hdec{\hat}{k}}}) } \\
&\qquad\quad + \Bigtrnorm{ \density(\ket{\psijh{k+1}{\hdec{\hat}{k}}}) - \density\big( \UmultiA{}^\dagger \UmultiB{}^\dagger \bigotimes_{j \in [k]} (\ket{(\hat a_j, \hat x_j)_A} \ket{(\hat b_j, \hat y_j)_A}) \ket{\psi^*}{}^{\otimes (n-k)} \ket{\psione'} \big) } \\
&\qquad\leq \delta' + n \, O(\sqrt \epsilon) 
 \enspace .
\end{split}\end{equation*}
Since for any transcript $\hA{k}$, $\AAunitarysupABj{1,k}\density\big(\ket{\psiAdech{\hat}{\hAdec{\hat}{k}}}\big) = \density\big(\UmultiA{}^\dagger \UmultiB{}^\dagger \bigotimes_{j \in [k]} (\ket{(a_j, x_j)_A} \ket{(a_j, a_j)_A}) \ket{\psi^*}{}^{\otimes (n-k)} \ket{\psione'}\big) = \density\big(\ket{\psiAdech{\hat}{\hA{k}}}\big)$, it follows that $\bigtrnorm{\AAunitarysupABj{1,k}\density\big(\ket{\psiAh{\hAdec{\hat}{k}}}\big) - \density\big(\ket{\psiAdech{\hat}{\hA{k}}}\big)} \leq \delta' + n \, O(\sqrt \epsilon)$.  Thus, 
\begin{align}
\Bigtrnorm{
\EAdecj{\hat}{1,k}(\rhodecone{\hat}) 
- \sum_{\hA{k}} \Pr[\hat H^A_k = \hA{k}] \ketbra{\hA{k}}{\hA{k}} \otimes \AAunitarysupABj{1,k} \rhoAh{\hAdec{\hat}{k}}
}
&= \sum_{\hA{k}} \Pr[\hat H^A_k = \hA{k}] \bigtrnorm{ \rhoAdech{\hat}{\hA{k}} - \AAunitarysupABj{1,k} \rhoAh{\hAdec{\hat}{k}} } \nonumber \\
&\leq \delta' + n \, O(\sqrt \epsilon) \label{e:globalgluing2}
 \enspace .
\end{align}

Putting together Eqs.~\eqnref{e:globalgluing1} and~\eqnref{e:globalgluing2}, we obtain that $\EXj{1,k}(\rhoone) \approx \EXdecj{\hat}{1,k}(\rhodecone{\hat})$ for $\device \in \{A, B\}$, provided that~$\hdec{\hat}{n} \in S \cap T$.  
This completes the proof of \thmref{t:globalgluing}.  
\end{proof}

\subsection{Converse to Tsirelson's inequality based on observed correlations}

By combining \thmref{t:sequentialCHSHgames} and some simple statistics, we can extend \lemref{t:eprlemma} to obtain a converse to Tsirelson's inequality that depends on the \emph{observed} correlations in a repeated game---\thmref{t:sequentialstructureandobservedcorrelations} below.  As a consequence, we will also derive efficient ``self-testing" for sequential CHSH games, in \thmref{t:sequentialCHSHselftesting} below.  

Let Alice and Bob be the two entangled provers playing $n$ CHSH games, in sequence, with independent questions refereed by the verifier Eve.  Let $W = \abs{\{j \in [n] : A_j B_j = X_j \oplus Y_j \}}$ be the number of games that Alice and Bob win.  If Alice and Bob use an ideal strategy, i.e., a $0$-structured strategy, for all games, then by Hoeffding's inequality they are likely to win nearly $\cos^2(\pi/8) n$ games: 

\begin{lemma} \label{t:honestproverscentrallimit}
If Alice and Bob use an ideal strategy for $n$ sequential CHSH games, then 
\begin{equation}
\Pr[W \geq (\cos^2(\pi/8) - \delta) n] \geq 1 - e^{-2 \delta^2 n}
 \enspace .
\end{equation}
\end{lemma}

Conversely, let $S$ be the number of games in which the provers' joint strategies are $\epsilon$-structured.  We first claim that with high probability, either nearly all games are $\epsilon$-structured or Alice and Bob win significantly fewer than $\cos^2(\pi/8) n$ games.  This lemma is a warm-up to \thmref{t:sequentialstructureandobservedcorrelations} below.  

\begin{lemma} \label{t:structureandobservedcorrelations}
Let $\epsilon, \eta > 0$ and $\delta \leq \eta \epsilon/8$.  Then  
\begin{equation}
\Pr\!\big[ \text{$W \geq (\cos^2(\pi/8) - \delta) n$ and $S < (1 - \eta) n$} \big] 
\leq e^{-2 n (\eta \epsilon/8 - \delta)^2}
 \enspace .
\end{equation}
\end{lemma}

\begin{proof}
Let $S_1, S_2, \ldots, S_n$ and $W_1, W_2, \ldots, W_n$ be $0/1$-valued random variables, $S_j$ being an indicator for whether the $j$th game is played in an $\epsilon$-structured fashion, and $W_j$ an indicator for $A_j \wedge B_j = X_j \oplus Y_j$, i.e., for the provers winning the $j$th game.  Then $S = \sum_j S_j$ and $W = \sum_j W_j$.  Let $p = \cos^2(\pi/8)$ and $\epsilon' = \epsilon/8$.  We know 
\begin{align*}
\Pr[W_j = 1 \vert S_j = 1] &\leq p & 
\Pr[W_j = 1 \vert S_j = 0] &\leq p - \epsilon' 
 \enspace .
\end{align*}

Let $\Gamma_1, \ldots, \Gamma_n$ be independent Bernoulli($p$) random variables, and $\Lambda_1, \ldots, \Lambda_n$ be independent Bernoulli($p - \epsilon'$) random variables.  Couple $W_j$ for the first structured game to~$\Gamma_1$ such that $W_j \leq \Gamma_1$, for the second structured game to~$\Gamma_2$, and so on.  Similarly, couple $W_j$ for the first unstructured game to~$\Lambda_n$ such that $W_j \leq \Lambda_n$, for the second unstructured game to~$\Lambda_{n-1}$, and so on.  This yields the bound 
\begin{align*}
\Pr\!\big[ W \geq (p - \delta) n,\, S < (1 - \eta) n \big] 
&\leq \Pr\!\Big[ \sum_{j \leq S} \Gamma_j + \sum_{j > S} \Lambda_j \geq ( p - \delta ) n,\, S < (1 - \eta)n \Big] \\
&\leq \Pr\!\Big[ \sum_{j \leq (1-\eta)n} \Gamma_j + \sum_{j > (1-\eta)n} \Lambda_j \geq (p - \delta) n \Big]
 \enspace .
\end{align*}
Let $X = \sum_{j \leq (1-\eta)n} \Gamma_j + \sum_{j > (1-\eta)n} \Lambda_j$ and $\mu = \Ex[X] = (p - \eta \epsilon') n$.  Hoeffding's inequality implies that if $\delta \leq \eta \epsilon'$, then $\Pr[X \geq (p - \delta) n] \leq \exp(-2 n (\eta \epsilon' - \delta)^2)$.  
\end{proof}

This lemma can be seen as a weak converse to Tsirelson's inequality based on the observed correlations for a sequence of CHSH games.  It says that if the provers do not use a structured strategy most of the time, then they are unlikely to win too many games.  Our goal, though, is to prove a stronger statement, based on \thmref{t:sequentialCHSHgames}: If the provers do not use a nearly ideal strategy for most subsequences of games, then they are unlikely to win too many games.  The logic behind this claim will be essentially the same as that behind \lemref{t:structureandobservedcorrelations}.  

\begin{definition} \label{t:epsilonideal}
For $\epsilon > 0$, call a strategy $\S$ for $n$ sequential CHSH games \emph{$\epsilon$-ideal} if an isometric extension of $\S$ is $\epsilon$-simulated by an ideal strategy.  $\S$ is $\epsilon$-ideal with respect to the isometries $\XX : \H_\device \hookrightarrow (\C^2)^{\otimes n} \otimes \H_\device'$, for $\device \in \{A, B\}$, if the isometric extension of~$\S$ by $\XA$ and~$\XB$ is $\epsilon$-simulated by an ideal strategy.  
\end{definition}

In particular, if $\S$ is $\epsilon$-ideal with respect to $\XA$ and $\XB$, then for $\ket{\psione} \in \H_A \otimes \H_B \otimes \H_C$ the initial state, there exists a state $\ket{\psione'} \in \H_A' \otimes \H_B' \otimes \H_C$ such that, letting $\density(\ket a) = \ketbra a a$ and $\rhoone = \density(\ket \psione)$, $\rhodecone{\hat} = \density(\ket{\psi^*}{}^{\otimes n} \otimes \ket{\psione'})$ and $\XAB(\rho) = (\XA \otimes \XB) \rho (\XA \otimes \XB)^\dagger$, 
\begin{align}
\bigtrnorm{ \XAB(\rhoone) - \rhodecone{\hat} } &\leq \epsilon 
& \text{and }\qquad\quad
\bigtrnorm{ \XAB \EXj{1,n}(\rhoone) - \EXdecj{\hat}{1,n}(\rhodecone{\hat}) } &\leq 2 \epsilon
\end{align}
for $\device \in \{A, B\}$.  Here, $\EXj{1,n}$ is the measurement super-operator for prover~$\device$, and $\EXdecj{\hat}{1,n}$ is the ideal measurement super-operator that uses the $j$th qubit in game~$j$ of the set (\defref{t:CHSHserialnotation}).  

\begin{theorem} \label{t:sequentialstructureandobservedcorrelations}
Let Alice and Bob play in sequence $N$ sets each of $n$ sequential CHSH games.  Let $W \leq N n$ be the total number of games that Alice and Bob win.  Fix $\epsilon > 0$, and let $G \leq N$ be the number of sets of games for which the provers' joint strategy for that set, conditioned on the previous games' outcomes, is $\kappaEPR n^{\kappaEPR} \epsilon^{1/\kappaEPR}$-ideal, where~$\kappaEPR$ is the constant from \thmref{t:sequentialCHSHgames}.  Let $\eta > 0$.  Then for any $\delta$ such that $t = \frac{1}{8} \epsilon^2 \eta N - \delta N n \geq 0$, 
\begin{equation}
\Pr\!\big[ \text{$W \geq (\cos^2(\pi/8) - \delta) N n$ and $G < (1 - \eta) N$} \big] 
\leq \exp( -t^2 / (2 N n) )
 \enspace .
\end{equation}
\end{theorem}

\begin{proof}
For $j \in [N n]$, let $W_j = 1$ if $X_j \wedge Y_j = A_j \oplus B_j$, i.e., if the provers win game~$j$, and let $W_j = 0$ otherwise.  Let $S_j$ be the indicator variable for game~$j$ being $\epsilon$-structured.  For $k \in [n]$, let $G_k = 1$ if after $k-1$ sets of games, the provers' strategy for the next set is $\epsilon$-ideal, and let $G_k = 0$ otherwise.  Then $W = \sum_{j \in [N n]} W_j$ and $G = \sum_{k \in [n]} G_k$.  

For $k \in [N]$, let $H_k$ be the indicator variable for the $k$th set of games being $\epsilon$-structured (\defref{t:structuredstrategydef}).  By the contrapositive to \thmref{t:sequentialCHSHgames}, if $G_k = 0$ then the strategy for the $k$th set of games cannot be $\epsilon$-structured, so $H_k = 0$ also.  That is, $H_k \leq G_k$, so letting $H = \sum_k H_k$, $\Pr\!\big[ W \geq (\cos^2(\pi/8) - \delta) N n, \, G < (1 - \eta) N \big] \leq \Pr\!\big[ W \geq (\cos^2(\pi/8) - \delta) N n, \, H < (1 - \eta) N \big]$.  

Let $p = \cos^2(\pi/8)$ and $\epsilon' = \epsilon/8$.  Let $\Gamma_1, \ldots, \Gamma_{N n}, \Lambda_1, \ldots, \Lambda_{N n}$ be independent random variables, with $\Gamma_j \sim \text{Bernoulli}(p)$ and $\Lambda_j \sim \text{Bernoulli}(p - \epsilon')$.  Let $T_1, \ldots, T_{N n}$ be Bernoulli($1 - \epsilon$) random variables.  As in the proof of \lemref{t:structureandobservedcorrelations}, the idea now is to design an appropriate coupling from the~$W_j$ to these simpler random variables.  

For $k \in [N]$, let $\varsigma(k) = \{(k-1)n+1, \ldots, k n\}$ be the set of games in the $k$th set.  If $H_k = 1$, then let $J_k = k n$.  If $H_k = 0$, then let $J_k$ be the largest index $j \in \varsigma(k)$ for which the probability that game~$j$ is $\epsilon$-structured is less than $1-\epsilon$.  By \defref{t:structuredstrategydef}, such an index exists, so $J_k$ is well-defined.  

Define the coupling as follows.  First, for $j \in \varsigma(k)$ and $k \in [N]$, couple $W_j$ to a random variable $\Xi_j$ such that $W_j \leq \Xi_j$ and 
\begin{equation*}
\Xi_j = \begin{cases}
T_j \Gamma_j + (1-T_j) \Lambda_j & \text{if $H_k = 0$, $j = J_k$ and $k - \sum_{k' \leq k} H_{k'} \leq \lceil \eta N \rceil$} \\
\Gamma_j & \text{otherwise} 
 \enspace .
\end{cases}
\end{equation*} 
In the case $H_k = 0$ and $j = J_k$, this coupling can be achieved by first coupling $S_j \leq T_j$.  Note that although the $\Gamma$ and $\Lambda$ variables are fully independent, the $T_j$ variables are not necessarily independent of each other or of the $\Gamma$ and $\Lambda$ variables.  Call a random variable $\Xi_j$ unstructured if it has the form $T_j \Gamma_j + (1-T_j) \Lambda_j$.  The condition $k - \sum_{k' \leq k} H_{k'} \leq \lceil \eta N \rceil$ ensures that we couple at most $\lceil \eta N \rceil$ $W_j$ variables to unstructured $\Xi_j$.  

The $\Xi_j$ variables have a Markov structure that we will use to define a martingale.  Before doing so, we need to make use of the condition $H < (1 - \eta) N$.  To this purpose, we next define a set of random variables~$\{ \Xi_j' \}$, such that \emph{exactly} $\lceil \eta N \rceil$ of them are unstructured.  For all~$j$ such that $\Xi_j$ is unstructured, let $\Xi_j' = \Xi_j$ be unstructured as well.  If $N - H \geq \lceil \eta N \rceil$, then no more unstructured variables are needed; let $\Xi_j' = \Xi_j = \Gamma_j$ for all remaining~$j$.  Otherwise, we are still missing $\lceil \eta N \rceil - (N - H)$ unstructured variables.  Work backward starting with $j = N n$, setting~$\Xi_j'$ to be unstructured so long as the total number of unstructured $\Xi_j'$ is less than $\lceil \eta N \rceil$.  For all remaining~$j$, let $\Xi_j' = \Xi_j = \Gamma_j$.  Under this construction, notice that $\Xi_j' = \Xi_j$ for all of the initial games, certainly for all $j \in \varsigma(k)$ with $k \leq N - \lceil \eta N / n \rceil$.  Because we set the additional unstructured variable starting from the end, the variables $\Xi_j'$ still form a Markov sequence.  (This would not have been the case had we started with $j = 1$ because $H$ is not then determined.)  

In general, it need not hold that $\sum_j W_j \leq \sum_j \Xi_j'$.  If $H < (1 - \eta) N$, though, then indeed $\sum_j W_j \leq \sum_j \Xi_j'$, since in this case $\Xi_j' = \Xi_j$ for all~$j$.  Therefore, letting $\Xi' = \sum_j \Xi_j'$, 
\begin{equation*}\begin{split}
\Pr\!\big[ W \geq (p - \delta) N n, \, H < (1 - \eta) N \big]
&\leq \Pr\!\big[ \Xi' \geq (p - \delta) N n \big]
 \enspace .
\end{split}\end{equation*}
We will bound this probability using Azuma's inequality for martingales.  The sequence of variables $\Pi_j = \sum_{i \leq j} (\Xi_i' - \Ex[\Xi_i' \vert \Xi_1', \ldots, \Xi_{i-1}'])$ form a martingale, with $\abs{\Pi_j - \Pi_{j-1}} \leq 1$.  By Azuma's inequality, therefore, for any~$t > 0$, 
\begin{equation*}\begin{split}
e^{-t^2 / (2 N n)} 
\geq 
\Pr[\Pi_{N n} \geq t] 
= \Pr\!\big[\Xi' \geq t + {\textstyle \sum}_j \Ex[\Xi_j' \vert \Xi_1', \ldots, \Xi_{j-1}']\big]
 \enspace .
\end{split}\end{equation*}
By construction, there are always $\lceil \eta N \rceil$ unstructured variables $\Xi_j'$, meaning that with probability one, $\sum_j \Ex[\Xi_j' \vert \Xi_1', \ldots, \Xi_{j-1}']\big] = (N n - \lceil \eta N \rceil) p + \lceil \eta N \rceil \big( (1-\epsilon) p + \epsilon (p - \epsilon'') \big) = p N n - \epsilon \epsilon' \lceil \eta N \rceil$.  Therefore set $t = \epsilon \epsilon' \lceil \eta N \rceil - \delta N n \geq \frac18 \epsilon^2 \eta N - \delta N n$ to conclude the proof.  
\end{proof}

Typical values for the parameters in \thmref{t:sequentialstructureandobservedcorrelations} are $\delta \sim 1/\sqrt{N n}$ and $\epsilon^2 \eta \sim \sqrt{n / N}$.  There is of course some freedom in choosing the parameters' exact values.  To simplify later applications, though, we will restate \thmref{t:sequentialstructureandobservedcorrelations} with particular parameter choices, and in a more easily applied form.  We make no attempt to optimize the parameters.  

\begin{theorem} \label{t:sequentialstructureandobservedcorrelationsapplied}
Let $\kappaEPR > 1$ be the constant from \thmref{t:sequentialCHSHgames}.  For $\alpha \geq 16 \kappaEPR^2$ and~$n \geq 100$, let Alice and Bob play in sequence $N \geq n^{\alpha - 1}$ sets each of $n$ sequential CHSH games.  Let $W = \abs{\{ j \in [N n] : A_j B_j = X_j \oplus Y_j \}}$ be the total number of games that Alice and Bob win.  
Say that Eve accepts at the end of the protocol if 
\begin{equation}
W \geq 
\cos^2(\pi/8) N n - \tfrac{1}{2 \sqrt 2} \sqrt{N n \log(N n)}
 \enspace .
\end{equation}

This protocol satisfies the following completeness and soundness conditions: 
\begin{description}
\item[Completeness:]
If Alice and Bob play using an ideal strategy for all $N n$ games, then 
\begin{equation}
\Pr[\text{Eve accepts}] 
\geq 1 - \frac{1}{n^{\alpha/4}}
 \enspace .
\end{equation}
\item[Soundness:]
Assume that $\Pr[\text{Eve accepts}] \geq 1 - \epsilon$.  
Let $\zeta = n^{-\alpha / (32 \kappaEPR)}$.  
Then for $K \in [N]$ chosen uniformly at random, the probability that after $(K-1)n$ games the provers' strategy for the $K$th set of $n$ games is~$\zeta$-ideal satisfies 
\begin{equation}\begin{split}
\Pr[\text{$K$th set of games has $\zeta$-ideal strategy}] 
&\geq 1 - \epsilon - n^{-\alpha/8}
 \enspace .
\end{split}\end{equation}
\end{description}
\end{theorem}

\begin{proof}
The completeness condition follows by \lemref{t:honestproverscentrallimit}.  Therefore, we will only argue soundness.  

Apply \thmref{t:sequentialstructureandobservedcorrelations} with parameters $\delta = k \sqrt{\log (N n) / (N n)}$, $\eta = 24 k \sqrt{\log (N n)} n / (N n)^{1/4}$ and $\epsilon' = 1/(N n)^{1/8}$, with $k = 1/(2 \sqrt 2)$.  Then $t := \frac18 \epsilon'^2 \eta N - \delta N n = 2 k \sqrt{N n \log (N n)} \geq 0$.  Let $\xi = \kappaEPR n^{\kappaEPR} \epsilon'^{1/\kappaEPR}$.  We obtain that, for $G$ being the number of $\xi$-ideal sets of games, 
\begin{equation*}\begin{split}
\Pr[\text{$K$th set is $\xi$-ideal}] 
&\geq (1 - \eta) \Pr[G \geq (1-\eta) N] \\
&\geq (1 - \eta) \big( \Pr[\text{Eve accepts}] - \Pr[\text{Eve accepts}, G < (1-\eta) N] \big) \\
&\geq 1 - \epsilon - \eta - \exp(-t^2 / (2 N n)) \\
&\geq 1 - \epsilon - 10 \sqrt{\log (N n)} n / (N n)^{1/4}
 \enspace .
\end{split}\end{equation*}
Finally, $10 \sqrt{\log (N n)} n / (N n)^{1/4} \leq 10 \sqrt{\alpha \log n} / n^{\alpha/4-1}$, which is at most $n^{-\alpha/8}$ for~$\alpha \geq 16$ and~$n \geq 85$.  Since $\alpha \geq 16 \kappaEPR^2$, $\xi = \kappaEPR n^{\kappaEPR} / (N n)^{1/(8 \kappaEPR)} \leq \kappaEPR n^{-\alpha / (16 \kappaEPR)}$, which is at most $n^{-\alpha / (32 \kappaEPR)}$ for $\alpha \geq 16 \kappaEPR^2$ and~$n \geq 3$.  
\end{proof}

The sequential CHSH game theorems assume that there are only two provers, Alice and Bob.  This setting holds for the applications to device-independent quantum key distribution and blind, verified computation.  However, to show that $\QMIP = \MIP^*$, in \thmref{t:qmiptomipstarconversion} below, we will need Alice to share entanglement with multiple provers, say $B_1, \ldots, B_\ell$.  \thmref{t:sequentialstructureandobservedcorrelationsapplied} still applies, if we group $B_1, \ldots, B_\ell$ together into a conglomerate prover, but we need to ensure that it respects the tensor-product decomposition of $\H_{B_1} \otimes \cdots \otimes \H_{B_\ell}$.  This is straightforward to see for sequential CHSH games, since only one of the steps in the proof of \thmref{t:sequentialCHSHgames} involves operations that can cross between $\H_{B_j}$ spaces: the truncation to finitely many dimensions (\lemref{t:reductiontofinitedimensions}).  By separately truncating the spaces $\H_{B_1}, \ldots, \H_{B_\ell}$, i.e., applying \lemref{t:reductiontofinitedimensions} in $\ell$ steps, we obtain that the ideal strategy $\hat \S$ that closely simulates the provers' strategy~$\S$ obeys the same locality constraints as~$\S$: 

\begin{proposition} \label{t:sequentialCHSHgamesmultipleBobs}
If ``Bob" is actually a collection of separate provers $B_1, \ldots, B_\ell$, where $B_j$ plays on $\H_{B_j}$ $n_j$ out of every set of~$n$ CHSH games, then in the conclusions of Theorems~\ref{t:sequentialCHSHgames} and~\ref{t:sequentialstructureandobservedcorrelationsapplied}, we may assume that the isometry $\XB : \H_{B_1} \otimes \cdots \otimes \H_{B_\ell} \hookrightarrow (\C^2)^{\otimes n} \otimes \H_B'$, with respect to which the provers' strategy is $\zeta$-ideal, factors as the tensor product of isometries $\XBj{j} : \H_{B_j} \hookrightarrow (\C^2)^{\otimes n_j} \otimes \H_{B_j}'$.  
\end{proposition}

\smallskip

In the ``self-testing" framework~\cite{MayersYao03chsh, DamMagniezMoscaSantha99selftesting, MagniezMayersMoscaOllivier05selftest}, one is allowed to reinitialize and run the same experiment multiple times in order to test its functionality.  By substituting the right parameter values into \thmref{t:sequentialstructureandobservedcorrelations}, we obtain as a corollary efficient self-testing for sequential CHSH games: 

\begin{theorem}[Self-testing sequential CHSH games] \label{t:sequentialCHSHselftesting}
Let $\S$ be an arbitrary strategy for~$n$ sequential CHSH games.  
Let $\epsilon > 0$ be at most a sufficiently small constant.  
Let $k > 0$ and let $n^* > 0$ solve $n^* / \log n^* = 256 k^2 (4 \kappaEPR^2 + 3) \kappaEPR^{4 \kappaEPR} / \epsilon^{4 \kappaEPR}$, where~$\kappaEPR$ is the constant from \thmref{t:sequentialCHSHgames}.  Let $N = (\max\{n, n^*\})^{4 \kappaEPR^2 + 2}$.  

Consider running the strategy~$\S$ $N$ times, reinitializing the joint state of the provers and the environment between sets.  Let $W \leq N n$ be the total number of games that Alice and Bob win.  Let~$\delta = k \sqrt{\log (N n) / (N n)}$ and $p = \cos^2(\pi/8)$.  
\begin{itemize}
\item 
If $\S$ is an ideal strategy, then 
\begin{equation} \label{e:sequentialCHSHselftestingcomplete}
\Pr[W \geq (p - \delta) N n] \geq 1 - n^{-2 k^2 (4 \kappaEPR^2 + 3)}
 \enspace .
\end{equation}
\item
If $\S$ is not $\epsilon$-ideal, then 
\begin{equation} \label{e:sequentialCHSHselftestingsound}
\Pr[W \geq (p - \delta) N n] \leq n^{-k^2 (4 \kappaEPR^2 + 3)/2}
 \enspace .
\end{equation}
\end{itemize}
\end{theorem}

\begin{proof}
For $\S$ ideal, the claim follows by \lemref{t:honestproverscentrallimit}.  

Consider next the case that $\S$ is not $\epsilon$-ideal.  Let $\epsilon' = (\epsilon / (\kappaEPR n^{\kappaEPR}))^{\kappaEPR}$, where~$\kappaEPR$ is the constant from \thmref{t:sequentialCHSHgames}; thus $\epsilon = \kappaEPR n^{\kappaEPR} \epsilon'^{1/\kappaEPR}$.  By \thmref{t:sequentialstructureandobservedcorrelations} with parameter $\eta$ tending to one, $\Pr[W \geq (p - \delta) N n] \leq \exp( -t^2 / (2 N n) )$, so long as $t = \frac{1}{8} \epsilon'^2 N - \delta N n \geq 0$.  Assume that $n \geq n^*$.  Substituting our parameter choices for $N$ and~$\delta$ gives 
\begin{equation*}\begin{split}
t 
&= n^{2 \kappaEPR^2 + \frac{3}{2}} \Big[ \frac{1}{8} \frac{\epsilon^{2 \kappaEPR}}{\kappaEPR^{2 \kappaEPR}} \sqrt n - k \sqrt{(4 \kappaEPR^2 + 3) \log n} \Big] \\
&\geq \frac{1}{16} \frac{\epsilon^{2 \kappaEPR}}{\kappaEPR^{2 \kappaEPR}} n^{2 \kappaEPR^2 + 2} \\
&= \sqrt{N n} \Big[ \frac{1}{16} \frac{\epsilon^{2 \kappaEPR}}{\kappaEPR^{2 \kappaEPR}} \sqrt n \Big] \\
&\geq \sqrt{N n} k \sqrt{(4 \kappaEPR^2 + 3) \log n}
 \enspace ,
\end{split}\end{equation*}
where the inequalities follow by using the definition of~$n^*$ to bound the bracketed expressions.  In particular, $t > 0$, so the bound from \thmref{t:sequentialstructureandobservedcorrelations} indeed holds.  It follows, too, that $\exp( -t^2 / (2 N n) ) \leq n^{-k^2 (4 \kappaEPR^2 + 3)/2} = (N n)^{-k^2 / 2}$.  

If $n < n^*$, then the same inequalities all hold using $n^*$ in place of~$n$ everywhere.  The final inequalities are $\exp( -t^2 / (2 N n) ) \leq (n^*)^{-k^2 (4 \kappaEPR^2 + 3)/2} < n^{-k^2 (4 \kappaEPR^2 + 3)/2}$.  
\end{proof}

By repeating the $N$ experiments in \thmref{t:sequentialCHSHselftesting} and taking the majority of the test results $W \overset{?}{\geq} (p - \delta) N n$, the completeness and soundness parameters in Eqs.~\eqnref{e:sequentialCHSHselftestingcomplete} and~\eqnref{e:sequentialCHSHselftestingsound} can efficiently be made exponentially close to one and exponentially close to zero, respectively.  

Along with other self-testing problems, Magniez et al.\ have previously studied self-testing for sequential CHSH games~\cite[Corollary~3]{MagniezMayersMoscaOllivier05selftest}.  Their result for sequential CHSH games is weaker than \thmref{t:sequentialCHSHselftesting} in two aspects.  First, they assume a fixed tensor-product structure for the provers' measurement operators for different games, whereas we derive this structure.  More precisely, they assume that the Hilbert space for prover~$\device$ is divided as $\H_\device = \H_\device^1 \otimes \cdots \otimes \H_\device^n$, and that the measurements for game~$j$ act only on~$\H_\device^j$.  Second, their analysis requires an overhead exponential in~$n$, whereas the overhead in \thmref{t:sequentialCHSHselftesting} is polynomial.

\ifx\compilefullpaper\undefined  
\bibliographystyle{alpha-eprint}
\bibliography{q}

\end{document}
\fi

\ifx\compilefullpaper\undefined  
\documentclass[11pt]{article}

\begin{document}
\tableofcontents
\fi

\section{Tomography} \label{s:tomography}

\thmref{t:sequentialCHSHgames} lets us test two entangled provers to gain confidence that they really do have a state close to $n$ shared EPR states that they nearly honestly measure one at a time in sequential CHSH games.  Even though the CHSH game is very simple, it is practically useful in quantum key distribution for extracting shared randomness that is guaranteed to be uncorrelated with any outside environment.  \thmref{t:sequentialCHSHgames} may also have other applications in cryptography; CHSH games are also used, for example, in randomness expansion~\cite{Pironioetal09randombell, AcinMassarPironio11randomnessexpansion, PironioMassar11randomnessexpansion, FehrGellesSchaffner11randomnessexpansion, VaziraniVidick11randomnessexpansion}.  

In this section, however, we will leverage the sequential CHSH game test to build tests for more complicated multi-qubit operations.  Given single-qubit measurements, the natural approach to test more complicated operations is to apply tomography.  We will therefore consider protocols in which one prover is asked to apply the single-qubit measurements of sequential CHSH games and the other prover is asked either to play sequential CHSH games, or to apply certain multi-qubit operations.  Success in the CHSH games assures us that the first prover is playing nearly honestly, which means that her measurement results give meaningful statistics for tomographically characterizing the multi-qubit operations of the second prover.  

In the state tomography problem, one is given $n$ copies of an unknown state~$\rho \in \L(\H)$, and can measure the states to roughly determine~$\rho$.  State certification is a promise, decision version of tomography.  In state certification, one is given the additional promise that for a fixed state~$\sigma$, either $\rho = \sigma$ or $\rho$ is far from~$\sigma$, and the goal is to determine which situation holds.  This model is insufficiently adversarial for our applications.  We allow the weaker promise, that for an \emph{arbitrary} $n$-system state $\rho \in \L(\H^{\otimes n})$, either $\rho = \sigma^{\otimes n}$ or there is a significant probability that its reduced density matrix on a random subsystem is far from~$\sigma$.  

Despite the power of \thmref{t:sequentialCHSHgames}, the tomography arguments are still surprisingly involved.  There are three essential problems: 
\begin{enumerate}
\item
Characterizing tomographically an $n$-qubit state generally requires collecting statistics on~$4^n$ separate observables, using exponentially many copies of the state~\cite{NielsenChuang00}.  Tomography is more efficient on restricted classes of quantum states.  Compressed sensing techniques allow low-rank states to be recovered with fewer experiments~\cite{GrossLiuFlammiaBeckerEisert09compressedsensing, Liu11tomography, Gross09tomography}.  For example, an $n$-qubit pure state can be characterized with only $\tilde O(2^n)$ different experiments.  An $n$-qubit matrix-product state with rank~$r$ can be characterized with only $O(n r^2)$ different experiments~\cite{CramerPlenioFlammiaSommaGrossBartlettLandonCardinalPoulinLiu10tomography}.  (These procedures also detect if the actual state is far from being pure or far from a rank-$r$ matrix-product state.)  The task of \emph{certifying} a state instead of tomographically characterizing it is still more efficient.  The fidelity of a state~$\sigma$ with a known pure state~$\ket \psi$ can be estimated with only a \emph{constant} number of different experiments, although still requiring a polynomial number of copies of~$\sigma$~\cite{FlammiaLiu11tomography, SilvaLandonCardinalPoulin11tomography}.  

Our setting is not compatible with this tomography and certification framework.  For example, we would like a test Eve can apply to gain confidence that, when she asks him to, Bob indeed applies a Bell basis measurement to two of his shared EPR states.\footnote{The Bell basis consists of the four orthonormal states $\frac{1}{\sqrt 2}(\ket{00} \pm \ket{11})$ and $\frac{1}{\sqrt 2}(\ket{01} \pm \ket{10})$.}  This operation involves only a constant number of qubits, but Eve might want Bob to apply it many times and there is no guarantee that the operations he actually applies are identical or even decided on non-adaptively.  The process being characterized therefore involves many qubits.  An exponential or even polynomial overhead is unacceptable---in fact, Eve can only ask Alice to make her CHSH game measurements on \emph{one} $n$-qubit state.  

This setting is therefore more adversarial than standard tomography, in which it is generally assumed that the same state can be prepared repeatedly.  The problem is similar to one we faced in the analysis of sequential CHSH games: we need to allow the adversary memory.  
\item 
A second problem is that we want to characterize the operations the provers apply to their shared EPR states, and not just the states that these operations create on the other side.  The distinction is the same as that between process and state tomography.  

This difference will turn out to be surprisingly important in our analysis.  We are be able to analyze state tomography for a broad class of states, and process tomography only for a very limited class of operations.  The protocol used for process tomography will also be more involved than that for state tomography, using some additional sequential CHSH games.  Essentially, the problem is that the correct states could be generated by incorrect processes.  For example, statistical tests are not sufficient to catch Bob cheating in just one of the many operations he is asked to apply.  In particular, he might cheat in the first requested operation, and instead of a Bell pair measurement might cyclically shift all of his EPR state halves.  If he subsequently plays honestly except taking this shift into account, then he can never be caught even though his overall strategy is highly dishonest---for example, when Eve asks him to apply a Bell measurement to his third and fourth qubits, he instead applies it to the fourth and fifth qubits.  

To avoid this problem, we will need to apply stronger tests that let us be sure that Bob cannot cheat in even one of the operations.  A Bell pair measurement is a stabilizer operation~\cite{NielsenChuang00}.  This allows Eve to reject if even a single experiment has an incorrect measurement outcome, instead of having to collect statistics on many experiments.  

This example suggests that perhaps we should use a weaker definition of process tomography, because Eve only cares that Bob applies a Bell measurement to two qubits that are maximally entangled with Alice's third and fourth qubits, and she does not care where in Bob's Hilbert space these qubits are kept.  Intuitively, it does not seem very reasonable for process tomography to be restricted to stabilizer operations, but this is the best analysis we have so far been able to apply.  
\item
A third problem is that saturating Tsirelson's inequality for the CHSH game only implies that Alice is honestly making $X$ and $Z$ measurements on her half of a shared EPR state.  For tomography, however, we also need measurements in the Pauli~$Y$ basis.  There is a technical solution that allows us to add~$Y$ operators to the game.  Instead, though, we will use a theory developed by McKague~\cite{McKague10thesis}, that shows the existence of a large class of states that are fully determined by only $X$ and $Z$ measurements.  
\end{enumerate}

We explain McKague's theory of states determined by $X$ and~$Z$ measurements in \secref{s:xzdeterminedstates} immediately below.  In \secref{s:statetomographyanalysis}, we study state tomography for states that are determined by~$X$ and~$Z$ measurements.  In \secref{s:processtomographyanalysis}, we study process tomography, specializing our discussion to commuting sets of~$X$ and~$Z$ Pauli stabilizer measurements.

\subsection{States fully determined by tomography in the $X$ and $Z$ bases} \label{s:xzdeterminedstates}

\def\operatorset{S}	

In the standard CHSH games that we have chosen to analyze, each prover has only two measurement settings, that in the honest strategy may be identified with $X$ and~$Z$ operators.  For carrying out tomography, however, it is generally necessary to be able to measure in the~$Y$ basis as well.  

One option we have, therefore, is to extend the CHSH game to add $Y$ operators.  The $y$ direction in the Bloch sphere can be fixed, up to a sign, by adding measurement directions intermediate between the $x$ and~$y$ axes in the Bloch sphere, and intermediate between the $z$ and~$y$ axes.  See \appref{s:generalizedeprlemma}.  It is not possible to fix the sign of the $Y$ operator, since a prover who consistently measures using $-Y$ will give indistinguishable statistics from one who uses $+Y$.  To force the provers to use the same choice of sign consistently, the verifier can ask one of the provers to measure random pairs of qubits in the Bell basis.  Intuitively, this will force the other prover to use the same sign choice for every qubit, since $\frac{1}{4}(I \otimes I + X \otimes X + Z \otimes Z - Y \otimes Y)$ is a valid state but $\frac{1}{4}(I \otimes I + X \otimes X + Z \otimes Z + Y \otimes Y)$ is not.  This approach is somewhat complicated, though, because it adds another step to the protocol.  

A simpler approach, that we follow here, is to argue that for certain states, reliable tomography can be accomplished without needing to measure in the $Y$ basis.  This observation is due to McKague~\cite{McKague10thesis} and was suggested earlier by Magniez et al.~\cite{MagniezMayersMoscaOllivier05selftest}.  McKague shows that for $\ket{\psi^*} = \frac{1}{\sqrt 2}(\ket{00} + \ket{11})$, an EPR state, the states $(I \otimes T) \ket{\psi^*}$, for any single-qubit real unitary~$T$, and $\text{CNOT}_{24} \ket{\psi^*}_{12} \otimes \ket{\psi^*}_{34}$, as well as finite tensor products of these states, are exactly determined by their traces against tensor products of $I$, $X$ and~$Z$ operators.  That is, they are determined by the expectations of observables that can be estimated using measurements in the $X$ and~$Z$ bases.  We call such states ``$X\!Z$-determined."  

In this section, we give simplified proofs that a much larger class of states is $X\!Z$-determined.  However, characterizing the full set of $X\!Z$-determined states remains an open problem.   

\begin{definition} \label{t:xzdetermineddef}
For a Hilbert space~$\H$, a set of operators $\operatorset \subseteq \L(\H)$ and $d > 0$, a state $\sigma \in \L(\H)$ is \emph{determined by~$\operatorset$ with exponent~$d$} if there exists $c > 0$ such that for all~$\epsilon \geq 0$ and any state $\rho \in \L(\H)$, 
\begin{equation} \label{e:xzdetermineddef}
\max_{P \in \operatorset} \abs{ \Tr P (\rho - \sigma) } \leq \epsilon
\qquad\Longrightarrow\qquad 
\trnorm{\rho - \sigma} \leq c \, \epsilon^d
 \enspace .
\end{equation}
The state~$\sigma$ is \emph{determined by~$\operatorset$} if there exists $d > 0$ such that $\sigma$ is determined by~$\operatorset$ with exponent~$d$.  

For $\H = (\C^2)^{\otimes n}$, a state $\sigma$ is \emph{$X\!Z$-determined} (with exponent~$d$) if it is determined (with exponent~$d$) by the Pauli operators~$\{I, X, Z\}^{\otimes n}$.  

Both definitions extend to pure states $\ket \psi \in \H$ by setting $\sigma = \ketbra \psi \psi$.  
\end{definition}

By basic algebraic geometry, robustness follows from the $\epsilon = 0$ case: 

\begin{lemma}
For a finite-dimensional Hilbert space~$\H$, a state $\sigma \in \L(\H)$ is determined by a finite set $\operatorset \subset \L(\H)$ if and only if for any state $\rho \in \L(\H)$, the implication of Eq.~\eqnref{e:xzdetermineddef} holds at $\epsilon = 0$.  
\end{lemma}

\begin{proof}
By Sylvester's criterion, the set of states in~$\H$ is a compact, semi-algebraic set.  The functions $f(\rho) = \max_{P \in \operatorset} \abs{ \Tr P (\rho - \sigma) }$ and $g(\rho) = \norm{\rho - \sigma}_F$, where $\norm{\cdot}_F$ is the Frobenius norm, are continuous, semi-algebraic functions.  If $f(\rho) = 0$ implies $g(\rho) = 0$ for all states~$\rho$, therefore by {\Lstroke}ojasiewicz's inequality~\cite[Prop.~2.3.11]{BenedettiRisler90semialgebraic} there exist $c > 0$ and an integer $d \geq 1$ such that $g(\rho) \leq c f(\rho)^{1/d}$ for all states~$\rho$.  
\end{proof}

\begin{lemma} \label{t:tomographicallydeterminedexamples}
The following are examples of tomographically determined states: 
\begin{enumerate}
\item
Any $n$-qubit state~$\sigma$ is determined with exponent~$1$ by the Pauli operators~$\{I, X, Y, Z\}^{\otimes n}$.  
\item 
The set of one-qubit $X\!Z$-determined states is exactly $\{ \frac12 (I + \cos(\theta) X + \sin(\theta) Z) : \theta \in [0, 2 \pi) \}$, i.e., the set of pure states in the $xz$-plane of the Bloch sphere.  
\item 
There exist two-qubit mixed states that are $X\!Z$-determined.  In particular, the state $\frac12 \ketbra00 \otimes \ketbra++ + \frac12 \ketbra++ \otimes \ketbra00$, where $\ket{+} = \frac{1}{\sqrt 2}(\ket 0 + \ket 1)$, is $X\!Z$-determined with exponent~$1/4$.  
\end{enumerate}
\end{lemma}

\begin{proof}[Proof sketch]
Any operator $\rho \in \L((\C^2)^{\otimes n})$ can be expanded in the Pauli basis as $\rho = \frac{1}{2^n} \sum_{P \in \mathcal P} \rho_P P$, where $\mathcal P = \{I, X, Y, Z\}^{\otimes n}$ and $\rho_P = \Tr(\rho P)$.  Since for $P \in \mathcal P$, $\trnorm{P} = 2^n$, by a triangle inequality, 
\begin{equation} \label{e:trnormtoPaulicoordinates}
\trnorm{\rho} \leq \frac{1}{2^n} \sum_{P \in \mathcal P} \trnorm{\rho_P P} = \sum_{P \in \mathcal P} \abs{\rho_P}
 \enspace .
\end{equation}
Furthermore, if~$\rho$ is a state, then $\frac{1}{2^n} \sum_{P \in \mathcal P} \rho_P^2 = \Tr(\rho^2) \leq 1$, with equality if $\rho$ is a pure state.  

1. If for a state~$\rho$ and for all~$P \in \mathcal P$, $\abs{\rho_P - \sigma_P} \leq \epsilon$, then by Eq.~\eqnref{e:trnormtoPaulicoordinates}, $\trnorm{\rho - \sigma} \leq 4^n \epsilon$.  

2. Similar calculations show that all of the one-qubit states $\{ \frac12 (I + \cos(\theta) X + \sin(\theta) Z) \}$ are all $X\!Z$-determined.  These are the only one-qubit, $X\!Z$-determined states since any state of the form $\frac12 (I + x X + y Y + z Z)$, with $x^2 + y^2 + z^2 \leq 1$, has the same $X$ and $Z$ Pauli coefficients as $\frac12 (I + x X + z Z)$.  

3. Let $\sigma = \frac12 (\ketbra{0+}{0{+}} + \ketbra{{+}0}{+0})$.  If $\rho$ is a state with the same $\{I, X, Z\}^{\otimes 2}$ coordinates as~$\sigma$, then $\frac12 (\rho + \text{SWAP} \rho \, \text{SWAP}^\dagger)$ is a state of the form $\sigma + \sum_{P \in \{I, X, Y, Z\}} \alpha_P (Y \otimes P + P \otimes Y)$ for some real coefficients $\alpha_P$.  Writing this matrix out in the computational basis, the requirement that each $2 \times 2$ block along the diagonal be positive semi-definite forces $\alpha_X = \alpha_Y = 0$ and all $\alpha_I = \alpha_Z$.  Then considering the first $3 \times 3$ block forces $\alpha_I = 0$.  This gives the $\epsilon = 0$ case, and a similar argument holds when $\abs{\rho_P - \sigma_P} \leq \epsilon$ for $\epsilon > 0$.  The stability exponent $d = 1/4$ may not be optimal.  
\end{proof}

Starting with the fact that $\ket 0$ is determined by $\{ Z \}$ with exponent~$1/2$, we will apply several closure properties to bootstrap into a large class of tomographically determined states.  

\begin{lemma}[General closure properties] \label{t:determinedtomographyclosure}
If $\sigma \in \L(\H)$ is a state determined by~$\operatorset = \{ P_1, \ldots, P_s \}$ with exponent~$d$, then: 
\begin{enumerate}
\item For any unitary~$U \in \L(\H)$, $U \sigma U^\dagger$ is determined by $\{ U P U^\dagger : P \in \operatorset \}$, with the same exponent~$d$.  
\item For any invertible $s \times s$ matrix~$V$, $\sigma$ is determined by $\{ \sum_{j \in [s]} V_{ij} P_j : i \in [s] \}$, with the same exponent~$d$.  
\item For $\ket{\psi'} \in \H'$ a pure state determined by~$\operatorset'$ with exponent~$d'$, $\sigma \otimes \ketbra{\psi'}{\psi'}$ is determined by $\{ P \otimes I : P \in \operatorset \} \cup \{ I \otimes P' : P' \in \operatorset' \}$, with exponent at least $\min \{ d, d'/2 \}$.  
\end{enumerate}
In particular, the set of $X\!Z$-determined pure states is closed under tensor products.  
\end{lemma}

\begin{proof}
1. Let $\rho$ be a state such that for all $P \in \operatorset$, $\abs{\Tr ((U P U^\dagger) (U \sigma U^\dagger - \rho))} \leq \epsilon$.  Since the trace is cyclic, this implies that $\max_P \abs{ \Tr P (\sigma - U^\dagger \rho U) } \leq \epsilon$.  Since $\sigma$ is determined by~$\operatorset$, therefore $\trnorm{\rho - U \sigma U^\dagger} = \trnorm{U^\dagger \rho U - \sigma} \leq c \, \epsilon^d$.  

2. Let $\rho$ be a state and define a vector $\vec x$ by $x_i = \Tr P_i (\rho - \sigma)$.  If for all~$i \in \operatorset$, $\bigabs{ \Tr \big( \sum_j V_{ij} P_j (\rho - \sigma) \big) } = \bigabs{ (V x)_i } \leq \epsilon$, then $\max_i \abs{x_i} \leq \norm{V^{-1}}_{1, 1} \epsilon$, where $\norm{V^{-1}}_{1, 1} = \max_{\vec y : \max \abs{y_i} \leq 1} \max_i \abs{(V^{-1} \vec y)_i}$.  Therefore, $\trnorm{\rho - \sigma} \leq c \norm{V^{-1}}_{1,1}^d \epsilon^d$.  

3. Let $\pi = \ketbra{\psi'}{\psi'}$.  Let $\rho$ be a state on $\H \otimes \H'$, let $\rho_\H = \Tr_{\H'} \rho$ and $\rho_{\H'} = \Tr_\H \rho$.  Assuming that for all $P \in \operatorset$ and $P' \in \operatorset'$, $\abs{\Tr (P \otimes \identity) (\sigma \otimes \pi - \rho)} = \abs{\Tr P (\sigma - \rho_\H)} \leq \epsilon$ and $\abs{\Tr (\identity \otimes P') (\sigma \otimes \pi - \rho)} = \abs{\Tr P' (\pi - \rho_{\H'})} \leq \epsilon$, it follows that $\trnorm{\rho_\H - \sigma} \leq c \, \epsilon^d$ and $\trnorm{\rho_{\H'} - \pi} \leq c' \epsilon^{d'}$.  Therefore, $\Tr (\pi \rho_{\H'}) \geq 1 - \delta$, where $\delta = c' \epsilon^{d'}$.  By \corref{t:gentlemeasurementpurestate} of the Gentle Measurement Lemma, $\trnorm{\rho - \rho_\H \otimes \pi} \leq 2 \sqrt \delta + \delta$, so $\trnorm{\rho - \sigma \otimes \sigma'} \leq 2 \sqrt \delta + \delta + c \epsilon^d$.  
\end{proof}

Closure under tensor products has been shown previously by McKague~\cite[Lemma~3.4]{McKague10thesis}.  Note that in this third statement, it is important that one of the two states in the tensor product be pure.  If $\sigma$ and~$\sigma'$ are two mixed states determined by $\operatorset$ and~$\operatorset'$, respectively, then $\sigma \otimes \sigma'$ is generally not determined by $\{ P \otimes I : P \in \operatorset \} \cup \{ I \otimes P' : P' \in \operatorset' \}$.  For an example, consider $\sigma = \sigma'$ given by the third example in \lemref{t:tomographicallydeterminedexamples}, and take $\rho = \frac12 \ketbra{0{+}}{0{+}} \otimes \ketbra{0{+}}{0{+}} + \frac12 \ketbra{{+}0}{{+}0} \otimes \ketbra{{+}0}{{+}0}$.  

\begin{corollary} \label{t:PaulisfixXZdeterminedstates}
If $\sigma$ is a state determined by $\operatorset \subseteq \{I, X, Y, Z\}^{\otimes n}$, and $Q \in \{I, X, Y, Z\}^{\otimes n}$ is any Pauli operator, then $Q \sigma Q^\dagger$ is also determined by~$\operatorset$, with the same exponent.  
\end{corollary}

\begin{proof}
By the first closure property of \lemref{t:determinedtomographyclosure}, $Q \sigma Q^\dagger$ is determined by $\{ Q P Q^\dagger : P \in \operatorset \}$.  For Pauli operators~$P$ and~$Q$, $Q P Q^\dagger$ is either $P$ or~$-P$, depending on whether $P$ and~$Q$ commute or anti-commute, respectively.  By the second closure property of \lemref{t:determinedtomographyclosure}, with $V$ a diagonal matrix with $\pm 1$ entries along the diagonal, $Q \sigma Q^\dagger$ is determined by~$\operatorset$.  
\end{proof}

\smallskip

Recall that a \emph{stabilizer state} is an $n$-qubit pure state~$\ket \psi$ for which there exists a set of $2^n$ distinct and pairwise commuting operators $\operatorset \subset \{ \pm P : P \in \{I, X, Y, Z\}^{\otimes n} \}$, the \emph{stabilizer group}, such that $P \ket \psi = \ket \psi$ for all~$P \in \operatorset$~\cite{NielsenChuang00}.  Any set of $n$ operators that generate the stabilizer group~$\operatorset$ are called stabilizer generators for $\ket \psi$.  

\begin{theorem} \label{t:stabilizerstatedeterminedbygenerators}
A stabilizer state is determined by any of its sets of stabilizer generators.  
\end{theorem}

\begin{proof}
For any stabilizer state $\ket \psi \in (\C^2)^{\otimes n}$ and set~$\operatorset$ of stabilizer generators, there exists a Clifford group unitary $U$ such that $U \ket \psi = \ket{0^n}$ and $\{ U P U^\dagger : P \in \operatorset \} = \{ Z_1, \ldots, Z_n \}$.  Indeed, to find such a~$U$, first choose a Clifford operator~$V$ such that $V \ket \psi = \ket{0^n}$.  $V$ conjugates $\operatorset$ to some set of independent operators in $\{I, Z\}^{\otimes n}$.  Using $\text{CNOT}$ gates, this set can then be conjugated to $\{Z_1, \ldots, Z_n\}$.  By the tensor-product closure property of \lemref{t:determinedtomographyclosure}, $\ket{0^n}$ is determined by $\{Z_1, \ldots, Z_n\}$.  By the unitary conjugation closure property of \lemref{t:determinedtomographyclosure}, therefore $\ket \psi$ is determined by~$\operatorset$.  
\end{proof}

\begin{theorem} \label{t:XZdeterminedstabilizerstates}
If $\ket \psi \in (\C^2)^{\otimes n}$ is a stabilizer state that has a set of stabilizer generators in $\{I, X, Z\}^{\otimes n}$, and if $U$ is the tensor product of any~$n$ single-qubit real unitaries, then $U \ket \psi$ is $X\!Z$-determined.  
\end{theorem}

Indeed, $\ket \psi$ is $X\!Z$-determined by \thmref{t:stabilizerstatedeterminedbygenerators}, and the set of $X\!Z$-determined states is closed under conjugation by tensor products of one-qubit real unitaries: 

\begin{lemma}
If $\sigma \in \L((\C^2)^{\otimes n})$ is an $X\!Z$-determined state and $U$ is the tensor product of~$n$ single-qubit real unitaries, then $U \sigma U^\dagger$ is $X\!Z$-determined.  
\end{lemma}

\begin{proof}
The key idea is that for any state~$\rho$, the $\{I, X, Z\}^{\otimes n}$ coefficients of~$U^\dagger \rho U$ are determined by, i.e., are a function of, the $\{I, X, Z\}^{\otimes n}$ coefficients of~$\rho$.  Therefore, if~$\rho$ has $\{I, X, Z\}^{\otimes n}$ coefficients close to those of~$U \sigma U^\dagger$, then $U^\dagger \rho U$ has $\{I, X, Z\}^{\otimes n}$ coefficients close to those of~$\sigma$.  Since~$\sigma$ is $X\!Z$-determined, therefore $U^\dagger \rho U \approx \sigma$ in trace distance, and so $\rho \approx U \sigma U^\dagger$.  Now let us give the formal proof, using the closure properties of \lemref{t:determinedtomographyclosure}.  

Without loss of generality, it suffices to consider the case that $U$ acts as the identity on all but the first qubit.  Any one-qubit unitary with real coefficients can be expanded as a product of operators of the form $e^{i \theta Y}$ and~$Z$.  Since Pauli operators fix the set of $X\!Z$-determined states by \corref{t:PaulisfixXZdeterminedstates}, it suffices to consider the case $U = e^{i \theta Y} \otimes I^{\otimes (n-1)}$.  

By the first closure property of \lemref{t:determinedtomographyclosure}, $U \sigma U^\dagger$ is determined by $\{ U P U^\dagger : P \in \{I, X, Z\}^{\otimes n} \}$.  
For $P \in \{I, X, Z\}^{\otimes (n-1)}$, 
\begin{align*}
U (I \otimes P) U^\dagger &= I \otimes P \\
U (X \otimes P) U^\dagger &= (\cos(2\theta) X + \sin(2\theta) Z) \otimes P \\
U (Z \otimes P) U^\dagger &= (-\sin(2\theta) X + \cos(2\theta) Z) \otimes P
 \enspace .
\end{align*}
Thus $U$ conjugates operators in~$\{I, X, Z\}^{\otimes n}$ to linear combinations of operators in $\{I, X, Z\}^{\otimes n}$.  
Since the matrix $\smatrx{\cos(2\theta)&\sin(2\theta)\\-\sin(2\theta)&\cos(2\theta)}$ is invertible, the second closure property of \lemref{t:determinedtomographyclosure} implies that $U \sigma U^\dagger$ is $X\!Z$-determined.  
\end{proof}

It is still unknown whether or not every state~$\ket \psi$ with real coefficients in the computational basis is $X\!Z$-determined, a question first posed in~\cite{MagniezMayersMoscaOllivier05selftest}.  The problem is that whereas single-qubit real unitaries conjugate $X$ and $Z$ to combinations of $X$ and~$Z$, multi-qubit real unitaries need not do so.  We can, however, show one last relevant closure property: 

\begin{lemma}
The set of states determined by $\{I, X\}^{\otimes n} \cup \{I, Z\}^{\otimes n}$ is closed under applying $\text{CNOT}$ gates.  
\end{lemma}

\begin{proof}
A CNOT gate conjugates operators in $\{I, X\}^{\otimes n}$ to $\{I, X\}^{\otimes n}$, and conjugates operators in $\{I, Z\}^{\otimes n}$ to $\{I, Z\}^{\otimes n}$.  Therefore this is a special case of the first closure property in \lemref{t:determinedtomographyclosure}.  
\end{proof}

This proof does not work for arbitrary $X\!Z$-determined states since $\text{CNOT}_{1,2} (X \otimes Z) \text{CNOT}_{1,2}^\dagger = - Y \otimes Y$.  

\smallskip

For later reference, let us state explicitly several special cases of \thmref{t:XZdeterminedstabilizerstates}: 

\begin{theorem} \label{t:xzdeterminedstates}
Letting $\ket{\psi^*} = \frac{1}{\sqrt 2}(\ket{00} + \ket{11})$, the following are complete, orthonormal sets of $X\!Z$-determined states: 
\begin{itemize}
\item
$\{ \ket 0, \ket 1 \}$, 
\item
$\{ U \otimes P \ket{\psi^*} : P \in \{I, X, Y, Z\} \}$, for any one-qubit real unitary~$U$, and 
\item
$\{ (P_{1,2} \otimes \mathrm{CNOT}_{3,4}) (\ket{\psi^*}_{1,3} \otimes \ket{\psi^*}_{2,4}) : P \in \{I,X,Y,Z\}^{\otimes 2} \}$.  
\end{itemize}
Finite tensor products of these states are $X\!Z$-determined, as are the same states multiplied by arbitrary single-qubit real unitaries.  
\end{theorem}

In our applications, we will use the states $\{\ket 0, \ket 1\}$ for initialization and readout, and will use the other two sets of states for teleporting into the gates of a quantum circuit.  The $\text{CNOT}$ gate and single-qubit real unitaries form a universal gate set for quantum computation.

\subsection{State tomography protocol} \label{s:statetomographyanalysis}

In this section, we present a protocol by which Eve can certify that Bob has nearly honestly prepared a set of $X\!Z$-determined states.  We first assume that Alice honestly measures her halves of the shared EPR states in either the~$X$ or~$Z$ eigenbases when requested.  We then combine the protocol with a set of sequential CHSH games to ensure that Alice plays honestly.  

\begin{definition} \label{t:permutedqubitstatetomographyprotocoldef}
A \emph{state tomography protocol} is parameterized by natural numbers~$q$, $n$ and~$m$, with $q n \leq m$, a $q$-qubit POVM $\xzdeterminedset$ with at most $2^q$ outcomes, and a list~$\sigma$ of $q n$ distinct indices from~$[m]$.  The protocol involves a verifier, Eve, and two provers, Alice and Bob.  Alice and Bob share a state in $\H_A \otimes \H_B$.  The protocol proceeds as follows: 
\begin{itemize}
\item
Eve's interaction with Alice has~$m$ rounds.  In round~$j$, Eve sends Alice an independent, uniformly random bit, $A_j$.  Alice applies a two-outcome projective measurement on~$\H_A$ to determine her reply $X_j \in \{0,1\}$.  
\item
Eve has one round of interaction with Bob.  First, Eve sends Bob the list~$\sigma$.  Bob returns to Eve a string $O_1, \ldots, O_n$, with the~$O_j \in [2^q]$ determined by successive $2^q$-outcome projective measurements on~$\H_B$.  
\end{itemize}
No other communication is allowed.  

Alice's strategy is \emph{ideal}, with respect to an isometry $U^A : \H_A \hookrightarrow (\C^2)^{\otimes m} \otimes \H_A'$, if in round~$j$ of her interaction with Eve, Alice returns the result of measuring the $j$th qubit in either the $\{\ket 0, \ket 1\}$ basis, if $A_j = 0$, or the $\{\ket +, \ket -\}$ basis, if $A_j = 1$.  

Alice and Bob's joint strategy is \emph{ideal}, with respect to the isometries $U^\device: \H_\device \hookrightarrow (\C^2)^{\otimes m} \otimes \H_\device'$, $\device \in \{A, B\}$, if Alice's strategy is ideal with respect to $U^A$ and if 
\begin{enumerate}
\item
The initial state consists of $m$ EPR states in tensor product with a state in $\H_A' \otimes \H_B'$, and 
\item
Bob returns the results of measuring with $\xzdeterminedset$ each successive block of~$q$ qubits specified in~$\sigma$.  
\end{enumerate}
\end{definition}

To specify Eve's acceptance criterion, we will need the following notation: 

\begin{definition}[Notation for a state tomography protocol]
For $j \in [n]$ and $i \in [q]$, let $\sigma(j, i) = \sigma_{(j-1)q + i} \in [m]$.  
For $o \in [2^q]^n$, let $\rho_o$ be the normalized state of the system conditioned on Bob outputting $(O_1, \ldots, O_n) = o$ but before any of Alice's measurements.  Let $(\rho_o)_{\sigma, j}$ be the same state reduced to Alice's qubits $\sigma(j, 1), \ldots, \sigma(j, q)$.  

Let $P_j^i = \delta_{A_{\sigma(j, i)}, 0} Z + \delta_{A_{\sigma(j, i)}, 1} X \in \{X,Z\}$ be the Pauli basis Alice is asked to measure in game~$\sigma(j, i)$.  
Further, for $o \in [2^q]$ and $P \in \{I, X, Z\}$, let 
\begin{equation}
I_j^{o, i, P} = \delta_{O_j, o} \big( \delta_{P, I} + \delta_{P, P_j^i} (-1)^{X_{\sigma(j, i)}} \big)
 \enspace .
\end{equation}
That is, $I_j^{O_j, i, I} = 1$, $I_j^{O_j, i, P_j^i} = (-1)^{X_{\sigma(j, i)}}$, and otherwise $I_j^{o, i, P} = 0$.  For $P \in \{I, X, Z\}^q$, let $I_j^{o, P} = \prod_{i \in [q]} I_j^{o, i, P_i}$, let $\abs P$ be the number of coordinates in which~$P$ is not the identity, and let $\tau^{o, P}$ be given by  
\begin{equation} \label{e:observedPaulitraces}
\tau^{o, P} = \frac{2^{q + \abs P}}{n} \sum_{j \in [n]} I^{o, P}_j
 \enspace .
\end{equation}
\end{definition}

The motivation for $\tau^{o, P}$ is to give an estimator for $\Tr(E_o^T P)$ when Alice and Bob use an ideal strategy for the POVM $\xzdeterminedset = \{E_o\}$: 

\begin{lemma} \label{t:idealstatetomographyexpectations}
If Alice and Bob's joint strategy is ideal and the POVM $\xzdeterminedset = \{E_o\}$, then 
\begin{itemize}
\item 
The~$O_j$ variables are independent of each other, and satisfy $\Pr[O_j = o] = \frac{1}{2^q} \Tr E_o$.  
\item 
For all~$j$, Alice's state $(\rho_{o_1\ldots o_n})_{\sigma, j}$ equals $E_{o_j}^T / \Tr E_{o_j}$.  
\item 
The~$I^{o, P}_j$ variables are independent for different~$j$, and satisfy $\Ex[I^{o, P}_j] = \frac{1}{2^{q + \abs P}} \Tr(E_o^T P) / \Tr E_o$.  
\end{itemize} 
\end{lemma}

\begin{proof}
For Pauli operators $P$ and~$Q$, say that $P \in Q$ if in every coordinate either $Q$ is the identity or~$P$ and~$Q$ agree.  Thus for $Q \in \{I, X, Z\}^{\otimes q}$, $\abs{ \{ P \in \{X, Z\}^{\otimes q} : P \in Q \} } = 2^{q - \abs Q}$.  Let $\chi_{P \in Q}$ equal~$1$ if $P \in Q$, and $0$ otherwise.  Notice that $I_j^{o, P} = \delta_{O_j, o} \chi_{\otimes_i P_j^i \in P} I_j^{o, P}$.  The state of Alice's $q$ qubits after Bob measures outcome~$o$ is $E_o^T / \Tr E_o$.  Then a calculation gives $\Ex[I^{o, P}_j] = \Pr[O_j = o] \Pr[\otimes_i P_j^i \in P] \Tr(E_o^T P) / \Tr E_o = \frac{1}{2^{q + \abs P}} \Tr(E_o^T P) / \Tr E_o$.  
\end{proof}

We study state tomography for a POVM $\xzdeterminedset = \{ \pi^o \}$ consisting of projections onto $X\!Z$-determined pure states.  In particular, this implies that each $\pi^o$ equals its transpose.  Our state tomography theorem shows that if Eve accepts with high probability, then for most of Bob's measurement outcomes $O_1, \ldots, O_n$ and most~$j \in [n]$, $(\rho_{O_1\ldots O_n})_{\sigma, j}$ is close to $\pi^{O_j}$.  We begin by analyzing a state tomography protocol in which Alice's strategy is ideal: 

\begin{theorem} \label{t:statetomography}
Fix $\xzdeterminedset = \{ \pi^1, \ldots, \pi^{2^q} \}$ a complete, orthonormal set of $q$-qubit $X\!Z$-determined pure states.  For $n$ sufficiently large, let $m = m(n) \geq q n$ and let $\sigma \in [m]^{q n}$ be a list of distinct indices.  Consider a state tomography protocol with parameters $q$, $n$, $m$, $\xzdeterminedset$ and~$\sigma$, in which Alice plays according to an ideal strategy.  Say that Eve accepts at the end of the protocol if the following two checks are satisfied: 
\begin{subequations} \label{e:tomographyacceptancecriteria}
\begin{align}
\max_{o \in [2^q]} \bigabs{\# \{ j : O_j = o \} - n/2^q} &\leq 4^q \sqrt{n \log n} \label{e:tomographyacceptanceenoughstatisticsperoutcome}
\\
\max_{o \in [2^q], P \in \{I, X, Z\}^{\otimes q}} \abs{\tau^{o, P} - \Tr (\pi^o P)} &\leq 4^q \sqrt{(\log n)/n} \label{e:tomographyacceptancecriterionestimator}
 \enspace .
\end{align}\end{subequations}

This protocol satisfies the following completeness and soundness conditions: 
\begin{description}
\item[Completeness:]
If the provers' joint strategy is ideal, then 
\begin{equation}
\Pr[\text{Eve accepts}] \geq 1 - O(n^{-1/2})
 \enspace .
\end{equation}
\item[Soundness:]
If $\Pr[\text{Eve accepts}] \geq 1 - n^{-1/4}$, then 
\begin{equation}
\Pr\!\Big[
\bigabs{ \big\{ j \in [n] : \Tr ((\rho_{O_1 \ldots O_n})_{\sigma, j} \pi^{O_j}) \geq 1 - O(n^{-1/16}) \big\} } \geq (1-O(n^{-1/16})) n
\Big] \geq 1 - n^{-1/8}
.
\end{equation}  
\end{description}
\end{theorem}

\begin{proof}
Let $k = 4^q$.  
Let us first show the completeness criterion.  Since each~$O_j$ is drawn independently and uniformly at random from~$[2^q]$, $\Pr[\max_o \abs{\#\{j : O_j = o\} - n/2^q} \leq k \sqrt{n \log n}] \geq 1 - 2^q \cdot 2 n^{-2 k^2}$, by Hoeffding's inequality and a union bound.  Since $\pi^o$ is an $X\!Z$-determined state, it necessarily has only real entries and therefore equals its transpose.  Thus again Hoeffding's inequality and a union bound imply that for any $t \geq 0$, 
\begin{equation*}
\Pr\!\big[ \max_{o, P} \abs{\tau^{o, P} - \Tr(\pi^o P)} \geq t \big] 
\leq 2^q 3^q \cdot 2 \exp( -t^2 n / 2^{4q + 1} )
 \enspace .
\end{equation*}
Substitute $t = k \sqrt{(\log n)/n}$ to get $\Pr[\text{Eve accepts}] \geq 1 - O(n^{-k^2/2^{4q+1}}) = 1 - O(n^{-1/2})$.  

\smallskip

Next we will argue soundness.  Let $\epsilon = n^{-1/4}$ and assume that $\Pr[\text{Eve accepts}] \geq 1 - \epsilon$.  Then there is at least a $1 - \sqrt \epsilon$ probability that Bob outputs a string $o_{1,n} = (o_1, \ldots, o_n)$ such that $\Pr[\text{Eve accepts} \,\vert\, O_{1,n} = o_{1,n}] \geq 1 - \sqrt \epsilon$.  Fix such a transcript.  

In Alice's actual interactions with Eve, she measures her qubits in order, $1, 2, 3, \ldots, m$.  We will analyze instead a hypothetical protocol in which Alice measures her qubits in order $\sigma(1, 1), \sigma(1, 2), \ldots,$ $\sigma(n, q-1), \sigma(n, q), \ldots$.  Since Alice's strategy is ideal, and in particular her measurements commute, the distributions of her measurement outcomes are the same in the hypothetical protocol as in the actual protocol, so Eve accepts with the same probability.  For $j \in [n]$, define the random variable $\sigma_j$ to be the reduced density matrix of Alice's qubits $\sigma(j, 1), \ldots, \sigma(j, q)$ immediately after completing the first $(j-1)q$ rounds of the hypothetical protocol.  Let~$P_j  = \otimes_i P_j^i \in \{X, Z\}^{\otimes q}$ be Alice's measurement bases for rounds~$(j, 1), \ldots, (j, q)$.  For $J \in [n]$, $o \in [2^q]$ and $Q \in \{I, X, Z\}^{\otimes q}$, define $\tau^{o, Q}_J$ and~$\rho^{o, Q}_J$ by 
\begin{align*}
\tau^{o, Q}_J
&= \frac{2^{q + \abs Q}}{n} \sum_{j \in [J]} I^{o, Q}_j &
\rho^{o, Q}_J 
&= \frac{2^q}{n} \sum_{j \in [J]} \delta_{O_j, o} \Tr(\sigma_j Q)
 \enspace .
\end{align*}
Observe that $\tau^{o, Q}_J - \rho^{o, Q}_J$, for $J \in [n]$, is a martingale.  (Achieving this property is the reason behind our definition for~$\sigma_j$.  Had we instead defined~$\sigma_j$ to be the state of Alice's qubits $\sigma(j, 1), \ldots, \sigma(j, q)$ at the beginning of her interactions with Eve, then $\tau^{o, Q}_J - \rho^{o, Q}_J$ would not define a martingale sequence.)  Successive terms of the sequence differ by at most $\frac{2^q}{n}(1 + 2^{\abs Q})$ in magnitude.  By Azuma's inequality, therefore, for any $t \geq 0$, 
\begin{equation*}
\Pr\!\big[\abs{\tau^{o, Q}_n - \rho^{o, Q}_n} \geq t \,\big\vert\, O_{1,n} = o_{1,n}\big] 
\leq 2 \exp\Big(- \frac{t^2 n}{8 \cdot 4^{q + \abs Q}}\Big)
 \enspace .
\end{equation*}

Let $N^o = \sum_{j \in [n]} \delta_{O_j, o}$ be the number of times Bob announces outcome~$o$, and let 
\begin{equation*}
\tau^o = \frac{1}{N^o} \sum_{j \in [n]} \delta_{O_j, o} \sigma_j
\end{equation*}
be the average of the states~$\sigma_j$ over games in which Bob's outcome is~$o$.  Note that $\tau^o$ is a density matrix, i.e., $\tau^o \succeq 0$ and $\Tr \tau^o = 1$.  
Note also that 
\begin{equation*}
\Tr(\tau^o Q) -\rho^{o, Q}_n = \Big(\frac{1}{N^o} - \frac{2^q}{n}\Big) \sum_{j \in [n]} \delta_{O_j, o} \Tr(\sigma_j Q)
 \enspace .
\end{equation*}
If Eve accepts, so $\abs{N^o - n/2^q} \leq k \sqrt{n \log n}$, it follows that $\abs{\Tr(\tau^o Q) -\rho^{o, Q}_n} \leq 2^q k \sqrt{(\log n)/n}$.  

Combining the above calculations, we find that for any $t > 1$ and $\delta = 2 t 2^q k \sqrt{(\log n)/n}$, 
\begin{equation*}\begin{split}
\Pr\!\big[\text{Eve accepts}&\text{ and $\max_{o, Q} \abs{ \Tr(\tau^o Q) - \tau^{o, Q}_n} \geq \delta$} \,\big\vert\, O_{1,n} = o_{1,n}\big] \\
&\leq 
\Pr[\max_{o, Q} \abs{\tau^{o, Q}_n - \rho^{o, Q}_n} \geq \delta/2 \,\vert\, O_{1,n} = o_{1,n}]
\\&\quad + \Pr[\text{Eve accepts and $\max_{o, Q} \abs{\Tr(\tau^o Q) - \rho^{o, Q}_n} \geq \delta/2$} \,\vert\, O_{1,n} = o_{1,n}] \\
&= 
\Pr[\max_{o, Q} \abs{\tau^{o, Q}_n - \rho^{o, Q}_n} \geq \delta/2 \,\vert\, O_{1,n} = o_{1,n}] \\
&\leq 
6^q \cdot 2 n^{-\frac{1}{2^{2q+3}} t^2 k^2}
 \enspace ,
\end{split}\end{equation*}
implying that for $\epsilon' = \sqrt{\epsilon} + 6^q \cdot 2 n^{-\frac{1}{2^{2q+3}} t^2 k^2}$, 
\begin{equation*}
\Pr\!\big[\text{Eve accepts and $\max_{o, Q} \abs{ \Tr(\tau^o Q) - \tau^{o, Q}_n} < \delta$} \,\big\vert\, O_{1,n} = o_{1,n}\big] \geq 1 - \epsilon'
 \enspace .
\end{equation*}
Substituting $t = 2$ and $k = 4^q$, note that $\epsilon' = \sqrt \epsilon + O(n^{-2^{2q-1}}) = O(\sqrt \epsilon)$ and $\delta = \tilde O(1/\sqrt n)$.  

Assume that Eve accepts and $\max_{o, Q} \abs{\Tr(\tau^o Q) - \tau^{o, Q}_n} < \delta$.  Letting $\delta' = \delta + k \sqrt{(\log n)/n} = \tilde O(1/\sqrt n)$, then for all~$o$ and all~$Q \in \{I, X, Z\}^{\otimes n}$, $\abs{\Tr Q (\tau^o - \pi^o)} < \delta'$, by Eq.~\eqnref{e:tomographyacceptancecriterionestimator}.  Since~$\tau^o$ is a density matrix and $\pi^o$ is $X\!Z$-determined, we conclude that there is a constant~$c$ such that $\trnorm{\tau^o - \pi^o} < c \sqrt{\delta'}$.  In particular, there is a constant~$c'$ such that $\max_{Q \in \{I, X, Y, Z\}^{\otimes q}} \abs{\Tr Q (\tau^o - \pi^o)} < c' \sqrt{\delta'}$.  Thus, 
\begin{equation*}
\Pr\!\big[\max_{o, Q} \abs{\Tr Q(\tau^o - \pi^o)} < c' \sqrt{\delta'} \,\big\vert\, O_{1,n} = o_{1,n}\big] 
\geq 
1 - \epsilon'
 \enspace .
\end{equation*}

Assume that $\max_{o, Q} \abs{\Tr Q (\tau^o - \pi^o)} < c' \sqrt{\delta'}$.  Since each $\tau^o$ is an average of states $\sigma_j$, we will argue next, using a version of Markov's inequality for points lying in the unit ball, that $\sigma_j$ is close to~$\pi^{O_j}$ for most~$j$.  

\begin{claim} \label{t:markovunitball}
Let $x^1, \ldots, x^n \in \R^d$ each satisfy $\norm{x^j} \leq 1$.  Let $x = \frac{1}{n} \sum_j x^j$.  If $\norm{x} \geq 1 - \delta$, then for any $p > 0$, at least $(1-p) n$ of the $x^j$ must satisfy $\norm{x^j - x} \leq \sqrt{2 \delta / p}$.  
\end{claim}

\begin{proof}
Let $v = x / \norm x$.  Then since $\norm v = 1$, all~$x^j$ satisfy $x_j \cdot v \leq 1$, whereas $x \cdot v = \norm x \geq 1 - \delta$.  By Markov's inequality, at least $(1-p) n$ of the $x^j$ must have $x^j \cdot v \geq 1 - \delta / p$.  For each such~$x^j$, simple geometry on the unit ball implies that $\norm{x^j - x} \leq \sqrt{2 \delta/p}$.  
\end{proof}

For a state $\rho \in \L((\C^2)^{\otimes q})$ and a Pauli operator~$Q \in \{I, X, Y, Z\}^{\otimes q}$, let $\rho_Q = \frac{1}{\sqrt{2^q}} \Tr (Q \rho)$.  Let $\vec \rho = ( \rho_Q : Q \in \{I, X, Y, Z\}^{\otimes q})$ be the vector of weighted Pauli coefficients.  Then $\norm{\vec \rho}^2 = \frac{1}{2^q} \sum_Q (\Tr Q \rho)^2 = \Tr (\rho^2) \leq 1$.  Since $\pi^o$ is a pure state, $\norm{\vec \pi^o} = 1$, implying that $\norm{\vec \tau^o} \geq 1 - 2^{q/2} c' \sqrt{\delta'}$.  Applying \claimref{t:markovunitball} for $p = n^{-1/8}$, at least $(1 - p) N^o$ of the~$j$ with $O_j = o$ must satisfy $\norm{\vec \sigma_j - \vec \tau^o}{}^2 \leq \frac{2}{p} 2^{q/2} c' \sqrt{\delta'}$.  By a triangle inequality, also $\norm{\vec \sigma_j - \vec \pi^o} \leq \delta''$, where $\delta'' = \sqrt{\frac{2}{p} 2^{q/2} c' \sqrt{\delta'}} + 2^q c' \sqrt{\delta'} = \tilde O(n^{-1/16})$.  Thus, for~$J$ drawn uniformly at random from~$[n]$, 
\begin{equation*}
\Pr\!\big[ \norm{\vec \sigma_J - \vec \pi^{o_J}} \leq \delta'' \,\big\vert\, O_{1,n} = o_{1,n} \big]
\geq (1 - \epsilon')(1-p) \geq 1 - (\epsilon' + p)
 \enspace .
\end{equation*}

By a Markov inequality, at least $(1 - \sqrt{\epsilon' + p}) n = (1 - O(n^{-1/16})) n$ of the coordinates $j \in [n]$ satisfy $\Pr[\norm{\vec \sigma_j - \vec \pi^{o_j}} \leq \delta'' \,\vert\, O_{1,n} = o_{1,n}] \geq 1 - \sqrt{\epsilon' + p}$.  To complete the theorem, we will show: 

\begin{claim}
Letting $1 - \eta = \Pr\!\big[\norm{\vec \sigma_j - \vec \pi^{o_j}} \leq \delta'' \,\big\vert\, O_{1,n} = o_{1,n}\big]$, $\Tr \!\big((\rho_o)_{\sigma, j} \pi^{o_j}\big) \geq 1 - \delta'' - 2 \eta$.  
\end{claim}

\begin{proof}
Observe that $\sigma_j$ is a fixed function of $A_{1,j-1}$, the random transcript of Alice's interactions with Eve for rounds $\sigma(1, 1), \sigma(1, 2), \ldots, \sigma(j-1, q-1), \sigma(j-1, q)$.  Furthermore, since Alice's measurements in these earlier rounds are on qubits in tensor product with qubits $\sigma(j, 1), \ldots, \sigma(j, q)$, it holds that 
\begin{equation} \label{e:partialtraceiscyclicforoperatorssupportedononehalf}
(\rho_o)_{\sigma, j} = \sum_{a_{1,j-1}} \Pr[A_{1,j-1} = a_{1,j-1}] \sigma_j(a_{1,j-1})
 \enspace .
\end{equation}
Indeed, in general, given a bipartite state $\rho \in \L(\H_1 \otimes \H_2)$ and a set of Kraus operators $E_i$ acting on~$\H_2$ and satisfying $\sum_i E_i^\dagger E_i = \identity$, it holds that $\Tr_2 \rho = \sum_i \Tr_2\!\big( (\identity \otimes E_i) \rho (\identity \otimes E_i^\dagger) \big)$.  Eq.~\eqnref{e:partialtraceiscyclicforoperatorssupportedononehalf} follows by letting $\H_1$ be the space of Alice's qubits $\sigma(j, 1), \ldots, \sigma(j, q)$ and $\H_2$ be everything else, letting~$\rho$ be the initial state~$\rho_o$, so $\Tr_2 \rho_o = (\rho_o)_{\sigma, j}$, and letting the $E_i$ be Eve and Alice's measurement operators for the earlier rounds.  

By linearity, Eq.~\eqnref{e:partialtraceiscyclicforoperatorssupportedononehalf} implies that also $\overrightarrow{(\rho_o)}_{\sigma, j} = \sum_{a_{1,j-1}} \Pr[A_{1,j-1} = a_{1,j-1}] \vec \sigma_j(a_{1,j-1})$.  Thus, 
\begin{equation*}\begin{split}
\norm{ \overrightarrow{(\rho_o)}_{\sigma, j} - \vec \pi^{o_j} }
&\leq \sum_{a_{1,j-1}} \Pr[A_{1,j-1} = a_{1,j-1}] \bignorm{\vec \sigma_j(a_{1,j-1}) - \vec \pi^{o_j}} 
\leq (1 - \eta) \delta'' + \eta \cdot 2
 \enspace .
\end{split}\end{equation*}
In particular, $\Tr ((\rho_o)_{\sigma, j} \pi^{o_j}) = \overrightarrow{(\rho_o)}_{\sigma, j} \cdot \vec \pi^{o_j} \geq 1 - (\delta'' + 2 \eta)$.  
\end{proof}

Thus for at least a $1 - O(n^{-1/16})$ fraction of the coordinates~$j$, $\Tr((\rho_o)_{\sigma, j} \pi^{o_j}) \geq 1 - O(n^{-1/16})$.  
\end{proof}

\thmref{t:statetomography} assumes that Alice's strategy is ideal.  Next we will relax this assumption and allow both provers to follow dishonest strategies.  To do so, we combine the state tomography protocol with a set of sequential CHSH games, analyzed in \thmref{t:sequentialstructureandobservedcorrelationsapplied}.  

\begin{theorem} \label{t:statetomographydishonestAlice}
Fix $\xzdeterminedset = \{ \pi^1, \ldots, \pi^{2^q} \}$ a complete, orthonormal set of $q$-qubit $X\!Z$-determined pure states.  For a sufficiently large constant~$\alpha$ and for sufficiently large~$n$, let $m = m(n) \geq q n$ and $N \geq m^{\alpha - 1}$.  Let $\sigma \in [m]^{q n}$ be a list of distinct indices.  Consider a combination of the following two protocols between the verifier, Eve, and the provers, Alice and Bob: 
\begin{enumerate}
\item
CHSH games: In the first protocol, Eve referees $N m$ sequential CHSH games.  She accepts if 
\begin{equation}
\bigabs{\{ j \in [N m] : A_j B_j = X_j \oplus Y_j \}} \geq \cos^2(\pi/8) N m - \tfrac{1}{2 \sqrt 2} \sqrt{N m \log(N m)}
 \enspace .
\end{equation}
\item
State tomography: In the second protocol, Eve chooses $K \in [N]$ uniformly at random.  She referees $(K-1) m$ CHSH games.  For the $K$th set, she referees a state tomography protocol with parameters~$q$, $n$, $m$, $\xzdeterminedset$ and $\sigma$.  She accepts if the criteria of Eq.~\eqnref{e:tomographyacceptancecriteria} are satisfied.  
\end{enumerate}

The combined protocol satisfies the following completeness and soundness conditions: 
\begin{description}
\item[Completeness:]
If Alice and Bob use $N m$ shared EPR states to play the CHSH games according to an ideal strategy, and if Bob uses an ideal strategy with respect to the projections~$\xzdeterminedset$ on the~$K$th set of $m$ EPR states in the state tomography protocol, then in both protocols, 
\begin{equation}
\Pr[\text{Eve accepts}] \geq 1 - O(n^{-1/2}) 
 \enspace .
\end{equation}
\item[Soundness:]
Assume that for both protocols, $\Pr[\text{Eve accepts}] \geq 1 - n^{-1/3}$.  Let $\rho$ be Alice's state in the second protocol after $(K-1) m$ games and conditioned on Bob's messages~$O_1, \ldots, O_n$.  Then there exists an isometry $\XA : \H_A \hookrightarrow (\C^2)^{\otimes m} \otimes \H_A'$ such that letting $\rho_{\sigma, j}$ be $\XA \rho \XA{}^\dagger$ reduced to Alice's qubits $\{ \sigma(j, i) : i \in [q] \}$, 
\begin{equation}
\Pr\!\Big[
\bigabs{ \big\{ 
j \in [n] : \Tr (\rho_{\sigma, j} \pi^{O_j}) \geq 1 - O(n^{-1/16})
\big\} } \geq \big(1 - O(n^{-1/16})\big)n
\Big] \geq 1 - 4 n^{-1/12}
 \enspace .
\end{equation}
Here, the probability is over $K$, the first $(K-1) m$ games and $O_1, \ldots, O_n$.  

The isometries~$\XA$ depend only on the first $(K-1) m$ games, not on $O_1, \ldots, O_n$, and are the isometries promised by \thmref{t:sequentialstructureandobservedcorrelationsapplied} for determining an $m^{-\alpha/(32 \kappaEPR)}$-ideal strategy for the $K$th set of~$m$ CHSH games.  
\end{description}
\end{theorem}

\begin{proof}
The completeness condition for sequential CHSH games follows by \thmref{t:sequentialstructureandobservedcorrelationsapplied} and \lemref{t:honestproverscentrallimit}.  The completeness condition for state tomography follows by \thmref{t:statetomography}.  

Next we will argue soundness.  Let $\epsilon = n^{-1/3}$ and $\zeta = m^{-\alpha / (32 \kappaEPR)}$, where~$\kappaEPR$ is the constant from \thmref{t:sequentialCHSHgames}.  Since the provers win the sequential CHSH games with probability at least $1 - \epsilon$, by \thmref{t:sequentialstructureandobservedcorrelationsapplied} there is at least a $1 - \epsilon - m^{-\alpha/8}$ probability that the provers' strategy for the $K$th set of $m$ games is $\zeta$-ideal.  

Whether or not the provers' strategy for a set of games is $\zeta$-ideal is a property determined at the beginning of that set of games.  It does not depend on any subsequent events.  Since Bob's strategy for the first $(K-1) m$ rounds is the same regardless of whether Eve is running CHSH games or state tomography, it therefore also holds that there is at least a $1 - \epsilon - m^{-\alpha/8}$ probability that the initial state and Alice's strategy is $\zeta$-ideal for the $K$th set of games in the state tomography protocol.  By a union bound, there is at least a $1 - \epsilon - m^{-\alpha/8} - 2 n^{-1/12} \geq 1 - 3 n^{-1/12}$ probability that, additionally, the probability that Eve accepts the state tomography protocol, conditioned on~$K$ and the $(K-1) m$ previous games, is at least $1 - \frac12 n^{-1/4}$.  

Assume that the provers' strategy for the $K$th set of CHSH games is $\zeta$-ideal and $\Pr[$Eve accepts $\vert$ $K$, previous $(K-1) m$ games$] \geq 1-\frac12 n^{-1/4}$.  Using the notation from \thmref{t:sequentialstructureandobservedcorrelationsapplied}, this implies that there exist isometries $\XX : \H_\device \hookrightarrow (\C^2)^{\otimes m} \otimes \H_\device'$ such that, letting $\XAB(\rho) = (\XA \otimes \XB) \rho (\XA \otimes \XB)^\dagger$, $\trnorm{\XAB(\rhoone) - \rhodecone{\hat}} \leq \zeta$ and $\trnorm{\XAB \EXj{1,m}(\rhoone) - \EXdecj{\hat}{1,m}(\rhodecone{\hat})} \leq 2 \zeta$.  For notational simplicity, we can embed $\H_\device$ into $(\C^2)^{\otimes m} \otimes \H_\device'$, extend the prover's measurements, and choose a basis so $\XX = \identity$.  Thus we have that 
\begin{equation*}\begin{split}
\trnorm{\rhoone - \rhodecone{\hat}} &\leq \zeta \\
\trnorm{\EAj{1,m}(\rhoone) - \EAdecj{\hat}{1,m}(\rhodecone{\hat})} &\leq 2 \zeta
 \enspace .
\end{split}\end{equation*}

\def\EB {\E^B}
Let $\EB$ be the measurement super-operator Bob uses to determine his responses $O_1, \ldots, O_n$.  We have that $\EB(\rhoone) = \sum_{o \in [2^q]^n} \ketbra o o \otimes \rho_o$ for matrices $\rho_o$ satisfying $\Tr \rho_o = \Pr[O_1 \ldots O_n = o]$.  Similarly, $\EB(\rhodecone{\hat}) = \sum_o \ketbra o o \otimes \hat \rho_o$ for certain matrices~$\hat \rho_o$.  Let $\hat O_1, \ldots, \hat O_n \in [2^q]$ be random variables distributed according to $\Pr[\hat O_1 \ldots \hat O_n = o] = \Tr \hat \rho_o$.  Then $\trnorm{\EB(\rhoone) - \EB(\rhodecone{\hat})} \leq \trnorm{\rhoone - \rhodecone{\hat}} \leq \zeta$.  By \lemref{t:blockdiagonaltracedistance}, the total variation distance between the distributions of $O = O_1 \ldots O_n$ and $\hat O = \hat O_1 \ldots \hat O_n$ is at most $\zeta/2$, and furthermore, letting $\rho_o' = \rho_o / \Tr \rho_o$ and $\hat \rho_o' = \hat \rho_o / \Tr \hat \rho_o$, $\Ex[\trnorm{\rho_O' - \hat \rho_O'}] \leq 2 \zeta$.  

For a state~$\rho$, let~$\rho_{\sigma, j}$ be its partial trace onto Alice's qubits $\sigma(j, 1), \ldots, \sigma(j, q)$.  For $\eta \geq 0$ and $o \in [2^q]^n$, define $\rho$ to be \emph{$\eta$-good for~$o$} if for at least a $1 - O(n^{-1/16})$ fraction of the coordinates~$j \in [n]$, $\Tr (\rho_{\sigma, j} \pi^{\smash{\hat O_j}}) \geq 1 - \eta$.  

Since Eve accepts $\EAj{1,m} \EB(\rhoone)$ with probability at least $1 - \frac12 n^{-1/4}$, and $\trnorm{\EAj{1,m} \EB(\rhoone) - \EAdecj{\hat}{1,m} \EB(\rhodecone{\hat})} \leq 2 \zeta$, by Eq.~\eqnref{e:holevohelstromheart} the same predicate accepts $\EAdecj{\hat}{1,m} \EB(\rhodecone{\hat})$ with probability at least $1 - \frac12 n^{-1/4} - \zeta \geq 1-n^{-1/4}$ (for sufficiently large~$n$).  Since $\EAdecj{\hat}{1,m}$ and~$\rhodecone{\hat}$ are ideal, \thmref{t:statetomography} applies.  
We obtain that there is at least a $1 - n^{-1/8}$ probability over $\hat O$ that $\hat \rho_{\smash{\hat O}}'$ is $O(n^{-1/16})$-good for~$\hat O$.  Since the distributions of~$O$ and $\hat O$ are $\zeta/2$-close, there is at least a $1 - n^{-1/8} - \zeta/2$ probability over~$O$ that $\hat \rho_{\smash{O}}'$ is $O(n^{-1/16})$-good for~$O$.  

Since $\trnorm{\EB(\rhoone) - \EB(\rhodecone{\hat})} \leq \zeta$, \lemref{t:blockdiagonaltracedistance} implies that with probability at least $1 - \sqrt{2 \zeta}$ over~$O$, $\trnorm{\rho_O' - \hat \rho_O'} \leq \sqrt{2 \zeta}$.  By a union bound, there is at least a $1 - n^{-1/8} - \zeta/2 - \sqrt{2 \zeta} \geq 1 - 2 n^{-1/8}$ probability that $\rho_O$ is $\eta$-good for~$O$, where $\eta = O(n^{-1/16}) + \frac12 \sqrt{2 \zeta} = O(n^{-1/16})$.  

The inequality $(1 - 3 n^{-1/12})(1 - 2n^{-1/8}) \geq 1 - 4 n^{-1/12}$ completes our proof.  
\end{proof}

In our application of \thmref{t:statetomographydishonestAlice}, we will sample a uniformly random set $S \subset [n]$ of fixed size~$s$.  With high probability, for all $j \in S$, $\Tr (\rho_{\sigma, j} \pi^{O_j}) \geq 1 - O(n^{-1/16})$.  \lemref{t:purepartsdeterminethewhole} implies that the reduction of~$\rho$ to Alice's qubits $\{ \sigma(j, i) : j \in S,\, i \in [q] \}$ is within $O(s n^{-1/32})$ from $\bigotimes_{j \in S} \pi^{O_j}$ in trace distance.  For this to be meaningful, we will pick~$s \ll n^{1/32}$ coordinates.  

A problem with \thmref{t:statetomographydishonestAlice} is that the soundness condition is hard to apply directly.  The theorem gives us control over Alice's state conditioned on Bob's messages, but it does not say anything about the distribution of Bob's messages.  The verification criterion of Eq.~\eqnref{e:tomographyacceptanceenoughstatisticsperoutcome} constrains Bob to report measuring $\pi^j$ on roughly a $1/2^q$ fraction of his messages, for $j \in [2^q]$.  However, he might, for example, output $O_1 = \cdots = O_{n/2^q} = 1$, $O_{n/2^q+1} = \cdots = O_{2 n / 2^q} = 2$, and so on, following a deterministic strategy.  Having to condition always on Bob's messages would severely complicate our later analysis.  Therefore, we next extend \thmref{t:statetomographydishonestAlice} to show that on a random subset of the coordinates~$j \in [n]$, with high probability both $\rho_{\sigma,j}$ is close to $\pi^{O_j}$ and $O_j$ is distributed nearly uniformly.  Thus the effect of Bob's super-operator on Alice's qubits for these coordinates is close to the effect of the ideal super-operator.  

It is possible to control the distribution of Bob's measurements because he shares with Alice a state that is close to a tensor product of EPR states, which to either party looks maximally mixed.  The more he controls his measurement outcome the less effect the measurement has on Alice's portion of the state.  The following lemma states this claim in a slightly more abstract setting: 

\begin{lemma} \label{t:blockscloseinexpectationimpliesmatricescloseandmore}
Let $\ket \psi = \frac{1}{\sqrt d} \sum_{i \in [d]} \ket{i, i}_{AB} \otimes \ket{\psi'}_{A'B'} \in \C^{[d]}_A \otimes \C^{[d]}_B \otimes \H_{A'} \otimes \H_{B'}$ and $\rho = \ketbra \psi \psi$, for Hilbert spaces $\H_{A'}$ and~$\H_{B'}$.  
Let~$\EBdecj{\hat}{}$ be the measurement super-operator for the computational-basis measurement on $\H_B$, i.e., its Kraus operators are $\hat E_i = \ket i \otimes (\ketbra{i}{i}_B \otimes \identity_{A A' B'})$ for~$i \in [d]$.  
Let $\{ \Pi_{i \ell} \}$, where $i \in [d]$ and $\ell$ varies over some finite set, be a complete set of orthogonal projections on $\C^{[d]}_B \otimes \H_{B'}$.  
Let $\EBj{}$ be the super-operator with Kraus operators $E_{i \ell} = \ket i \otimes (\Pi_{i \ell})_{BB'} \otimes \identity_{A A'}$; it corresponds to measuring~$i$ and~$\ell$, and then tracing out~$\ell$.  
Let $p_i = \sum_\ell \norm{\Pi_{i \ell} \ket \psi}{}^2$ be the probability of measuring~$i$, and when $p_i > 0$ let $\rho_i = \frac{1}{p_i} \Tr_{A'BB'} \sum_\ell \Pi_{i \ell} \rho$ be the resulting state reduced to $\H_A$.  

Assume that $\sum_{i : \trnorm{\rho_i - \ketbra i i} \leq \epsilon} p_i \geq 1 - \epsilon$.  Then, 
\begin{equation} \label{e:blockscloseinexpectationimpliesmatricescloseandmore}
\bigtrnorm{ \Tr_{BB'} \! \big( \EBj{}(\rho) - \EBdecj{\hat}{}(\rho) \big) } \leq 31 \epsilon^{1/3}
 \enspace .
\end{equation}
\end{lemma}

A state that is block diagonal defines a probability distribution over the blocks given by their traces, and defines conditional states given by the renormalized blocks.  For two states that are simultaneously block diagonal, the trace distance between them is small if and only if their distributions over blocks are close in total variation distance, and if for most blocks, drawn according to either distribution, the conditional states are close.  (See \lemref{t:blockdiagonaltracedistance}.)  In this lemma, however, we are only given that the conditional states are usually close, and we need to show that this implies the distributions are also close.  

\begin{proof}[Proof of \lemref{t:blockscloseinexpectationimpliesmatricescloseandmore}]
Let $\Pi_i = \sum_\ell \Pi_{i \ell}$.  The main claim puts an upper bound on the probability of any outcome~$i$ for which $\rho_i$ is close to~$\ketbra i i$: 

\begin{claim} \label{t:outcomescantbetoolikely}
For any~$i$ with $\trnorm{\rho_i - \ketbra i i} \leq 1$, 
\begin{equation}
p_i - \frac{1}{d} \leq \frac{1}{d} \bigtrnorm{\rho_i - \ketbra i i}
 \enspace .
\end{equation}
\end{claim}

\begin{proof}
Let $c_{ijk} = \bra{k, \psi'} \Pi_i \ket{j, \psi'}$.  Then $p_i \rho_i = \frac{1}{d} \sum_{j,k} c_{ijk} \ketbra j k$ and $p_i = \frac{1}{d} \sum_j c_{ijj}$.  Thus, using the general inequality $\trnorm{\sigma} \geq \sum_j \abs{ \bra j \sigma \ket j }$, 
\begin{equation*}
\bigtrnorm{\ketbra i i - \rho_i}
= \Bigtrnorm{ \ketbra i i - \frac{\sum_{j,k} c_{ijk} \ketbra j k}{\sum_j c_{ijj}} }
\geq \Big( 1 - \frac{c_{iii}}{\sum_j c_{ijj}} \Big) + \frac{\sum_{j \neq i} c_{ijj}}{\sum_j c_{ijj}}
= \frac{2 \sum_{j \neq i} c_{ijj}}{\sum_j c_{ijj}}
 \enspace ,
\end{equation*}
Let $\delta = \trnorm{\ketbra i i - \rho_i} \leq 1$ and $S = \sum_{j \neq i} c_{ijj}$.  Since $c_{iii} \leq 1$, we have $\frac{\delta}{2} \geq S / (1 + S)$, or $S \leq \delta / (2-\delta) \leq \delta$.  Thus, 
\begin{equation*}
p_i - \frac{1}{d} = \frac{1}{d} \big( c_{iii} + S - 1 \big) \leq \frac{\delta}{d}
 \enspace . \qedhere
\end{equation*}
\end{proof}

As a consequence of this claim, $\sum_i \bigabs{p_i - \tfrac{1}{d}} \leq 4 \epsilon$.  Indeed, call an $i \in [d]$ ``good" if $\trnorm{\rho_i - \ketbra i i} \leq \epsilon$, and ``bad" otherwise.  By assumption, $\sum_{\text{bad $i$}} p_i \leq \epsilon$.  Thus, by \claimref{t:outcomescantbetoolikely}, 
\begin{align}
\bigtrnorm{ \Tr_{AA'BB'} \! \big( \EBj{}(\rho) - \EBdecj{\hat}{}(\rho) \big) }
&= \sum_i \bigabs{p_i - \tfrac{1}{d}} \nonumber \\
&= 2 \sum_{i : p_i > 1/d} \big( p_i - \tfrac{1}{d} \big) \nonumber \\
&\leq 2 \sum_{\text{bad $i$}} p_i + 2 \sum_{\text{good $i$}} \big( p_i - \tfrac{1}{d} \big) \nonumber \\
&\leq 4 \epsilon \label{e:blockscloseinexpectationimpliesmatricescloseTVdistance}
 \enspace .
\end{align}

Therefore, we can immediately bound 
\begin{align}
\bigtrnorm{ \Tr_{A'BB'} \! \big( \EBj{}(\rho) - \EBdecj{\hat}{}(\rho) \big) } 
&= \Bigtrnorm{ \sum_i \ketbra i i \otimes p_i \rho_i - \frac{1}{d} \sum_i \ketbra i i \otimes \ketbra i i } \nonumber \\
&= \sum_i \bigtrnorm{ p_i \rho_i - \tfrac{1}{d} \ketbra i i } \nonumber \\
&\leq \sum_i p_i \bigtrnorm{ \rho_i - \ketbra i i } + \sum_i \bigabs{p_i - \tfrac{1}{d} } \nonumber \\
&\leq 7 \epsilon \label{e:blockscloseinexpectationimpliesmatricesclose}
 \enspace ,
\end{align}
where we have applied a triangle inequality and used $\bigtrnorm{p_i \ketbra i i - \tfrac{1}{d} \ketbra i i} = \bigabs{p_i - \tfrac{1}{d}}$.  

It takes more work to bound the trace distance without tracing out $\H_{A'}$.  For $i$ with $p_i > 0$, let $\tau_i = \frac{1}{p_i} \Tr_{BB'} \Pi_i \rho$.  Then $\rho_i = \Tr_{A'} \tau_i$.  Let $\rho_i' = \Tr_A \tau_i$.  Intuitively, we are given by assumption that for most~$i$, $\rho_i \approx \ketbra i i$, which means that $\tau_i$ must be close to a tensor product $\ketbra i i \otimes \rho_i'$.  
The additional conclusion of Eq.~\eqnref{e:blockscloseinexpectationimpliesmatricescloseandmore}, compared to Eq.~\eqnref{e:blockscloseinexpectationimpliesmatricesclose}, is that $\rho_i'$ is usually close to $\Tr_{B'} \ketbra{\psi'}{\psi'}$; whereas $\Tr_{A'BB'} \EBdecj{\hat}{}(\rho) = \frac{1}{d} \sum_i \ketbra i i \otimes \ketbra i i$, $\Tr_{BB'} \EBdecj{\hat}{}(\rho) = \frac{1}{d} \sum_i \ketbra i i \otimes \ketbra i i \otimes \Tr_{B'} \ketbra{\psi'}{\psi'}$.  That is, not only does Bob's super-operator properly collapse Alice's half of the maximally entangled state $\frac{1}{\sqrt d} \sum_{i \in [d]} \ket{i, i}_{AB}$, but also Bob's operation cannot significantly affect Alice's portion of the extra state $\ket{\psi'}$.  Essentially, this is because Eq.~\eqnref{e:blockscloseinexpectationimpliesmatricescloseTVdistance} implies that for most~$i$, $p_i$ is close to being uniform $1/d$---in fact, $d p_i \approx 1$ up to a small additive error.  However, a $1/d$ probability for outcome~$i$ already comes from the overlap of $\ketbra i i$ with the maximally mixed state $\frac{1}{d} \identity$.  For Bob's measurement to change substantially the state on the $A'$ register, outcome~$i$ would have to have a substantially lower probability.  

Let $\rho' = \ketbra{\psi'}{\psi'}$ and $\rho'_{A'} = \Tr_{B'} \rho'$.  Then we have 
\begin{align}
\bigtrnorm{ \Tr_{BB'} \! \big( \EBj{}(\rho) - \EBdecj{\hat}{}(\rho) \big) }
&= \Bigtrnorm{ \sum_i \ketbra i i \otimes p_i \tau_i - \frac{1}{d} \sum_i \ketbra i i \otimes \ketbra i i \otimes \rho'_{A'} } \nonumber \\
&= \sum_i \bigtrnorm{ p_i \tau_i - \tfrac{1}{d} \ketbra i i \otimes \rho'_{A'} } \nonumber \\
&\leq \sum_i p_i \trnorm{\tau_i - \ketbra i i \otimes \rho_i'} + \sum_i \trnorm{p_i \rho_i' - \tfrac{1}{d} \rho'_{A'}}
 \enspace . \label{e:blockscloseinexpectationimpliesmatricescloseandmorestepone}
\end{align}
By \corref{t:gentlemeasurementpurestate} of the Gentle Measurement Lemma, $\trnorm{\tau_i - \ketbra i i \otimes \rho_i'} \leq 3 \sqrt{1 - \bra i \rho_i \ket i}$.  

By definition, when $p_i > 0$, 
\begin{equation*}
\tau_i = \frac{1}{d p_i} \sum_{j, k \in [d]} \ketbra{j}{k}_A \otimes \Tr_{BB'} \! \big[ (\Pi_i)_{BB'} \ketbra{j}{k}_B \otimes \rho'_{A'B'} \big]
 \enspace .
\end{equation*}
Substituting $\rho_i = \Tr_{A'} \tau_i$ gives $\bra i \rho_i \ket i = \frac{1}{d p_i} \Tr \! \big[ (\Pi_i)_{BB'} \ketbra{i}{i}_B \otimes \rho'_{A'B'} \big]$, and so by the Gentle Measurement Lemma, 
\begin{equation*}
\bigtrnorm{
\ketbra{i}{i}_B \otimes \rho'_{A'B'}
- (\Pi_i)_{BB'} \ketbra{i}{i}_B \otimes \rho'_{A'B'} (\Pi_i)_{BB'}
} \leq 2 \sqrt{ 1 - d p_i \bra i \rho_i \ket i}
 \enspace .
\end{equation*}

Use $\rho_i' = \Tr_A \tau_i = \frac{1}{d p_i} \Tr_{BB'} \big[ (\Pi_i)_{BB'} \identity_B \otimes \rho'_{A'B'} \big]$ and expand $\identity = \ketbra i i - (\identity - \ketbra i i)$ to get 
\begin{align*}
\bigtrnorm{p_i \rho_i' - \tfrac{1}{d} \rho'_{A'}}
&= \frac{1}{d} \Bigtrnorm{ \Tr_{BB'} (\Pi_i)_{BB'} \identity_B \otimes \rho'_{A'B'} - \Tr_{BB'} \! \ketbra{i}{i}_B \otimes \rho'_{A'B'} } \\
&\leq \frac{1}{d} \Bigtrnorm{ (\Pi_i)_{BB'} \ketbra{i}{i}_B \otimes \rho'_{A'B'} (\Pi_i)_{BB'} - \ketbra{i}{i}_B \otimes \rho'_{A'B'} } \\ &\quad + \frac{1}{d} \bigtrnorm{ (\Pi_i)_{BB'} (\identity - \ketbra i i)_B \otimes \rho'_{A'B'} (\Pi_i)_{BB'} } \\
&\leq \frac{2}{d} \sqrt{ 1 - d p_i \bra i \rho_i \ket i} + p_i ( 1 - \bra i \rho_i \ket i )
 \enspace .
\end{align*}
In the last step, we have used that the trace norm of a positive semi-definite operator equals its trace.  

Letting $c_i = \bra i \rho_i \ket i$ and substituting into Eq.~\eqnref{e:blockscloseinexpectationimpliesmatricescloseandmorestepone}, we find 
\begin{align*}
\bigtrnorm{ \Tr_{BB'} \! \big( \EBj{}(\rho) - \EBdecj{\hat}{}(\rho) \big) }
&\leq \sum_i p_i \big( 3 \sqrt{1 - c_i} + 2 \sqrt{1 - d p_i c_i} + (1 - c_i) \big) + 2 \sum_i \bigabs{p_i - \tfrac{1}{d_i}}
 \enspace .
\end{align*}

We can next use the general inequality $1 - \bra i \rho_i \ket i \leq \frac12 \trnorm{\rho_i - \ketbra i i}$, but to make real progress we need to use the assumption $\sum_{i : \trnorm{\rho_i - \ketbra i i} > \epsilon} p_i \leq \epsilon$.  From Eq.~\eqnref{e:blockscloseinexpectationimpliesmatricescloseTVdistance}, this assumption implies that $\sum_i \abs{p_i - \frac{1}{d}} \leq 4 \epsilon$.  Let $\eta = (4 \epsilon)^{2/3}$.  Call an $i \in [d]$ ``great" if $\trnorm{\rho_i - \ketbra i i} \leq \epsilon$ and $d p_i \in \big[\frac{1}{1 + \eta}, \frac{1}{1 - \eta} \big]$.  By a Markov inequality and a union bound, $\sum_{\text{great $i$}} p_i \geq 1 - \epsilon - (4 \epsilon)^{1/3}$.  Thus, 
\begin{align*}
\bigtrnorm{ \Tr_{BB'} \! \big( \EBj{}(\rho) - \EBdecj{\hat}{}(\rho) \big) }
&\leq \Big( 3 \sqrt{\epsilon/2} + 2 \sqrt{1 - \frac{1 - \epsilon/2}{1 + \eta}} + \frac{\epsilon}{2} \Big) + (\epsilon + (4 \epsilon)^{1/3}) \cdot 6 + 2 \cdot 4 \epsilon \\
&\leq 31 \epsilon^{1/3}
 \enspace . \qedhere
\end{align*}
\end{proof}

For state tomography, the register~$A$ in \lemref{t:blockscloseinexpectationimpliesmatricescloseandmore} consists of Alice's qubits in the blocks indexed by the set~$S \subset [n]$ introduced above the lemma.  For the application of state tomography to blind, verified computation, it is enough to trace away all of the quantum registers aside from~$A$.  Alice can compute using the states prepared by Bob in this register.  Therefore, the bound in Eq.~\eqnref{e:blockscloseinexpectationimpliesmatricesclose} is sufficient.  However, for the application to simulating quantum multi-prover interactive protocols by classical protocols with entangled provers, we need Alice to work on additional input qubits, in the register~$A'$, that hold the quantum messages of the original QMIP system.  

For applying \lemref{t:blockscloseinexpectationimpliesmatricescloseandmore}, it is convenient to make two minor technical modifications: first, to allow the initial state to differ from the ideal state, and second, to allow Bob to make more measurements.  

\begin{corollary} \label{t:blockscloseinexpectationimpliesmatricescloseandmoretechnical}
Let $\rho$, $\EBdecj{\hat}{}$, $\{ \Pi_{i \ell} \}$ and $\EBj{}$ be as in \lemref{t:blockscloseinexpectationimpliesmatricescloseandmore}.  Let $\bar \rho = \ketbra{\bar \psi}{\bar \psi}$ be a state with $\trnorm{\bar \rho - \rho} \leq \zeta$.  Let $\EBdecj{\bar}{}$ be the super-operator with Kraus operators $\bar E_{i \ell} = \ket{i}_I \otimes \ket{\ell}_L \otimes (\Pi_{i \ell})_{BB'} \otimes \identity_{AA'}$.  Let $\bar p_{i \ell} = \Tr (\Pi_{i \ell} \bar \rho)$ and $\bar \rho_{i \ell} = \frac{1}{\bar p_{i \ell}} \Tr_{A'BB'} (\Pi_{i \ell} \bar \rho)$.  Then $\Tr_{LBB'} \EBdecj{\bar}{}(\bar \rho) = \Tr_{BB'} \EBj{}(\bar \rho)$ and, assuming $\sum_{i, \ell : \trnorm{\bar \rho_{i \ell} - \ketbra i i} \leq \epsilon} \bar p_{i \ell} \geq 1 - \epsilon$, 
\begin{equation}\begin{split} \label{e:blockscloseinexpectationimpliesmatricescloseandmoretechnical}
\bigtrnorm{ \Tr_{LBB'} \EBdecj{\bar}{}(\bar \rho) - \Tr_{BB'} \EBdecj{\hat}{}(\rho) } 
&\leq 42 (\epsilon + \zeta)^{1/6}
 \enspace .
\end{split}\end{equation}
\end{corollary}

\begin{proof}
Let $\bar p_i = \sum_\ell \bar p_{i \ell}$ and $\bar \rho_i = \frac{1}{\bar p_i} \sum_\ell \bar p_{i \ell} \bar \rho_{i \ell}$, so $\Tr_{LA'BB'} \EBdecj{\bar}{}(\bar \rho) = \Tr_{A'BB'} \EBj{}(\bar \rho) = \sum_i \ketbra i i \otimes \bar p_i \bar \rho_i$.  

Let us make the changes one at a time.  The first extension, to the case of $\bar \rho \approx \rho$ is a simple corollary of \lemref{t:blockdiagonaltracedistance}.  Let $\varepsilon \geq 0$ and assume for the moment that $\sum_{i : \trnorm{\bar \rho_i - \ketbra i i} \leq \varepsilon} \bar p_i \geq 1 - \varepsilon$.  Since $\trnorm{\bar \rho - \rho} \leq \zeta$, $\sum_i \abs{\bar p_i - p_i} \leq \zeta$ and $\sum_i p_i \trnorm{\bar \rho_i - \rho_i} \leq 2 \zeta$.  Therefore, for any $\delta > 0$, 
\begin{equation*}\begin{split}
\sum_{\substack{i : \\ \trnorm{\rho_i - \ketbra i i} \leq \varepsilon + \delta}} p_i
&\geq 
\sum_{\substack{i : \\ \trnorm{\rho_i - \ketbra i i} \leq \varepsilon + \delta \\ \trnorm{\bar \rho_i - \rho_i} \leq \delta}} p_i \\
&\geq
\sum_{\substack{i : \\ \trnorm{\bar \rho_i - \ketbra i i} \leq \varepsilon \\ \trnorm{\bar \rho_i - \rho_i} \leq \delta}} p_i \\
&\geq 
\sum_{\substack{i : \\ \trnorm{\bar \rho_i - \ketbra i i} \leq \varepsilon}} \bar p_i - \sum_{\substack{i : \\ \trnorm{\bar \rho_i - \rho_i} > \delta}} \bar p_i - \sum_i \abs{\bar p_i - p_i} \\
&\geq 1 - \varepsilon - \frac{2 \zeta}{\delta} - \zeta
 \enspace .
\end{split}\end{equation*}
Fixing $\delta = 2 \sqrt \zeta$, it follows from \lemref{t:blockscloseinexpectationimpliesmatricescloseandmore} and a triangle inequality that 
\begin{equation}\begin{split} \label{e:blockscloseinexpectationimpliesmatricescloseandmoretechnicalstepone}
\bigtrnorm{ \Tr_{BB'} \! \big( \Tr_L \EBdecj{\bar}{}(\bar \rho) - \EBdecj{\hat}{}(\rho) \big) } 
&= \bigtrnorm{ \Tr_{BB'} \! \big( \EBj{}(\bar \rho) - \EBdecj{\hat}{}(\rho) \big) } \\
&\leq 31 (\varepsilon + 2 \sqrt \zeta)^{1/3} + \zeta
 \enspace .
\end{split}\end{equation}

It remains to use the assumption that $\sum_{i, \ell : \trnorm{\bar \rho_{i \ell} - \ketbra i i} \leq \epsilon} \bar p_{i \ell} \geq 1 - \epsilon$ to determine an~$\varepsilon$ such that $\sum_{i : \trnorm{\bar \rho_i - \ketbra i i} \leq \varepsilon} \bar p_i \geq 1 - \varepsilon$.  Call an index~$i \in [d]$ ``good" if at least a $1 - \sqrt{\epsilon/2}$ fraction of the~$\ell$, under the distribution $\bar p_{\ell \vert i} = \bar p_{i \ell} / \bar p_i$, satisfy $\trnorm{\bar \rho_{i \ell} - \ketbra i i} \leq \epsilon$.  By assumption, $\sum_{\text{good $i$}} \bar p_i \geq 1 - \sqrt{2 \epsilon}$.  Using the expansion $\bar \rho_i = \sum_\ell \bar p_{\ell \vert i} \bar \rho_{i \ell}$ and a triangle inequality, for any good~$i$, $\trnorm{\bar \rho_i - \ketbra i i} \leq (1 - \sqrt{\epsilon/2}) \epsilon + \sqrt{\epsilon/2} \cdot 2 \leq \epsilon + \sqrt{2 \epsilon}$.  Thus $\varepsilon = \epsilon + \sqrt{2 \epsilon}$ works.  Substituting this choice into Eq.~\eqnref{e:blockscloseinexpectationimpliesmatricescloseandmoretechnicalstepone} and simplifying gives Eq.~\eqnref{e:blockscloseinexpectationimpliesmatricescloseandmoretechnical}.  
\end{proof}

\begin{theorem} \label{t:statetomographydishonestAlicesuperoperator}
With the same setup as \thmref{t:statetomographydishonestAlice}, introduce the following notation, all conditioned on~$K$ and the outcomes of the first $(K-1) m$ games.  

Let~$\rhoone$ be the provers' shared state at the beginning of the $K$th set.  Let $\XX: \H_\device \hookrightarrow (\C^2)^{\otimes m} \otimes \H_\device'$, for $\device \in \{A, B\}$, be the isometries promised by \thmref{t:sequentialstructureandobservedcorrelationsapplied} for determining a $\zeta$-ideal strategy for the $K$th set of $m$ CHSH games, where $\zeta = m^{-\alpha / (32 \kappaEPR)}$.  Assume that Bob's Hilbert space factors as $\H_B = \H_{B_1} \otimes \H_{B_2}$, and that the isometry $\XB$ factors as $\XB = \XBj{1} \otimes \XBj{2}$, with $\XBj{b} : \H_{B_b} \hookrightarrow (\C^2)^{\otimes m_b} \otimes \H_{B_b}'$, $m_1 + m_2 = m$ and $\H_B' = \H_{B_1}' \otimes \H_{B_2}'$.  Assume that $\sigma \in [m_1]^{q n}$ and that Bob's measurement super-operator for the state tomography protocol is supported only on $\H_{B_1}$.  
If the provers' strategy for the $K$th set of CHSH games is not $\zeta$-ideal, then set $\XA$, $\XBj{1}$ and $\XBj{2}$ arbitrarily.  

For a set $S \subseteq [n]$, let $\EBj{S} : \L(\H_B) \rightarrow \L( \C^{[2^q]^{\abs S}} \otimes \H_B )$ be Bob's measurement super-operator for the state tomography protocol in the $K$th set, that stores in the first register Bob's messages $O_j$ for~$j \in S$ and traces out his other messages.  Partition $(\C^2)^{\otimes m} \otimes \H_A'$ as $(\H_S \otimes \H_{\smash{\bar S}}) \otimes (\C^2)^{\otimes m_2}_{A_2} \otimes \H_A'$, where $\H_S$ consists of the qubits listed in~$\sigma$, and $\H_{\smash{\bar S}}$ consists of the remaining qubits, $[m_1] \smallsetminus \sigma$.  

Then the following soundness condition also holds: 

\begin{description}
\item[Soundness$'$:]
Assume that for both protocols, $\Pr[\text{Eve accepts}] \geq 1 - n^{-1/3}$.  Let $S \subset [n]$ be a uniformly random subset of size $s \leq n^{1/64}$.  Then there is a probability at least $1 - O(n^{-1/48})$ over~$S$, $K$ and the outcomes of the first $(K-1) m$ games that for some state $\rhoone' \in \L(\H_A' \otimes \H_{B_2}')$, the states 
\begin{equation} \label{e:statetomographydishonestAlicesuperoperatoractualstate}
\Tr_{\smash{\bar S} B_1} \XA \XBj{2} \EBj{S}(\rhoone) \XA{}^\dagger \XBj{2}{}^\dagger
\end{equation}
and 
\begin{equation} \label{e:statetomographydishonestAlicesuperoperatorgoalstate}
\frac{1}{2^{qs}} \sum_{o \in [2^q]^S} \ketbra o o \otimes \Big( \bigotimes_{j \in S} \pi^{o_j} \Big)_S \otimes (\ketbra{\psi^*}{\psi^*})^{\otimes m_2}_{A_2 B_2} \otimes \rhoone'
\end{equation}
are within trace distance $O(n^{-1/384})$ of each other.  
Here, the partial trace that reduces to $\big( \H_S \otimes (\C^2)^{\otimes m_2} \otimes \H_A' \big) \otimes \big( (\C^2)^{\otimes m_2} \otimes \H_{B_2}' \big)$ also implicitly orders the qubits in~$S$ as $\sigma(j_1, 1), \ldots, \sigma(j_1, q)$ through $\sigma(j_s, 1), \ldots, \sigma(j_s, q)$, where $S = \{ j_1, \ldots, j_s \}$.  
\end{description}
\end{theorem}

Notice that the first terms in Eq.~\eqnref{e:statetomographydishonestAlicesuperoperatorgoalstate} are just the $s$-fold tensor product of $\frac{1}{2^q} \sum_{o \in [2^q]} \ketbra o o \otimes \pi^o$.  This is exactly the state generated by an ideal state tomography strategy, reduced to the qubits for~$S$.  Note also that the EPR states $\ket{\psi^*}_{A_2 B_2}^{\otimes m_2}$ shared between Alice and $B_2$ are approximately undisturbed.  In our application of \thmref{t:statetomographydishonestAlicesuperoperator}, the factorizations of $\H_B$ and $\XB$ will be ensured by \propref{t:sequentialCHSHgamesmultipleBobs}.  

\begin{proof}[Proof of \thmref{t:statetomographydishonestAlicesuperoperator}]
The proof boils down to rearranging equations so that we can apply \corref{t:blockscloseinexpectationimpliesmatricescloseandmoretechnical} of \lemref{t:blockscloseinexpectationimpliesmatricescloseandmore}.  By \thmref{t:sequentialstructureandobservedcorrelationsapplied} and the soundness condition of \thmref{t:statetomographydishonestAlice}, with probability at least $1 - 2 n^{-1/24}$ over $K$ and the first $(K-1) m$ games, it holds that: 
\begin{enumerate}
\item 
The provers' strategy for the $K$th set of CHSH games is $\zeta$-ideal.  In particular, choosing a basis so that the isometries $\XA = \identity$ and $\XB = \identity$, there exists a state~$\ket{\psi'} \in \H_A' \otimes \H_B' \otimes \H_C$ such that, letting $\ket{\psidecone{\hat}} = \ket{\psi^*}^{\otimes m} \otimes \ket{\psi'}$ and $\rhodecone{\hat} = \ketbra{\psidecone{\hat}}{\psidecone{\hat}}$, $\trnorm{\rhoone - \rhodecone{\hat}} \leq \zeta$.  
\item
There is at least a probability $1 - 2 n^{-1/24}$ over the conditional distribution for Bob's messages $O_{1,n} = (O_1, \ldots, O_n)$ that 
\begin{equation*}
\bigabs{ \big\{ j \in [n] : \Tr (\rho_{\sigma, j} \pi^{O_j}) \geq 1 - \delta \big\} } \geq (1 - \delta)n
 \enspace ,
\end{equation*}
where $\delta = O(n^{-1/16})$.  
\end{enumerate}
Fix~$K$ and transcripts for the first $(K-1) m$ games satisfying these properties.  

Then in particular, with probability at least $1 - 2 n^{-1/24} - O(s \delta) = 1 - O(n^{-1/24})$, $\Tr (\rho_{\sigma, j} \pi^{O_j}) \geq 1 - \delta$ for all $j \in S$.  Let $\rho(O_{1,n})$ be the state conditioned on Bob's messages $O_{1,n}$ and let $\rho_S(O_{1,n}) = \Tr_{\smash{\bar S}BC} \rho(O_{1,n})$.  By \lemref{t:purepartsdeterminethewhole}, there is at least a $1 - O(n^{-1/48})$ probability over the choice of~$S$ that with at least a $1 - O(n^{-1/48})$ probability over Bob's messages $O_{1,n}$, 
\begin{equation*}
\bigtrnorm{ \rho_{S}(O_{1,n}) - \bigotimes_{j \in S} \pi^{O_j} } \leq O(s \sqrt \delta) = O(n^{-1/64})
 \enspace .
\end{equation*}

Now apply \corref{t:blockscloseinexpectationimpliesmatricescloseandmoretechnical}.  To translate into the notation of the corollary, let $i$ represent Bob's messages $O_j$ for $j \in S$, $\ell$ the other messages, $d = 2^{q s}$ and $\epsilon = O(n^{-1/64})$.  The registers~$A$ and~$A'$ correspond to $\H_S$ and $\H_{\smash{\bar S}} \otimes \big( (\C^2)^{\otimes m_2} \otimes \H_A' \big) \otimes \big( (\C^2)^{\otimes m_2} \otimes \H_{B_2}' \big)$, respectively, while $B$ and $B'$ correspond to $\H_{B_1}$'s Hilbert space components~$\H_S$ and~$\H_{\smash{\bar S}} \otimes \H_{B_1}'$, on which Bob's super-operator is allowed to act.  Thus $\bar \rho_{i \ell} = \rho_{S}(O_{1,n})$ and $\ketbra i i = \bigotimes_{j \in S} \pi^{O_j}$, satisfying the assumption $\sum_{i, \ell : \trnorm{\bar \rho_{i \ell} - \ketbra i i} \leq \epsilon} \bar p_{i \ell} \geq 1 - \epsilon$.  Observe that up to local unitaries the state $\ket{\psidecone{\hat}}$ is of the correct form; if $\pi^o = \ketbra{\pi^o}{\pi^o}$ for a unit vector $\ket{\pi^o} \in \C^{[2^q]}$ and $\ket{\bar \pi^o}$ is the entry-wise complex conjugate of $\ket{\pi^o}$, then 
\begin{equation*}\begin{split}
\ket{\psidecone{\hat}} 
&= \frac{1}{\sqrt{2^{q n}}} \sum_{x \in \{0,1\}^{q n}} \ket{x, x} \otimes \big( \ket{\psi^*}{}^{\otimes (m - q n)} \otimes \ket{\psi'} \big) \\
&= \frac{1}{\sqrt{2^{q n}}} \sum_{o \in [2^q]^n} \Big( \bigotimes_{j \in [n]} \ket{\pi^{o_j}} \otimes \bigotimes_{j \in [n]} \ket{\bar \pi^{o_j}} \Big) \otimes \big( \ket{\psi^*}{}^{\otimes (m - q n)} \otimes \ket{\psi'} \big)
 \enspace .
\end{split}\end{equation*}
(Since the states $\pi^o$ are $X\!Z$-determined, we may choose a phase so that in fact $\ket{\pi^o} = \ket{\bar \pi^o}$.)  

The states in Eqs.~\eqnref{e:statetomographydishonestAlicesuperoperatoractualstate} and~\eqnref{e:statetomographydishonestAlicesuperoperatorgoalstate} are the same as the two terms in the trace norm in the conclusion of \corref{t:blockscloseinexpectationimpliesmatricescloseandmoretechnical}, Eq.~\eqnref{e:blockscloseinexpectationimpliesmatricescloseandmoretechnical}, with $\rhoone' = \Tr_{B_1' C} \ketbra{\psi'}{\psi'}$, except with Alice's space~$\H_{\smash{\bar S}}$ additionally traced out.  In the statement of the theorem, we have chosen to trace out the $\bar S$ register since the ideal reduced state on it is maximally mixed and therefore not useful for our applications.  
\end{proof}

\subsection{Process tomography protocol} \label{s:processtomographyanalysis}

The state tomography protocol of \secref{s:statetomographyanalysis} is a major step in allowing the classical verifier Eve to certify that the quantum provers Alice and Bob indeed apply a quantum circuit of Eve's choosing.  However, it is not sufficient.  State tomography allows Eve to certify that, before Alice begins her measurements, Bob has been nearly honest in remotely preparing a set of $X\!Z$-determined states on Alice's halves of the shared EPR states.  By running the protocol with the provers' roles switched, and letting Alice go first, Eve could similarly certify that Alice has remotely prepared states on Bob's halves of the EPR states.  However, state tomography does not let Eve certify that, when Bob goes first and collapses the EPR states, Alice's measurement operators on the prepared states have the correct effect.  

To link together the provers' actions, we need a stronger guarantee on their measurements.  In this section, we will present and analyze a protocol for process tomography on Alice's measurements.  State tomography lets Eve certify that Alice's measurements have nearly the correct effect on Bob's qubits, when Alice goes first.  In contrast, process tomography will let Eve certify that Alice has applied nearly the correct measurement super-operators to {her} halves of the shared EPR states, regardless of which prover goes first.  Similar to our analysis of state tomography, our analysis in this section will initially assume that Bob's strategy is ideal.  

It is not clear that state tomography, as we have presented it, implies process tomography.  The basic problem is similar to an issue that arose in our analysis of sequential CHSH games in \secref{s:sequentialstructuredCHSHgameshavetensorproductstructure}.  Alice's strategy in early state tomography rounds might be sufficiently dishonest as to allow her in later rounds to apply completely dishonest operators.  For example, if Alice manages in early rounds to swap her halves of EPR states $q n-1$ and~$q n$, and if she conjugates her later measurement operators by this swap, then her measurement operators will be far from ideal and yet will have nearly the correct effect on Bob's qubits when Alice goes first.  This situation can certainly arise because our state tomography protocol only certifies a prover's actions in most rounds.  A prover can cheat wildly in a few rounds and be confident that her actions will be indistinguishable from statistical noise.  

Potentially, we could weaken the definition of process tomography to sidestep this problem.  After all, Eve does not care if Alice moves around her halves of the EPR states, so long as Alice and Bob together apply the correct circuit.  Instead, though, the process tomography protocol we introduce will allow Eve to certify that Alice has applied nearly the correct measurement in every round.  A key idea to make this work is to restrict consideration to Pauli stabilizer measurements~\cite{Gottesman97thesis}.  For Pauli operators in the stabilizer of a state, the measurement outcome is deterministic.  Therefore Eve does not need to average any statistics.  If Alice reports the wrong stabilizer syndrome in even a single round, then Eve will reject.  Our analysis of the protocol will be similar to some of the arguments in \secref{s:sequentialstructuredCHSHgameshavetensorproductstructure}.  We will argue that Alice's earlier measurements cannot usually overly disturb the qubits intended for use in later measurements by pulling Alice's measurement super-operators over onto Bob's halves of the EPR states.  

\smallskip

For our applications, it suffices to apply process tomography to certify that Alice correctly applies two-qubit Bell-basis measurements, i.e., measurements of the stabilizer $X \tensor X$ and $Z \tensor Z$.  However, we have generalized our analysis beyond this case, to cover arbitrary $r$-qubit measurements of tensor products of $X$ and~$Z$ operators: 

\begin{definition} \label{t:XZstabilizerset}
An $r$-qubit \emph{$X\!Z$ stabilizer set} is a subset of $\{I, X, Z\}^{\otimes r}$ that consists of pairwise commuting Pauli operators that are multiplicatively independent.  
\end{definition}

For example, $\stabilizerset = \{X \otimes X, Z \otimes Z\}$ fits the definition for $r = 2$, as does $\stabilizerset = \{ X \otimes Z \}$.  The independence condition implies that $\abs \stabilizerset \leq r$.  

\begin{definition} \label{t:permutedqubitprocesstomographyprotocoldef}
A \emph{process tomography protocol} is parameterized by natural numbers~$r$, $n$ and~$m$, with $r n \leq m$, an $r$-qubit $X\!Z$ stabilizer set~$\stabilizerset$ and a list~$\sigma$ of $r n$ distinct elements of~$[m]$.  The protocol involves a verifier, Eve, and two provers, Alice and Bob.  Alice and Bob share a state in $\H_A \otimes \H_B$.  The protocol proceeds as follows: 
\begin{itemize}
\item
Eve has one round of interaction with Alice.  First, Eve sends Alice~$\sigma$.  Alice returns to Eve a string $O_1, \ldots, O_n$, with the~$O_j \in \{0,1\}^\stabilizerset$ determined by successive $2^{\abs \stabilizerset}$-outcome projective measurements on~$\H_A$.  
\item
Eve's interaction with Bob has~$m$ rounds.  In round~$j$, Eve sends Bob an independent, uniformly random bit, $B_j$.  Bob applies a two-outcome projective measurement on~$\H_B$ to determine his reply $Y_j \in \{0,1\}$.  
\end{itemize}
No other communication is allowed.  

The initial state and Bob's strategy are \emph{ideal} if, up to local isometries, the initial state consists of~$m$ EPR states, possibly in tensor product with an additional shared state, and if in round~$j$ of his interaction with Eve, Bob returns the result of measuring his half of the~$j$th EPR state in either the $X$ eigenbasis, i.e., the $\{\ket +, \ket -\}$ basis, if $B_j = 0$, or the $Z$ eigenbasis $\{\ket 0, \ket 1\}$ if $B_j = 1$.  

Alice and Bob's joint strategy is \emph{ideal} if the initial state and Bob's strategy are ideal and, additionally, Alice acts by returning the results of measuring each successive block of~$r$ qubits listed in~$\sigma$ according to the operators in~$\stabilizerset$.  
\end{definition}

If the provers' initial shared state is ideal, then by applying local isometries we may take $\H_\device = (\C^2)^{\otimes m} \otimes \H_\device'$, for $\device \in \{A, B\}$.  The initial state is then $\ket{\psione} = \ket{\psi^*}{}^{\otimes m} \otimes \ket{\psione'}$, for some $\ket{\psione'} \in \H_A' \otimes \H_B' \otimes \H_C$, where $\H_C$ is an external space for purifying $\ket{\psione}$.  Let~$\rhodecone{\hat} = \ketbra{\psione}{\psione}$.

For $j \in [n]$ and $i \in [r]$, let $\sigma(j, i) = \sigma_{(j-1)r + i} \in [m]$.  The bit $(O_j)_P$ of Alice's response to Eve denotes the outcome of allegedly measuring the operator $P \in \stabilizerset$ on qubits $\sigma(j, 1), \ldots, \sigma(j, r)$.  Without loss of generality, we will assume that Alice's responses are determined by a complete set of $2^{n \abs \stabilizerset}$ orthogonal projections.  In particular, the bit $(O_j)_P$ is determined by measuring a reflection operator, and these operators commute for different $j \in [n]$ and $P \in \stabilizerset$.  Formally, Alice's actual and ideal measurement super-operators are defined by: 

\def\RB {R^B}
\def\RBdeca #1#2{#1{R}^B_{#2}}
\def\RBdec #1{#1{R}^B}
\def\PAdecjh #1#2#3{#1{P}^A_{#2}({#3})}
\def\PBdecjh #1#2#3{#1{P}^B_{#2}({#3})}
\def\PBdech #1#2{#1{P}^B({#2})}
\def\AmeasBdecj #1#2{#1{\F}^A_{#2}}
\def\GAj #1{\G^A_{#1}}
\def\GBj #1{\G^B_{#1}}
\def\GAdecj #1#2{#1{\G}^A_{#2}}
\def\GBdecj #1#2{#1{\G}^B_{#2}}

\begin{definition} \label{t:processtomographynotation}
For $j \in [n]$ and $P \in \stabilizerset$, let~$\RAj{j, P}$ be the reflection that Alice measures to determine bit~$P$ of her response~$O_j$.  For~$o \in \{0,1\}$, let $\PAjh{j, P}{o} = \frac12 (\identity + (-1)^o \RAj{j, P})$.  For~$o \in \{0,1\}^\stabilizerset$, let $\PAjh{j}{o} = \prod_{P \in \stabilizerset} \PAjh{j, P}{o_P}$.  Define a super-operator~$\GAj{j}$ by 
\begin{equation}
\GAj{j}(\rho) = \sum_{o_j \in \{0,1\}^\stabilizerset} \ketbra{o_j}{o_j} \otimes \PAjh{j}{o_j} \rho \PAjh{j}{o_j}
 \enspace .
\end{equation}
This measurement super-operator implements Alice's strategy for determining the response~$O_j$.  

For $j \in [n]$ and $P \in \stabilizerset$, let $\RAdecj{\hat}{j, P}$ be the Pauli operator~$P$ applied to Alice's qubits $\sigma(j, 1)$ through $\sigma(j, r)$.  Define the projections $\PAdecjh{\hat}{j, P}{o}$ and $\PAdecjh{\hat}{j}{o}$, and Alice's ideal measurement super-operator~$\GAdecj{\hat}{j}$ as above, but using the reflections $\RAdecj{\hat}{j, P}$ instead of~$\RAj{j, P}$.  

Let $\GAj{1, n} = \GAj{n} \cdots \GAj{2} \GAj{1}$ and $\GAdecj{\hat}{1, n} = \GAdecj{\hat}{n} \cdots \GAdecj{\hat}{2} \GAdecj{\hat}{1}$.  
\end{definition}

On the other hand, for query~$b \in \{0,1\}$, in the ideal strategy Bob measures the Pauli reflection $\RBdeca{\hat}{b} = \delta_{b,0} X + \delta_{b,1} Z \in \{X,Z\}$.  For $b \in \{0,1\}^r$, let $\RBdeca{\hat}{b} = \bigotimes_{i \in [r]} \RBdeca{\hat}{b_i}$.  
For Pauli operators $P$ and~$Q$, say that $Q \in P$ if in every coordinate either $P$ is the identity or~$P$ and~$Q$ agree (as in the proof of \lemref{t:idealstatetomographyexpectations}).  Then Bob's measurements of the operators~$\RBdeca{\hat}{b_j}$ determine a $\pm 1$ syndrome for any Pauli operator~$P$ such that $\bigotimes_j \RBdeca{\hat}{b_j} \in P$.  For example, if a state $\ket \psi$ satisfies $X \otimes I \otimes I \ket \psi = - I \otimes Z \otimes I \ket \psi = I \otimes I \otimes X \ket \psi = \ket \psi$, then $X \otimes Z \otimes I \ket \psi = - \ket \psi$.  

\begin{theorem} \label{t:processtomography}
Consider a process tomography protocol with parameters~$r$, $n$, $m$, $\stabilizerset$ and~$\sigma$.  Assume that the initial state~$\rhodecone{\hat}$ and Bob's strategy are ideal.  Say that Eve accepts at the end of the protocol if for all $j \in [n]$ and all $P \in \stabilizerset$ with syndrome determined by Bob's measurements of his qubits $\sigma(j, 1), \ldots, \sigma(j, r)$, the syndrome is~$(-1)^{(O_j)_P}$.  

This protocol satisfies the following completeness and soundness conditions: 
\begin{description}
\item[Completeness:]
If the provers' joint strategy is ideal, then Eve accepts with probability one.  
\item[Soundness:]
If Eve accepts with probability at least $1 - \epsilon$, then 
\begin{equation}
\bigtrnorm{ \GAj{1,n}(\rhodecone{\hat}) - \GAdecj{\hat}{1,n}(\rhodecone{\hat}) } \leq 10 r 2^{r/2} n \sqrt \epsilon
 \enspace .
\end{equation}
\end{description}
\end{theorem}

\begin{proof}
As the completeness claim is immediate, we need only to argue soundness.  The proof will work by pulling Alice's measurement super-operators across to ideal measurement super-operators on Bob's qubits, and then back.  This proof strategy should be familiar from \secref{s:sequentialstructuredCHSHgameshavetensorproductstructure}.  However, the argument here is considerably simpler because we know by assumption that the initial state and Bob's strategy are ideal.  

Let us begin by defining Alice's ideal super-operators acting on Bob's qubits, similar to \defref{t:canpullovereverythingdef}: 

\begin{definition}
For a fixed list~$\sigma$, for $j \in [n]$ and $P \in \stabilizerset$, let $\RBdecj{\hat}{j, P}$ be the Pauli operator~$P$ applied to Bob's qubits $\sigma(j, 1)$ through $\sigma(j, r)$.  Define the projections $\PBdecjh{\hat}{j, P}{o}$ and $\PBdecjh{\hat}{j}{o}$, and the super-operator~$\AmeasBdecj{\hat}{j}$ as in \defref{t:processtomographynotation} for $\EAdecj{\hat}{j}$, but using the reflections $\RBdecj{\hat}{j, P}$ instead of~$\RAdecj{\hat}{j, P}$.  Let $\AmeasBdecj{\hat}{1,n} = \AmeasBdecj{\hat}{n} \cdots \AmeasBdecj{\hat}{1}$.  
\end{definition}

Observe that since a measurement on one half of an EPR state can be made equivalently on the other half, $\GAdecj{\hat}{1,n}(\rhodecone{\hat}) = \AmeasBdecj{\hat}{1,n}(\rhodecone{\hat})$.  

Since Eve accepts with probability at least~$1 - \epsilon$, for every~$j \in [n]$, there is at least a $1 - \epsilon$ probability that for all~$P \in \stabilizerset$ either the syndrome of~$P$ cannot be determined from Bob's measurements or the syndrome is $(-1)^{(O_j)_P}$.  The probability that the syndrome of~$P$ can be determined is $1/2^{\abs P} \geq 1/2^r$, where $\abs P$ is the number of non-identity components of~$P$.  Therefore, for all $j \in [n]$ and $P \in \stabilizerset$, there is at most a $2^r \epsilon$ probability that the syndrome of~$P$ disagrees with $(O_j)_P$, given that it can be determined.  Since Alice and Bob's different measurements all commute, this holds regardless of the order of the measurements.  In particular, it holds when measuring the initial state~$\rhoone$.  Expressing this condition algebraically, we have 
\begin{align*}
\sum_{o \in \{0,1\}}
\Tr\!\big(
\big[ \PAjh{j,P}{o} \otimes \PBdecjh{\hat}{j,P}{o} \big]
\rhodecone{\hat} \big)
&\geq 1- 2^r \epsilon
 \enspace ,
\intertext{which simplifies to} 
\Tr\!\big( (\RAj{j,P} \otimes \RBdecj{\hat}{j,P}) \rhodecone{\hat} \big) &\geq 1 - 2 \cdot 2^r \epsilon
 \enspace .
\end{align*}

Next, we apply the following claim, a corollary of the Gentle Measurement Lemma: 

\begin{claim} \label{t:pulloverreflections}
Let $R \in \L(\H_A)$ and $R' \in \L(\H_B)$ be two reflections, and $\rho \in \L(\H_A \otimes \H_B)$ a quantum state.  Let $\delta = \frac12 \big(1 - \Tr (R \otimes R') \rho\big) \geq 0$.  Then 
\begin{equation}
\bigtrnorm{
\tfrac12(\identity + R)_A \, \rho \, \tfrac12(\identity + R)_A - \tfrac12(\identity + R')_B \, \rho \, \tfrac12(\identity + R')_B
} \leq 2 \sqrt \delta + 3 \delta
 \enspace .
\end{equation}
\end{claim}

\begin{proof}
\def \Pim {\overline \Pi}
Let $\Pi = \frac12 (\identity + R \otimes R')$, a projection, and let $\Pim = \identity - \Pi$.  By assumption, $\Tr \Pi \rho = 1 - \delta$, so also $\trnorm{\Pim \rho \Pim} = \Tr \Pim \rho = \delta$.  By the Gentle Measurement Lemma, \lemref{t:gentlemeasurement}, 
\begin{equation*}
\trnorm{\rho - \Pi \rho \Pi} = \trnorm{\Pi \rho \Pim + \Pim \rho \Pi + \Pim \rho \Pim} \leq 2 \sqrt \delta
 \enspace .
\end{equation*}
Together with two triangle inequalities, this yields 
\begin{equation*}\begin{split}
\trnorm{\rho - \Pi \rho}
= \trnorm{\Pim \rho}
&\leq \trnorm{\Pim \rho \Pi} + \trnorm{\Pim \rho \Pim} = \tfrac12 \trnorm{\Pim \rho \Pi + \Pi \rho \Pim} + \trnorm{\Pim \rho \Pim} \\
&\leq \tfrac12 \trnorm{\rho - \Pi \rho \Pi} + \tfrac32 \trnorm{\Pim \rho \Pim} \\
&\leq \sqrt \delta + \tfrac32 \delta
 \enspace .
\end{split}\end{equation*}
In particular, $\trnorm{\rho - (R \otimes R') \rho} = \trnorm{R_A \rho - R_B' \rho} \leq 2 \sqrt \delta + 3 \delta$.  The claim follows by several more triangle inequalities.  
\end{proof}

By \claimref{t:pulloverreflections} with $\delta = 2^r \epsilon$, Alice's super-operator determining her response bit $(O_j)_P$ can be pulled over to Bob's side: 
\begin{equation*}
\Bigtrnorm{\!
\sum_{o \in \{0,1\}} \!\! \ketbra o o \otimes \PAjh{j, P}{o} \rhodecone{\hat} \PAjh{j, P}{o}
- \!\!\!
\sum_{o \in \{0,1\}} \!\! \ketbra o o \otimes \tfrac12(\identity + (-1)^o P_{\sigma,j}) \rhodecone{\hat} \tfrac12(\identity + (-1)^o P_{\sigma,j})
}
\leq 2 (2 \sqrt \delta + 3 \delta)
.
\end{equation*}
Therefore, 
\begin{equation*}
\bigtrnorm{
\GAj{j}(\rhodecone{\hat}) - \AmeasBdecj{\hat}{j}(\rhodecone{\hat})
} \leq 2 \abs \stabilizerset (2 \sqrt \delta + 3 \delta)
\leq 10 r 2^{r/2} \sqrt \epsilon
 \enspace .
\end{equation*}
Since the different super-operators $\GAj{j}$ and $\AmeasBdecj{\hat}{j}$ all commute, this implies that, as claimed, 
\begin{equation*}
\bigtrnorm{\GAj{1,n}(\rhodecone{\hat}) - \GAdecj{\hat}{1,n}(\rhodecone{\hat})} 
= \bigtrnorm{\GAj{1,n}(\rhodecone{\hat}) - \AmeasBdecj{\hat}{1,n}(\rhodecone{\hat})} 
\leq n \cdot 10 r 2^{r/2} \sqrt \epsilon
 \enspace .  \qedhere
\end{equation*}
\end{proof}

As for the analysis of state tomography, from \thmref{t:statetomography} to \thmref{t:statetomographydishonestAlice}, the next step is to combine the process tomography protocol with sequential CHSH games, in order to handle the case that Bob plays dishonestly.  

\begin{theorem} \label{t:processtomographydishonestBob}
Let~$\stabilizerset$ be a fixed $r$-qubit $X\!Z$ stabilizer set.  For a sufficiently large constant~$\alpha$ and for sufficiently large~$n$, let $m = m(n) \geq r n$ and $N \geq m^{\alpha - 1}$.  Let $\mu_B$ be a distribution over lists of~$r n$ distinct elements of~$[m]$.  
Consider a combination of the following two protocols between the verifier, Eve, and the provers, Alice and Bob: 
\begin{enumerate}
\item
CHSH games: Eve referees $N m$ sequential CHSH games.  She accepts if 
\begin{equation}
\bigabs{\{ j \in [N m] : A_j B_j = X_j \oplus Y_j \}} \geq \cos^2(\pi/8) N m - \tfrac{1}{2 \sqrt 2} \sqrt{N m \log(N m)}
 \enspace .
\end{equation}
\item
Process tomography: Eve chooses $K \in [N]$ uniformly at random.  She referees $(K-1) m$ CHSH games.  For the $K$th set, she draws~$\sigma$ from $\mu_B$ and referees a process tomography protocol with parameters $r$, $n$, $m$, $\stabilizerset$ and~$\sigma$.  She accepts if for all~$j \in [n]$ Alice's reported syndromes for~$P \in \stabilizerset$ agree with the syndromes that can be determined by Bob's measurements.  
\end{enumerate}

Let $\phasegate = \exp(-i \frac\pi8 Y) = \smatrx{\cos\frac\pi8&-\sin\frac\pi8\\\sin\frac\pi8&\cos\frac\pi8}$ and let $\U$ act on $\L((\C^2)^{\otimes m})$ by $\U(\rho) = \phasegate^{\otimes m} \rho \phasegate^\dagger{}^{\otimes m}$.  The combined protocol satisfies the following completeness and soundness conditions: 
\begin{description}
\item[Completeness:]
Assume that Alice and Bob share $N m$ shared EPR states, that they use in sequence to play the CHSH games according to the ideal strategy of \tabref{f:optimalCHSHstrategy}.  Assume that Alice applies $\phasegate^{\otimes m}$ to her halves of the $K$th set of $m$ EPR states and then uses the qubits to play according to the ideal process tomography strategy.  Then in both protocols, 
\begin{equation}
\Pr[\text{Eve accepts}] \geq 1 - O(n^{-\alpha/4}) 
 \enspace .
\end{equation}
\item[Soundness:]
Assume that for both protocols, $\Pr[\text{Eve accepts}] \geq 1 - n^{-\alpha/8}$.  Let $\rhoone$ be the state in the second protocol after $(K-1) m$ CHSH games, at the beginning of the process tomography sub-protocol.  Let $\A : \L(\H_A) \rightarrow \L\big( (\C^{[m]})^{\otimes (r n)} \otimes (\C^2)^{\otimes (r n)} \otimes \H_A \big)$ be the super-operator implementing Eve's interactions with Alice in the process tomography sub-protocol; it begins by appending the state $\sum_\sigma \mu_B(\sigma) \ketbra \sigma \sigma \in \L\big( (\C^{[m]})^{\otimes (r n)} \big)$ and then applies Alice's process tomography measurement super-operator $\GAj{1,n}$ controlled on~$\sigma$.  Let the ideal super-operator for Eve's interactions with Alice be $\hat \A$, acting on $\L((\C^2)^{\otimes m} \otimes \H_A')$; like $\A$, it appends $\sum_\sigma \mu_B(\sigma) \ketbra \sigma \sigma$ and then applies $\U_A^{-1} \GAdecj{\hat}{1,n} \U_A$.  Both $\rhoone$ and~$\A$ depend on the first $(K-1) m$ CHSH games.  

Then with probability at least $1 - O(n^{-\alpha/16})$ over $K$ and the first $(K-1) m$ CHSH games, the provers' strategy for the $K$th set is $m^{-\alpha/(32 \kappaEPR)}$-ideal with respect to the isometries given by \thmref{t:sequentialstructureandobservedcorrelationsapplied}, $\XX : \H_\device \hookrightarrow (\C^2)^{\otimes m} \otimes \H_\device'$, for $\device \in \{A, B\}$; and furthermore, 
\begin{equation}
\bigtrnorm{ \XA \A(\rhoone) \XA{}^\dagger - \hat \A (\XA \rhoone \XA{}^\dagger) } 
= O(n^{1 - \alpha / (64 \kappaEPR)})
 \enspace .
\end{equation}
\end{description}
\end{theorem}

To prove this theorem, we first study the case of combining a process tomography protocol with a set of sequential CHSH games for which the provers' strategy is $\zeta$-ideal by assumption.  By having the provers play multiple sets of CHSH games and interrupting Alice before a random set, we can substitute into \thmref{t:sequentialstructureandobservedcorrelationsapplied} to justify this assumption.  

\begin{theorem} \label{t:processtomographystructuredBob}
Consider a protocol in which the verifier Eve can choose to run one of two sub-protocols: either~$m$ sequential CHSH games, or a process tomography protocol with parameters $r$, $n$, $m$, $\stabilizerset$ and~$\sigma$.  In the latter case, Eve accepts if for all $j \in [n]$ Alice's reported syndromes for $P \in \stabilizerset$ agree with the syndromes that can be determined by Bob's measurements on the indicated qubits.  

Assume that the provers' strategy for the sequential CHSH games is $\zeta$-ideal with respect to isometries $\XA$ and $\XB$, and assume that Eve accepts in the process tomography sub-protocol with probability at least $1 - \epsilon$.  Let $\phasegate = \exp(-i \frac\pi8 Y)$ and let $\U$ be the super-operator that applies~$\phasegate$ transversally.  Then 
\begin{equation}
\bigtrnorm{ \XA \GAj{}(\rhoone) \XA{}^\dagger - \U_A^{-1} \GAdecj{\hat}{} \U_A (\XA \rhoone \XA{}^\dagger) } 
\leq 10 r 2^{r/2} n \sqrt{\epsilon + \zeta} + 2 \zeta
 \enspace ,
\end{equation}
where $\rhoone$ is the provers' initial state, and $\GAj{}$ and $\GAdecj{\hat}{}$ are Alice's actual and ideal  measurement super-operators for the process tomography protocol, depending on~$\sigma$.  
\end{theorem}

\begin{proof}
By \defref{t:epsilonideal}, letting $\XAB(\rho) = (\XA \otimes \XB) \rho (\XA \otimes \XB)^\dagger$, there exists a state $\rhodecone{\hat} = (\ketbra{\psi^*}{\psi^*})^{\otimes m} \otimes \rhoone'$ such that $\trnorm{\XAB(\rhoone) - \rhodecone{\hat}} \leq \zeta$ and $\trnorm{\XAB \GBj{}(\rhoone) - \EBdecj{\hat}{1,m}(\rhodecone{\hat})} \leq 2 \zeta$, where $\GBj{}$ is the complete super-operator implementing Eve's interactions with Bob and $\EBdecj{\hat}{1,m}$ is Bob's ideal super-operator for~$m$ CHSH games.  Note that Bob's view in the process tomography protocol is the same as in the sequential CHSH games, so he follows the same strategy in both cases.  From \tabref{f:optimalCHSHstrategy}, Bob's ideal strategy for each CHSH game is based on measuring his half of an EPR state~$\ket{\psi^*}$ with one of the reflections $\RBa{b=0} = \frac{1}{\sqrt 2} \smatrx{1&1\\1&-1}$ or $\RBa{b=1} = \frac{1}{\sqrt 2} \smatrx{1&-1\\-1&-1}$.  Since $\RBa{0} = \phasegate^\dagger X \phasegate$ and $\RBa{1} = \phasegate^\dagger Z \phasegate$, Bob's ideal CHSH game strategy is equivalent to his ideal process tomography strategy up to a change of basis by~$\phasegate$.  That is, Bob's ideal measurement super-operator for process tomography is given by $\GBdecj{\hat}{} = \U_B \EBdecj{\hat}{1,m} \U_B^{-1}$.  Since $(\phasegate \otimes \phasegate) \ket{\psi^*} = \ket{\psi^*}$ and thus $\U_A \U_B (\rhodecone{\hat}) = \rhodecone{\hat}$, this implies that $\trnorm{\U_A \U_B \XAB \GBj{}(\rhoone) - \GBdecj{\hat}{} (\rhodecone{\hat})} \leq 2 \zeta$.  

Embed $\H_\device$ into $(\C^2)^{\otimes m} \otimes \H_\device'$, extend the prover's measurements, and choose a basis so $\XX = \identity$.  Let $\rhodecone{\tilde} = \U_A \U_B \rhoone$.  Then $\U_A \U_B \XAB \GBj{} (\rhoone) = \U_B \GBj{} \U_B^{-1}(\rhodecone{\tilde})$, giving 
\begin{align*}
\trnorm{\rhodecone{\tilde} - \rhodecone{\hat}} &\leq \zeta \\
\trnorm{\U_B \GBj{} \U_B^{-1} (\rhodecone{\tilde}) - \GBdecj{\hat}{}(\rhodecone{\hat})} &\leq 2 \zeta
 \enspace .
\end{align*}

Since Eve's acceptance predicate involves only the transcript registers and not the provers' internal state, it accepts $\GAj{} \GBj{}(\rhoone)$ and $\U_A \U_B \GAj{} \GBj{}(\rhoone) = (\U_A \GAj{} \U_A^{-1}) (\U_B \GBj{} \U_B^{-1}) (\rhodecone{\tilde})$ with the same probability, at least $1 - \epsilon$.  Therefore, by Eq.~\eqnref{e:holevohelstromheart} the predicate accepts $(\U_A \GAj{1,n} \U_A^{-1}) \GBdecj{\hat}{}(\rhodecone{\hat})$ with probability at least $1 - \epsilon - \frac12 \cdot 2 \zeta$.  

Since $\GBdecj{\hat}{}$ and~$\rhodecone{\hat}$ are ideal, \thmref{t:processtomography} applies.  We obtain 
\begin{align*}
\bigtrnorm{ (\U_A \GAj{} \U_A^{-1})(\rhodecone{\hat}) - \GAdecj{\hat}{}(\rhodecone{\hat}) }
&\leq 10 r 2^{r/2} n \sqrt{\epsilon + \zeta}
 \enspace .
\intertext{
Since $\U_A \U_B (\rhodecone{\hat}) = \rhodecone{\hat}$, $\bigtrnorm{ (\U_A \GAj{} \U_A^{-1})(\rhodecone{\hat}) - \GAdecj{\hat}{}(\rhodecone{\hat}) } = \bigtrnorm{ \GAj{}(\rhodecone{\hat}) - (\U_A^{-1} \GAdecj{\hat}{} \U_A)(\rhodecone{\hat}) }$, which implies by a triangle inequality that} 
\bigtrnorm{ \GAj{}(\rhoone) - (\U_A^{-1} \GAdecj{\hat}{} \U_A)(\rhoone) }
&\leq 10 r 2^{r/2} n \sqrt{\epsilon + \zeta} + 2 \zeta
 \enspace .
\end{align*}
Up to reinserting the isometries $\XA$, this is our objective.  
\end{proof}

\begin{proof}[Proof of \thmref{t:processtomographydishonestBob}]
We first argue completeness.  By \thmref{t:sequentialstructureandobservedcorrelationsapplied}, if Alice and Bob play the sequential CHSH games using an ideal strategy, then Eve accepts with probability at least $1 - m^{-\alpha/4}$.  Recall that Bob's ideal CHSH game strategy is equivalent to his ideal process tomography strategy up to a change of basis by~$\phasegate$.  If Alice makes the same basis change, then the effect is cancelled out, since $(\phasegate \otimes \phasegate) \ket{\psi^*} = \ket{\psi^*}$.  By \thmref{t:processtomography}, Eve therefore accepts the process tomography protocol with probability one.  

Next we will argue soundness.  Let $\epsilon = n^{-\alpha/8}$ and $\zeta = m^{-\alpha / (32 \kappaEPR)}$, where~$\kappaEPR$ is the constant from \thmref{t:sequentialCHSHgames}.  Since the provers win the sequential CHSH games with probability at least $1 - \epsilon$, by \thmref{t:sequentialstructureandobservedcorrelationsapplied} there is at least a $1 - \epsilon - m^{-\alpha/8}$ probability that the provers' strategy for the $K$th set of~$m$ games is $\zeta$-ideal.  By a union bound, there is at least a $1 - \epsilon - m^{-\alpha/8} - \sqrt \epsilon \geq 1 - O(n^{-\alpha/16})$ probability that, additionally, the probability that Eve accepts the process tomography protocol, conditioned on~$K$ and the $(K-1) m$ previous games, is at least~$1 - \sqrt \epsilon$.  
By a Markov inequality, at least a $1 - \epsilon^{1/4}$ fraction of the~$\sigma$ are ``good", in the sense that Eve's conditional acceptance probability is at least $1 - \epsilon^{1/4}$.  
By \thmref{t:processtomographystructuredBob}, 
\begin{equation*}
\bigtrnorm{ \XA \A(\rhoone) \XA{}^\dagger - \hat \A (\XA \rhoone \XA{}^\dagger) } 
\leq \big[ 10 r 2^{r/2} n (\epsilon^{1/4} + \zeta)^{1/2} + 2 \zeta \big] + 2 \epsilon^{1/4}
 \enspace ,
\end{equation*}
where the final $2 \epsilon^{1/4}$ term accounts for the trace distance for bad $\sigma$ terms.  The right-hand side of this inequality is $O(n^{1 - \alpha / (64 \kappaEPR)})$.  
\end{proof}

\ifx\compilefullpaper\undefined  
\bibliographystyle{alpha-eprint}
\bibliography{q}

\end{document}
\fi

\ifx\compilefullpaper\undefined  
\documentclass[11pt]{article}

\begin{document}
\tableofcontents
\fi

\def\circuit{{\mathcal C}}

\section{Verified quantum computation}

Consider a classical verifier, Eve, who wishes simulate measuring the first qubit of $\circuit \ket{0^m}$, where~$\circuit$ is a quantum circuit that uses $T$ gates from a fixed, constant-size set of two-qubit gates.  Known algorithms for this problem scale exponentially with $T$, and assuming that $\BQP \neq \P$, i.e., that classical computers cannot efficiently simulate polynomial time quantum computers, there is no polynomial-time algorithm.  

In a verified, blind quantum computation protocol, we allow Eve to interact with two quantum provers, Alice and Bob, who share a polynomial in~$T$ number of EPR states $\frac{1}{\sqrt 2}(\ket{00}+\ket{11})$.  The interaction begins with Eve announcing $T$ to the two provers.  Then after polynomially many further rounds of interaction, provided that Alice and Bob cooperate, Eve will have her simulation result---except that with a probability exponentially small in $T$ Eve will incorrectly accuse Alice and Bob of cheating.  

Furthermore, if Alice and Bob are dishonest and share an \emph{arbitrary} entangled state but cannot communicate with each other, then the protocol will satisfy the following soundness conditions: 
\begin{itemize}
\item 
Authentication/Verification: Either Eve detects cheating with probability at least $1/2$, or the final measurement distribution obtained by Eve differs from the correct measurement distribution in total variation distance by at most $\epsilon$.  
\item 
Blindness: Alice and Bob learn nothing about the quantum circuit $\circuit$ aside from its size $T$.  (For example, they do not even learn the number of qubits it involves.)  More precisely, once given~$T$, each prover could alone perfectly simulate the distribution of transcripts of the prover's interaction with Eve.  
\end{itemize}
Here $\epsilon > 0$ is a parameter chosen by Eve.  It can be inverse-polynomially small in $T$.  
Note that the probability of catching Alice and Bob cheating can be improved by serial repetition of the protocol.   Also, if Eve wishes to hide, imperfectly, the circuit size $T$ from Alice and Bob, she can pad the circuit with extra gates.  

In this section, we will present and analyze a protocol for verified, blind quantum computation.  In fact, the protocol we give will also work for outsourcing the computations of a quantum verifier in a quantum multi-prover interactive proof (QMIP) system.  Formally, we show that $\QMIP[\text{$k$ provers}] \subseteq \MIP^*[\text{$k+2$ provers}]$, where verified quantum computation can be seen as the $k = 0$ case.  Necessary background on quantum multi-prover interactive proofs is given in \secref{s:qmip} below.  

Our protocol combines four sub-protocols.  First, a sequential CHSH game protocol establishes the provers' qubits.  Second, a state tomography protocol establishes a set of $X\!Z$-determined resource states on Alice's qubits.  Third, a process tomography protocol ensures that Alice honestly makes Bell basis measurements.  Up to the choices of parameters, these protocols have been described earlier, in \secref{s:tomography}.  The fourth protocol does the computation, based on teleporting through the resource states.  

\secref{s:computationbyteleportation} reviews computation by teleportation, after which we present the protocol.  It will be straightforward to show that Eve's simulation works when the two provers are honest, except with exponentially small probability.  The blindness property will also be straightforward.  However, establishing the authentication condition is more of a challenge, and will rely heavily on our results for state and process tomography.  A new problem, though, is that in computation by teleportation, the questions Eve asks the provers depend adaptively on their previous answers.  Even process tomography with soundness exponentially close to one can be \emph{unsound} when used in a general adaptive protocol.  We solve this by arguing that, roughly, no information is conveyed from Alice to Bob or vice versa when Eve chooses her questions adaptively to implement computation by teleportation.

\subsection{Multi-prover interactive proof systems with quantum entanglement} \label{s:qmip}

A quantum multi-prover interactive proof system for a language~$L$ is a protocol of one or more rounds between a verifier and a number of provers.  All parties are given an input string~$x$, and the goal of the provers is to convince the verifier that $x$ belongs to~$L$.  The verifier runs in quantum polynomial time and can send and receive quantum messages.  The provers are quantum computationally unbounded, and may share an arbitrary entangled initial quantum state, but cannot interact with each other once the protocol begins.  The number of provers, the number of rounds, and the sizes of the messages are all restricted to be polynomial in~$\abs x$.  A language~$L$ is in the class $\QMIP$ if there is a quantum multi-prover interactive proof such that if $x \in L$, the provers can convince the verifier to accept with probability at least $2/3$; and if $x \notin L$, then no strategy of the provers can convince the verifier to accept with probability greater than~$1/3$.  

The class $\MIP^*$ consists of those languages decidable by a QMIP system in which the verifier runs in probabilistic polynomial time and all messages are classical~\cite{CleveHoyerTonerWatrous04nonlocal}---equivalently, $\MIP^*$ is the same as $\MIP$ except with the provers allowed to share initial entanglement.  

QMIP systems can be parameterized more finely according to the number of provers, the completeness and soundness parameters, and the number of turns or rounds of communication.  (A~turn is an interaction in which messages are sent in one direction, either from the provers to the verifier or vice versa.  A round consists of two turns.)  The class of languages decidable by a proof system with one prover is known as~$\QIP$, for which three turns suffice~\cite{KitaevWatrous00parallelQIP, MarriottWatrous05qma}, and which equals $\IP = \PSPACE$~\cite{JainJiUpadhyayWatrous09qipequalspspace}.  Thus with a single prover, allowing quantum messages and computation does not increase the power of the proof system.  $\QMIP$ is a much more mysterious class.  Whereas languages in the analogous classical class $\MIP$ can be decided by proof systems with only two provers~\cite{BenOrGoldwasserKilianWigderson88MIP2provers}, no similar reduction is known in the quantum case; while of course $\QMIP[\text{2 provers}] \subseteq \QMIP[\text{$k$ provers}]$ for $k \geq 2$, it is not known whether any of these inclusions are strict.  Even for the case of two provers, there is no better upper bound known than the set of all languages~\cite{KempeKobayashiMatsumotoTonerVidick07qmip}.\footnote{If the verifier is given a trusted, polynomial-size quantum advice state, the resulting class $\QIP$/$\qpoly$ contains all languages~\cite{Raz05qipslashqpolyequalsall}.}  Of course, $\QMIP$ contains $\MIP^*$, but no lower bound better than $\IP$ has been known for either class.  Very recently, though, it has been proposed that $\MIP^*$ contains $\MIP = \NEXP$~\cite{ItoVidick12entangledmipcontainsmipequalsnexp}.  

Nonetheless, Kempe et al.~\cite{KempeKobayashiMatsumotoVidick07qmip} have shown several simplifying transformations for QMIP systems.  They show: 
\begin{enumerate}
\item 
Any QMIP system can be parallelized to a three-turn system with the same number of provers, with perfect completeness and soundness parameter at least an inverse polynomial away from one.  Moreover, the verifier's message in the transformed protocol is the same to all provers: a single, uniformly random, classical bit.  Also, in the first turn, only the first prover sends a message to the verifier.  
\item 
By adding one prover, the system can be further parallelized to one-round (two turns), still with perfect completeness and soundness parameter at least an inverse polynomial away from one.  By parallel repetition using a polynomial number of additional provers, the soundness parameter can be made exponentially close to zero.  
\end{enumerate}

In our conversion of a QMIP system to a protocol with a classical verifier Eve, we will assume that the system has the first simplified form.  This is quite convenient for us, because the verifier in a general QMIP system might act in ways that subvert our converted protocol's security.  For example, she might forward messages from one prover to another, allowing them limited communication.  These messages might not help the provers in the original QMIP system.  However, in our converted protocol the original verifier's quantum workspace is stored with the provers and hidden from them.  The provers could use the extra messages to reveal this workspace to each other, breaking soundness.  It might be possible to deal with this by freshly hiding the quantum workspace before revealing any messages to the provers, but that would be complicated.  Another advantage of starting with a simple three-turn QMIP system is that it allows us to separate the tomography sub-protocols from the computation sub-protocol that actually simulates the original QMIP system.  In our converted protocol, Eve decides at random whether to run tomography or computation sub-protocols and does not tell the provers.  The provers cannot learn which sub-protocol they are in because Eve can simulate the verifier's public coin message for the original system.  Were we to start with a general QMIP system, however, this would not work and the provers could quickly learn which sub-protocol they were in.  We would need to run tomography simultaneous to computation.  While these problems might be fixable with more work in the conversion procedure, it is much simpler for us to start with a three-turn, public-coin QMIP system.

\subsection{Computation by teleportation} \label{s:computationbyteleportation}

A quantum algorithm can be implemented in three stages, initialization, computation and readout.  The initialization stage prepares a state $\ket{0^m}$, the readout stage measures the qubits in the computational basis, and the computation stage consists of applying a sequence of constant-qubit unitary gates drawn from a universal gate set.  The idea of measurement-based quantum computation is to eliminate all unitary operators and to implement computation using only adaptive local measurements.  One such scheme is the ``one-way quantum computer," which uses adaptive single-qubit measurements on a large, highly entangled cluster state~\cite{RaussendorfBriegel01cluster}.  Computation by teleportation, on the other hand, uses two-qubit measurements on resource states with up to four qubits~\cite{GottesmanChuang99teleportation}.  

Let $\G$ be the gate set consisting of the two-qubit controlled-NOT gate, or $\mathrm{CNOT}$, and a $\pi/4$ rotation about the $y$~axis of the Bloch sphere, $\phasegate := \exp(-i \frac\pi8 Y) = \smatrx{\cos(\pi/8)&-\sin(\pi/8)\\\sin(\pi/8)&\cos(\pi/8)}$.  The gate set~$\G$ is universal, meaning that any quantum circuit can be efficiently compiled to use~$\G$~\cite{Shi02universal}.  Let $H = \frac{1}{\sqrt 2}\smatrx{1&1\\1&-1}$, the Hadamard gate, and let $\ket{\psi^*} = \frac{1}{\sqrt 2}(\ket{00} + \ket{11})$, an EPR state.  
Nielsen~\cite{Nielsen01teleport} has shown: 

\begin{theorem}[Computation by teleportation~\cite{Nielsen01teleport}] \label{t:computationbyteleportation}
There exists a polynomial-time classical control procedure $\mathcal A$ that on input the description of a quantum circuit~$\circuit$ that uses~$m$ qubits and~$T$ gates from the gate set~$\G$, outputs a sample from a distribution that is $\exp(-\Omega(T))$ close in variation distance to the distribution of measuring the outputs of $\circuit \ket{0^m}$ in the computational basis.  The procedure $\mathcal A$ uses $O(T \log T)$ copies of each of the quantum states 
\begin{equation} \label{e:computationbyteleportationresourcestates}
\ket 0 , \quad (I \otimes H) \ket{\psi^*}, \quad (I \otimes \phasegate) \ket{\psi^*}, \quad \mathrm{CNOT}_{2,4} (\ket{\psi^*} \otimes \ket{\psi^*})
 \enspace .
\end{equation}
$\mathcal A$ applies Bell basis measurements, i.e., measurements in the basis $\{(I \otimes P) \ket{\psi^*} : P \in \{I,X,Y,Z\}\}$, to pairs of qubits decided on adaptively.  Aside from Bell basis measurements on the resource quantum states, $\mathcal A$ is fully classical.  
\end{theorem}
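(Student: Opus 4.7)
The plan is to build up $\circuit\ket{0^m}$ logically, gate by gate, by Gottesman--Chuang--Nielsen gate teleportation, while a classical controller tracks an $m$-qubit Pauli ``frame'' $F$ satisfying the invariant that the physical state held on the output qubits of the resource states consumed so far equals $F\ket{\mathrm{ideal}_t}$, where $\ket{\mathrm{ideal}_t}$ is the state obtained by applying the first $t$ gates of $\circuit$ to $\ket{0^m}$. Initialization of the logical zero state is immediate from the $\ket 0$ resources, with $F$ trivial. For each $\CNOT$ gate of $\circuit$, $\mathcal A$ Bell-measures the two input wires against qubits $1$ and $3$ of the resource $\CNOT_{2,4}(\ket{\psi^*}\otimes\ket{\psi^*})$: by the standard gate-teleportation identity the remaining two qubits then hold $\CNOT(Q_1 \otimes Q_2)\ket{\mathrm{in}}$ for a random Pauli byproduct $(Q_1,Q_2)$, which folds into $F$ by the Clifford conjugation $F \mapsto \CNOT(Q_1\otimes Q_2) F \CNOT^{-1}$, still a Pauli. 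The final step of $\mathcal A$ performs a computational-basis measurement on the output wires and XORs the raw outcomes with the $Z$-parts of $F$ to produce the sample.

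The main technical obstacle is the non-Clifford $\phasegate$. Teleportation through $(I \otimes \phasegate)\ket{\psi^*}$ applies $\phasegate Q$ to the target wire for a random Pauli outcome $Q$, so the total accumulated byproduct $QF_j$ on that wire must commute past $\phasegate^{-1}$ if the new frame is to remain a Pauli. The identities $Y\phasegate Y = \phasegate$ and $X\phasegate X = Z\phasegate Z = \phasegate^{-1}$ imply that $\phasegate(QF_j)\phasegate^{-1}$ is a Pauli precisely when $QF_j \in \{I,Y\}$, which happens with probability exactly $1/2$; on this ``good'' event the frame updates and $\mathcal A$ proceeds. On the complementary ``wrong-direction'' event the physical wire actually holds $(QF_j)\phasegate^{-1}\ket{\mathrm{ideal}_{t-1}}$, and $\mathcal A$ triggers a corrective teleportation: a direct $2\times 2$ calculation gives $\phasegate^2 = XH$, so $\phasegate^2$ is Clifford; $\mathcal A$ teleports the wire through one additional $(I \otimes H)\ket{\psi^*}$ resource and multiplies $F$ by $X$ on that wire. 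The net logical effect is $\phasegate^2 \cdot \phasegate^{-1} = \phasegate$, and because $\phasegate^2$ is Clifford the updated frame is a Pauli, restoring the invariant.

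For the resource count and error analysis, each $\CNOT$ gate of $\circuit$ consumes exactly one CNOT-resource, each $\phasegate$ gate of $\circuit$ consumes exactly one $\phasegate$-resource together with, independently and with conditional probability $1/2$, one $H$-resource, and initialization uses $m \leq O(T)$ copies of $\ket 0$. The consumption of each of the four resource types is thus a sum of independent bounded random variables with mean $O(T)$, so by Chernoff, $O(T\log T)$ copies of each suffice to avoid underflow except with probability $\exp(-\Omega(T))$, on which event $\mathcal A$ aborts and emits an arbitrary sample. Conditional on no underflow the classical bookkeeping exactly reproduces the distribution of measuring the output wires of $\circuit\ket{0^m}$ in the computational basis, so the total variation distance between $\mathcal A$'s output and the ideal distribution is at most $\exp(-\Omega(T))$, as claimed. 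All classical processing is polynomial-time by inspection, and the only quantum operations used are Bell basis measurements on the four types of resource states.
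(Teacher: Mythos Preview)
Your argument follows essentially the same route as the paper's sketch: Nielsen's gate teleportation with a classically tracked Pauli frame, deterministic Clifford-frame update for the $\CNOT$ resource, and a probability-$\tfrac12$ Hadamard correction after the non-Clifford $\phasegate$ gate. Your ``wrong-direction'' reformulation via $X\phasegate X = Z\phasegate Z = \phasegate^{-1}$ and the identity $\phasegate^2 = XH$ is equivalent to the paper's observation that $\phasegate X \phasegate^\dagger = iHY$ and $\phasegate Z \phasegate^\dagger = H$ are Clifford corrections, and the resource accounting and Chernoff step are fine.

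There is one genuine gap. In your final step you write that ``$\mathcal A$ performs a computational-basis measurement on the output wires.'' But the statement you are proving says that \emph{the only quantum operation $\mathcal A$ ever performs is a Bell basis measurement}; aside from these, $\mathcal A$ must be fully classical. A direct $Z$-basis measurement on the output wires is therefore not permitted, and this is precisely the property the rest of the paper relies on (since Bell measurements are what the process-tomography sub-protocol can certify). The fix, which the paper spells out, is to implement each single-qubit computational-basis measurement as a Bell measurement of that wire against a fresh $\ket 0$ ancilla: on $\ket\psi\otimes\ket 0$, the $Z\otimes Z$ syndrome bit of the Bell outcome equals the $Z$-basis outcome on $\ket\psi$, so after XORing with the $Z$-part of the Pauli frame you recover the desired readout. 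This costs an additional $m = O(T)$ copies of the $\ket 0$ resource and leaves the rest of your argument intact.
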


\begin{figure}
\centering
\subfigure[\label{f:computationbyteleportation}]{\includegraphics[scale=.5]{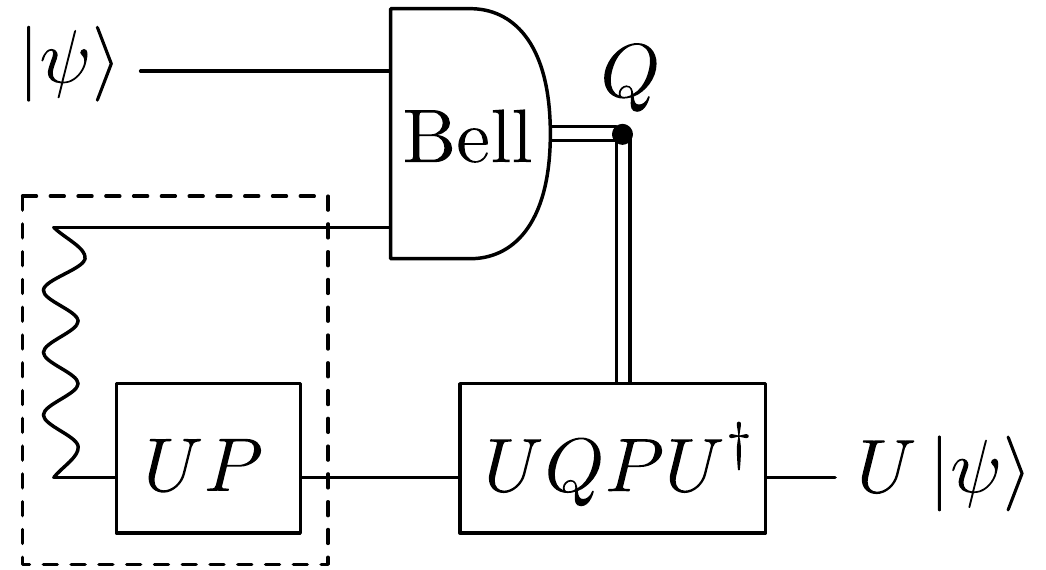}}
$\qquad\qquad\qquad$
\subfigure[\label{f:measurementbyteleportation}]{\raisebox{.25in}{\includegraphics[scale=.5]{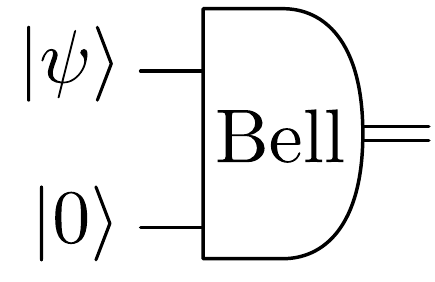}}}
\caption{(a) Computation by teleportation.  By applying a Bell basis measurement to $\ket \psi$ and one half of the resource state $(I \otimes (UP)) \ket{\psi^*}$, the unitary~$U$ is implemented on~$\ket \psi$, up to a correction~$U Q P U^\dagger$.  $Q$ is a Pauli operator determined by the outcome of the Bell measurement.  (b)~A computational-basis measurement on~$\ket \psi$ can be implemented by a Bell basis measurement on $\ket \psi \otimes \ket 0$ or $\ket \psi \otimes \ket 1$.} 
\end{figure}

The procedure~$\A$ works according to a simple extension of standard quantum teleportation~\cite{BennettBrassardCrepeauJozsaPeresWooters93teleport}.  The basic step of teleporting into a gate~$U$ using a resource state $(I \otimes U) \ket{\psi^*}$ is shown in \figref{f:computationbyteleportation} (and see~\cite{Leung02teleport}).  Depending on the outcome of the Bell basis measurement, a correction $U Q U^\dagger$ may be required, for a certain Pauli operator~$Q$.  When teleporting into a Hadamard or $\text{CNOT}$ gate, this correction is always another Pauli operator, since $H$ and $\text{CNOT}$ are in the Clifford group.  $\A$ does not actually correct for Pauli errors, but simply stores them as part of the ``Pauli frame"~\cite{Knill05}, and uses them to update later Bell measurement results.  If $U = \phasegate$, then the correction is not necessarily a Pauli operator, but it is always a Clifford operator: $\phasegate X \phasegate^\dagger = i H Y$, $\phasegate Y \phasegate^\dagger = Y$ and $\phasegate Z \phasegate^\dagger = H$.  (The operator $\phasegate$ lies in the third level of the Clifford hierarchy~\cite{GottesmanChuang99teleportation}.)  Therefore after attempting to teleport into $\phasegate$, there is a $50\%$ chance that $\A$ needs to teleport a Hadamard correction onto the output.\footnote{Alternatively, since $H = \phasegate^2 Z$, Nielsen proposes repeatedly attempting to teleport into $\phasegate$ until no correction is required---after a constant number of trials in expectation.}  
The computational-basis measurements in the final readout stage of the circuit can be implemented by a Bell measurement that uses an extra $\ket 0$ ancilla, as shown in \figref{f:measurementbyteleportation}.  

\thmref{t:computationbyteleportation} is relevant for us because all of the resource states in Eq.~\eqnref{e:computationbyteleportationresourcestates} are $X\!Z$-determined, by \thmref{t:xzdeterminedstates}, so the state tomography protocol of \thmref{t:statetomographydishonestAlice} can be applied to verify their preparation.  Moreover, a Bell basis measurement is the same as measuring the two-qubit $X\!Z$ stabilizer set $\{ X \otimes X, Z \otimes Z \}$, an operation to which the process tomography protocol of \thmref{t:processtomographydishonestBob} applies.  In our two-prover verified quantum computation protocol, the classical verifier Eve will run~$\A$.  She directs Bob to prepare the necessary resource states on Alice's qubits, and she asks Alice to apply Bell basis measurements to certain pairs of qubits.  

\begin{figure}
\centering
\includegraphics{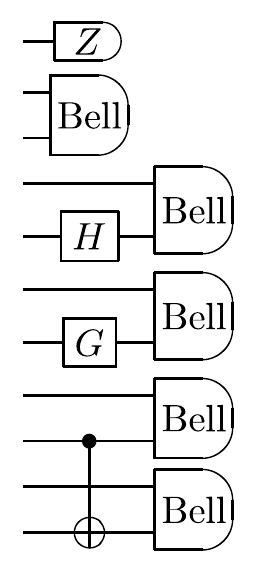}
\caption{
Once for each $\ket 0$ preparation, CNOT or measurement in $\circuit$, and twice for each $\phasegate$ gate, Eve asks Bob to prepare resource states of all the five types.  To do so, he can apply this circuit to his halves of eleven shared EPR states and report to Eve the measurement results.  
} \label{f:combinedgadgetcircuit}
\end{figure}

Our protocol will actually slightly modify the procedure~$\A$.  Instead of the resource states of Eq.~\eqnref{e:computationbyteleportationresourcestates}, use the set of resource states 
\begin{equation} \label{e:computationbyteleportationresourcestatesPaulis}
\big\{
P \ket 0 , \; (H P)_2 \ket{\psi^*}, \; (\phasegate Y)_2 \ket{\psi^*}, \; \mathrm{CNOT}_{2,4} P_2 Q_4 (\ket{\psi^*} \otimes \ket{\psi^*}) \, : \, P, Q \in \{I, X, Y, Z\}
\big\}
.
\end{equation}
Then to teleport into an $\phasegate$ gate, for example, use the next available $(\phasegate P)_2 \ket{\psi^*}$ resource state, regardless of the Pauli~$P$.  $P$ can be accounted for by a change in the Pauli frame; see \figref{f:computationbyteleportation}.  Nielsen suggests using these resource states for a constant-factor efficiency improvement---whereas projecting a uniformly mixed state onto $(I \otimes \phasegate) \ket{\psi^*}$ fails with probability $3/4$, a complete measurement in the orthogonal basis $\{ (\phasegate P)_2 \ket{\psi^*} : P \in \{I, X, Y, Z\} \}$ will always give one of those four states.  For us, efficiency is not the concern, but we want to limit the ways Eve's messages to Alice depend on Bob's reported measurement outcomes, i.e., on which of the four states $\{ (\phasegate P)_2 \ket{\psi^*} : P \in \{I, X, Y, Z\} \}$ Bob claims to have prepared.  Our protocol will also use plain $\ket{\psi^*}$ resource states.  After each $\phasegate$ gate, Eve will direct Alice to teleport into either an $(I \otimes H) \ket{\psi^*}$ state or a $\ket{\psi^*}$ state, depending on whether or not a Hadamard correction is needed.  Finally, primarily for notational simplicity but also to aid in obtaining the blindness property, Eve will always ask Bob to prepare all five of the different types of resource states together, and not just the particular resource state that is needed for the next step of computation.  \figref{f:combinedgadgetcircuit} shows the circuit that an honest Bob can use to prepare the needed resource states.

\subsection{Protocol for verified quantum computation}

\begin{theorem} \label{t:qmiptomipstarconversion}
Let $L$ be a language decided by a $k$-prover QMIP protocol with completeness~$c$ and soundness~$s$, with $c - s$ at least an inverse polynomial in the input size.  Assume that the protocol has three turns, and that the verifier's message consists of a single, uniformly random, classical bit that is broadcast to all provers.  Then $L \in \MIP^*$, decided by a protocol with $k+2$ provers.  

Furthermore, if $k = 0$, then the two-prover $\text{MIP}^*$ protocol is \emph{blind}, meaning that the provers are not given the input string~$x$ and learn only the size of the verifier's BQP circuit.  
\end{theorem}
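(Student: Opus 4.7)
The plan is for Eve to flip coins at the start of each interaction unit to decide whether to play one of the tomography sub-protocols (sequential CHSH, state tomography, or process tomography from \secref{s:tomography}) or to run the computation sub-protocol that actually simulates the original QMIP verifier. All sub-protocols will look identical from the provers' perspective because Eve will always direct Bob to prepare the full bundle of resource states via the circuit of \figref{f:combinedgadgetcircuit} and will always direct Alice to Bell-measure specified pairs of her qubits. In the computation sub-protocol Eve will run the classical procedure $\A$ of \thmref{t:computationbyteleportation}, using the Pauli-labeled resource states of \eqnref{e:computationbyteleportationresourcestatesPaulis}, on the quantum circuit implementing the original QMIP verifier; she will teleport into each gate via Alice's Bell measurements and maintain the Pauli frame internally. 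If $k \geq 1$, Eve will relay the single public-coin bit to the original $k$ provers and feed their answers into the simulated circuit.

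\textbf{Completeness.} If all provers are honest, each CHSH sub-protocol passes except with probability $\exp(-\Omega(T))$, the tomography sub-protocols have perfect completeness, and \thmref{t:computationbyteleportation} guarantees that the computation sub-protocol's output is $\exp(-\Omega(T))$-close in variation distance to the ideal verifier's output. A union bound over the polynomially many sub-protocol invocations preserves the original completeness~$c$ up to an exponentially small loss.

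\textbf{Blindness (the $k = 0$ case).} Bob's transcript consists only of public coin bits together with a sequence of identical instructions to execute the fixed circuit of \figref{f:combinedgadgetcircuit} on successive halves of his shared EPR states; this is trivially simulable given only~$T$. Alice's transcript is a sequence of indices naming which pairs of her qubits to Bell-measure. Because Eve uses the randomly Pauli-labeled resource states of \eqnref{e:computationbyteleportationresourcestatesPaulis} and tracks all Pauli corrections in her own internal frame, the distribution over Alice's index sequences depends on~$\circuit$ only through its size~$T$, and is therefore also simulable from~$T$ alone.

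\textbf{Soundness (authentication).} The core task is to show that if a cheating Alice and Bob induce Eve to accept with probability greater than~$1/2$, then the induced output distribution is within $\epsilon$ in total variation distance of the ideal one. The plan is to invoke \thmref{t:statetomographydishonestAlice} to obtain local isometries under which Bob's resource-state preparations are inverse-polynomially close to ideal, and \thmref{t:processtomographydishonestBob} to obtain similar isometries under which Alice's Bell measurements are close to ideal. The hard part will be that these tomography guarantees are proved in a non-adaptive setting, whereas in the computation sub-protocol Eve's questions to Alice depend adaptively on Bob's earlier answers. To bridge this gap I plan to exploit the structural feature emphasized in the introduction: thanks to the Pauli-labeled resource states of \eqnref{e:computationbyteleportationresourcestatesPaulis} and Eve's internal Pauli-frame bookkeeping, Bob's reports in the computation sub-protocol are uniformly random Pauli labels that carry no information about~$\circuit$ or about Eve's subsequent questions to Alice, and symmetrically Alice's outcomes feed only into the internal frame rather than into Bob's future questions. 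Consequently no signal passes from Alice to Bob or vice versa during a computation round, so each prover's view is statistically indistinguishable from one in a non-adaptive protocol, and the tomography soundness guarantees apply. A hybrid argument that swaps each of the $\poly(T)$ real resource-state/measurement pairs for an ideal one, at cost $O(\epsilon/\poly(T))$ per swap, then yields the desired $\epsilon$-closeness. Choosing $\epsilon < (c-s)/3$ and combining with the original QMIP soundness produces the required $\MIP^*$ protocol with $k+2$ provers.
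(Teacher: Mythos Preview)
Your outline tracks the paper's architecture, but the soundness argument has a real gap at precisely the point you flag as ``the hard part.'' You assert that, thanks to the Pauli-labeled resource states and internal Pauli-frame bookkeeping, ``no signal passes from Alice to Bob or vice versa during a computation round,'' so that each prover's view is non-adaptive. This is not true as stated. After Alice teleports into a $\phasegate$ gate, Eve must next direct Alice to teleport into either the $\ket{\psi^*}$ resource or the $(I\otimes H)\ket{\psi^*}$ resource, and which one depends on the accumulated Pauli frame---hence on both Alice's prior Bell outcomes \emph{and} Bob's reported preparation labels. So Eve's questions to Alice are genuinely adaptive, and the toy counterexample in the paper (Alice comparing her EPR halves with Eve's forwarded message) shows that tomography guarantees can fail outright under such adaptivity. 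The Pauli-labeled resources of \eqnref{e:computationbyteleportationresourcestatesPaulis} eliminate adaptivity in \emph{which type} of resource is used, but not in the $I$-versus-$H$ choice following each~$\phasegate$.

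What the paper actually does is prove a structural equivalence (\claimref{t:notreallyadaptive}): the adaptive choice of which of two qubit positions Alice is told to measure can be replaced, via deferred decisions on the random permutation~$\sigma$, by an adaptive swap of those two positions in Eve's message to \emph{Bob}, leaving Alice's transcript fully non-adaptive. This yields the operator identity $\Bad\A\cP = \Aad\B\cP$, which is what lets process tomography (a statement about the non-adaptive $\A$) control the real, adaptive $\Aad$. Your ``statistically indistinguishable'' claim is the shadow of this lemma, but you have not stated or proved it, and without it the hybrid argument has no footing. Your blindness argument has the same hole: Alice's index sequence depends on~$\circuit$ only through~$T$ \emph{because} of this same permutation-swap equivalence, not merely because of Pauli-frame bookkeeping. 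Finally, note that the state-tomography theorem the paper invokes only controls Bob's preparation after tracing out everything outside a random subset~$S$ of blocks; a naive per-gate hybrid does not interface with that restriction, and the paper's argument routes through the specific super-operator~$\S$ for this reason.
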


The case $k = 0$ gives verified, blind quantum computation, at least for decision problems, since $\QMIP[\text{$0$ provers}] = \BQP$.  We will explain the extension beyond decision problems below, after the proof.  Of course, $\BQP \subseteq \PSPACE = \IP$, so without the blindness property the inclusion in $\MIP^*[\text{$2$ provers}]$ is immediate.  The case $k = 1$ is subsumed by the known equality $\QIP = \IP$~\cite{JainJiUpadhyayWatrous09qipequalspspace}.  

By the protocol transformation of~\cite{KempeKobayashiMatsumotoVidick07qmip} and since trivially $\MIP^*[\text{$k$ provers}] \subseteq \QMIP[\text{$k$ provers}]$, \thmref{t:qmiptomipstarconversion} implies: 

\begin{corollary}
$\QMIP[k \; {\rm provers}] \subseteq \MIP^*[k+2 \; {\rm provers}]$.  In particular, $\QMIP = \MIP^*$.  
\end{corollary}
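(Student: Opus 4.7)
The plan is to implement Eve's simulation of the $k$-prover QMIP verifier by adding two provers, Alice and Bob, to carry the verifier's quantum workload via computation by teleportation. Alice and Bob share $\poly(T)$ EPR pairs, where $T$ is the size of the verifier's BQP circuit. At the start of each session, Eve flips a secret coin to decide whether the session is a \emph{state tomography} round (checking Bob's preparations of the five resource state types of \eqnref{e:computationbyteleportationresourcestatesPaulis} via \thmref{t:statetomographydishonestAlice}), a \emph{process tomography} round (checking that Alice's measurements are Bell-basis measurements of the $XZ$-stabilizer set $\{X\otimes X, Z\otimes Z\}$ via \thmref{t:processtomographydishonestBob}), or a \emph{computation} round. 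In a computation round Eve runs the procedure $\mathcal A$ of \thmref{t:computationbyteleportation} on the verifier's circuit: she asks Bob to prepare the combined gadget of \figref{f:combinedgadgetcircuit} on his halves of the EPR pairs, she asks Alice for the adaptively chosen Bell-basis measurements prescribed by $\mathcal A$, and in parallel she sends the verifier's uniformly random public-coin bit to each of the $k$ original provers and folds their responses into the Pauli-frame-corrected computation. Eve accepts iff the tomography round, if chosen, accepts and the bit produced by $\mathcal A$ would have caused the original verifier to accept.

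Completeness is straightforward: honest Alice and Bob pass both tomography sub-protocols with exponentially small error, and by \thmref{t:computationbyteleportation} the computation round reproduces the honest verifier's distribution up to $\exp(-\Omega(T))$, so Eve accepts with probability at least $c - o(1)$. Blindness for $k=0$ is also immediate from the construction: Bob is always asked, regardless of session type, to prepare the gadget of \figref{f:combinedgadgetcircuit} a number of times that depends only on $T$; Alice in each session is asked to measure a sequence of qubit pairs in the Bell basis, and because the session type is hidden from her and within a computation round the pair indices depend only on Bob's independent Pauli indices (uniformly random labels thanks to the enlarged resource set) and on the classical responses of the original provers, her transcript is simulable from $T$ alone.

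The main obstacle is soundness. Provided Eve detects cheating in tomography rounds with probability below $1/2$, \thmref{t:statetomographydishonestAlice} and \thmref{t:processtomographydishonestBob} give local isometries on each of Alice and Bob's spaces under which the actual preparations and measurements agree, to within an operational error $\eps$, with the honest prescription. The subtlety flagged in the prose is that these tomography theorems are proved for non-adaptive question schedules, whereas in a computation round Eve's choice of which qubit pair to ask Alice to measure depends adaptively on Bob's previously reported Pauli indices and on the original provers' responses. The crux of the proof will be to show that this adaptivity is harmless: because the enlarged resource set \eqnref{e:computationbyteleportationresourcestatesPaulis} makes each of Bob's reported Pauli indices a uniformly random classical label that is absorbed entirely into the Pauli frame, Eve's adaptive messages to Alice carry no information about the quantum state Bob prepared, and symmetrically nothing about Alice's outcomes flows back to Bob. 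Formalizing this ``no-signalling through Eve'' property lets us couple the adaptive experiment to a non-adaptive hybrid, after which a step-by-step hybrid argument over the $T$ teleportation steps of $\mathcal A$ converts the per-step tomography error into a total-variation distance $O(T\eps)$ between the computation round's output distribution and the honest QMIP verifier's distribution. Taking $\eps$ inverse-polynomially small in $T$ preserves an $\Omega(c-s)$ gap in Eve's acceptance probability, and standard serial repetition of the $(k+2)$-prover protocol then amplifies this gap to the canonical $(2/3,1/3)$ required for $L \in \MIP^*[k+2]$.
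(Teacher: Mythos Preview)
Your high-level plan matches the paper's: Eve secretly mixes tomography sub-protocols with a computation-by-teleportation sub-protocol that simulates the three-turn public-coin QMIP verifier, and soundness comes from showing the computation round is close to ideal once tomography passes.  Two points, however, are genuine gaps rather than mere omissions of detail.

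\textbf{The CHSH sub-protocol is missing.}  In the paper, before either tomography theorem can be invoked one needs local isometries identifying the provers' Hilbert spaces with $(\C^2)^{\otimes n}\otimes\H'$ and identifying their shared state with $\ket{\psi^*}^{\otimes n}\otimes\rho'$ up to small trace distance.  This qubit structure is supplied by a \emph{separate} sequential-CHSH sub-protocol (\thmref{t:sequentialstructureandobservedcorrelationsapplied} in the paper).  Your three-way coin omits it; the hypotheses of \thmref{t:statetomographydishonestAlice} and \thmref{t:processtomographydishonestBob} are not met without it.

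\textbf{The adaptivity argument does not go through as stated.}  You write that Bob's reported Pauli indices are ``uniformly random classical labels absorbed into the Pauli frame,'' and conclude that Eve's adaptive messages to Alice carry no information.  This handles most steps, but \emph{not} the correction after a $\phasegate$ gate: there Eve must choose between an $I$ resource and an $H$ resource, and this bit depends on the full Pauli frame, i.e., on Alice's \emph{own} prior Bell outcomes as well as on Bob's and the $P_j$'s responses.  Even if Bob's labels were marginally uniform, they are correlated (via the shared EPR pairs) with Alice's qubits, so ``no signalling through Eve'' fails exactly as in the paper's toy example.  The paper's fix is structural, not information-theoretic: Eve sends Bob a uniformly random permutation $\sigma\in S_{q\nstate}$ of qubit indices, and by the principle of deferred decisions one can push the $I$/$H$ choice onto Bob's side by swapping those two positions in $\sigma$ instead of telling Alice which one to use.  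This yields an \emph{alternative} computation sub-protocol with $\Bad\A\cP=\Aad\B\cP$ (\claimref{t:notreallyadaptive}), in which Alice's interaction $\A$ is exactly the non-adaptive one from process tomography; only then can \thmref{t:processtomographydishonestBob} be applied to replace $\A$ by $\hat\A$, after which one switches back to the $\Aadhat\B$ form and applies state tomography to $\B$.  Your ``couple to a non-adaptive hybrid'' step needs precisely this permutation trick, and a per-step hybrid over $T$ teleportations does not substitute for it.

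Finally, for the ``in particular $\QMIP=\MIP^*$'' clause you should also invoke the Kempe--Kobayashi--Matsumoto--Vidick transformation to first put an arbitrary QMIP protocol into three-turn public-coin form, together with the trivial containment $\MIP^*[k]\subseteq\QMIP[k]$.
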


\thmref{t:qmiptomipstarconversion} might appear to be straightforward given our state and process tomography theorems, and the computation by teleportation procedure.  The special form of the QMIP protocol ensures that the provers will not be able to distinguish tomography and computation sub-protocols.  The main problem, though, is that computation by teleportation is necessarily an adaptive procedure; after teleporting into an $\phasegate$ gate, a Hadamard correction might be required.  Prover strategies that pass tomography with high probability might be able to cheat in an adaptive protocol.  

A toy example should illustrate this problem.  Consider a setting in which provers Alice and Bob share $n + 2^n$ EPR states.  Eve asks Bob to measure the first $n$ EPR states in the computational basis and return the results; assume that he does so honestly.  Eve asks Alice to measure one of the~$2^n$ last EPR states.  But Alice cheats: she compares the message from Eve with her halves of the first~$n$ EPR states and acts as directed only if they disagree.  For any fixed message from Eve, this strategy will fail tomography tests with only an exponentially small probability.  However, if Eve asks her questions adaptively, by forwarding Bob's measurement results to Alice, then her message will always agree with Alice's halves of the collapsed EPR states, so Alice will always cheat.  Therefore, using tomographically verified procedures in an adaptive protocol requires some care.  

\smallskip

\begin{proof}[Proof of \thmref{t:qmiptomipstarconversion}]
Let $V$ be the verifier and $P_1, \ldots, P_k$ be the provers in the QMIP protocol.  Without loss of generality, we may assume that the verifier's protocol has the following form: 
\begin{enumerate}
\item
Receive $m$ qubits from $P_1$ and nothing from provers $P_2, \ldots, P_k$.  
\item
Choose $b \in \{0,1\}$ uniformly at random, and send~$b$ to each prover.  
\item
Receive $m$ qubits from each prover.  Apply a circuit $\circuit$ to the $(k+1)m$ message qubits and $\ket b \otimes \ket{0^{m-1}}$, where $\circuit$ consists of $T_{\text{CNOT}}$ CNOT gates and $T_{\phasegate}$ $\phasegate = \exp(-i \frac{\pi}{8} Y)$ gates.  Measure the first qubit of the output.  Accept if the qubit is~$\ket 1$ and reject if it is~$\ket 0$.  
\end{enumerate}
We may assume that only $P_1$ sends a message in the first turn, because $P_1$'s message can combine all of the provers' messages.  We may assume that all messages have length~$m$ and that the verification circuit $\circuit$ uses~$m$ workspace qubits by padding messages and the workspace.  

Let us modify this protocol by adding an initial turn in which the verifier distributes EPR states that the provers can later use to teleport back their quantum messages.  Precisely, the verifier's action in this new turn is: 
\begin{enumerate}
\setcounter{enumi}{-1}
\item Prepare $(k + 1) m$ EPR states.  Send the second halves of $2 m$ of the EPR states to~$P_1$, and the second halves of~$m$ EPR states to each of the other provers.  
\end{enumerate}
The subsequent turns are the same, except the provers send $2 m$ \emph{classical} bits whenever they would originally have sent~$m$ qubits, and before applying $\circuit$ the verifier applies the appropriate teleportation Pauli corrections.  It is without loss of generality to put the protocol into this form: 

\begin{claim} \label{t:qmipbasedonteleportation}
The teleportation-based protocol has identical completeness and soundness parameters as the original protocol.  
\end{claim}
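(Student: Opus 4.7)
The plan is to show that the two protocols implement identical verifier acceptance distributions on every input $x$, via explicit strategy-to-strategy translations in both directions. Applied to honest strategies this will transfer completeness, and applied to arbitrary (cheating) strategies it will transfer soundness; so identifying the acceptance probabilities on a per-strategy basis is more than enough.

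The easy direction converts an original-protocol strategy into a teleportation-based one. Whenever the original strategy would send an $m$-qubit quantum message, the prover now performs Bell measurements between those qubits and their $m$ halves of the EPR pairs the verifier distributed in the new turn~0, then transmits the $2m$ classical outcome bits. By the standard teleportation identity, once the verifier applies the prescribed Pauli corrections, their EPR halves carry exactly the state that would have been sent in the original protocol; since the subsequent circuit $\circuit$ acts on those same qubits, the verifier's final measurement distribution is unchanged.

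The reverse direction, essential for transferring soundness, requires a bit more care. Given any cheating strategy $\tilde S$ for the teleportation-based protocol, I would construct an equivalent strategy $S$ in the original protocol as follows: each prover $P_i$ locally prepares the EPR pairs that the verifier would have distributed to it, keeping both halves alone. They then execute $\tilde S$ using one set of halves as the supposed input from turn~0; whenever $\tilde S$ would emit $2m$ classical bits $(a_j, b_j)_{j=1}^m$, they instead apply $X^{a_j} Z^{b_j}$ to the corresponding locally-retained ``verifier halves'' and transmit those $m$ qubits as the quantum message in the relevant turn of the original protocol. By the teleportation identity, the joint state on which the verifier eventually applies $\circuit$ is distributed identically in the two executions; and the distribution over $\tilde S$'s classical outputs is itself unchanged, because from $\tilde S$'s vantage point its ``received'' EPR halves are maximally mixed regardless of whether they were supplied by the verifier or prepared locally by the prover.

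The point I expect to have to justify most carefully is that this simulation does not accidentally enlarge the provers' strategic resources. Since each $P_i$ prepares its EPR pairs locally and alone retains both halves, no new cross-prover correlations are introduced, so the shared state between distinct provers in $S$ is exactly the one fixed by $\tilde S$; and the no-communication constraint is preserved because all added operations in $S$ are local to a single prover. Once these observations are in place, the acceptance probabilities on every input $x$ coincide strategy by strategy, so the completeness and soundness parameters transfer unchanged.
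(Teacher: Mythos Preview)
Your proof is correct and follows essentially the same approach as the paper: both directions rely on the teleportation identity, and for soundness both arguments have the prover locally supply fresh EPR pairs so that the Pauli-corrected ``verifier halves'' become the quantum message of the original protocol. The paper phrases the soundness direction via an intermediary $P_j'$ who intercepts the verifier's EPR halves and later honestly teleports the corrected qubits back, whereas you do the simulation more directly by having the prover send those qubits as the original-protocol message; this is the same construction with one layer of indirection removed.
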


\begin{proof}
Honest provers can use the EPR states to teleport their messages to the verifier, so the completeness parameter is unchanged.  

Dishonest provers might not follow the teleportation protocol, i.e., they might not apply Bell measurements to their halves of the EPR states.  (For provers teleporting states, all messages in $\{0,1\}^{2m}$ are equally likely, but dishonest provers might, for example, send the all-zeros string with probability one.)  However, after applying the Pauli corrections, the verifier is in possession of \emph{some} qubits that the provers might as well have teleported to her.  More formally, a cheating strategy in which the provers do not teleport some quantum messages can be converted to a strategy in which they do teleport their messages.  Indeed, consider placing between the verifier and prover $P_j$ an intermediary $P_j'$ who intercepts the original EPR states sent to~$P_j$ and sends instead halves of freshly prepared EPR states.  On receiving a classical message from $P_j$, $P_j'$ applies the appropriate correction to its halves of the new EPR states, and then honestly teleports them to the verifier.  The verifier's reduced state, after applying the Pauli corrections, and acceptance probability are the same with or without these intermediaries.  Since the combination of $P_j$ and $P_j'$ now is honestly teleporting messages to the verifier, this can be converted to a cheating strategy for the original protocol.  
\end{proof}

Now we are ready to present our converted $(k+2)$-prover $\text{MIP}^*$ protocol.  For clarity, we will define the protocol and the provers' ideal strategy simultaneously, but of course dishonest provers may deviate from this strategy.  

\def\nset{n_g}
\def\nstate{n_s}

Call the verifier in the new protocol Eve.  The two extra provers are Alice and Bob, while provers~$P_1$ through~$P_k$ play the roles of the~$k$ provers in the original QMIP protocol.  Let $q = 11$.  Let~$\alpha$ be a sufficiently large constant.  Let $n = 2((k+2) m + T_{\text{CNOT}} + T_{\phasegate})$.  By padding the verification circuit if necessary, assume that~$n$ is at least a sufficiently large constant.  Let $\nstate = n^{\alpha/2} \geq n^{64}$ and $N \geq (q \nstate)^{\alpha - 1}$.  In the ideal strategy, the provers start out with~$N$ sets of EPR states, each set consisting of $\nset = q \nstate + (k+1) m$ EPR states total: $q \nstate$ EPR states shared between Alice and Bob, $2 m$ EPR states shared between Alice and $P_1$, and $m$ EPR states shared between Alice and each of the other provers $P_2, \ldots, P_k$.  

Eve picks at random one of the following four sub-protocols to run, choosing the last sub-protocol with probability~$\delta = 1 / (6 n^{\alpha/8})$ and choosing each of the first three sub-protocols with equal probabilities $(1 - \delta)/3$.  
\begin{description}

\item[1. CHSH games:] 
Eve referees $N$ sets of sequential CHSH games, each set consisting of $q \nstate$ games between Alice and Bob, $2 m$ games between Alice and $P_1$, and $m$ games between Alice and each of the other provers.  Eve accepts if the provers win at least 
\begin{equation}
\cos^2(\pi/8) N \nset - \tfrac{1}{2 \sqrt 2} \sqrt{N \nset \log(N \nset)}
\end{equation}
of the $N \nset$ games.  By the Hoeffding inequality in \lemref{t:honestproverscentrallimit}, Eve will accept with probability at least $1 - (N \nset)^{-1/4}$ if the provers play the CHSH games honestly (where in each individual game Bob and $P_1, \ldots, P_k$ all use the strategy for Bob in \tabref{f:optimalCHSHstrategy}).  

\item[2. State tomography:] 
Eve chooses $K \in [N]$ uniformly at random.  She referees the first $K-1$ sets of CHSH games, ignoring the results.  She sends the questions for the $K$th set of games to Alice.  She chooses a uniformly random permutation $\sigma \in S_{q \nstate}$, and begins~$\nstate$ rounds of interaction with Bob.  In each round, she reveals~$q$ entries of~$\sigma$ to Bob, i.e., $\sigma(1), \ldots, \sigma(q)$ in the first message, $\sigma(q+1), \ldots, \sigma(2q)$ in the second message, and so on.  She expects each time~$q$ bits in response.  If Bob is honest, then he should start by applying $\phasegate^{\otimes (q \nstate)}$ to his EPR state qubits.  Then in each round he should apply the circuit of \figref{f:combinedgadgetcircuit} to the qubits that Eve has specified, and return the measurement results.  

Eve treats this as a permuted-qubit state tomography protocol for the $q$-qubit states given by $(S^\dagger)^{\otimes q}$ applied to the $2^q$ eigenstates of the measurement circuit of \figref{f:combinedgadgetcircuit}.  By \thmref{t:xzdeterminedstates}, all of these states are $X\!Z$-determined (\defref{t:xzdetermineddef}).  She accepts if the two criteria of Eq.~\eqnref{e:tomographyacceptancecriteria} are satisfied.  Note that unlike in \defref{t:permutedqubitstatetomographyprotocoldef} for a permuted-qubit state tomography protocol, Eve gives the permutation~$\sigma$ to Bob in pieces and not all at once.  This only restricts the ways that Bob can cheat.  By \thmref{t:statetomography}, if Alice and Bob play honestly, then Eve accepts with probability at least $1 - O(\nstate^{-1/2})$.  

\item[3. Process tomography:] 
Eve chooses $K \in [N]$ uniformly at random.  She referees the first $K-1$ sets of CHSH games, ignoring the results.  For all of the provers except Alice, she also asks the questions for a $K$th set of CHSH games.  With Alice, however, she interrupts the protocol.  She fixes~$\tau$, a uniformly random list of $(k + 1) m + 2 n$ distinct indices in $[q \nstate]$.  For $(k + 1) m$ rounds she sends Alice one index of the EPR states shared with the provers $P_1, \ldots, P_k$, and the next entry of~$\tau$.  Then for~$n$ rounds she sends Alice two entries of~$\tau$ at a time.  In each round, she expects in response two bits from Alice.  Ideally, Alice should start by applying $\phasegate^{\otimes \nset}$ to her EPR state qubits.  Then in each round she should return the results of a Bell-pair measurement on her halves of the indicated qubits.  That is, in the first $(k + 1) m$ rounds, she should apply Bell-pair measurements that cross between EPR states shared with Bob and with one of the~$P_j$, and in the following~$n$ rounds, she should apply Bell-pair measurements between two EPR states shared with Bob.  

Eve treats this as a process tomography protocol (\defref{t:permutedqubitprocesstomographyprotocoldef}) with $r = 2$ and $\stabilizerset = \{ X \otimes X, Z \otimes Z \}$.  She accepts if and only if in all rounds Alice's reported syndromes agree with the syndromes that can be determined by the other provers' measurements.  If the provers play honestly, then Eve accepts with probability one.  Recall from \thmref{t:processtomographydishonestBob} that the purpose of applying $\phasegate$ transversally is to change basis so that the ideal CHSH game strategy for Bob and $P_1, \ldots, P_k$ consists of $X$- and $Z$-basis measurements.  Also, once more, revealing~$\tau$ incrementally instead of all at once only reduces Alice's opportunities for cheating.  

\item[4. Computation:]
Eve chooses $K \in [N]$ uniformly at random and referees the first $K-1$ sets of CHSH games.  Then she asks $P_1$ for the first message in the QMIP protocol.  If $P_1$ is honest, it applies~$\phasegate$ transversally to its EPR state qubits, and returns $2 m$ classical bits, the Bell-pair measurement results needed for teleporting the first message.  Eve chooses a bit $b$ uniformly at random, and sends it to $P_1, \ldots, P_k$.  If these provers are honest, then they will apply~$\phasegate$ transversally before returning $2 m$ classical bits each, the measurement results needed for teleporting their messages in the turn three of the QMIP protocol.  

With $b$ and these $2 (k + 1) m$ bits in hand, Eve wants to direct Alice and Bob to simulate the verification circuit~$\circuit$.  

Eve's interactions with Bob are identical to the interactions in the state tomography sub-protocol.  Eve chooses a uniformly random permutation $\sigma \in S_{q \nstate}$, and over~$\nstate$ rounds reveals~$q$ entries of it at a time, expecting~$q$ bits in response in each round.  Whether or not Bob plays honestly, his strategy is identical to his strategy in the state tomography sub-protocol, since from his perspective there is no difference.  

Eve's interactions with Alice are similar, but not identical, to the interactions in the process tomography protocol.  The permutation~$\sigma$ gives the locations of where in Alice's qubits the $q$-qubit resource state blocks should be.  The first block of resource states should be in positions $\sigma(1), \ldots, \sigma(q)$, and so on.  Eve acts as though Bob is playing honestly and, one Bell-pair measurement at a time, she directs Alice to use these resource states to implement teleportation by computation, as explained in \secref{s:computationbyteleportation}.  In the first $(k+1)m$ rounds, she directs Bell-pair measurements to teleport Alice's qubits from EPR states shared with provers $P_1, \ldots, P_k$ into resource states.  She sets up the workspace of~$\circuit$, $\ket b \otimes \ket{0^{m-1}}$, by using the~$\ket 0$ or~$\ket 1$ single-qubit resource states.  Then she teleports between resource states to implement the gates of~$\circuit$, and finally she finishes with Bell-pair measurements onto the single-qubit resource states in order to implement the final measurement of~$\circuit$.  

Eve does not use all of the resource states.  Although each $q$-qubit block contains five different types of resource states---see \figref{f:combinedgadgetcircuit}---Eve uses at most one of them.  Furthermore, Eve uses at most~$n$ of the $\nstate$ blocks.  She chooses a uniformly random subset $S \subset [\nstate]$ of size $\abs S = n$.  Writing $S = \{j_1, \ldots, j_n\}$, with $j_1 < j_2 < \cdots < j_n$, Eve uses only the blocks in~$S$, in order.  For the $i$th resource state, she uses the block of qubits $\sigma((j_i -1)q+1), \ldots, \sigma(j_i q)$.  

At the end, Eve accepts if the final measurement, adjusted by the propagated Pauli frame, gives~$\ket 1$, and she rejects if it gives~$\ket 0$.  If the provers play honestly, then Eve accepts with the same probability as in the original QMIP protocol, which is at least the completeness parameter~$c$.  
\end{description}

Although glossed over in the descriptions above, Eve must also time her messages to the provers to avoid leaking information about which sub-protocol she is running.  This is fairly straightforward.  Start by specifying the timing in the computation sub-protocol.  The $K$th set in this sub-protocol begins with two rounds---four turns---in which messages are exchanged between Eve and the provers $P_1, \ldots, P_k$.  Then there are $\nstate$ rounds of interaction with Bob, and finally at most~$n$ rounds of interaction with Alice.  To account for the first two rounds, add two rounds at the beginning of the first $K-1$ sets, and of all the sets in the other sub-protocols, in which dummy messages are exchanged.  Also, in process tomography, delay sending the first indices of~$\tau$ to Alice for $\nstate$ dummy rounds, so that Alice cannot distinguish between process tomography and computation.  Then, too, Bob cannot distinguish between state tomography and computation, Bob and $P_1, \ldots, P_k$ cannot distinguish between process tomography and CHSH games, and Alice cannot distinguish between state tomography and CHSH games.  Furthermore, observe that the timing of messages in each sub-protocol is fixed, so we can assume that the provers' strategies do not depend on the message timings.  

\smallskip

This protocol is an $\text{MIP}^*$ protocol; Eve is fully classical.  Overall, if the provers play honestly and the input $x$ lies in~$L$, then Eve accepts with probability at least 
\begin{equation}\begin{split} \label{e:qmipmipcompleteness}
(1 - \delta) - \frac{1 - \delta}{3} \big( (N \nset)^{-1/4} + O(\nstate^{-1/2}) \big) + \delta c
&\geq 1 - (1 - c) \delta - O(\nstate^{-1/2}) \\
&= 1 - (1 - c + O(n^{-\alpha/8}) ) \delta
 \enspace .
\end{split}\end{equation}

Next assume that $x \notin L$.  Assume that Eve accepts with probability at least $1 - \epsilon$, where $\epsilon = \big(1 - \frac12(c+s)\big) \delta$.  Then for each of the first three sub-protocols, the probability that Eve accepts conditioned on choosing that sub-protocol is at least $1 - 3 \epsilon / (1 - \delta) > 1 - 6 \delta = 1 - n^{-\alpha/8}$.  The probability that Eve accepts conditioned on choosing the computation protocol is at least $1 - \epsilon / \delta = \frac12 (c + s)$.  

For analyzing the soundness of the protocol, we introduce a different version of the computation sub-protocol.  Whereas in the computation sub-protocol, Eve's messages to Bob are chosen non-adaptively and her messages to Alice chosen adaptively, in the alternative sub-protocol, only Eve's messages to Bob are chosen adaptively.  

Note that in the computation sub-protocol, many of the messages Eve sends to Alice are fixed by~$\sigma$, independent of Alice and Bob's responses.  For example, Eve wants the initial state to be $\ket b \otimes \ket{0^{m-1}}$ only up to a Pauli correction, so she does not care whether Bob claims to have measured $\ket 0$ or~$\ket 1$.  In fact, all of the messages Eve sends to Alice are independent of Alice and Bob's responses, \emph{except} for the two rounds immediately following teleportation into a~$\phasegate$ gate.  For these two rounds, Eve's messages to Alice are adaptive, because she wants to apply either an $I$ or an~$H$ correction to the output of the gate depending on the Pauli frame that entered it.  This Pauli frame is determined by Alice responses in all of the previous rounds, Bob's responses in all of the rounds up and including the round that was meant to prepare the $\phasegate$ resource state, and also the $2 (k+1) m$ bits that Eve received from $P_1, \ldots, P_k$---bits that determine the initial Pauli frame on Alice's qubits from EPR states shared with $P_1, \ldots, P_k$.  With~$\sigma$ fixed, there are therefore $2^{T_{\phasegate}}$ possible transcripts for the messages from Eve to Alice, two possibilities for each gate~$\phasegate$.  

Alternatively, however, Eve can fix her messages to Alice in advance, and can change the permutation she gives Bob for the block of resource states following a $\phasegate$ gate.  She either leaves the $I$ and~$H$ resource states (i.e., $\ket{\psi^*}$ and $(I \otimes H) \ket{\psi^*}$, up to Pauli operators) alone, or she switches their positions.  This defines the alternative computation sub-protocol; it is the same as the computation sub-protocol, except with the adaptive corrections made by switching the positions of the two resource states in Eve's messages to Bob.  First, Eve picks~$\sigma$ and~$S$, then she interacts with Alice, then she interacts with Bob, introducing additional swaps into~$\sigma$ adaptively when required.  Note that the timing of messages is different in this alternative sub-protocol; Alice goes before Bob.  However, we have already argued that Alice and Bob's strategies do not depend on the timing, so in our analysis we can substitute the same super-operators into this alternative and hypothetical computation sub-protocol.  

\begin{claim} \label{t:notreallyadaptive}
Whatever Alice and Bob's strategies may be, running those strategies in the computation sub-protocol and in the alternative sub-protocol gives {identical} results.  That is, the transcripts are identically distributed, and conditioned on any fixed transcript, the provers' joint states and Eve's private Pauli frames in the two sub-protocols are identical.  
\end{claim}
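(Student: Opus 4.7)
The plan is to show that for each fixed choice of Alice's and Bob's internal randomness $r_A, r_B$, there is a bijection $\sigma \leftrightarrow \sigma^*$ on $S_{q\nstate}$ under which running the original sub-protocol on $(\sigma, r_A, r_B)$ and the alternative sub-protocol on $(\sigma^*, r_A, r_B)$ produce pathwise identical observations: Alice is told to measure the same physical qubits, Bob receives the same sequence of message chunks, both return the same responses, and Eve computes the same Pauli frame. Since a bijection carries the uniform distribution to itself, this pathwise matching immediately upgrades to the claimed distributional equality, and the provers' post-measurement joint states and Pauli frame agree as deterministic functions of the coupled data.

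To define the bijection, first run the original sub-protocol on $(\sigma, r_A, r_B)$ to completion; for each $\phasegate$ gate let $s \in \{0,1\}$ record whether an H-correction was invoked, and let $\sigma^*$ be obtained from $\sigma$ by swapping the $I$- and $H$-resource positions within the block immediately following that $\phasegate$ gate whenever $s=1$. I would then verify by induction on the rounds that the alternative sub-protocol on $(\sigma^*, r_A, r_B)$ reproduces the original transcript. The only non-trivial step is the round of an H-correction after a $\phasegate$ teleportation: the original directs Alice adaptively to qubit $\sigma(i_I)$ or $\sigma(i_H)$ according to whether an H-correction is needed, while the alternative (non-adaptively) always directs her to $\sigma^*(i_I)$. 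By the construction of $\sigma^*$ these two qubits coincide in either case, so Alice's super-operator acts on identical inputs and returns identical responses.

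With Alice's side matched, I would turn to Bob. In the alternative, Eve sends him chunks of a further permutation $\sigma^{**}$, inserting a swap in block $j_{i+1}$ whenever she infers an H-correction is needed after the $\phasegate$ of block $j_i$. Inductively on the blocks, her swap decisions coincide with those already encoded in $\sigma^*$: Alice's responses match by the previous paragraph, and Bob's responses for earlier blocks match because by the induction hypothesis they are computed from identical messages, so the alternative's H-correction inference agrees with the original's. Since a swap is an involution, applying the same pattern twice restores the identity and yields $\sigma^{**}=\sigma$; Bob therefore receives exactly the chunks the original Eve sent him and returns identical responses. The map $\sigma \mapsto \sigma^*$ is consequently a bijection of $S_{q\nstate}$ for each fixed $(r_A, r_B)$.

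The main subtlety to check carefully is a timing issue in the alternative sub-protocol: when Eve is about to transmit the chunk for block $j_{i+1}$ to Bob, she must already know whether an H-correction follows the $\phasegate$ of block $j_i$. This decision depends on Alice's Bell-measurement outcome for that $\phasegate$, on Bob's response for block $j_i$ (which specifies the Pauli on the $\phasegate$ resource state), and on the previously propagated Pauli frame. All three are in hand because Alice's entire interaction precedes Bob's in the alternative sub-protocol, and Bob is processed block by block, so the relevant earlier Bob-responses have already been received by the time the next chunk must be sent.
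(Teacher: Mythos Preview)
Your argument is essentially the paper's deferred-decisions proof recast as an explicit bijection on permutations: the paper defers the choice of each $q$-tuple of $\sigma$ until it is needed and observes that, at the round following a $\phasegate$ gate, ``Eve sends Alice the adaptive position and Bob the raw tuple'' and ``Eve sends Alice the fixed position and Bob the adaptively swapped tuple'' produce the same joint distribution of messages; your map $\sigma\mapsto\sigma^*$ is exactly the measure-preserving correspondence that witnesses this equality, and your induction over blocks is the same step-by-step matching the paper performs.

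One point deserves care. You anchor the bijection by fixing ``internal randomness $r_A, r_B$'', but Alice and Bob are quantum: their outputs are not deterministic functions of pre-sampled coins, and in general one cannot couple two runs of a quantum strategy by freezing such coins. The right statement (and what your induction actually needs) is that Alice's and Bob's instruments act on disjoint subsystems and hence commute, so once you have shown that in the two protocols each prover receives literally the same sequence of classical messages, the composed super-operator applied to $\rho$ is identical, giving identical transcript distributions and post-measurement states. Replace the $r_A, r_B$ framing with this commutation observation and your argument goes through; the paper handles this implicitly by running Alice and Bob ``simultaneously''.
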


\begin{proof}
The proof is by the principle of deferred decisions.  Observe that since Alice and Bob act on different subsystems, their operators commute with each other, and the only important order is that imposed by Eve's adaptive decisions.  In particular, we can imagine running Alice and Bob simultaneously.  Let Eve fix the subset~$S$, but defer fixing the indices of~$\sigma$ until they are required.  Consider a $\phasegate$ gate in the circuit~$\circuit$.  Run Alice up through the Bell measurement that teleports into that gate, and run Bob until stopping just before the preparation of the next block of resource states in~$S$ (i.e., if the gate uses block $j_i \in S$, then stop before the preparation of block $j_{i+1}$).  The next step can be implemented in two ways: 
\begin{enumerate}
\item
In the computation sub-protocol, Eve picks a list of $q$ uniformly random qubit indices from unused qubits in $[q \nstate]$.  She sends these to Bob, and she sends to Alice the input position of either the~$I$ resource state or the~$H$ resource state, depending on whether a correction is required.  
\item 
In the alternative sub-protocol, Eve again picks a list of $q$ uniformly random, unused qubit indices.  She sends to Alice the second of these indices, i.e., the input position of the~$I$ resource state in \figref{f:combinedgadgetcircuit}.  She sends the~$q$ indices to Bob, but if a Hadamard correction is required, then she first swaps the indices for the~$I$ and~$H$ resource states.  
\end{enumerate}
These two different rules generate exactly the same joint distribution of messages to Alice and Bob.  The same is true for every~$\phasegate$ gate.  Therefore, the computation sub-protocol and the alternative sub-protocol are actually the same, except for the order of the messages.  
\end{proof}

\claimref{t:notreallyadaptive} is the reason why tomography characterizes the provers' strategies even though Eve's messages in the protocol are chosen adaptively---unlike in the toy counter-example at the beginning of this section.  Using this claim and the tomography theorems, we can prove soundness of the protocol.  

\smallskip

So as to frame our analysis in terms of super-operators, let us define some notation for the portion of the computation sub-protocol after the $K-1$ sets of CHSH games.  Let $\H_A$ be Alice's Hilbert space, $\H_B$ be Bob's Hilbert space and $\H_P$ be the tensor product of the Hilbert spaces of provers $P_1, \ldots, P_k$.  Let $T_{\overrightarrow A}$ be the space of transcripts for messages from Eve to Alice; it can hold~$n$ messages each holding two indices in $[q \nstate]$.  Let $T_{\overleftarrow A} = (\C^2 \otimes \C^2)^{\otimes n}$ be the space of transcripts for messages from Alice to Eve.  Let $T_A = T_{\overrightarrow A} \otimes T_{\overleftarrow A}$ be the space of transcripts for all messages to and from Alice.  Similarly, let $T_{\overrightarrow B} = (\C^{[q \nstate]^q})^{\otimes \nstate}$ be the space of transcripts for messages to Bob---$\nstate$ rounds of messages each consisting of~$q$ indices from $[q \nstate]$---let $T_{\overleftarrow B} = (\C^{2^q})^{\otimes \nstate}$ be the space for Bob's responses, and let $T_B = T_{\overrightarrow B} \otimes T_{\overleftarrow B}$.  Let $T_P = \C^2 \otimes (\C^2)^{\otimes (2 (k+1) m)}$ be the space for transcripts between Eve and the provers $P_1, \ldots, P_k$.  Let $T_{BP} = T_B \otimes T_P$, $T_{AP} = T_A \otimes T_P$ and $T_{ABP} = T_A \otimes T_{BP}$.  As the transcripts in our protocol are classical, the states in these spaces will always be diagonal in the computational basis.  

Let $\rho \in \L(\H_A \otimes \H_B \otimes \H_P)$ be the initial state of the provers at the beginning of the $K$th set.  We will leave implicit the dependence of~$\rho$, and of the super-operators defined below, on the transcripts of the first $(K-1) \nset$ games.  
Define a super-operator~$\B : \L(\H_B) \rightarrow \L(T_B \otimes \H_B)$ to implement the joint operations of Eve and Bob in the $K$th set of the computation sub-protocol.  This is the same as Eve's interaction with Bob in the state tomography protocol.  $\B$ first appends a register $\frac{1}{(q \nstate)!} \sum_{\sigma \in S_{q \nstate}} \ketbra \sigma \sigma \in \L(T_{\overrightarrow B})$, and then applies Bob's measurement super-operators for the $\nstate$ rounds in the sub-protocol.  Define a super-operator $\cP: \L(\H_P) \rightarrow \L(T_P \otimes \H_P)$ to implement Eve's interactions with the provers $P_1, \ldots, P_k$.  Define a super-operator $\Aad : \L(T_{BP} \otimes \H_A) \rightarrow \L(T_{BP} \otimes T_A \otimes \H_A)$, as the super-operator describing Eve's adaptive interactions with Alice, controlled by the transcript of her interactions with Bob and $P_1, \ldots, P_k$.  That is, applying $\Aad$ to a state $\sum_{m_{BP}} \ketbra{m_{BP}}{m_{BP}} \otimes \rho(m_{BP})$ gives $\sum_{m_{BP}} \ketbra{m_{BP}}{m_{BP}} \otimes \Aad(m_{BP})(\rho(m_{BP}))$, where $\Aad(m_{BP}) : \L(\H_A) \rightarrow \L(T_A \otimes \H_A)$ is the super-operator conditioned on the transcript~$m_{BP}$.  In the original description above, Eve computes a random subset $S \subset [\nstate]$ and also keeps track of a Pauli frame for Alice's qubits.  However, this private information can be computed, or uncomputed, from the transcripts, so the super-operator does not need to track it explicitly.  Extend $\B$ to act as the identity on $\H_A \otimes \H_P$, and similarly extend $\cP$ and~$\Aad$.  Then conditioned on~$K$ and the first $K-1$ sets of CHSH games, the computation sub-protocol finishes in the state 
\begin{equation*}
\Aad \B \cP (\rho) \in \L(T_{ABP} \otimes \H_A \otimes \H_B \otimes \H_P)
 \enspace .
\end{equation*}

Define similarly super-operators $\A : \L(\H_A) \rightarrow \L(T_A \otimes \H_A)$ and $\Bad : \L(T_{AP} \otimes \H_B) \rightarrow \L(T_{AP} \otimes T_B \otimes \H_B)$ as implementing, respectively, Eve's interactions with Alice and Eve's adaptive interactions with Bob in the alternative description of the computation sub-protocol.  Note that $\A$ is the same as Eve's interactions with Alice in the process tomography sub-protocol.  By \claimref{t:notreallyadaptive}, no matter the provers' strategies, 
\begin{equation} \label{e:qmipmipadaptivebyeitherprover}
\Bad \A \cP = \Aad \B \cP
 \enspace .
\end{equation}

Since Eve accepts the CHSH games sub-protocol with probability at least $1 - 6 \delta$, by \thmref{t:sequentialstructureandobservedcorrelationsapplied} there is at least a $1 - 6 \delta - \nset^{-\alpha/8}$ probability that the provers' strategy for the $K$th set of CHSH games, conditioned on $K$ and the first $K-1$ sets, is $\zeta$-ideal, where $\zeta = \nset^{-\alpha / (32 \kappaEPR)}$ and $\kappaEPR$ is the constant from \thmref{t:sequentialCHSHgames}.  The strategy being $\zeta$-ideal means in particular that there exist isometries $\XA : \H_A \hookrightarrow (\C^2)^{\otimes \nset} \otimes \H_A'$ and $\XB : \H_B \otimes \H_P \hookrightarrow (\C^2)^{\otimes \nset} \otimes \H_{BP}'$, and some state $\rho'$ such that, letting $\hat \rho = (\ketbra{\psi^*}{\psi^*})^{\otimes \nset} \otimes \rho'$, 
\begin{equation*}
\trnorm{ (\XA \otimes \XB) \rho (\XA \otimes \XB)^\dagger - \hat \rho } \leq \zeta
 \enspace .
\end{equation*}
The isometries $\XA$ and~$\XB$ thus define ideal qubit locations in $\H_A$ and in $\H_B \otimes \H_P$.  (In fact, by following the proof of \thmref{t:sequentialCHSHgames}, it is not difficult to see that the isometry $\XB$ respects the decomposition $\H_B \otimes \H_P$, i.e., factors as separate local isometries.  We will not need this observation, however.)  To simplify notation, we can embed $\H_A$ into $(\C^2)^{\otimes \nset} \otimes \H_A'$ and $\H_B$ into $(\C^2)^{\otimes \nset} \otimes \H_{BP}'$, and assume that $\XA$ and $\XB$ are both the identity.  

Define $\hat \B$, $\Aadhat$, $\hat \A$ and $\Badhat$ to be the ideal super-operators for provers who follow Eve's instructions in the $K$th set of the computation sub-protocol up to a basis change by~$\phasegate$.  That is, they apply~$\phasegate$ transversally, apply the specified measurements to the specified qubits of either $(\C^2)^{\otimes \nset} \otimes \H_A'$ or $(\C^2)^{\otimes \nset} \otimes \H_{BP}'$, and then apply $\phasegate^\dagger$ transversally.  Let $\V$ be the verifier's acceptance predicate based on the final transcript in $T_{ABP}$; it updates Alice's final reported measurement value according to the Pauli frame, and accepts if the result is~$\ket 1$.  By \claimref{t:qmipbasedonteleportation} and since $(\phasegate \otimes \phasegate) \ket{\psi^*} = \ket{\psi^*}$, the provers have a strategy in the original QMIP protocol that makes the verifier accept with probability exactly  
\begin{equation*}
\Pr[ \text{$\V$ accepts $\Aadhat \hat B \cP (\hat \rho)$} ]
 \enspace .
\end{equation*}
This probability is at most the soundness parameter~$s$ of the protocol, since $x \notin L$.  Our goal is to relate $\XAB \Aad B \cP (\rho)$ to $\Aadhat \hat B \cP (\hat \rho)$, and therefore to relate $\Pr[ \text{$\V$ accepts $\Aad B \cP (\rho)$} ]$ to $\Pr[ \text{$\V$ accepts $\Aadhat \hat B \cP (\hat \rho)$} ]$, in order to derive a contradiction.  

Start by using \thmref{t:processtomographydishonestBob} for process tomography.  Since Eve's acceptance probabilities in the CHSH games and process tomography sub-protocols are both at least $1 - 6 \delta > 1 - n^{-\alpha/8}$, the theorem applies.  We obtain that with probability at least $1 - O(n^{-\alpha/16})$ over $K$ and the first $K-1$ sets, 
\begin{equation*}
\bigtrnorm{ \XA \A (\rho) - \hat \A \XA (\rho) } = O( n^{1 - \alpha / (64 \kappaEPR)} )	
 \enspace .
\end{equation*}
In particular, by \claimref{t:notreallyadaptive}, 
\begin{equation}\begin{split} \label{e:adaptivecomputationprocesstomography}
\Aad \B \cP (\rho) 
&= \Bad \cP \A (\rho) \\
&\approx \Bad \cP \hat \A (\rho) \\
&= \Aadhat \cP \B (\rho)
 \enspace ,
\end{split}\end{equation}
where the approximation is up to error $O( n^{1 - \alpha / (64 \kappaEPR)} )$ in trace distance.  

To finish, we would like to use the state tomography theorem, \thmref{t:statetomographydishonestAlicesuperoperator}, to relate $\B(\rho)$ to $\hat \B(\hat \rho)$.  Since Eve's acceptance probabilities in the CHSH games and state tomography sub-protocols are both at least $1 - n^{-\alpha/8} > 1 - \nstate^{-1/3}$, the theorem applies.  However, the theorem does not give so strong a claim.  It only allows for approximating Bob's actual super-operator by his ideal super-operator if we also trace out Bob's Hilbert space, and Bob's responses and Alice's qubits for all but those corresponding to a random set $S \subset [\nset]$.  \thmref{t:statetomographydishonestAlicesuperoperator} can be applied in our situation, but to do so we will need to introduce some more notation.  

\def\AadhatS {\hat{\A}_{\text{ad,S}}}		

Let $\S$ be the super-operator that acts as follows: 
\begin{enumerate}
\item First, based on the transcript of messages from Eve to the provers Alice and Bob, it extracts into a new classical register the subset~$S \subset [\nstate]$ consisting of those blocks of EPR states that Eve has asked both provers to touch.  
\item Then it reorders those of Alice's qubits that are supposed to be entangled with Bob so that the qubits in the blocks of~$S$ come first.  
\item Finally, it traces out Bob's Hilbert space, the messages to and from Bob for rounds outside of~$S$, Alice's extra space~$\H_A'$ and all of Alice's qubits that are supposed to be entangled with Bob for blocks outside of~$S$.  
\end{enumerate}
Let $\S'$ be the super-operator that has the same second and third steps, but that chooses $S$ uniformly at random in the first step.  Continuing from Eq.~\eqnref{e:adaptivecomputationprocesstomography}, we have 
\begin{equation*}\begin{split}
\S \Aad \B \cP (\rho) 
&\approx \S \Aadhat \cP \B (\rho) \\
&= \AadhatS \cP \S' \B (\rho)
 \enspace ,
\end{split}\end{equation*}
where $\AadhatS$ is the ideal adaptive super-operator for Alice with the subset~$S$ fixed.  Here, the equality $\S \Aadhat \B \cP = \AadhatS \S' \B \cP$ follows because the ideal super-operator $\Aadhat$ has no support on Alice's qubits that are supposed to be entangled with Bob for blocks outside of~$S$.  

By \thmref{t:statetomographydishonestAlicesuperoperator} and a Markov inequality, with probability at least $1 - O(\nstate^{-1/96})$ over $K$ and the first $K-1$ sets, 
\begin{equation*}
\bigtrnorm{ \S' \B (\rho) - \S' \hat \B (\hat \rho) } \leq O(\nstate^{-1/384}) + 2 \cdot O(\nstate^{-1/96}) = O(\nstate^{-1/384}) 
 \enspace ,
\end{equation*}
where the term $2 \cdot O(\nstate^{-1/96})$ accounts for the trace distance for bad choices of~$S$.  

Putting together our calculations, we obtain that with probability at least $1 - (6 \delta + \nset^{-\alpha/8}) - O(n^{-\alpha/16}) - O(\nstate^{-1/96}) = 1 - O(\nstate^{-1/96})$ over $K$ and the first $K-1$ sets, 
\begin{equation} \label{e:qmipmipfinalsuperoperatorapproximation}
\S \Aad \B \cP (\rho) \approx \S \Aadhat \hat \B \cP (\hat \rho)
\end{equation}
up to an error in trace distance at most $O(n^{1 - \alpha / (64 \kappaEPR)}) + O(\nstate^{-1/384}) = O(n^{-\alpha/(768 \kappaEPR)})$.  In particular, since the verifier's acceptance predicate $\V$ does not depend on the messages to and from Bob for rounds outside of~$S$, in these cases we have 
\begin{equation}\begin{split}
\Pr[ \text{$\V$ accepts $\Aad \B \cP (\rho)$} ]
- \frac12 \bigtrnorm{ \S' \B (\rho) - \S' \hat \B (\hat \rho) }
&\leq \Pr[ \text{$\V$ accepts $\Aadhat \hat B \cP (\hat \rho)$} ] \\
&\leq s
 \enspace ,
\end{split}\end{equation}
the soundness parameter of the original protocol.  

Thus the probability that the verifier accepts the computation sub-protocol is at most $\big( s + O(n^{-\alpha/(768 \kappaEPR)}) \big) + O(\nstate^{-1/96}) \cdot 1 = s + O(n^{-\alpha/(768 \kappaEPR)})$.  The $O(\nstate^{-1/96}) \cdot 1$ term is the contribution for those~$K$ and transcripts for the first $K - 1$ sets for which we cannot make the approximation of Eq.~\eqnref{e:qmipmipfinalsuperoperatorapproximation}; in such cases, we can only upper bound $\Pr[ \text{$\V$ accepts $\Aad \B \cP (\rho)$} ]$ by one.  For~$\alpha$ and~$n$ at least sufficiently large constants, $s + O(n^{-\alpha/(768 \kappaEPR)}) < \frac12 (c + s)$.  This is a contradiction.  Therefore, on inputs $x \notin L$, Eve must accept with probability less than $1 - \big(1 - \frac12(c + s)\big) \delta$.  Together with Eq.~\eqnref{e:qmipmipcompleteness}, this establishes an inverse polynomial completeness-soundness gap for our transformed $\text{MIP}^*$ protocol.  Sequential repetition can be used to amplify the gap.  

For the claimed blindness property when $k = 0$, observe from Eq.~\eqnref{e:qmipmipadaptivebyeitherprover} that each prover's view of the protocol consists of random messages, drawn from a distribution that depends only on the size of the circuit~$\circuit$.  
\end{proof}

The above proof gives blind, verified quantum computation for decision problems.  The same arguments extend beyond decision problems, though, e.g., to relation and sampling problems.  In general, the verifier can referee many sequential protocols, each time picking a random one of the three testing sub-protocols, in order to gain sufficient statistical confidence that the provers are playing nearly honestly.  At a random position, the verifier can insert the computation sub-protocol.  The analysis is then the same as above.  In particular, the arguments leading to the approximation of Eq.~\eqnref{e:qmipmipfinalsuperoperatorapproximation} still hold.  

It may be that in fact $\QMIP[\text{$k$ provers}] = \MIP^*[\text{$k$ provers}]$, without the need to add two additional provers.  Our proof technique is useless for the $k = 1$ case, which is already known: $\QIP = \IP$.  However, it seems likely that the technique should work for the case $k \geq 2$, with minor technical changes.  The idea is to identify Alice with~$P_1$ and Bob with~$P_2$.  We have not investigated it carefully, though.  In the next section, we will present several other interesting open problems.

\ifx\compilefullpaper\undefined
\bibliographystyle{alpha-eprint}
\bibliography{q}

\end{document}
\fi

\section{Open problems}

By characterizing the device strategies that can win many successive CHSH games, we have shown how a fully classical party can direct the actions of two untrusted quantum devices.  The simplest case is device-independent quantum key distribution, free of the independence assumptions needed in previous analyses.  Three main open problems are to extend the results to other non-local quantum games beyond the CHSH game, to improve the efficiency of our schemes and their analysis---of interest both for developing practical applications and for obtaining a better theoretical understanding of the underlying physics---and to find further cryptographic applications.  

\begin{enumerate}
\item
The CHSH game is ``rigid" in the sense that any strategy that achieves the optimum success probability can be related by local isometries to the ideal strategy of \tabref{f:optimalCHSHstrategy}, and nearly optimal strategies can be nearly related to the ideal strategy (\lemref{t:eprlemma}).  What other non-local quantum games satisfy this property?  \lemref{t:generalizedeprlemma} in \appref{s:generalizedeprlemma} gives one example, but it is based on the CHSH game and its analysis inefficiently goes through \lemref{t:eprlemma}.  For games whose analysis cannot be reduced to studying pairs of two-outcome measurements, Jordan's Lemma (\lemref{t:jordanslemma}) will not apply, and new techniques will be needed for analytically controlling the provers' strategies.  Can the rigidity of an XOR game be reduced to rigidity properties of the Tsirelson semi-definite program?  

If \lemref{t:eprlemma} extends to show the rigidity of a certain game, then it is likely that the sequential repetition theorem, \thmref{t:sequentialCHSHgames}, also generalizes.  The main technical tricks for proving \thmref{t:sequentialCHSHgames} involve shifting one prover's operations to the other prover's qubits (\secref{s:howtoplayforBob}).  This allows the derivation of a tensor-product structure within a prover's Hilbert space based on the tensor-product structure between the provers' Hilbert spaces.  These tricks should apply to other non-local games based on maximally entangled shared states.  

\item 
Although our schemes have polynomial overheads and are therefore efficient in principle, the exponents are too large for any practical applications.  The DIQKD key rate tends to zero, instead of a positive constant.  Significant improvements are possible by tightening the analysis, which we have not at all optimized.  However, new proof techniques are probably required to achieve a practical overhead.  One approach might be to use ideas from fault-tolerant quantum computing~\cite{NielsenChuang00}.  Fault tolerance can reduce the overhead if it allows for proving the same soundness guarantees from less statistical data.  Just as important, fault-tolerance ideas might allow for tolerating higher noise rates, even constant noise rates.  We would like our schemes to work even if the honest provers are somewhat faulty, as would be any real devices.  In principle, this is not a problem for blind, verified computation, since the provers can work on top of a quantum error-correcting code and the verifier can help distill any faulty initial entanglement.  Of course, a general-purpose quantum computer, capable of manipulating quantum error-correcting codes, is well beyond current technology.  In contrast, quantum key distribution setups have been deployed and are commercially available~\cite{ScaraniBechmannPasquinucciCerfDusekLutkenhausPeev08qkdreview}.  More sophisticated proofs might allow for device-independent QKD with today's experimental technology.  

Another aspect of efficiency is the number of rounds of communication.  Can \thmref{t:sequentialCHSHgames} be generalized to hold for games played in parallel instead of in sequence?  A parallel-repetition theorem would allow for enforcing the assumption that the provers do not communicate based on space-like separation of the provers.  One starting point might be to use the parallel repetition analysis techniques of~\cite{KempeVidick10parallelrepetition}.  Reducing the round complexity of the blind, verified quantum computation protocol might be more difficult.  Computation by teleportation, at least, inherently requires the coordination of adaptive corrections.  

\item 
The CHSH game rigidity theorems provide the foundation for device-independent quantum key distribution, for blind, verified quantum computation and for the equality $\QMIP = \MIP^*$.  The theorems do not have a classical analog and allow for drastically reduced security assumptions from what is possible classically, in particular the elimination of any computational assumptions.  An important question is whether other cryptographic primitives or protocols, beyond what is possible classically, can also be based on CHSH game rigidity and state and process tomography.  For example, Silman et al.\ have given a device-independent, imperfect bit-commitment protocol based on the Greenberger-Horne-Zeilinger~(GHZ) game~\cite{SilmanChaillouxAharonKerenidisPironioMassar11DIcrypto}.  
\end{enumerate}

\subsection*{Acknowledgements}

We thank Edgar Bering, Anne Broadbent, Andr{\' e} Chailloux, Matthias Christandl, Roger Colbeck, Tsuyoshi Ito, Robert K{\"o}nig, Matthew McKague, Vidya Madhavan, Renato Renner, Shivaji Sondhi and Thomas Vidick for helpful conversations.  Part of the work conducted while F.U.~was at UC Berkeley, and B.R.~at the Institute for Quantum Computing, University of Waterloo.  B.R.~acknowledges support from NSERC, ARO-DTO and Mitacs.  U.V.~acknowledges support from NSF grant CCF-0905626 and Templeton grant~21674.

\appendix

\ifx\compilefullpaper\undefined  
\documentclass[11pt]{article}

\begin{document}
\fi

\section{Characterization of nearly optimal strategies for an extended CHSH~game} \label{s:generalizedeprlemma}

The CHSH game in \lemref{t:eprlemma} establishes a shared EPR state between the provers Alice and Bob, as well as $X$ and $Z$ operators for Alice and operators $(X \pm Z) / \sqrt 2$ for Bob.  In this section, we extend the CHSH game with more questions in order that the rigidly determined ideal strategy should use Pauli $Y$ operators, in addition to the $X$ and~$Z$ operators, acting on the shared EPR state.  Our extension follows along the same lines as McKague and Mosca's extension of the Mayers-Yao test~\cite{McKagueMosca10chshydirection}.  However, it will not be possible to fully determine the $Y$ operator, since the provers can coordinate to use $-Y$ each instead of $+Y$ with no detectable consequences, and can even do so coherently.  A reflection of the Bloch sphere about the $xz$ plane is a \emph{non-unitary} symmetry, that cannot simply be absorbed into a change of basis.  It corresponds to taking the complex conjugate of the coefficients of the state in the computational basis ($Z$ eigenbasis).  Similar to \lemref{t:eprlemma}, we characterize, as far as possible, $\epsilon$-structured strategies for the extended CHSH game.  

\begin{figure}
\centering
\includegraphics[scale=.3928]{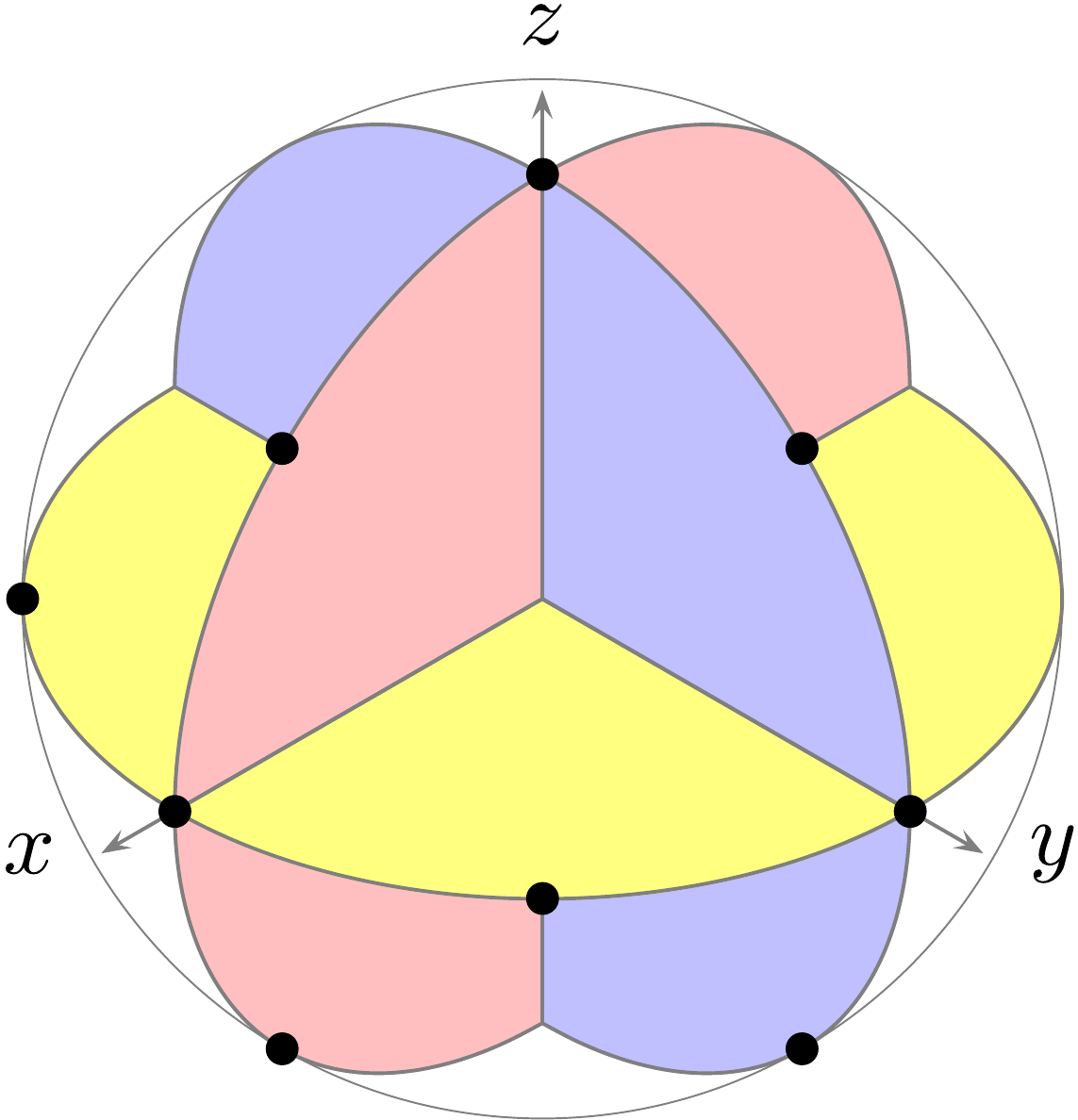}
\caption{An extended CHSH game with nine measurement directions, indicated here on the Bloch sphere, has embedded within it CHSH games in the $xz$, $xy$ and $yz$ planes.  In the cross-section for each of these planes are the four measurement directions of \tabref{f:optimalCHSHstrategy}.} \label{f:extendedchshgame}
\end{figure}

\begin{definition} \label{t:generalizedCHSHgamedef}
An \emph{extended CHSH game} involves three parties: a classical randomized verifier or referee, Eve, and two quantum provers, Alice and Bob.  Alice and Bob are not allowed to communicate with each other.  They share two registers of an arbitrary pure quantum state $\ket \psi \in \H_A \otimes \H_B \otimes \H_C$, where $\H_A$ and $\H_B$ are the Hilbert spaces of Alice and Bob, respectively, and $\H_C$ is an inaccessible third Hilbert space.  

In the game, Eve twice and independently picks a uniformly random direction from the set $\{ (1,0,0), (0,1,0), (0,0,1), \frac{1}{\sqrt 2} (1, 1, 0), \frac{1}{\sqrt 2} (1, -1, 0), \frac{1}{\sqrt 2} (1, 0, 1), \frac{1}{\sqrt 2} (1, 0, -1), \frac{1}{\sqrt 2} (0, 1, 1), \frac{1}{\sqrt 2} (0, 1, -1) \}$, shown in \figref{f:extendedchshgame}.  She sends the first direction, $\vec a$, to Alice, and the second direction, $\vec b$, to Bob.  Alice measures her portion of~$\ket \psi$ using a two-outcome projective measurement $\{\Pi_{\vec a}^0, \Pi_{\vec a}^1\}$, and returns the result, $x \in \{0,1\}$, to Eve.  Bob similarly returns to Eve $y \in \{0,1\}$, the result of the projective measurement $\{\Pi_{\vec b}'^0, \Pi_{\vec b}'^1\}$.  Therefore, for questions $\vec a, \vec b$, the probability of responses $a, b$ is given by 
\begin{equation}
\tilde p_{xy|\vec a\vec b} = \big\langle \Pi_{\vec a}^x \otimes \Pi_{\vec b}'^y \otimes \identity_C \big\rangle_{\ket \psi}
 \enspace .
\end{equation}

In the \emph{ideal strategy}, Alice and Bob return the result of measuring their halves of a shared EPR state $\frac{1}{\sqrt 2}(\ket{00} + \ket{11})$, along the input direction, thought of as an axis for the Bloch sphere.  That is, on input $\vec a$, Alice measures with the projections $\frac{1}{2} (I + \vec a \cdot (X, Y, Z))$ and $\frac{1}{2} (I - \vec a \cdot (X, Y, Z))$, and returns $x = 0$ on the first outcome as $x = 1$ on the second outcome.  Bob follows the same ideal strategy.  Thus the probability of outcomes $x, y$ on questions $\vec a, \vec b$ is 
\begin{equation}
p_{xy|\vec a\vec b} = \Big\langle \frac{1}{2}\big(I + (-1)^x \, \vec a \cdot (X,Y,Z)\big) \otimes \frac{1}{2}\big(I + (-1)^y \, \vec b \cdot (X,Y,Z)\big) \Big\rangle_{\frac{1}{\sqrt 2}(\ket{00} + \ket{11})}
 \enspace .
\end{equation}

For $\epsilon \geq 0$, call a strategy for the extended CHSH game $\epsilon$-structured if for all $\vec a, \vec b, x, y$, 
\begin{equation}
\abs{\tilde p_{xy|\vec a\vec b} - p_{xy|\vec a\vec b}} \leq \epsilon
 \enspace .
\end{equation}
\end{definition}

To analyze the extended CHSH game, our approach is to apply \lemref{t:eprlemma} repeatedly.  Observe that the extended CHSH game contains within it six CHSH games, i.e., sets of questions for which the extended CHSH ideal strategy is consistent with playing a CHSH game optimally.  For example, the questions $(\vec a, \vec b) \in \{(1,0,0), (0,0,1)\} \times \{\frac{1}{\sqrt 2}(1,0,1), \frac{1}{\sqrt 2}(1,0,-1)\}$ form one such game, as do questions $(\vec a, \vec b) \in \{\frac{1}{\sqrt 2}(1,0,1), \frac{1}{\sqrt 2}(1,0,-1)\} \times \{(1,0,0), (0,0,1)\}$; there are two CHSH games along each plane $xz$, $xy$ and $yz$.  In an $\epsilon$-structured strategy for the extended CHSH game, each of these sub-games has correlation value at least $2 \sqrt 2 - 16 \epsilon$, when questions $\vec a, \vec b$ are appropriately relabeled by bits.  (This follows by the definition of the correlation value in Eq.~\eqnref{e:chshcorrelationvaluedef}: $4 (2 \Pr[x \oplus y = a b] - 1) = 2 \sum_{\alpha, \beta \in \{0,1\}} \Pr[x \oplus y = a b \,\vert\, a = \alpha, b = \beta] - 4$.)  \lemref{t:eprlemma} therefore applies to each sub-game, and we will then stitch together the conclusions.  We show: 

\begin{lemma}[Rigidity for the extended CHSH game] \label{t:generalizedeprlemma}
Consider a extended CHSH game, with the notation established in \defref{t:generalizedCHSHgamedef}.  Let $\epsilon > 0$ and consider an $\epsilon$-structured strategy.  Then there are extensions of the Hilbert spaces $\H_A, \H_B$, and extensions of the reflections $\bar Z, \bar X, \bar Z', \bar X'$ by a direct sum with other reflections, so that the following properties hold: 
\begin{itemize}
\item
Alice's space is isomorphic to $\C^2 \otimes \hat \H_A$, with $\bar Z = Z \otimes \identity$, $\bignorm{(\bar X - X \otimes \identity)_A \ket \psi} = O(\sqrt \epsilon)$, and for some reflection $\Delta \in \L(\hat \H_A)$, $\bignorm{(\bar Y - Y \otimes \Delta)_A \ket \psi} = O(\epsilon^{1/12})$.  
\item 
Bob's space is isomorphic to $\C^2 \otimes \hat \H_B$, with $\max\{ \bignorm{(\bar Z' - Z \otimes \identity)_B \ket \psi}, \bignorm{(\bar X' - X \otimes \identity)_B \ket \psi} \} = O(\epsilon^{1/4})$, and for some reflection $\Delta' \in \L(\hat \H_B)$, $\bignorm{(\bar Y' - Y \otimes \Delta')_B \ket \psi} = O(\epsilon^{1/144})$.  
\item 
Finally, letting $\ket{\psi^*} = \frac{1}{\sqrt 2}(\ket{00} + \ket{11})$, there exists a unit vector $\ket{\psi^\times} \in \hat \H_A \otimes \hat \H_B \otimes \H_C$ with $\norm{\ket \psi - \ket{\psi^*} \otimes \ket{\psi^\times}} = O(\sqrt \epsilon)$ and $\bra{\psi^\times} \Delta \otimes \Delta' \ket{\psi^\times} \geq 1 - O(\epsilon^{1/72})$.  
\end{itemize}
The constants hidden by the big-$O$ notation are universal constants, independent of the game strategy.  
\end{lemma}

In the proof we will use: 

\begin{lemma} \label{t:converttoreflection}
Let $U$ be a unitary and $H$ a Hermitian operator, both acting on state $\ket \phi$, with $\norm H \leq 1$ and $\norm{(U - H) \ket \phi} \leq \epsilon$.  Then there is a reflection~$\Delta$ such that $\norm{(U - \Delta) \ket \phi} \leq \epsilon + 2^{4/3} \epsilon^{1/3}$.  Furthermore, if $H = P \otimes H'$, where $P$ has eigenvalues $\pm 1$, then we may take $\Delta = P \otimes \Delta'$ for a reflection $\Delta'$.  
\end{lemma}

\begin{proof}
This is essentially a Markov inequality.  Let $\delta \in (0,1)$, a parameter that we will optimize shortly.  For $c \in \{0,1\}$, let $\Delta_c$ be the projection onto the span of the eigenvectors of $H$ with eigenvalue within $\delta$ of $(-1)^c$.  Let $\Delta = \Delta_0 - \Delta_1$, so $\abs \Delta = \Delta_0 + \Delta_1$.  Then, 
\begin{align*}
\norm{(U - \Delta) \ket \phi} 
&\leq \norm{(U - H) \ket \phi} + \norm{(H - \Delta) \ket \phi} \\
&\leq \norm{(U - H) \ket \phi} + \norm{H \abs \Delta - \Delta} + (1 - \delta) \norm{(\identity - \abs \Delta) \ket \phi}
 \enspace .
\end{align*}
Here the first term on the right is at most $\epsilon$ and the second term is at most~$\delta$.  To bound the final term, use $1 - \epsilon \leq \norm{U \ket \psi} - \norm{(U-H) \ket \phi} \leq \norm{H \ket \phi}$, and $\norm{H \ket \phi}^2 \leq \norm{\abs \Delta \ket \phi}^2 + (1 - \delta)^2 \norm{(\identity - \abs \Delta) \ket \phi}^2 = 1 - \delta (2 - \delta) \norm{(\identity - \abs \Delta) \ket \phi}^2$.  Thus, $\norm{(\identity - \abs \Delta) \ket \phi} \leq \sqrt{2 \epsilon / \delta}$.  Set $\delta = (2 \epsilon)^{1/3}$ to conclude $\norm{(U - \Delta) \ket \phi} \leq \epsilon + 2^{4/3} \epsilon^{1/3}$.  (For $\epsilon < 1/2$, $\delta < 1$, and for $\epsilon \geq 1/2$, the bound is~trivial.)  
\end{proof}

\begin{proof}[Proof of \lemref{t:generalizedeprlemma}]
Let us begin by establishing some notation.  
For $\vec r \in \R^3$, let $R(\vec r) = \vec r \cdot (X, Y, Z) = r_1 X + r_2 Y + r_3 Z$.  
Let $\bar R(\vec a) = \Pi_{\vec a}^0 - \Pi_{\vec a}^1$ and $\bar R'(\vec b) = \Pi_{\vec b}'^0 - \Pi_{\vec b}'^1$.  
Let $\vec v_x = (1,0,0)$, $\vec v_y = (0,1,0)$, $\vec v_z = (0,0,1)$, $\vec v_{\pm xy} = \frac{1}{\sqrt 2}(1,\pm1,0)$, $\vec v_{\pm xz} = \frac{1}{\sqrt 2}(1,0,\pm1)$, $\vec v_{\pm yz} = \frac{1}{\sqrt 2}(0,1,\pm1)$.  
For $\alpha \in \{x, y, z, \pm xy, \pm xz, \pm yz\}$, let $R_\alpha = R(\vec v_\alpha)$, $\bar R_\alpha = \bar R(\vec v_\alpha)$ and $\bar R'_\alpha = \bar R'(\vec v_\alpha)$.  For example, $R_x = X$.  
For a vector~$\ket \phi$, define the semi-norm $\norm{M}_\phi = \norm{M \ket \phi}$.  

The proof has two parts.  First we consider only the questions $\vec a \in \{ \vec v_x, \vec v_y, \vec v_z \}$ and $\vec b \in \{ \vec v_{\pm xz}, \vec v_{\pm yz}, \vec v_{\pm xy} \}$, i.e., question pairs in which Alice is asked to measure along a coordinate axis of the Bloch sphere and Bob is asked to measure in a direction between two coordinate axes.  In particular, we consider three sets of questions: 
\begin{enumerate}
\item 
$(\vec a, \vec b) \in \{\vec v_x, \vec v_z\} \times \{\vec v_{+xz}, \vec v_{-xz}\}$.  As $\bra{\psi^*} (R_x \otimes R_{xz} + R_x \otimes R_{-xz} + R_z \otimes R_{xz} - R_z \otimes R_{-xz}) \ket{\psi^*} = 2 \sqrt 2$, these questions form a CHSH sub-game.  
\item
$(\vec a, \vec b) \in \{\vec v_y, \vec v_z\} \times \{\vec v_{-yz}, \vec v_{+yz}\}$.  Since $\bra{\psi^*} (R_y \otimes R_{-yz} + R_y \otimes R_{yz} + R_z \otimes R_{-yz} - R_z \otimes R_{yz}) \ket{\psi^*} = -2 \sqrt 2$, these questions form a CHSH sub-game if Eve complements Bob's answers.  
\item
$(\vec a, \vec b) \in \{\vec v_y, \vec v_x\} \times \{\vec v_{-xy}, \vec v_{+xy}\}$.  Since $\bra{\psi^*} (-R_y \otimes R_{-xy} + R_y \otimes R_{xy} + R_x \otimes R_{-xy} + R_x \otimes R_{xy}) \ket{\psi^*} = 2 \sqrt 2$, these questions form a CHSH sub-game if Eve complements Alice's answer to question~$\vec v_y$ and complements Bob's answer to question $\vec v_{xy}$.  
\end{enumerate}
By applying \lemref{t:eprlemma} to the first CHSH sub-game above, we establish a shared EPR state~$\ket{\psi^*}$ and characterize Alice's operators $\bar R_z$ and $\bar R_x$.  By applying \lemref{t:eprlemma} to the second and third CHSH sub-games above, we come at Alice's $\bar R_y$ operator from two directions in the Bloch sphere, in order, essentially, to triangulate it.  

In the second part of the proof, we tie in Bob's on-axis reflections.  For this part of the proof, we use only that $\bra{\psi^*} R_x \otimes R_x \ket{\psi^*} = \bra{\psi^*} R_z \otimes R_z \ket{\psi^*} = - \bra{\psi^*} R_y \otimes R_y \ket{\psi^*} = 1$, i.e., that $\ket{\psi^*}$ is a certain stabilizer state.  

\medskip

Consider the questions $(\vec a, \vec b) \in \{\vec v_x, \vec v_z\} \times \{\vec v_{+xz}, \vec v_{-xz}\}$.  As these questions form a CHSH sub-game, we can apply \lemref{t:eprlemma} to obtain a decomposition $\H_A = \C^2 \otimes \hat \H_A$, $\H_B = \C^2 \otimes \hat \H_B$ such that $\norm{\ket \psi - \ket{\psi^*} \otimes \ket{\psi^\times}} = O(\sqrt \epsilon)$, $\bar R_z = R_z \otimes \identity$ and $\norm{(\bar R_x - R_x \otimes \identity)_A}_\psi = O(\sqrt \epsilon)$.  Also, $\bar R'_{xz} = R_{xz} \otimes \identity$ and $\norm{(\bar R'_{-xz} - R'_{-xz} \otimes \identity)_B}_\psi = O(\sqrt \epsilon)$, although we will not use this.  

Consider next the questions $(\vec a, \vec b) \in \{\vec v_y, \vec v_z\} \times \{\vec v_{-yz}, \vec v_{+yz}\}$.  Applying \lemref{t:eprlemma}, we obtain that there exists a unitary~$\bar U \in \L(\H_A)$ such that $\bar U \bar R_z \bar U^\dagger = R_z \otimes \identity$ and $\norm{ (\bar R_y - \bar U^\dagger R_x \otimes \identity \bar U)_A }_\psi = O(\sqrt \epsilon)$.  Since $\bar R_z = R_z \otimes \identity$, it follows that $\bar U = \ketbra 0 0 \otimes U_0 + \ketbra 1 1 \otimes U_1$ for some unitaries $U_0, U_1 \in \L(\hat \H_A)$.  Let $U = U_0^\dagger U_1$.  Thus $\bar U^\dagger (R_x \otimes \identity) \bar U = \ketbra 0 1 \otimes U + \ketbra 1 0 \otimes U^\dagger$.  

Last, consider the questions $(\vec a, \vec b) \in \{\vec v_y, \vec v_x\} \times \{\vec v_{-xy}, \vec v_{+xy}\}$.  These questions form a CHSH sub-game if Eve complements Alice's answer to question $\vec v_y$ and complements Bob's answer to question $\vec v_{xy}$.  However, we do not apply \lemref{t:eprlemma} to this sub-game directly.  Instead, \emph{modify} Alice's strategy by replacing $\bar R_x$ with $R_x \otimes \identity$.  Since $\norm{(\bar R_x - R_x \otimes \identity)_A}_\psi = O(\sqrt \epsilon)$, the correlation value of the modified game decreases at most from $2 \sqrt 2 - 16 \epsilon$ to $2 \sqrt 2 - O(\sqrt \epsilon)$.  Now applying \lemref{t:eprlemma} to the modified game, we obtain that there is a unitary~$\bar V$ such that $\bar V R_x \otimes \identity \bar V^\dagger = R_z \otimes \identity$ and $\norm{ (\bar R_y + \bar V^\dagger R_x \otimes \identity \bar V)_A }_\psi = O(\epsilon^{1/4})$.  Since $R_x = X = \ketbra + + - \ketbra - -$, where $\ket \pm = \frac{1}{\sqrt 2}(\ket 0 \pm \ket 1)$, the first equation implies $\bar V = \ketbra 0 + \otimes V_0 + \ketbra 1 - \otimes V_1$ for unitaries $V_0$ and $V_1$.  Letting $V = V_0^\dagger V_1$, therefore, $-\bar V^\dagger R_x \otimes \identity \bar V = - \ketbra + - \otimes V - \ketbra - + \otimes V^\dagger$.  

Combining this with our characterization of $\bar Y$ from the second CHSH sub-game implies: 

\begin{claim} \label{t:hermitianoperatornotprojection}
For $\epsilon < 10^{-10}$, there is a Hermitian operator $S \in \L(\hat \H_A)$ with $\norm S \leq 1$, namely $S = i (U - U^\dagger)/2$, such that $\norm{(\bar R_y - R_y \otimes S)_A}_\psi = O(\epsilon^{1/4})$.  
\end{claim}

\begin{proof}
We have 
\begin{equation*}
\bignorm{ \ketbra 0 1 \otimes U + \ketbra 1 0 \otimes U^\dagger + \ketbra + - \otimes V + \ketbra - + \otimes V^\dagger  }_\psi = O(\epsilon^{1/4}) 
 \enspace .
\end{equation*}
Since $\norm{\ket \psi - \ket{\psi^*} \ket{\psi^\times}} = O(\sqrt \epsilon)$, therefore 
\begin{align*}
\bignorm{ \ketbra 0 1 \otimes U + \ketbra 1 0 \otimes U^\dagger + \ketbra + - \otimes V + \ketbra - + \otimes V^\dagger }_{\ket{\psi^*} \ket{\psi^\times}} = O(\epsilon^{1/4})
\end{align*}
Now substitute $\ket{\psi^*} = \frac{1}{\sqrt 2}(\ket{{+}{+}} + \ket{{-}{-}})$ to obtain 
\begin{equation*}\begin{split}
O(\epsilon^{1/4}) &= \frac12 \Biggnorm{ \begin{split} (\ket{{+}{+}} - \ket{{-}{-}})_{AB} (U+U^\dagger)_A &+ \ket{{-}{+}}_{AB} (U - U^\dagger + 2 V^\dagger) \\&+ \ket{{+}{-}}_{AB} (-U + U^\dagger + 2 V) \end{split} }_{\psi^\times} \\
&\geq \tfrac1{\sqrt 2} \norm{ (U+U^\dagger)_A }_{\psi^\times}
 \enspace .
\end{split}\end{equation*}
This implies our characterization of $\bar R_y$: 
\begin{equation*}\begin{split}
\bignorm{ \bar R_y - R_y \otimes S }_\psi
&\leq \bignorm{\bar R_y - \bar U^\dagger X \otimes \identity \bar U}_\psi + \bignorm{ \bar U^\dagger X \otimes \identity \bar U - Y \otimes S }_\psi \\
&\leq \bignorm{\bar R_y - \bar U^\dagger X \otimes \identity \bar U}_\psi + 2 \bignorm{\ket \psi - \ket{\psi^*}\ket{\psi^\times}} + \bignorm{ \bar U^\dagger X \otimes \identity \bar U - Y \otimes S }_{\ket{\psi^*}\ket{\psi^\times}} \\
&= \bignorm{\bar R_y - \bar U^\dagger X \otimes \identity \bar U}_\psi + 2 \bignorm{\ket \psi - \ket{\psi^*}\ket{\psi^\times}} + \Bignorm{ X \otimes \frac{U + U^\dagger}{2} }_{\ket{\psi^*}\ket{\psi^\times}} \\
&= O(\epsilon^{1/4})
 \enspace . \qedhere
\end{split}\end{equation*}
\end{proof}

\lemref{t:converttoreflection} gives a reflection $\Delta$ so $\norm{(\bar R_y - R_y \otimes \Delta)_A}_\psi = O(\epsilon^{1/12})$.  

\smallskip

In the second part of the proof, we will consider Bob's on-axis reflections.  In particular, consider the questions $(\vec a, \vec b) \in \{ (\vec v_x, \vec v_x), (\vec v_z, \vec v_z), (\vec v_y, \vec v_y) \}$.  Note that in the ideal strategy on an EPR state, $p_{00 \vert \vec v_x \vec v_x} = p_{11 \vert \vec v_x \vec v_x} = \frac12$, $p_{00 \vert \vec v_z \vec v_z} = p_{11 \vert \vec v_z \vec v_z} = \frac12$ and $p_{01 \vert \vec v_y \vec v_y} = p_{10 \vert \vec v_y \vec v_y} = \frac12$.  We use these identities to characterize $\bar R'_x$, $\bar R'_z$ and $\bar R'_y$.  Observe that, since the provers' strategy is $\epsilon$-structured, 
\begin{equation*}
\bra \psi \bar R_x \otimes \bar R'_x \ket \psi
= \tilde p_{00|\vec v_x \vec v_x} + \tilde p_{11|\vec v_x \vec v_x} - \tilde p_{01|\vec v_x \vec v_x} - \tilde p_{10|\vec v_x \vec v_x} \geq 1 - 4 \epsilon
 \enspace . 
\end{equation*}

\begin{claim} \label{t:complexmarkov}
For complex numbers $\alpha, \beta$ with $\abs \alpha, \abs \beta \leq 1$ and $\bigabs{\frac12 (\alpha + \beta) - 1} \leq \delta \leq \frac14$, necessarily $\max\{ \abs{\alpha-1}, \abs{\beta-1} \} \leq \sqrt{3 \delta}$.  
\end{claim}

We have, using $\ket{\psi^*} = \frac{1}{\sqrt 2}(\ket{00} + \ket{11})$ and successive triangle inequalities, 
\begin{equation*}\begin{split}
\bigabs{ \tfrac12 \big( \bra 0 \bra{\psi^\times} \bar R'_x \ket 1 \ket{\psi^\times} + \bra 1 \bra{\psi^\times} \bar R'_x \ket 0 \ket{\psi^\times} \big) -1 }
&= \abs{ \bra{\psi^*} \bra{\psi^\times} R_x \otimes \bar R'_x \ket{\psi^*}\ket{\psi^\times} - 1 } \\
&\leq 2 \norm{\ket \psi - \ket{\psi^*} \ket{\psi^\times}} + \abs{ \bra \psi R_x \otimes \bar R'_x \ket \psi - 1 } \\
&= O(\sqrt \epsilon)
 \enspace .
\end{split}\end{equation*}
Applying \claimref{t:complexmarkov}, we find $\max\{ \abs{\bra 0 \bra{\psi^\times} \bar R'_x \ket 1 \ket{\psi^\times} - 1}, \abs{\bra 1 \bra{\psi^\times} \bar R'_x \ket 0 \ket{\psi^\times} - 1} \} = O(\epsilon^{1/4})$.  Therefore, $\max\{ \norm{\ket 0 \ket{\psi^\times} - \bar R'_x \ket 1 \ket{\psi^\times}}{}^2, \norm{\ket 1 \ket{\psi^\times} - \bar R'_x \ket 0 \ket{\psi^\times}}{}^2 \} = O(\epsilon^{1/4})$, and hence, 
\begin{equation*}\begin{split}
\bignorm{ (R_x \otimes \identity - \bar R'_x)_B \ket{\psi^*} \ket{\psi^\times} }^2 
&= \frac{1}{2} \Big( \bignorm{ \ket 1 \ket{\psi^\times} - \bar R'_x \ket 0 \ket{\psi^\times} }^2 + \bignorm{ \ket 0 \ket{\psi^\times} - \bar R'_x \ket 1 \ket{\psi^\times} }^2 \Big) 
= O(\epsilon^{1/4})
 \enspace .
\end{split}\end{equation*}
It follows that $\norm{ (\bar R'_x - R_x \otimes \identity)_B }_\psi = O(\epsilon^{1/8})$.  For $\bar R'_z$, a similar argument implies $\bignorm{(\bar R'_z - R_z \otimes \identity)_B}_\psi = O(\epsilon^{1/8})$.  

Finally, for $\bar R'_y$, we have $\bra \psi \bar R_y \otimes \bar R'_y \ket \psi \leq -1 + 4 \epsilon$, and therefore 
\begin{align*}
\bigabs{ \bra{\psi^*}\bra{\psi^\times} (R_y \otimes \Delta)_A \otimes \bar R'_y \ket{\psi^*}\ket{\psi^\times} + 1 } 
&\leq 2 \norm{\ket{\psi^*}\ket{\psi^\times} - \ket \psi} + \norm{(R_y \otimes \Delta - \bar R_y)_A}_\psi \\ &\quad + \bigabs{\bra \psi \bar R_y \otimes \bar R'_y \ket \psi + 1} \\
&= O(\epsilon^{1/12})
 \enspace .
\end{align*}
The left-hand side of this inequality is $\bigabs{\tfrac12 (\alpha+\beta) - 1}$, where $\alpha = i \bra 0 \bra{\psi^\times} \Delta \otimes \bar R'_y \ket 1 \ket{\psi^\times}$ and $\beta = -i \bra 1 \bra{\psi^\times} \Delta \otimes \bar R'_y \ket 0 \ket{\psi^\times}$.  By \claimref{t:complexmarkov}, $\max\{ \abs{\alpha-1}, \abs{\beta-1} \} = O(\epsilon^{1/24})$.  Therefore also $\max\{ \norm{\bar R'_y \ket 0 \ket{\psi^\times} - i \ket 1 \Delta_A \ket{\psi^\times}}{}^2, \norm{\bar R'_y \ket 1 \ket{\psi^\times} + i \ket 0 \Delta_A \ket{\psi^\times}}{}^2 \} = O(\epsilon^{1/24})$.  This bound nicely characterizes $\bar R'_y$.  Expanding $\ket{\psi^*}$, it gives  
\begin{align*}
\norm{\bar R'_y - (R_y \otimes \identity)_B \otimes \Delta_A}_{\ket{\psi^*} \ket{\psi^\times}}
&< 
\tfrac{1}{\sqrt 2} \big( \norm{\bar R'_y \ket 0 - i \ket 1 \Delta_A}_{\psi^\times} + \norm{\bar R'_y \ket 1 + i \ket 0 \Delta_A }_{\psi^\times} \big) = O(\epsilon^{1/48}) 
 \enspace .
\end{align*}
Using the same inequality, we can also argue: 

\begin{claim}
There is a Hermitian $S'$ with $\norm{S'} \leq 1$, such that $\norm{(\bar R'_y - R_y \otimes S')_B}_{\ket{\psi^*}\ket{\psi^\times}} = O(\epsilon^{1/48})$.  
\end{claim}

\begin{proof}
Expand $\bar R'_y = \ketbra 00 \otimes A + \ketbra 01 \otimes B + \ketbra 10 \otimes B^\dagger + \ketbra 11 \otimes C$, where $A, B, C \in \L(\hat \H_B)$ each have norm at most one.  Let $S' = i (B - B^\dagger)/2$.  

We are given $\max\{ \norm{A}_{\psi^\times}^2 + \norm{(B^\dagger_B - i \Delta_A)}{}^2_{\psi^\times}, \norm{C}_{\psi^\times}^2 + \norm{(B_B + i \Delta_A)}{}^2_{\psi^\times} \} = O(\epsilon^{1/24})$.  Therefore, $\norm{(B + B^\dagger)_B}_{\psi^\times} \leq \norm{B_B + i \Delta_A}_{\psi^\times} + \norm{B^\dagger_B - i \Delta_A}_{\psi^\times} = O(\epsilon^{1/48})$.  Since $\bar R'_y - R_y \otimes S' = \ketbra 00 \otimes A + \ketbra 11 \otimes C + \tfrac12 R_x \otimes (B + B^\dagger)$, it follows that $\norm{\bar R'_y - R_y \otimes S'}_{\ket{\psi^*}\ket{\psi^\times}} = O(\epsilon^{1/48})$.  
\end{proof}

As before we did before on Alice's side, we now apply a Markov inequality to approximate~$S'$ by a certain reflection.  Indeed, \lemref{t:converttoreflection} gives a reflection $\Delta'$ so $\norm{(\bar R'_y - R_y \otimes \Delta')_B}_{\ket{\psi^*}\ket{\psi^\times}} = O(\epsilon^{1/144})$.  Therefore, too, 
\begin{align*}
\norm{\identity - \Delta_A \otimes \Delta'_B}_{\psi^\times}
&= \norm{(R_y \otimes \identity)_B \otimes \Delta_A - (R_y \otimes \Delta')_B}_{\ket{\psi^*}\ket{\psi^\times}} = O(\epsilon^{1/144}) 
 \enspace .
\end{align*}
Therefore $\bra{\psi^\times} \Delta_A \otimes \Delta'_B \ket{\psi^\times} = 1 - \tfrac12 \norm{\identity - \Delta_A \otimes \Delta'_B}_{\psi^\times}^2 \geq 1 - O(\epsilon^{1/72})$.  
\end{proof}

\ifx\compilefullpaper\undefined  
\bibliographystyle{alpha-eprint}
\bibliography{q}

\end{document}
\fi

\addcontentsline{toc}{section}{References}
\bibliographystyle{alpha-eprint}
\bibliography{q}

\end{document}